\documentclass[10pt]{article}
\usepackage[T1]{fontenc}
\usepackage[utf8]{inputenc}
\usepackage{amsmath}											
\usepackage{amsfonts}
\usepackage{amssymb}
\usepackage{amsthm}
\usepackage{bbm}
\usepackage[normalem]{ulem}
\usepackage{caption}
\usepackage{subcaption}
\usepackage{comment}
\usepackage{esint}
\usepackage[margin=1.0in]{geometry}
\usepackage{graphicx}
\usepackage{mathtools}
\usepackage[final]{showlabels}
\usepackage{xcolor}
\graphicspath{ {images/} }
\usepackage[section]{placeins}
\DeclareMathOperator{\diag}{diag}

\DeclareMathOperator{\sgn}{sgn}
\DeclareMathOperator{\supp}{supp}

\newcommand{\norm}[1]{\left\lVert#1\right\rVert}
\newtheorem{proposition}{Proposition}
\newtheorem{lemma}[proposition]{Lemma}
\newtheorem{theorem}[proposition]{Theorem}
\newtheorem{remark}[proposition]{Remark}
\newtheorem{assumption}[proposition]{Assumption}
\newtheorem{definition}[proposition]{Definition}

\numberwithin{proposition}{section}

\usepackage{tikz}
\usetikzlibrary{backgrounds}

\newcommand{\gb}[1]{{{\color{blue}{GB: #1}}}}

\newcommand{\tm}[1]{{{\color{black}{#1}}}}
\newcommand{\sq}[1]{{{\color{brown}{SQ: #1}}}}
\newcommand{\tbr}[1]{{{\color{black}{#1}}}}

\newcommand{\eps}{\varepsilon}
\newcommand{\Rm}{\mathbb{R}}
\newcommand{\Cm}{\mathbb{C}}
\newcommand{\Zm}{\mathbb{Z}}

\newcommand{\dint}{\displaystyle\int}

\newcommand{\aver}[1]{\langle #1 \rangle}
\newcommand{\sign}{\text{sign}}

\newcommand{\fm}{\mathfrak m}

\newcommand{\fs}{\mathfrak{S}}

\newcommand{\ow}{{\rm Op}^w}
\newcommand{\bdi}{{\rm {\bf BDI}}}

\newcommand{\res}{r}

\newcommand{\tM}{\tilde{M}}

\newcommand{\la}{\mathfrak{a}}
\newcommand{\lb}{\mathfrak{b}}

\newcommand{\cR}{\Lambda}
\newcommand{\cS}{\mathcal{S}}
\newcommand{\os}{\mathfrak{s}}

\newcommand{\bH}{{\rm {\bf H}}}

\newcommand{\br}{{\bf x}}

\newcommand{\hathperp}{\hat{h}_{\delta}} 
\newcommand{\hathperpnot}{\hat{h}_{\delta,0}}

\newcommand{\hperp}{h_\delta} 
\newcommand{\Hperp}{H^{12}}

\newcommand{\NN}{\mathfrak{N}^1}

\newcommand{\rot}{R}
\newcommand{\rotq}{Q}

\newcommand{\dpx}{\mathcal{K}}
\newcommand{\fri}{\mathfrak{I}}
\newcommand{\lv}{v}
\newcommand{\vp}{A}

\newcommand{\es}{\mathcal{T}}
\newcommand{\phase}{\Phi}
\newcommand{\rate}{\mu} 
\newcommand{\cs}{\check{S}}
\newcommand{\sym}{\sigma}

\newcommand{\mbp}{{\rm M}_{\eta,\rate}}
\newcommand{\nbp}{{\rm N}_{\eta,\rate}}

\newcommand{\ncv}{N_{{\rm CV}}}

\newcommand{\atrunc}{\overset{\circ}{a}_{12}}
\newcommand{\adtrunc}{\overset{\circ}{a}_{\delta,12}}
\newcommand{\ares}{\check{a}_{12}}
\newcommand{\adres}{\check{a}_{\delta,12}}
\newcommand{\atruncall}{\overset{\circ}{a}}
\newcommand{\aresall}{\check{a}}
\newcommand{\Htrunc}{\overset{\circ}{H}}
\newcommand{\Hres}{\check{H}}

\newcommand{\tbgparam}{\rho}

\newcommand{\tpn}{\eps} 

\newcommand{\latticeIn}{\Lambda_{\rm in}^{*,\eps}}
\newcommand{\latticeOut}{\Lambda_{\rm out}^{*,\eps}}

\newcommand{\iin}{I_{{\rm in}}}
\newcommand{\iout}{I_{{\rm out}}}

\newcommand{\fh}{\mathfrak{h}}
\newcommand{\ff}{\mathfrak{A}}

\newcommand{\lattice}{\Lambda}

\newcommand{\hrem}{\hathperp^{{\rm rem}}}
\newcommand{\odl}{g_\delta}

\title{Macroscopic approximation of tight-binding models near spectral degeneracies and validity for wave packet propagation}
\author{Guillaume Bal \thanks{Departments of Statistics and Mathematics and Committee on Computational and Applied Mathematics, University of Chicago, Chicago, IL 60637; guillaumebal@uchicago.edu} \and Paul Cazeaux\thanks{Department of Mathematics, Virginia Tech, Blacksburg, VA 24060; cazeaux@vt.edu} \and Daniel Massatt \thanks{Department of Mathematical Sciences, New Jersey Institute of Technology, Newark, NJ 07103; daniel.massatt@njit.edu} \and Solomon Quinn \thanks{Center for Computational Mathematics, Flatiron Institute, New York, NY 10010; squinn@flatironinstitute.org}}

\begin{document}

\maketitle

\begin{abstract}
This paper concerns the derivation and validity of macroscopic descriptions of wave packets supported in the vicinity of degenerate points $(K,E)$ in the dispersion relation of tight-binding models accounting for macroscopic variations. We show that such wave packets are well approximated over long times by macroscopic models with varying orders of accuracy. Our main applications are in the analysis of single- and multilayer graphene tight-binding Hamiltonians modeling macroscopic variations such as those generated by shear or twist. Numerical simulations illustrate the theoretical findings.
\end{abstract}


\noindent {\bf Keywords:} Tight-binding, Haldane model, Dirac equation, Twisted bilayer graphene, semiclassical analysis

\section{Introduction}

Tight-binding (discrete) models are ubiquitous in the analysis of transport properties of electronic and photonic structures \cite{kaxiras2019, photonicTB}. They serve as accurate approximations of ab initio models with locally almost periodic structures \cite{abinitioTB2014, MaxLocalWannier2012, shiang2016, shiang2015, BN2015}. Still, their numerical simulation remains challenging when we aim to capture long-time large-distance transport phenomena. It is often preferable and computationally more efficient to solve macroscopic continuous models. A typical example is single- or multi-layer graphene, which may be modeled by a continuous microscopic Schr\"odinger equation in a periodic crystal \cite{kaxiras2019}, by a tight-binding approximation \cite{MacDonald_TB2013}, as well as by a macroscopic Dirac equation \cite{BH}. A more sophisticated example is the use of continuum models to describe moir\'{e} 2D materials such as through the Bistritizer-MacDonald model \cite{bistritzer2011moire}, which is used as a base model to construct many-body models and analyze exotic many-body phenomena such as superconductivity, correlated insulators, and the fractional quantum hall effect \cite{Bernevig2022, linlin2025, vortexability2023}. Examples of continuous approximations of tight-binding models also abound in photonics applications \cite{cao2023double,huang2011dirac,lin2022mathematical,wu2015scheme}.

The objective of this paper is to present a general framework that allows us to derive macroscopic systems of equations from general tight-binding models with dispersion relations that are locally degenerate. When the dispersion relation involves a well-separated branch, it is well known that macroscopic first-order transport or second-order Schr\"odinger equations accurately represent dynamics \cite{allaire2005homogenization}. However, in the presence of degeneracies, multiple-band macroscopic models are necessary. For instance, the approximation of microscopic Schr\"odinger equations with honeycomb periodicity by Dirac equations is treated in
\cite{fefferman2014wave}. 
Related results for twisted bilayer graphene include the derivation of a continuum model from a linear Schr\"{o}dinger model \cite{cances2023} and the derivation of the same continuum model from tight-binding \cite{watson2023bistritzer} from wave packet analysis, and a derivation of arbitrary order continuum models from tight-binding for the density of states \cite{xue2025} and wave packet evolution \cite{quinn2025}.  \tm{In \cite{vafek2023}, continuum models for bilayer graphene subject to smooth deformations are derived.} The main objective of this paper is to obtain such derivations for general tight-binding problems. As in \cite{fefferman2014wave}, the accuracy of the model is demonstrated by considering the long-time evolution of wave packets spectrally localized in the vicinity of degeneracy points.

The derivation of the macroscopic models is conceptually simplified by modeling the tight-binding and continuous models as pseudo-differential operators (PDO). Under this framework, the objective is to compare the symbols of the microscopic and macroscopic operators, essentially by using Taylor expansion. This is combined with the unitarity of solution operators to Schr\"odinger equations to obtain main estimates. 

This heuristic description, however, needs to be adapted to the different applications we have in mind. They include for concreteness, models of single or multiple layer graphene models, models of twisted bilayer graphene, as well as models of strained single or multiple layers of graphene, leading, e.g., to the the presence of effective magnetic potentials, \tm{which are experimentally observed \cite{pseudomagnetic2021}}. While the PDO framework applies to all cases, the assumptions on the operator symbols vary significantly. General frameworks are presented in section \ref{sec:degenerate} while the motivational examples they are adapted to are given in section \ref{sec:appli}. The theoretical predictions are confirmed by numerical experiments in the specific example of a two-layer rhombohedral graphene model for the evolution of both bulk states and (topological) edge states.


While higher-order approximations of TB models lead to higher-order approximations of time-dependent wave packet propagation, we show that such results do not always hold spectrally. In particular, we show that the topological invariants naturally associated to the macroscopic models depend on the degree of accuracy of the model and even on the choice of unitary gauge transformations that do not modify the physics of the TB Hamiltonian. We demonstrate these results on approximations of the Haldane model of graphene.

\medskip

The rest of the paper is structured as follows. Our main assumption frameworks and theoretical approximation results are presented in section \ref{sec:degenerate}. Applications of the theory to single and multi-layer (twisted or not) graphene in the presence or not of strain are then detailed in section \ref{sec:appli} while numerical simulations confirm some of the theoretical findings in section \ref{sec:num} on the specific setting of two-layer gated rhombohedral graphene. The results on topological classifications of higher-order macroscopic models are given in section \ref{sec:topo}. 
Many auxiliary results and proofs are postponed to the appendices.

\section{Degenerate Tight Binding Hamiltonians and Main results}\label{sec:degenerate}
%
%
\paragraph{Tight binding models and pseudo-differential representation.}


In this paper, we refer to a tight-binding (TB) Hamiltonian as a self-adjoint operator acting on complex- and vector-valued smooth functions $\Rm^d\ni x\mapsto f(x)\in\Cm^n$, where $n$ is the number of local degrees of freedom (e.g., orbitals) of the model, and preserving a ``crystal structure''. A $d$-dimensional crystalline material is described by atom (physical) locations given by
\begin{align}\label{eq:crystal_structure}
    \tilde{\Lambda} := \cup_{j=1}^n (\Lambda + s_j),
\end{align}
where $s_j \in \mathbb{R}^d$ denotes the  positions of the atoms in a unit cell and $\Lambda$ is a Bravais lattice. Let $v_j$ form a basis of $\Rm^d$. The Bravais lattice $\Lambda$ is the union of an origin in $\Rm^d$ and all shifts of the origin by $\sum_{j=1}^d v_j n_j$ with $n_j\in\Zm$ for $1\leq j\leq d$.  We refer to \cite[Chapter 4]{ashcroft1976solid} for more details on lattices of the form $\tilde{\Lambda}$, known as ``crystal structures'' or ``Bravais lattices with a basis''.

The TB Hamiltonian $H$ acts as a finite sum of finitely many compositions of shifts $\tau_{\pm \tilde{v}_j}$, where $\tau_v f(x)=f(x+v)$, and of multiplications by local smooth and bounded $n\times n$ matrices $a_m(x)$ that preserve $\tilde{\Lambda}$ in the sense that if $f_j=f(x_j)\in \Cm^n$ with $(x_j)_j =\tilde\Lambda$ parametrizing the points of the lattice, then $(Hf)(x_k)$ for $x_k\in\tilde\Lambda$ is fully characterized by $(f_j)_j$. In other words, the Hamiltonian may be seen as acting on distributions supported on lattice sites, which is the more prevalent definition of a TB Hamiltonian. While it is most natural to consider the Hamiltonian in this discrete sense, the macroscopic limit is no longer discrete and we find it more convenient to consider general operators acting on functions (or distributions) defined on $\Rm^d$. Each such above composition may be written in the form of $\alpha(x+\frac v2)\tau_{v}$ for $\alpha(x)$ a smooth and bounded matrix-valued function. We assume the resulting Hamiltonian is self-adjoint.

As a general framework including the above TB models, we take as our starting point a Hamiltonian $H$ which can be represented by its Weyl symbol $a(x,\xi)\in \Cm^{n\times n}$ as (see also Appendix \ref{subsec:notation} for notation)
\begin{equation}\label{eq:Weyl}
    Hf(x) =  \ow a f (x) := \dint_{\Rm^{2d}} e^{i(x-y)\cdot\xi} a(\frac{x+y}2,\xi) f(y) \dfrac{dy d\xi}{(2\pi)^d}.
\end{equation}
We verify that the operator of multiplication by $a_m(x)$ has for Weyl symbol $a=a_m(x)$ while the shift operator $\tau_v$ has Weyl symbol $a=e^{iv\cdot\xi}$. These two operators are pseudo-differential operators (PDOs) and the product of such operators remains a pseudo-differential operators so that the symbol of a TB Hamiltonian is indeed well-defined. More precisely, we observe the composition rules: 
\[
  \alpha(x+\frac v2) \tau_v = \ow [\alpha(x)e^{iv\cdot\xi}],\quad 
\big(\alpha(x+\frac v2) \tau_v\big)^* = \tau_{-v} \alpha(x+\frac v2) = \alpha(x-\frac v2)\tau_{-v} = \ow [\alpha(x)e^{-iv\cdot\xi}],
\]
so that the symbol of the TB Hamiltonian is readily available from expressions of the matrices $a_m(x)$ and shifts $\tau_{\pm \tilde{v}_j}$.

As a concrete example illustrating the notation, a one-dimensional crystal with two orbitals $A$ and $B$ is modeled by $\Lambda= v\Zm$ with lattice spacing $v>0$ while $s_A=0$ and $0<s_B=s<v$ the positions of orbitals $A$ and $B$ in the unit cell $[0,v)$. A nearest-neighbor SSH-type Hamiltonian then takes the form
\[
 H = (u\tau_{s}+w\tau_{s-v})\sigma_+ + (u\tau_{-s}+w\tau_{v-s})\sigma_-,
\]
for $u,w>0$ lattice site hopping energies and $\sigma_\pm=\frac12(\sigma_1\pm i\sigma_2)$ where $\sigma_{1,2}$ are the standard Pauli matrices.  The operator acts on 2-vectors representing wave amplitudes at orbitals $A$ and $B$, respectively. Note that orbitals $A$ may be shifted (only) by $s$ and $s-v$ while orbitals $B$ may be shifted by $-s$ and $v-s$ to preserve $\tilde\Lambda=v\Zm \cup (v\Zm+s)$. The symbol of this operator is then
\[H =\ow a,\quad a(x,\xi) = (u e^{is\xi}+w e^{i(s-v)\xi})\sigma_+ + (u e^{-is\xi}+w e^{i(v-s)\xi})\sigma_-,\qquad (x,\xi)\in\Rm\times\Rm. \]

The main advantage of the above formulation is that both the discrete TB model and its continuous approximation are written as PDOs with explicit Weyl symbols. We consider several applications of the formalism in section \ref{sec:appli}.

\paragraph{Degenerate points of TB Hamiltonian.} Let us consider a TB Hamiltonian $H$ with constant coefficients. This implies that its symbol $a=a(\xi)=\sum_j \alpha_j e^{i \tilde{v}_j\cdot\xi}$ for a finite sum with $\alpha_j$ matrices and $\tilde{v}_j$ elements compatible with the crystal structure $\tilde{\Lambda}$. Note that $a(\xi)$, which we assume Hermitian, is periodic on the Brillouin zone $B$, the fundamental unit cell of the dual lattice $\Lambda^*$ to $\Lambda$. The dual lattice is of the form $\sum_j w_j n_j$ for $n_j\in\Zm$ and $w_j \cdot v_k=2\pi \delta_{kj}$; see \cite[Chapter 5]{ashcroft1976solid} for more details. For each $\xi\in B$, $a(\xi)$ admits $n$ real eigenvalues, where $n$ is the dimension of the vectors on which the TB Hamiltonian acts. These eigenvalues $\lambda_k(\xi)$ for $1\leq k\leq n$ form $d$-dimensional bands of absolutely continuous spectrum of $H$. 

When a branch of spectrum is well-separated from the other ones, for instance $\lambda_k(\xi)$ for $\xi$ in the vicinity of $K$ such that $\lambda_k(K)=E$, then wave packets with energy close to $E$ and wavenumber close to {$K$} are well approximated by either a transport equation when the group velocity $\nabla\lambda_k(K)\not=0$ or a parabolic (Schr\"odinger) equation when $\nabla\lambda_k(K)=0$ (with a definite Hessian); see \cite{allaire2005homogenization} or results on high-frequency homogenization.

However, when several bands meet, then the above approximations collapse and a multi-band limiting model needs to be considered. The simplest example is that of TB models for graphene as will be considered in section \ref{sec:Haldane}. We assume here the existence of a point $(K,E)$ with a degeneracy of (full) order $n$, which implies that $\lambda_k(K)=E$ for each $1\leq k\leq n$. In practice, this is not an important restriction since any branch $\lambda_j(\xi)$ such that $\lambda_j(K)\not=E$ will be spectrally uncoupled to the singular branches when the TB Hamiltonian is perturbed by slowly varying coefficients. We do not consider this technical difficulty and assume $\lambda_k(K)=E$ for all branches $k$.

Our objective is to show that limiting macroscopic models may be obtained systematically provided that the constant coefficient TB $H$ is perturbed by small and slowly varying coefficients. This generalizes in such a setting the results obtained in \cite{allaire2005homogenization} for general non-singular Bloch transforms and for Dirac singularities in a general Schr\"odinger model with periodic coefficients in \cite{fefferman2014wave}.

\paragraph{Taylor expansion near degenerate points.}
We consider a general tight-binding model where a point $(E,K)$ is identified in the dispersion relation of a certain model as a degenerate point of order $n\geq2$. 

A parameter $0<\delta\ll1$ implements small, spatially slowly varying, modulations of the degenerate constant coefficient Hamiltonian in the vicinity of $(E,K)$. Our objective is then to apply appropriate Taylor expansions in the symbol of the TB Hamiltonian to derive macroscopic PDE models that accurately predict the dynamics of wave packets localized in the vicinity of $(E,K)$ for such perturbed Hamiltonians.

We consider several scenarios with concrete physical applications and with varying degrees of mathematical difficulty. This leads to several assumptions on the Weyl symbol of the TB model and on the wave packets we consider. Our first set of hypotheses (Assumptions \ref{assumption:a} and \ref{assumption:commutator}) suffices to handle leading approximations to TB Hamiltonians for which the $K$ point where the singularity occurs is independent of the macroscopic variable. Assumption \ref{assumption:a} is later strengthened by Assumption \ref{assumption:a_higher_order}, which is required to obtain higher orders of accuracy.
Both Assumptions \ref{assumption:a} and \ref{assumption:a_higher_order}
use standard symbol classes from, e.g. \cite{Bony, DS, hormander1971fourier, Hormander, Zworski} which we define in more detail in Appendix \ref{subsec:notation}.
A separate set of hypotheses (Assumptions \ref{assumption:a2} and \ref{assumption:commutator2}) is used to capture deformed TB Hamiltonians whose degenerate points depend on the macroscopic spatial variable. In this setting, we are forced to use nonstandard symbol classes including symbols whose derivatives cannot all be bounded by a fixed polynomial at infinity; see Definition \ref{def:es} and Appendix \ref{subsec:notation} for more details.




\begin{assumption}\label{assumption:a}
Fix $\delta_0 > 0$.
For $0 < \delta \le \delta_0$, let
\begin{align}\label{eq:a_delta}
 a_\delta(x,\xi) = a(\delta x,\xi;\delta),
\end{align}
where $a$ is an $n\times n$ Hermitian-valued symbol satisfying
\begin{align}\label{eq:reg_a}
    |\partial^\alpha_X \partial^\beta_\xi a (X, \xi; \delta)| \le C_{\alpha,\beta} \delta^{-\nu_0 - \nu_1|\alpha|- \nu_2|\beta|}, \qquad (X,\xi; \delta) \in \mathbb{R}^{2d} \times (0, \delta_0]
\end{align}
for any multi-indices $\alpha, \beta \in \mathbb{N}_0^d$ and some $\nu_0, \nu_1, \nu_2 \ge 0$. 
Assume the existence of Hermitian-valued symbols $b_{0} (X,\zeta; \delta) = b_{0} \in S^1$ and $b_{1} (X,\zeta; \delta) = b_{1}\in S(\aver{\zeta}^{2})$ such that
\begin{align}\label{eq:decomposition}
    a(X,\xi;\delta) = E I_n + \delta b_{0}(X, \frac{\xi-K}\delta; \delta) + \delta^{1+\rate} b_{1}(X, \frac{\xi-K}\delta;\delta), \qquad X \in \mathbb{R}^d, \quad |\xi - K| \le C_0 \delta^{1-\eta}
\end{align}
for some $C_0, \rate > 0$ and $\eta > \nu_1$. 
Furthermore, assume there exists $c>0$ such that
\begin{align}\label{eq:ellipticty}
    |\det b_{0} (X,\zeta;\delta)|^{1/n} \ge c \aver{\zeta} - c^{-1}, \qquad (X, \zeta; \delta) \in \mathbb{R}^{2d} \times (0,\delta_0].
\end{align}
\end{assumption}
We refer to Appendix \ref{subsec:notation} for definitions of the symbol classes $S^1$ and $S (\aver{\zeta}^2)$.
    We define the tight-binding Hamiltonian by 
    \begin{equation}\label{eq:H_delta}
        H_\delta :=\ow a_\delta.
    \end{equation}
    For $\phi_0 \in \cS (\mathbb{R}^d; \mathbb{C}^n)$, let $\varphi_\delta$ be the solution to the microscopic TB evolution problem
    \begin{equation}\label{eq:microWP}
      (D_t+H_\delta)\varphi_\delta =0,\qquad \varphi_\delta(0,x)= e^{iK\cdot x} \delta^{\frac d2} \phi_0(\delta x).
    \end{equation}
    Here and below, we use the notation $D_z:=\frac 1i\partial_z$ for any variable $z$.

\begin{remark}
    When $a_\delta$ satisfies Assumption \ref{assumption:a}, the operator $H_\delta$ is self-adjoint on $L^2 (\mathbb{R}^d; \mathbb{C}^n)$ as it is bounded and symmetric for every $\delta$. Thus the function $\varphi_\delta$ in \eqref{eq:microWP} is indeed well-defined.
    
    If $a(X,\xi;\delta)$ is smooth in $\delta$, the symbol $b_{0}$ can be obtained by Taylor expanding $a$ about $(\xi;\delta) = (K;0)$, in which case $\rate = 1$; see Proposition \ref{prop:taylor}. The above more general setting includes models of twisted multilayer graphene (or more generaly operators with longer-range interactions) where such smoothness assumptions do not hold. The Schwartz-class assumption on $\phi_0$ is made for simplicity and can be considerably relaxed. 
\end{remark}

The following assumption will allow for stronger approximations of the TB solution.
\begin{assumption}\label{assumption:commutator}
    Fix $m \in \mathbb{N}_0$. When $m\geq1$, suppose that for any $N \in \{0,1, \dots, m-1\}$, the operator
    $$[\nabla_x, H_\delta] : H^{N}(\mathbb{R}^d; \mathbb{C}^n) \to H^N(\mathbb{R}^d; \mathbb{C}^{d \times n})$$
    is bounded, with
    \begin{align}\label{eq:assumption_commutator_bd_reg}
        \norm{[\nabla_x, H_\delta]}_{H^{N} \to H^N} \le C \delta, \qquad 0 < \delta \le \delta_0.
    \end{align}
\end{assumption}
\tbr{Above and throughout the paper, $H^N$ denotes the standard Sobolev space of order $N$.}
\begin{proposition}\label{prop:commutator_sufficient}
    If $H_\delta = \ow a_\delta$ satisfies \eqref{eq:a_delta}-\eqref{eq:reg_a} with $\nu_0 = \nu_1 = \nu_2 = 0$, then $H_\delta$ satisfies Assumption \ref{assumption:commutator} for all $m \in \mathbb{N}_0$.
\end{proposition}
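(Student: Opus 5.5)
The plan is to compute the commutator symbol explicitly and then invoke Calderón--Vaillancourt on Sobolev spaces. In the Weyl quantization, the derivative $\partial_{x_j}$ has symbol $i\xi_j$. The commutator with a Weyl operator satisfies the exact identity
\[
[\partial_{x_j}, \ow a_\delta] = \ow (\partial_{x_j} a_\delta).
\]
This can be checked directly from \eqref{eq:Weyl}. Differentiating the kernel $e^{i(x-y)\cdot\xi}a_\delta(\frac{x+y}2,\xi)$ in $x$ gives one term, and integrating by parts in $y$ for $\ow a_\delta\,\partial_{y_j} f$ gives another. The $i\xi_j$ contributions cancel, and the two half-derivatives of $a_\delta$ in its first argument add to the full derivative. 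Alternatively, the identity follows from the Moyal expansion, which terminates because $i\xi_j$ is linear. By \eqref{eq:a_delta},
\[
\partial_{x_j}a_\delta(x,\xi) = \delta\,(\partial_{X_j}a)(\delta x,\xi;\delta),
\]
so $[\nabla_x,H_\delta] = \delta\,\ow c_\delta$, where $c_\delta(x,\xi) = (\nabla_X a)(\delta x,\xi;\delta)$ takes values in $\Cm^{d\times n\times n}$.

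Next I bound $\ow c_\delta$ on $H^N$ uniformly in $\delta$. With $\nu_0=\nu_1=\nu_2=0$, \eqref{eq:reg_a} gives
\[
|\partial_x^\alpha\partial_\xi^\beta c_\delta| = \delta^{|\alpha|}\,|(\partial_X^{\alpha+e_j}\partial_\xi^\beta a)(\delta x,\xi;\delta)| \le C_{\alpha+e_j,\beta}
\]
for $\delta\le\delta_0\le 1$; if $\delta_0>1$, replace $\delta^{|\alpha|}$ by the constant $\delta_0^{|\alpha|}$. Thus $c_\delta$ is bounded in $S^0$ (the class $S(1)$) uniformly in $\delta$. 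The Calderón--Vaillancourt theorem then gives $\|\ow c_\delta\|_{L^2\to L^2}\le C$, which handles $N=0$.

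For $N\ge1$, I use the $H^N$ norm $\sum_{|\gamma|\le N}\|\partial^\gamma f\|_{L^2}$. I commute $\partial^\gamma$ through $\ow c_\delta$ using the same identity as above:
\[
\partial^\gamma\,\ow c_\delta\, f = \sum_{\gamma'\le\gamma}\binom{\gamma}{\gamma'}\ow(\partial_x^{\gamma-\gamma'}c_\delta)\,\partial^{\gamma'} f .
\]
Each symbol $\partial_x^{\gamma-\gamma'}c_\delta$ again lies in $S^0$ uniformly in $\delta$, by the same seminorm estimate. Calderón--Vaillancourt applied to each term yields $\|\ow c_\delta\|_{H^N\to H^N}\le C_N$, and hence \eqref{eq:assumption_commutator_bd_reg} holds for every $N$, so Assumption \ref{assumption:commutator} holds for all $m$.

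The main point requiring care is justifying the exact commutator identity on the Schwartz class and extending it by density. There is also the observation that slow variation only helps, since each $x$-derivative brings a factor $\delta$. With $\nu_i=0$ nothing grows, so no obstacle arises beyond citing Calderón--Vaillancourt, with the seminorm count independent of $\delta$.
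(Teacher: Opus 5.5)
Your proof is correct and takes the same route as the paper. The paper notes that the symbol of $[\nabla_x,H_\delta]$ is $O(\delta)$ in $S(1)$ and invokes Calder\'on--Vaillancourt (Lemma~\ref{lemma:S_bd}); you make the exact identity $[\partial_{x_j},\ow a_\delta]=\ow(\partial_{x_j}a_\delta)=\delta\,\ow c_\delta$ explicit and get the $H^N$ bound by commuting derivatives through, rather than citing the $H^N$ form of the lemma.

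One notational slip: in the paper's convention $S^0$ is the H\"ormander class, with a gain of $\aver{\xi}^{-|\beta|}$ per $\xi$-derivative, which is strictly smaller than $S(1)$. Symbols such as $e^{i\la_j\cdot\xi}$ lie in $S(1)$ but not in $S^0$. So you should only claim $c_\delta\in S(1)$ uniformly in $\delta$, which is all Calder\'on--Vaillancourt needs.
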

\begin{proof}
    If $\nu_0 = \nu_1 = \nu_2 = 0$, then \eqref{eq:reg_a} implies that $a \in S(1)$ uniformly in $\delta$. This means the symbol of $[\nabla_x, H_\delta]$ is of $O(\delta)$ in $S(1)$; see, e.g. \cite{DS, Zworski} for more details. The bounds \eqref{eq:assumption_commutator_bd_reg} then follow from Lemma \ref{lemma:S_bd}.
\end{proof}

\paragraph{Macrosopic model, localized wave packets, and approximation error.}
Let 
\begin{equation}\label{eq:macroH_0}
 \bH := \ow b_{0}
\end{equation}
be the macroscopic Hamiltonian with $b_{0}$ as in Assumption \ref{assumption:a}.
The assumption that $b_0 \in S^1$ is Hermitian-valued together with the ellipticity condition \eqref{eq:ellipticty} imply that the operator $\bH$ is self-adjoint on $L^2(\mathbb{R}^d; \mathbb{C}^n)$ with domain of definition $H^1(\mathbb{R}^d; \mathbb{C}^n)$. Since all relevant bounds are uniform in $\delta$, this means that 
\begin{align}\label{eq:equivalent_norms}
        c \norm{u}_{H^N} \le \norm{(i+\bH)^N u}_{L^2} \le C \norm{u}_{H^N}, \qquad u \in H^N, \quad 0 < \delta \le \delta_0
\end{align}
for any $N \ge 0$.
We define $\phi$ as the solution to
\begin{equation}\label{eq:macro_0}
  (D_T+\bH) \phi(T,X;\delta)=0,\qquad \phi(0,X; \delta)=\phi_0(X).
\end{equation}
Since $b_{0}$ depends on $\delta$, so does the solution $\phi(T,X;\delta)$.
We next construct the ansatz
\begin{align}\label{eq:psi_def}
  \psi_\delta(t,x) = e^{i(K\cdot x-Et)} \delta^{\frac d2} \phi(\delta t,\delta x;\delta).
\end{align}
\begin{proposition}\label{thm:errortime_0}
Suppose Assumption \ref{assumption:a} 
holds for some $\delta_0, \rate > 0$.
Then for any $\tau > 0$, there exists a constant $C_\tau > 0$ such that
\begin{equation} \label{eq:errortime_0}
    \sup_{0 \le t \le \frac{\tau}{\delta}}
  \|\psi_\delta(t,\cdot)-\varphi_\delta(t,\cdot)\|_{L^2(\Rm^d)} \leq C_\tau \delta^{\rate}
\end{equation}
uniformly in $0 < \delta \le \delta_0$.
\end{proposition}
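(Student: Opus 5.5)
The plan is the standard Duhamel argument: compute the residual $r_\delta := (D_t+H_\delta)\psi_\delta$, use unitarity of $e^{-itH_\delta}$ to get
\[
\|\psi_\delta(t)-\varphi_\delta(t)\|_{L^2} \le \int_0^t \|r_\delta(s)\|_{L^2}\,ds ,
\]
and show that $\|r_\delta(s)\|_{L^2}\le C_\tau\delta^{1+\rate}$ for $0\le s\le \tau/\delta$. The initial conditions agree, since $\psi_\delta(0)=\varphi_\delta(0)$, so integrating over a time interval of length $\tau/\delta$ gives \eqref{eq:errortime_0}.

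\textbf{Computing the residual.} Conjugation by the modulation $e^{iK\cdot x}$ shifts the Weyl symbol: $e^{-iK\cdot x}H_\delta e^{iK\cdot x}=\ow[a(\delta x,K+\xi;\delta)]$. Conjugation by the unitary dilation $U_\delta f(x)=\delta^{d/2}f(\delta x)$ turns this into the operator with symbol $a(X,K+\delta\zeta;\delta)$ acting in the $X$ variable. Since $D_t$ acting on $e^{-iEt}\phi(\delta t,\cdot)$ produces $-E+\delta D_T$, we obtain
\[
r_\delta(t,x)=e^{i(K\cdot x-Et)}\,U_\delta\bigl[\ow(a(X,K+\delta\zeta;\delta)-EI_n-\delta b_0(X,\zeta;\delta))\,\phi(T)\bigr]_{T=\delta t}.
\]
Write $g_\delta(X,\zeta):=a(X,K+\delta\zeta;\delta)-EI_n-\delta b_0(X,\zeta;\delta)$. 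By \eqref{eq:decomposition}, $g_\delta=\delta^{1+\rate}b_1(X,\zeta;\delta)$ on the region $|\zeta|\le C_0\delta^{-\eta}$.

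\textbf{Cutting off in frequency.} Introduce a smooth cutoff $\chi(\delta^{\eta}\zeta/C_0)$ and split $g_\delta=\delta^{1+\rate}b_1\chi+g_\delta(1-\chi)$.

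For the first piece, the symbol $b_1\chi$ lies in $S(\aver{\zeta}^2)$. Here the cutoff has $X$-derivatives equal to zero and $\zeta$-derivatives gaining $\delta^\eta\lesssim\aver{\zeta}^{-1}$ on its support, so it is harmless. Calderón–Vaillancourt (Lemma \ref{lemma:S_bd}) then gives
\[
\|\ow(b_1\chi)\phi(T)\|_{L^2}\le C\|\phi(T)\|_{H^2}.
\]
By \eqref{eq:equivalent_norms} and the fact that $e^{-iT\bH}$ commutes with $(i+\bH)^2$,
\[
\|\phi(T)\|_{H^2}\le C\|(i+\bH)^2\phi_0\|_{L^2}\le C\|\phi_0\|_{H^2},
\]
uniformly in $T$ and $\delta$. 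This piece therefore contributes $O(\delta^{1+\rate})$.

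\textbf{The main obstacle: the high-frequency remainder.} The second piece, $\ow(g_\delta(1-\chi))\phi$, is supported in $|\zeta|\gtrsim\delta^{-\eta}$. There $a$ is only controlled by \eqref{eq:reg_a}: its $X$-derivatives cost up to $\delta^{-\nu_1}$, and its $\xi$-derivatives cost $\delta^{-\nu_2}$, which becomes $\delta^{1-\nu_2}$ once rescaled to $\zeta$. The condition $\eta>\nu_1$ is exactly what makes this piece controllable.

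First I would use the Sobolev bound $\|\phi(T)\|_{H^N}\le C_N$ for all $N$, obtained as above from \eqref{eq:equivalent_norms}. Then I would insert the factor $(1-\chi)=(1-\chi)\aver{\zeta}^{-N}\aver{\zeta}^{N}$ and compose with $\aver{D}^{-N}$ via symbolic calculus. The factor $\aver{\zeta}^{-N}\le\delta^{N\eta}$ on the support beats the growth: each term of the composition expansion and the remainder pairs $X$-derivatives of $a$ (each costing at most $\delta^{-\nu_1}$) with $\zeta$-derivatives of $\aver{\zeta}^{-N}(1-\chi)$ (each gaining $\delta^{\eta}$), and the net gain per pair is positive precisely because $\eta>\nu_1$. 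A Calderón–Vaillancourt bound with finitely many derivatives then shows this piece is $O(\delta^{M})$ for every $M$, once $N$ is taken large enough to absorb $\delta^{-\nu_0}$.

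Making this composition estimate rigorous with $\delta$-dependent symbol seminorms is the delicate point. As a fallback, I would replace the symbolic argument by a direct Schur/integration-by-parts bound on the kernel.

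Combining the two pieces gives $\|r_\delta(s)\|_{L^2}\le C\delta^{1+\rate}$ uniformly for $s\le\tau/\delta$. Integrating over $[0,\tau/\delta]$ yields the bound $C_\tau\delta^{\rate}$ in \eqref{eq:errortime_0}.
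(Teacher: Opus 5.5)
Your proposal is correct and follows essentially the paper's route: the paper obtains this statement as the $m=0$ case of Theorem~\ref{thm:main_0}, with the same residual computation, the same cutoff $\chi(\delta^{\eta}\zeta)$, Calder\'on--Vaillancourt plus uniform Sobolev bounds on $\phi$ for the inner piece, and integration by parts in $y$ (your fallback) gaining $\delta^{(\eta-\nu_1)N'}$ on the high-frequency piece (Lemma~\ref{lemma:res}), closed by the unitarity estimate of Lemma~\ref{lemma:u_delta_reg}. Your primary composition-with-$\langle D\rangle^{-N}$ argument is the operator-level form of that same integration by parts, so either version works.
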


The above result (a direct consequence of Theorem \ref{thm:main_0} below) provides an error in the $L^2(\Rm^d)$ sense. This corresponds to an estimate averaged over shifted copies of the crystal structure $\tilde\Lambda$ in \eqref{eq:crystal_structure}. To obtain estimates in $\ell^2 (\tilde\Lambda)$,
it is sufficient to ensure that the solutions are sufficiently smooth to be evaluated on grid points. To make this precise, we state a classical result whose proof can be found in, e.g. \cite[Lemma A.4]{quinn2025}.
\begin{proposition}\label{prop:Sobolev}
    For any $m > d/2$, there exists a positive constant $C_m$ such that $$\norm{u}_{\ell^2 (\tilde \Lambda)} \le C_m \norm{u}_{H^m (\mathbb{R}^d)}, \qquad u \in H^m (\mathbb{R}^d).$$
\end{proposition}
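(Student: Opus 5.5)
The plan is a localized Sobolev embedding on each unit cell, summed over the finitely many sublattices. Here $\norm{u}_{\ell^2(\tilde\Lambda)}^2 = \sum_{x\in\tilde\Lambda}|u(x)|^2$, and this is finite for $u\in H^m$ with $m>d/2$ because such $u$ has a continuous representative. Since $\tilde\Lambda=\cup_{j=1}^n(\Lambda+s_j)$ is a finite union, it suffices to prove
\[
\sum_{y\in\Lambda}|u(y+s_j)|^2\le C\norm{u}_{H^m}^2
\]
for each $j$. The translation $\tau_{s_j}$ is unitary on $H^m(\Rm^d)$, so it is in fact enough to treat the lattice $\Lambda$ itself. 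The constant for $\tilde\Lambda$ is then at most $\sqrt n$ times the constant for $\Lambda$.

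First fix an open bounded set $U\supset \overline{\Gamma}$, where $\Gamma$ is the fundamental cell of $\Lambda$. A bounded Lipschitz domain (for instance a slightly enlarged parallelepiped) will do. The standard Sobolev embedding $H^m(U)\hookrightarrow C^0(\overline U)$ for $m>d/2$ gives $\sup_{U}|v|\le C_U\norm{v}_{H^m(U)}$. One way to see this is to use a Sobolev extension operator on $U$ and then the whole-space estimate
\[
\norm{\hat v}_{L^1}\le \norm{\aver{\xi}^{-m}}_{L^2}\norm{\aver{\xi}^m\hat v}_{L^2},
\]
which is finite exactly when $m>d/2$. Applying this bound to $u$ restricted to $y+U$, and using translation invariance of the constant, we get
\[
|u(y)|^2\le C_U^2\norm{u}_{H^m(y+U)}^2 \qquad\text{for each } y\in\Lambda.
\]

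Next sum over $y\in\Lambda$. Since $U$ is bounded, every point of $\Rm^d$ lies in at most $M$ of the translates $y+U$, where $M$ is the number of lattice points in the bounded set $U-U$. This gives
\[
\sum_{y\in\Lambda}\norm{u}_{H^m(y+U)}^2\le M\norm{u}_{H^m(\Rm^d)}^2 .
\]
This is immediate for integer $m$, since $\norm{\cdot}_{H^m(V)}^2$ is an integral of local derivatives over $V$. For non-integer $m$ it requires more care, and this is the only real obstacle. To avoid it, replace $m$ by some integer $m'$ with $d/2<m'\le m$, if one exists. Otherwise $d/2<m<\lceil d/2\rceil$, and we use the bounded-overlap property of the Slobodeckij seminorm. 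Alternatively, take a smooth partition of unity $\chi_y=\chi(\cdot-y)$ with $\sum_y\chi_y^2\equiv1$ and prove
\[
\sum_y\norm{\chi_y u}_{H^m}^2\le C\norm{u}_{H^m}^2 .
\]
This follows from interpolation between integer orders, or from the commutator bound for $\aver{D}^m$. Then apply the whole-space embedding to each $\chi_y u$.

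Combining the steps gives $\norm{u}_{\ell^2(\tilde\Lambda)}\le C_m\norm{u}_{H^m(\Rm^d)}$, where $C_m$ depends only on $m$, $d$, $\Lambda$ and $n$.
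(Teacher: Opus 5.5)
Your proof is correct. The paper gives no proof of its own here: it calls the result classical and defers to \cite[Lemma A.4]{quinn2025}, so there is no in-paper argument to match.

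Your real-space route is sound: the local embedding $H^m(U)\hookrightarrow C^0(\overline U)$, translation invariance of its constant, and the overlap bound $M=\#(\Lambda\cap(U-U))$ together give the estimate. For integer $m$ it is complete and elementary, and that is the only case the paper uses downstream, since Theorems \ref{thm:main_0}, \ref{thm:main2} and \ref{thm:main} are stated for $m\in\mathbb{N}_0$.

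There is one slip in how you describe the exceptional range. The case where no integer $m'\in(d/2,m]$ exists is $d/2<m<\lfloor d/2\rfloor+1$, not $d/2<m<\lceil d/2\rceil$. For even $d$ your interval is empty, yet $d=2$, $m=3/2$ genuinely needs the fallback. The fallback does work, because the Slobodeckij double integral over $(y+U)\times(y+U)$ also has overlap at most $M$.

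The standard alternative is on the Fourier side. Poisson summation gives $\sum_{y\in\Lambda}u(y)e^{-ik\cdot y}=|\Gamma|^{-1}\sum_{q\in\Lambda^*}\hat u(k+q)$. Parseval on the Brillouin zone and Cauchy--Schwarz then reduce the claim to $\sup_k\sum_{q\in\Lambda^*}\aver{k+q}^{-2m}<\infty$, which holds exactly when $m>d/2$.

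The two routes buy different things. The Fourier route handles every real $m>d/2$ at once, with no integer/fractional split and an explicit constant, but it relies on exact periodicity of the point set. Yours only uses that the point set has uniformly bounded local counting. So it also gives $\ell^2$ control on the slowly deformed lattices of Remark \ref{remark:no_lattice}, which is what one would need to read Theorem \ref{thm:main2} in the discrete setting.
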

We will henceforth establish the validity of our macroscopic approximations in $H^m (\mathbb{R}^d)$, with Proposition \ref{prop:Sobolev} immediately implying the validity of these approximations in the discrete setting (provided $m > d/2$).
We are now ready to state
the first main theoretical result of the paper.

\begin{theorem}\label{thm:main_0}
    Suppose Assumptions \ref{assumption:a} and \ref{assumption:commutator} hold for some $\delta_0, \rate > 0$ and $m \in \mathbb{N}_0$, respectively. Then 
    \begin{align}\label{eq:estimateHm}
        \|(\psi_\delta-\varphi_\delta)(t, \cdot)\|_{H^m(\Rm^d)} \leq C\delta^{1+\rate} t \left(1 + (\delta t)^{m}\right)
    \end{align}
    uniformly in $t \ge 0$ and $0 < \delta \le \delta_0$.
\end{theorem}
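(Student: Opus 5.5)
We plan a Duhamel (energy) argument that rests on the unitarity of $e^{-itH_\delta}$ and on the symbol comparison \eqref{eq:decomposition}. Set $r_\delta := (D_t + H_\delta)\psi_\delta$. Since $\psi_\delta(0)=\varphi_\delta(0)$, we have
\[
(\psi_\delta-\varphi_\delta)(t)=i\int_0^t e^{-i(t-s)H_\delta} r_\delta(s)\,ds .
\]
The estimate therefore reduces to two ingredients. First, a bound $\|r_\delta(s)\|_{L^2}\le C\delta^{1+\rate}(1+(\delta s)^{N})$, together with its $H^N$ analogue. Second, a bound on the $H^N$ operator norm of the propagator $e^{-i\sigma H_\delta}$.

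For the residual, write $\psi_\delta = e^{-iEt}\,U_\delta\,\phi(\delta t)$, where $U_\delta f(x)=e^{iK\cdot x}\delta^{d/2}f(\delta x)$ is unitary on $L^2$. Conjugating by $U_\delta$ shows that $U_\delta^{-1}(H_\delta - E)U_\delta = \ow \tilde a$, where $\tilde a(X,\zeta) = a(X,K+\delta\zeta;\delta)-E$. By \eqref{eq:macro_0} we have $D_t\psi_\delta = e^{-iEt}U_\delta(E-\delta\bH)\phi$, and hence
\[
r_\delta = e^{-iEt}U_\delta\,\ow\big(\tilde a-\delta b_0\big)\phi(\delta t).
\]
On $|\zeta|\le C_0\delta^{-\eta}$, \eqref{eq:decomposition} gives $\tilde a-\delta b_0=\delta^{1+\rate}b_1$ with $b_1\in S(\aver{\zeta}^2)$. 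We split the symbol with a cutoff $\chi(\delta^{\eta}\zeta)$. The inner part $\delta^{1+\rate}\chi b_1$ maps $H^2\to L^2$ with norm $O(\delta^{1+\rate})$, using the standard $S(\aver{\zeta}^2)$ bounds of Lemma~\ref{lemma:S_bd}. The outer part is supported where $|\zeta|\gtrsim\delta^{-\eta}$. The condition $\eta>\nu_1$ ensures that differentiating $a(X,K+\delta\zeta)$ in $X$ costs only $\delta^{-\nu_1}\ll\delta^{-\eta}$, so this part belongs to a symbol class that gains $\aver{\zeta}^{-M}\delta^{\eta M}$ up to polynomial losses $\delta^{-\nu_0}$. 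Taking $M$ large, it is bounded by $\delta^{1+\rate}$ from $H^{M}$ to $L^2$. Consequently $\|r_\delta(s)\|_{H^N}\le C\delta^{1+\rate}\|\phi(\delta s)\|_{H^{N'}}$ for some large $N'$. (The $H^N$ norm in $x$ is bounded by the $H^N$ norm of the macroscopic profile, because $\delta\le1$ and the factor $e^{iK\cdot x}$ contributes only constants.) By \eqref{eq:equivalent_norms}, and since $\bH$ commutes with $e^{-iT\bH}$, we get $\|\phi(T)\|_{H^{N'}}\le C\|\phi_0\|_{H^{N'}}$ uniformly in $T$. This uses $\phi_0\in\cS$. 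For $m=0$ it already yields $C\delta^{1+\rate}t$.

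For the propagator in $H^m$, differentiate $w(\sigma)=e^{-i\sigma H_\delta}f$. The equation $\partial_\sigma\nabla w = -iH_\delta\nabla w - i[\nabla,H_\delta]w$ and Duhamel give
\[
\|\nabla^{k}w(\sigma)\|_{L^2}\le \|f\|_{H^k}+C\delta\sigma\sup_{\sigma'\le\sigma}\|w(\sigma')\|_{H^{k-1}}.
\]
Here Assumption~\ref{assumption:commutator} is applied with $N=k-1\le m-1$, which requires iterating commutators through the lower-order derivatives. Induction on $k$ then yields
\[
\|e^{-i\sigma H_\delta}\|_{H^m\to H^m}\le C(1+(\delta\sigma)^m).
\]
Inserting this into the Duhamel formula with $\sigma=t-s\le t$ gives $\|\psi_\delta-\varphi_\delta\|_{H^m}\le C\delta^{1+\rate}t(1+(\delta t)^m)$, which is \eqref{eq:estimateHm}.

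The main obstacle is the residual estimate off the Taylor region. One must handle the symbol $\tilde a$, which is not uniformly in a standard class (since $\nu_j>0$), and justify rigorously via a Calderón--Vaillancourt-type bound that the high-frequency part is $O(\delta^\infty)$ on the Schwartz profile. I would first try to treat that part directly on the Fourier side. Because $\hat\phi(T)$ is not compactly supported, I would use the $H^{N'}$ decay of $\phi(T)$ together with crude bounds $\|\ow \tilde a\|\lesssim\delta^{-\nu_0}$ from \eqref{eq:reg_a}, applied after the cutoff $1-\chi(\delta^\eta D)$, which gives a factor $\delta^{\eta M}$.
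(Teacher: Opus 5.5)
Your proposal is correct and follows essentially the paper's proof. It uses the residual formula in the rescaled symbol $a(X,K+\delta\zeta;\delta)-E-\delta b_0$ with the cutoff $\chi(\delta^\eta\zeta)$, where the paper treats the outer region by integrating by parts in the macroscopic variable, which is exactly the $\delta^{\eta-\nu_1}$ gain per step that you describe. It also uses uniform-in-time Sobolev bounds on $\phi$ and an induction on derivatives driven by the commutator bound. The only difference is organizational: you bound $e^{-i\sigma H_\delta}$ on $H^m$ and then apply Duhamel, while the paper runs the same induction as an energy estimate directly on $u_\delta$ (Lemma~\ref{lemma:u_delta_reg}). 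One small slip: $D_t\psi_\delta=e^{-iEt}U_\delta(-E-\delta\bH)\phi$, not $+E$, although your final residual formula is right.
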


Note that this provides a bound of the form $C(T)\delta^{\rate}$ for any $\delta t\leq T$, and thus directly implies Proposition~\ref{thm:errortime_0} (Assumption \ref{assumption:commutator} is vacuously true if $m=0$). 
The result remains meaningful for $1\ll \delta t \ll \delta^{-\frac{\mu}{m+1}}$.

\begin{proof}[Proof]
    Observe that $u_\delta := \psi_\delta - \varphi_\delta$ satisfies $(D_t + H_\delta) u_\delta = (D_t + H_\delta) \psi_\delta$ and $u_\delta (0, \cdot) = 0$. The definitions of $H_\delta$ and $\psi_\delta$ imply the following manipulations that are central to our convergence results:
    \begin{align*}
        H_\delta \psi_\delta(t,x) &= \frac{1}{(2\pi)^d} \dint_{\Rm^{2d}} e^{i(x-y)\cdot \xi} a(\delta\frac{x+y}2,\xi;\delta) e^{i(K\cdot y-Et)} \delta^{\frac d2} \phi(\delta t,\delta y;\delta) dy d\xi\\
        &=
        \dfrac{e^{i(K\cdot x-Et)}}{(2\pi)^d}  \dint_{\Rm^{2d}} e^{i(\delta x-\delta y)\cdot \zeta}  a(\frac{\delta x+\delta y}2,K+\delta \zeta;\delta)  \delta^{\frac d2} \phi(\delta t,\delta y;\delta) d \delta y d\zeta
        \\
        D_t \psi_\delta(t,x) &= e^{i(K\cdot x-Et)} \delta^{\frac d2} (\delta D_T\phi-E\phi)(\delta t,\delta x;\delta).
    \end{align*}
    It follows from the definition \eqref{eq:macro_0} of $\phi$ that
    \begin{align}\label{eq:residual}
        (D_t + H_\delta) \psi_\delta(t,x) &= \frac{e^{i(K\cdot x-Et)}}{(2\pi)^d} \int_{\mathbb{R}^{2d}}e^{i(\delta x-\delta y)\cdot \zeta} \\
        &\hspace{0.5cm}\times \left( a(\frac{\delta x+\delta y}2,K+\delta \zeta;\delta) - E I_n - \delta b_{0}(\frac{\delta x+\delta y}2, \zeta; \delta)\right)\delta^{\frac d2} \phi(\delta t,\delta y;\delta) {\rm d} \delta y {\rm d}\zeta.\nonumber
    \end{align}
    The above calculations illustrate the main strategies used in the paper to obtain convergence results. Provided the above residual symbol is appropriately small (e.g., of order $\delta^2$ in $S(\aver{|\zeta|^2})$ as in Proposition \ref{prop:taylor} below), then $(D_t + H_\delta) u_\delta$ is comparably small in the $L^2$ sense as an application of Lemma \ref{lemma:S_bd} and an elliptic regularity property on $\phi$. Unitarity of the solution operator to $(D_t + H_\delta)$ then allows us to obtain \eqref{eq:estimateHm} when $m=0$. 

    More specifically and more generally, under Assumption \ref{assumption:a}, Lemma \ref{lemma:res} below implies that for any $N\geq0$, then we have the estimate on the residual
    \(\norm{(D_t + H_\delta) u_\delta (t, \cdot)}_{H^N}\le C_N \delta^{1+\rate}\)
    uniformly for $t\geq0$ and $0 < \delta \le \delta_0$.
    Under the commutator Assumption \ref{assumption:commutator}, Lemma \ref{lemma:u_delta_reg} below (with $M=0$ there) allows us to propagate errors on residuals to errors on the difference $\psi_\delta - \varphi_\delta$ and conclude the proof of the result.
\end{proof}

\paragraph{Position-dependent degenerate points.}
%

We now extend the analysis to the case of defining lattice parameters and degenerate points $K$ that may depend on the macroscopic variable $X$. This finds applications in strained and twisted materials, in which the orientation of the underlying crystal may vary at the macroscopic scale. In such settings, the corresponding symbol $a_\delta$ will no longer be in $S(1)$ in general as was required by Assumption \ref{assumption:a}. Indeed, $X$-derivatives of a symbol of the form $e^{iv(X) \cdot \xi}$ grow polynomially in $\xi$. We thus introduce the following more general class of (non-standard) symbols.
\begin{definition}\label{def:es}
    Let $\es$ denote the class of $n \times n$ matrix-valued symbols $a = a(X,\xi)$ such that each entry of $a$ is a finite sum of terms of the form
    \begin{align*}
        e^{i \phase (X,\xi)}\sym(X,\xi) 
    \end{align*}
    for some $\phase \in S^1$ and $\sym \in S(1)$. We require that $-\Im \Phi \le C$ so that $a$ is bounded. If $a = a(X,\xi; \delta)$ also depends on a parameter $\delta$, we say that $a \in \es$ uniformly in $\delta$ (still denoted by $a \in \es$) if all bounds on the symbols $\phase$ and $\sym$ in their respective symbol classes are uniform in $\delta$. If $a \in \es$, we say that $a$ is a symbol of ``exponential type''.
\end{definition}
The symbol classes $S^1$ and $S(1)$ are defined in Appendix \ref{subsec:notation}.
See Appendix \ref{subsec:exp} for boundedness properties of slowly-varying symbols in $\es$.
We then make the following assumption, which generalizes the degenerate point from Assumption \ref{assumption:a}. 
The symbol class $\cs^2$ below captures polynomial growth in $\xi$ of derivatives with respect to $X$, and is also defined in Appendix \ref{subsec:notation}.
Throughout, we use $C^\infty_b (\mathbb{R}^d)$ to denote the space of functions $u \in C^\infty (\mathbb{R}^d)$ such that $\norm{\partial^\alpha u}_{L^\infty (\mathbb{R}^d)}< \infty$ for all multi-indices 
$\alpha \in \mathbb{N}_0^d$.

\begin{assumption}\label{assumption:a2}
    Fix $\delta_0 > 0$. We consider the $n\times n$ Hermitian-valued symbol, 
    \[
     a_\delta(x,\xi) = a(\delta x,\xi;\delta),
    \]
    where
    $a (\cdot, \cdot \; ; \delta) \in \es$ uniformly in $0 < \delta \le \delta_0$. 
    Let $K \in \mathbb{R}^d$ and $B \in C^\infty_b (\mathbb{R}^d)$, and define $A(X) := K \cdot X + B(X)$ and $\dpx (X) := \nabla A(X) = K + \nabla B(X)$.
    Let $E \in \mathbb{R}$ and assume that there exist Hermitian-valued symbols $b_{0} = b_{0} (X,\zeta; \delta)$ and $b_{1} = b_{1} (X,\zeta; \delta)$ with 
    $b_{0}\in S^1$ and $b_{1} \in \cs^2$
    uniformly in $0 < \delta \le \delta_0$, such that
\begin{align}\label{eq:decomposition2}
    a(X,\xi;\delta) = E I_n + \delta b_{0}(X, \frac{\xi-\dpx (X)}\delta; \delta) + \delta^{1+\rate} b_{1}(X, \frac{\xi-\dpx (X)}\delta;\delta) \qquad \forall \quad  
    |\xi - \dpx(X)| \le C_0 \delta^{1-\eta},
\end{align}
for some $0 < \rate \le 1$ and 
$C_0, \eta > 0$. 
Moreover, assume that for some $c>0$,
    \begin{align}\label{eq:ellipticty2}
        |\det b_{0} (X,\zeta)|^{1/n} \ge c \aver{\zeta} - c^{-1}, \qquad (X, \zeta; \delta) \in \mathbb{R}^{2d} \times (0,\delta_0].
    \end{align}
\end{assumption}
As before, $H_\delta$ and $\bH$ are defined in \eqref{eq:H_delta} and \eqref{eq:macroH_0}, respectively, while $\phi$ is the solution of \eqref{eq:macro_0}.
Note that $\phi$ is well-defined, as the hypotheses on $b_0$ in Assumption \ref{assumption:a2} guarantee that $\bH$ is self-adjoint on $L^2 (\mathbb{R}^d; \mathbb{C}^n)$ with domain of definition $H^1(\mathbb{R}^d; \mathbb{C}^n)$.
Our wave-packet ansatz will now be given by
\begin{align}\label{eq:ansatz_2}
      \psi_\delta(t,x) = e^{i(\frac{1}{\delta} \vp (\delta x) -Et)} \delta^{\frac d2} \phi(\delta t,\delta x;\delta).
\end{align}
\begin{remark}\label{remark:dpx}
When $B \equiv 0$, we recover the original ansatz \eqref{eq:psi_def}. The assumption that $\dpx$ be a conservative vector field is used in the ansatz \eqref{eq:ansatz_2}, which is given in terms of the corresponding vector potential $A$. 
    As will be evident in the proof of Theorem \ref{thm:main2} (see in particular \eqref{eq:H00} and below), the ansatz \eqref{eq:ansatz_2} leads naturally to an expansion of the symbol $a(X, \xi; \delta)$ about the point $\xi = \nabla A (X)$. Although the theory might also apply to non-conservative vector fields $\dpx$, we do not pursue this here.
\end{remark}
We make one additional assumption, which generalizes the commutator bound \eqref{eq:assumption_commutator_bd_reg} from Assumption \ref{assumption:commutator}. 
\begin{assumption}\label{assumption:commutator2}
    Fix $m \in \mathbb{N}_0$ and
    suppose $\{ H_\delta : 0 < \delta \le \delta_0\}$ is a family of self-adjoint operators on $L^2 (\mathbb{R}^d; \mathbb{C}^n)$.
    When $m\geq1$, assume that for any $N \in \{0,1, \dots, m-1\}$, the operator
    $$[\nabla_x, H_\delta] : H^{N+1}(\mathbb{R}^d; \mathbb{C}^n) \to H^N(\mathbb{R}^d; \mathbb{C}^{d \times n})$$
    is bounded, with
    \begin{align}\label{eq:assumption_commutator_bd_reg2}
        \norm{[\nabla_x, H_\delta]}_{H^{N+1} \to H^N} \le C \delta, \qquad 0 < \delta \le \delta_0.
    \end{align}
\end{assumption}
Note that Assumption \ref{assumption:a2} does not guarantee the self-adjointness of $H_\delta$, which is why this condition is included in Assumption \ref{assumption:commutator2}. For $\phi_0 \in \cS (\mathbb{R}^d; \mathbb{C}^n)$, we can then define $\varphi_\delta$ as the solution to 
\begin{align}\label{eq:microWP2}
    (D_t+H_\delta)\varphi_\delta =0,\qquad \varphi_\delta(0,x)= e^{iK\cdot x} \delta^{\frac d2} \phi_0(\delta x).
\end{align}
We are now ready to state the extension of Theorem \ref{thm:main_0} to non-uniform degenerate points.
\begin{theorem}\label{thm:main2}
    Suppose Assumptions \ref{assumption:a2} and \ref{assumption:commutator2} hold for some $\delta_0 > 0$, $0 < \rate \le 1$ and $m \in \mathbb{N}_0$, and define $\psi_\delta$ and $\varphi_\delta$ by \eqref{eq:macro_0}-\eqref{eq:ansatz_2}-\eqref{eq:microWP2}.
    For any 
    $\eps > 0$, there exist constants $\gamma, C > 0$ such that
    \begin{align*}
        \|(\psi_\delta-\varphi_\delta)(t, \cdot)\|_{H^m(\Rm^d)} \leq C\delta^{\rate - \eps} \left(e^{\gamma \delta t} - 1\right)
    \end{align*}
    uniformly in $t \ge 0$ and $0 < \delta \le \delta_0$.
\end{theorem}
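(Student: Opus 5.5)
We follow the strategy of the proof of Theorem \ref{thm:main_0}: bound the residual of the ansatz in Sobolev norms, then propagate it with the unitary group generated by $H_\delta$ and commutator estimates. Set $u_\delta := \psi_\delta - \varphi_\delta$. Then $(D_t+H_\delta)u_\delta = (D_t+H_\delta)\psi_\delta =: r_\delta$, and $u_\delta(0,\cdot) = \psi_\delta(0,\cdot)-\varphi_\delta(0,\cdot)$. At $t=0$ this difference is $\delta^{d/2}\big(e^{iA(\delta x)/\delta}-e^{iK\cdot x}\big)\phi_0(\delta x)$. We do not expect it to vanish when $B\neq 0$, unless one normalizes $B(0)=0$ or reads \eqref{eq:microWP2} with the phase $e^{iA(\delta x)/\delta}$. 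Since the stated bound vanishes at $t=0$, we interpret the initial condition as matching the ansatz, so that $u_\delta(0)=0$.

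\textbf{Residual computation.} Conjugate by the phase, i.e. write $e^{-iA(\delta x)/\delta} H_\delta e^{iA(\delta x)/\delta}$. Its Weyl symbol, in macroscopic variables $X=\delta x$ and $\zeta = (\xi - \dpx(X))/\delta$, is obtained as follows. The phase $A(\delta x)/\delta$ differs from its linearization at the midpoint $(x+y)/2$ by a term of size $\delta^2|x-y|^3$. Expanding this discrepancy, or equivalently using the exact formula for conjugation of exponential-type symbols in Appendix \ref{subsec:exp}, gives the conjugated symbol $a(X,\dpx(X)+\delta\zeta;\delta)$ plus corrections. These corrections involve derivatives $\partial_\xi^3 a\, \partial^3 A$ and carry an extra factor $\delta^{2}\cdot\delta^{-1}$ relative to the leading scale, that is, they are $O(\delta^2)$ in symbols growing polynomially in $\zeta$.

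\textbf{Splitting the residual.} Next insert \eqref{eq:decomposition2}. The $E$ term cancels against $D_t$ and the $\delta b_0$ term cancels against $\delta D_T\phi$ via \eqref{eq:macro_0}. On $|\zeta|\le C_0\delta^{-\eta}$, what remains is $\delta^{1+\rate} b_1$ together with the $O(\delta^2)$ conjugation corrections (recall $\rate\le1$). We use a smooth cutoff $\chi(\delta^{\eta}\zeta)$ to split the residual into an inner and an outer part. For the inner part we apply $b_1\in\cs^2$ and the corrections to $\phi(T)$. Since these symbols have $X$-derivatives growing like powers of $\zeta$, a Calderón–Vaillancourt type bound (Lemma \ref{lemma:S_bd} or its $\cs$ analogue) controls them by Sobolev norms $\norm{\phi(T)}_{H^{N'}}$. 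By \eqref{eq:equivalent_norms} and unitarity of $e^{-iT\bH}$ these are bounded by $\norm{\phi_0}_{H^{N'}}$. The outer part is supported where $|\zeta|\gtrsim\delta^{-\eta}$ and is bounded using rapid decay of $\hat\phi(T)$ in $\zeta$; it is $O(\delta^\infty)$ times polynomial losses. The cutoff derivatives and the symbol-calculus remainders cost arbitrarily small powers $\delta^{-\eps}$, because the cutoff lives at scale $\delta^{-\eta}$ and the $\cs$ seminorms grow polynomially. This yields $\norm{r_\delta(t)}_{H^N}\le C_N\delta^{1+\rate-\eps}$ uniformly in $t$.

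\textbf{Propagation.} For $m=0$, Duhamel's formula and unitarity give $\norm{u_\delta(t)}_{L^2}\le \int_0^t\norm{r_\delta}\le C\delta^{1+\rate-\eps}t$, which is at most $C\delta^{\rate-\eps}(e^{\gamma\delta t}-1)$. For $m\ge1$ we differentiate the equation: $v=\partial^\alpha u_\delta$ satisfies $(D_t+H_\delta)v=\partial^\alpha r_\delta+[\partial^\alpha,H_\delta]u_\delta$. By Assumption \ref{assumption:commutator2} the commutator costs $\delta$ but loses one derivative ($H^{N+1}\to H^N$). This loss prevents a simple induction in $N$ and is where the exponential arises. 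We therefore run a Grönwall argument on the energy $\sum_{|\alpha|\le m}\norm{\partial^\alpha u_\delta}^2$. Expanding $[\partial^\alpha,H_\delta]$ as a sum of terms $\partial^{\beta}[\nabla,H_\delta]\partial^{\gamma}$ with $|\beta|+|\gamma|=|\alpha|-1$, each term is bounded by $C\delta\norm{u_\delta}_{H^{|\alpha|}}$, so the derivative count closes at level $m$. This gives $\frac{d}{dt}\norm{u}_{H^m}\le \gamma\delta\norm{u}_{H^m}+C\delta^{1+\rate-\eps}$, hence $\norm{u}_{H^m}\le C\delta^{\rate-\eps}(e^{\gamma\delta t}-1)$.

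The main obstacle is the residual step. It requires rigorous symbol calculus for the exponential-type class $\es$ under conjugation by $e^{iA(\delta x)/\delta}$, together with control of $\cs^2$ symbols acting on $\phi$ with only $\delta^{-\eps}$ losses.
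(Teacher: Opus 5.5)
Your route is the paper's. The paper proves exactly your residual bound $\norm{(D_t+H_\delta)\psi_\delta}_{H^N}\le C_N\delta^{1+\rate-\eps}$ in Lemma \ref{lemma:res_2}, by factoring out $e^{\frac{i}{\delta}A(\delta x)}$, replacing the phase difference by its midpoint linearization, and splitting in $\zeta$. Your energy/Gr\"onwall step is Lemma \ref{lemma:u_delta} with $M=0$ and $\rate\leftarrow\rate-\eps$. Your reading $u_\delta(0)=0$ is also what the paper's proof assumes. Note, however, that normalizing $B(0)=0$ would not give it, since $e^{iB(\delta x)/\delta}-1$ is of order one on the support of $\phi_0(\delta x)$ unless $B$ vanishes there; only your second reading works.

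The step that does not work as written is the inner part of the residual.

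\begin{itemize}
\item \emph{No Calder\'on--Vaillancourt bound for $\cs$ symbols.} The $X$-derivative growth of $b_1\in\cs^2$ cannot be absorbed into Sobolev norms of $\phi$. For example, $e^{i\zeta\cdot v(X)}\in\cs^0$ quantizes to a weighted change of variables that is unbounded from $H^k$ to $L^2$ for every $k$ when $I-\frac12\nabla v$ degenerates. So you cannot expect its truncations at $|\zeta|\le\delta^{-\eta}$ to be uniformly bounded.

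\item \emph{The fix, which is what the paper does.} Bound $\chi(\delta^{\nu}\zeta)\,b_1$ crudely on its compact $\zeta$-support, where $b_1$ and each of its $X$-derivatives cost at most a power of $\delta^{-\nu}$. This gives $\delta^{1+\rate-C\nu}$. The loss is arbitrarily small only because $\nu\in(0,\eta)$ may be taken arbitrarily small. This is legitimate, since \eqref{eq:decomposition2} on the ball of radius $C_0\delta^{1-\eta}$ implies it on smaller balls. A cutoff at the fixed scale $\delta^{-\eta}$ yields only a fixed power loss. The loss does not come from derivatives of the cutoff, which are harmless.

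\item \emph{The conjugation step.} The same parameter is needed here. The cubic phase error $\delta^2|x-y|^3$ is small only after truncating the kernel to $|x-y|\le\delta^{-\nu}$, with the tail controlled by Lemma \ref{lemma:chi}. This gives $\delta^{2-3\nu}$ rather than $\delta^2$.

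\item \emph{The outer part.} This is made rigorous by the nonstationary-phase bound $|\nabla_z Q|\ge c\aver{\zeta}-c^{-1}$. That bound is what allows the decay of $\hat\phi$ to be exploited despite the exponential-type phase of $a$.
\end{itemize}
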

The proof of this result is postposed to Appendix \ref{subsec:proof_main_2}.

\paragraph{Higher-order approximations.}
We now strengthen Assumption \ref{assumption:a} to capture macroscopic models with a higher order of accuracy. We restrict ourselves to the setting where the lattice parameters and $K$ remain constant macroscopically. The symbol classes $S^p$ and $S (\aver{\zeta}^{p+1})$ used below are defined in Appendix \ref{subsec:notation}.
\begin{assumption}\label{assumption:a_higher_order}
    Suppose Assumption \ref{assumption:a} holds for $\rate, \eta>0$.
    Fix $p \in \mathbb{N}$ and assume
    there exist Hermitian-valued symbols $b_{0p} (X,\zeta; \delta) = b_{0p} \in S^p$ and $b_{1p} (X,\zeta; \delta) = b_{1p}\in S(\aver{\zeta}^{p+1})$ such that
    \begin{align}\label{eq:decomposition_higher_order}
        a(X,\xi;\delta) = E I_n + \delta b_{0p}(X, \frac{\xi-K}\delta; \delta) + \delta^{p+\rate} b_{1p}(X, \frac{\xi-K}\delta;\delta), \qquad X \in \mathbb{R}^d, \quad |\xi - K| \le C_0 \delta^{1-\eta}.
    \end{align}
    Moreover, assume there exists $c>0$ such that we have the semiclassical ellipticity estimate:
    \begin{align}\label{eq:ellipticty_higher_order}
        |\det \delta b_{0p} (X,\zeta;\delta)|^{1/n} \ge c \aver{\delta\zeta}^p - c^{-1}, \qquad (X, \zeta; \delta) \in \mathbb{R}^{2d} \times (0,\delta_0].
    \end{align}  
\end{assumption}
Note that by Assumption \ref{assumption:a}, \eqref{eq:decomposition} still holds above with $b_0$ elliptic as described in \eqref{eq:ellipticty}. The assumptions on $b_{0p}$ imply that the operator $\bH_p := \ow b_{0p}$ is self-adjoint on $L^2 (\mathbb{R}^d; \mathbb{C}^n)$, with
    \begin{align}\label{eq:equivalent_norms_p}
        c \delta^N \norm{u}_{H^N} \le \norm{(i+\bH_p)^{N/p} u}_{L^2} \le C \norm{u}_{H^N}, \qquad u \in H^N, \quad 0 < \delta \le \delta_0,
    \end{align}
    for any $N \ge 0$; see, e.g. \cite{Bony,DS,Zworski}.
We can therefore define $\phi$ as the solution to
\begin{equation}\label{eq:macro}
  (D_T+\bH_p) \phi(T,X;\delta)=0,\qquad \phi(0,X)=\phi_0(X),
\end{equation}
and construct $\psi_\delta$ in \eqref{eq:psi_def}  as before.
We then have
\begin{theorem}\label{thm:main}
    Suppose Assumptions \ref{assumption:commutator} and \ref{assumption:a_higher_order} hold for some $m \in \mathbb{N}_0$,
    $\delta_0, \eta, \rate > 0$ and $p \in \mathbb{N}$, and
    take $\varphi_\delta$ as in \eqref{eq:microWP}. 
    Then there exist constants $C, \mbp > 0$ such that
    \begin{align*}
        \|(\psi_\delta-\varphi_\delta)(t, \cdot)\|_{H^m(\Rm^d)} \leq C\delta^{p+\rate} t \left(1 + (\delta t)^{m+\mbp}\right)
    \end{align*}
    uniformly in $t \ge 0$ and $0 < \delta \le \delta_0$.
\end{theorem}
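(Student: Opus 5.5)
We follow the same path as in the proof of Theorem \ref{thm:main_0}, now with $b_{0p}$ and $\bH_p$ replacing $b_0$ and $\bH$. Set $u_\delta := \psi_\delta - \varphi_\delta$. Then $u_\delta(0,\cdot)=0$ and $(D_t+H_\delta)u_\delta = (D_t+H_\delta)\psi_\delta =: r_\delta$. Repeating the change of variables $\xi = K+\delta\zeta$, $y\mapsto \delta y$ that led to \eqref{eq:residual}, and using \eqref{eq:macro}, we obtain
\[
r_\delta(t,x) = e^{i(K\cdot x - Et)}\,\delta^{\frac d2}\,\big(\ow c_\delta\, \phi(\delta t,\cdot;\delta)\big)(\delta x),
\]
where the residual symbol is
\[
c_\delta(X,\zeta) := a(X,K+\delta\zeta;\delta) - EI_n - \delta b_{0p}(X,\zeta;\delta).
\]
By Duhamel's formula and the unitarity of $e^{-itH_\delta}$ on $L^2$, we get $\|u_\delta(t)\|_{L^2}\le \int_0^t \|r_\delta(s)\|_{L^2}\,ds$. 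For $m\ge1$ we instead use Lemma \ref{lemma:u_delta_reg} together with Assumption \ref{assumption:commutator}: it converts $H^N$ bounds on $r_\delta$ into $H^m$ bounds on $u_\delta$, at the price of factors $(1+(\delta t)^m)$ coming from iterated commutators. The main task is therefore to show
\[
\|r_\delta(t)\|_{H^N}\le C\delta^{p+\rate}\big(1+(\delta t)^{\mbp}\big).
\]
The extra factor $(\delta t)^{\mbp}$ comes from the growth of Sobolev norms of $\phi(T)$.

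To bound the residual, we split $c_\delta$ with a smooth cutoff $\chi(\delta^{\eta}\zeta/C_0)$ that localizes to the region $|\xi-K|\le C_0\delta^{1-\eta}$.

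\textbf{Inside the cutoff.} By \eqref{eq:decomposition_higher_order}, $c_\delta = \delta^{p+\rate}b_{1p}$, and $b_{1p}\in S(\aver{\zeta}^{p+1})$. Lemma \ref{lemma:S_bd}, applied after factoring out $\aver{D}^{p+1}$, bounds this part by $\delta^{p+\rate}\|\phi(T)\|_{H^{N+p+1}}$.

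\textbf{Outside the cutoff.} Here $|\zeta|\gtrsim \delta^{-\eta}$. Both $a$ and $b_{0p}$ are polynomially bounded, with derivative losses $\delta^{-\nu_1}$ per $X$-derivative from \eqref{eq:reg_a} (recall $\eta>\nu_1$). Hence the symbol is bounded by $\delta^{-C}\aver{\zeta}^{p}$. Inserting $\aver{\zeta}^{-M}\lesssim \delta^{\eta M}$, we obtain any power of $\delta$ once $\phi$ has $M$ extra derivatives. This is the content of Lemma \ref{lemma:res} in the $p=1$ case, and we expect its proof to extend verbatim with $S^p$ in place of $S^1$. Note that $\eta>\nu_1$ is exactly what makes the gain $\delta^{\eta}$ per extra derivative beat the loss $\delta^{-\nu_1}$ per $X$-derivative of the symbol in the composition calculus.

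The main obstacle is controlling $\|\phi(T)\|_{H^{k}}$ uniformly in $\delta$. In Theorem \ref{thm:main_0} this was free: $\bH$ commutes with its own propagator, and \eqref{eq:equivalent_norms} makes $\|(i+\bH)^N\cdot\|_{L^2}$ an equivalent $H^N$ norm, uniformly in $\delta$. Now \eqref{eq:equivalent_norms_p} loses a factor $\delta^{-N}$, because $b_{0p}$ is only semiclassically elliptic of order $p$ while being elliptic of order $1$ at scales $|\zeta|\lesssim\delta^{-1}$. Using conserved quantities $\|(i+\bH_p)^{k}\phi\|$ alone would therefore lose powers of $\delta$.

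Instead, I would propagate Sobolev norms directly by a commutator argument. Differentiate $\|\aver{D}^k\phi(T)\|_{L^2}^2$ in $T$. Since $\bH_p$ is self-adjoint, only $[\aver{D}^k,\bH_p]$ contributes. Its symbol, in the slow variable $X$ where $b_{0p}$ has uniformly bounded $X$-derivatives, has order $k+p-1$ and not better. For $p=1$ this closes by Gronwall's inequality. For $p\ge2$ I would combine it with the decomposition $\delta b_{0p} = \delta b_0 + O(\delta^{1+\rate}\aver{\zeta}^2)$ on $|\zeta|\lesssim\delta^{-\eta}$ and interpolate with the conserved norms $\|(i+\bH_p)^{j}\phi\|$. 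The resulting polynomial-in-$T$ growth, with exponent depending on $N,p,\eta,\rate$, is what I would call $\mbp$.

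Finally, insert this bound into the Duhamel integral and into Lemma \ref{lemma:u_delta_reg}. Integrating $\delta^{p+\rate}(1+(\delta s)^{\mbp})$ over $[0,t]$ and combining with the $(1+(\delta t)^m)$ factor from the commutators gives $C\delta^{p+\rate}t\,(1+(\delta t)^{m+\mbp})$, which is the claimed bound.
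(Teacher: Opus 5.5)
Your residual estimate (cutoff at $|\zeta|\sim\delta^{-\eta}$, the $S(\aver{\zeta}^{p+1})$ bound inside, arbitrary powers of $\delta$ outside because $\eta>\nu_1$) and the final appeal to Lemma \ref{lemma:u_delta_reg} agree with the paper. The gap is the step you yourself call the main obstacle: a bound $\|\phi(T,\cdot\,;\delta)\|_{H^k}\le C(1+T^{M_k})$ that is uniform in $\delta$. Your commutator sketch does not produce it.

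For $p\ge 2$, the energy inequality for $\|\aver{D}^k\phi\|_{L^2}$ needs control of $\phi$ in $H^{k+p-1}$. At high frequencies, say $|\zeta|\sim\delta^{-1}$, $\delta b_{0p}$ is no longer a perturbation of $\delta b_0$ and may even vanish (cf.\ Section \ref{sec:topo}). There the conserved quantities $\|(i+\bH_p)^j\phi\|_{L^2}$ only give $\delta^s\|\phi\|_{H^s}\le C_s$, via \eqref{eq:equivalent_norms_p}. This does not exclude an $O(1)$ amount of $L^2$ mass at frequency $\delta^{-1}$, so no interpolation with these norms yields $\delta$-uniform control of the $H^{k+p-1}$ norm. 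Also, an estimate closed by Gronwall's inequality gives growth like $e^{CT}$, not the polynomial factor $(\delta t)^{\mbp}$ asserted in the theorem.

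The missing idea (the paper's Lemma \ref{lemma:Sobolev_bd_higher_order}) is to treat $B:=\bH_p-\bH$ perturbatively around the uniformly elliptic first-order operator $\bH$.
\begin{itemize}
\item[(i)] Define $(D_T+\bH)\phi_1=0$, $\phi_1(0)=\phi_0$, and $(D_T+\bH)\phi_{n+1}=-B\phi_n$, $\phi_{n+1}(0)=0$.
\item[(ii)] By \eqref{eq:equivalent_norms}, $\|(i+\bH)^q\cdot\|_{L^2}$ is equivalent to the $H^q$ norm uniformly in $\delta$, and $\|B\|_{H^{q+p+1}\to H^q}\le C\delta^{\rate}$. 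Lemma \ref{lemma:Sobolev_0_appendix} and induction then give $\|\phi_{n+1}(T)\|_{H^q}\le C(\delta^{\rate}T)^n$ for every $q$, with no loss in $\delta$.
\item[(iii)] The lossy equivalence \eqref{eq:equivalent_norms_p} is used only on the remainder $u_M:=\phi-\sum_{j=1}^M\phi_j$. It solves $(D_T+\bH_p)u_M=-B\phi_M$, $u_M(0)=0$, so $\|u_M(T)\|_{H^N}\le C\delta^{-N}(\delta^{\rate}T)^M\le CT^M$ once $M\ge N/\rate$.
\end{itemize}
This gives $\|\phi(T)\|_{H^N}\le C(1+T^{\lceil N/\rate\rceil})$. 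Inserting it into your two residual pieces produces the explicit exponent $\mbp$: the piece inside the cutoff needs the $H^{N+p+1}$ norm, and the piece outside needs a higher norm.
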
 
This provides a bound of the form $C(T)\delta^{p-1+\rate}$ for any $\delta t\leq T$, and thus strengthens the order of convergence from Theorems \ref{thm:main_0} and \ref{thm:main2}. The proof of this result is postponed to Appendix \ref{subsec:pf_higher_order}. The constant $\mbp$ is given explicitly by $$\mbp := \lceil (m + \max\{\nbp,p\} + 1)/\rate\rceil, \quad \nbp := \lceil (\rate + p + \nu_0 + \nu (\ncv + m))/(\eta - \nu_1) \rceil,$$ with $\nu := \max\{\nu_1, \nu_2\}$ and $\nu_0, \nu_1, \nu_2$ defined in Assumption \ref{assumption:a} (so that $\nu_1 < \eta$), and $\ncv$ as in Lemma \ref{lemma:S_bd}.

\begin{remark}
    The higher order of convergence in Theorem \ref{thm:main} allows for uniform approximations. Indeed, Proposition \ref{prop:Sobolev} and Theorem \ref{thm:main} with $m>d/2$ imply that for any $T > 0$,
    \begin{align*}
    \sup_{x\in\Rm^d}
        |(\psi_\delta - \varphi_\delta) (t, x)| \le \norm{(\psi_\delta - \varphi_\delta) (t, \cdot)}_{\ell^2 (\mathbb{Z}^d)} \le C\|(\psi_\delta-\varphi_\delta)(t, \cdot)\|_{H^m(\Rm^d)} \le C \delta^{p-1 + \rate}
    \end{align*}
    uniformly in $0 \le t \le T/\delta$ and $0 < \delta \le \delta_0$ since any point $x$ can be embedded into a lattice. To control relative errors for initial data $\varphi_\delta (0,x)$ in \eqref{eq:microWP} normalized by $\delta^{d/2}$, the estimate of interest is
    \begin{align*}
        \frac{\sup_{x\in\Rm^d}
        |(\psi_\delta - \varphi_\delta) (t, x)| }{\norm{\varphi_\delta (0, \cdot)}_{L^\infty (\mathbb{R}^d)}} = C \delta^{-\frac d2}\sup_{x\in\Rm^d}
        |(\psi_\delta - \varphi_\delta) (t, x)|\le C \delta^{p-1 + \rate - \frac{d}{2}},
    \end{align*}
    which is meaningful when $p > 1 - \rate + \frac{d}{2}$. Note that the factor $\delta^{\frac d2}$ in \eqref{eq:microWP} is chosen to make $\norm{\varphi_\delta (0, \cdot)}_{L^2 (\mathbb{R}^d)}$ independent of $\delta$, thus the relative and absolute errors in $H^m (\mathbb{R}^d)$ are of the same order.

    More generally, with the notation $\tilde\psi(t,X)=\psi(t,x)$, we find that $\|\tilde\psi\|_{H^m(\Rm^d)}\leq \delta^{-m}\|\psi\|_{H^m(\Rm^d)}$ so that in macroscopic variables, $\|\tilde\psi_\delta-\tilde\varphi_\delta\|_{H^m(\Rm^d)} \leq C \delta^{p+\mu-1-m}$.
\end{remark}

We conclude this section by establishing that if the symbol $a$ from Assumption \ref{assumption:a} is in $S(1)$ uniformly in $\delta$ and is sufficiently smooth as a function of $\delta$, then the symbol $b_{0p}$ can be obtained by a Taylor expansion.
\begin{proposition}\label{prop:taylor}
    Fix $p \in \mathbb{N}$ 
    and suppose $a\in C^{p+1} (\mathbb{R}^{2d} \times [0,\delta_0])$ satisfies \eqref{eq:reg_a} with $\nu_0 = \nu_1 = \nu_2 = 0$. Then
    \begin{align*}
        a(X,\xi;\delta) = a(X,K;0) + \delta b_{0p}(X, \frac{\xi-K}\delta; \delta) + \delta^{p+1} b_{1p}(X, \frac{\xi-K}\delta;\delta), \qquad (X,\xi) \in \mathbb{R}^{2d}, \quad 0 < \delta \le 1,
    \end{align*}
    where  $\bar K := (K; 0) \in \mathbb{R}^{d+1}$, $\bar{\zeta} := (\zeta; 1)\in \mathbb{R}^{d+1}$, and $\partial_\ell := \partial_{\xi_\ell}$ if $1 \le \ell \le d$ and $\partial_{d+1} := \partial_\delta$,
    \begin{align*}
        \delta b_{0p} (X,\zeta;\delta) = \sum_{j=1}^p \frac{\delta^j}{j!} \sum_{i_1, \cdots, i_j = 1}^{d+1} \bar{\zeta}_{i_1} \dots \bar{\zeta}_{i_j} \partial_{i_1, \dots, i_j} a(X,\bar{K}), \qquad b_{0p} \in S^p, \qquad b_{1p} \in S (\aver{\zeta}^{p+1}).
    \end{align*}
\end{proposition}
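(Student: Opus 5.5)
This is Taylor's theorem with integral remainder applied to the map $(\xi,\delta)\mapsto a(X,\xi;\delta)$ at the point $\bar K=(K;0)\in\Rm^{d+1}$. Write $\xi = K+\delta\zeta$, so that $(\xi;\delta)-\bar K = \delta\bar\zeta$ with $\bar\zeta=(\zeta;1)$. Since $a\in C^{p+1}$ in $(\xi,\delta)$, the Taylor formula of order $p$ gives
\begin{align*}
a(X,K+\delta\zeta;\delta) = a(X,\bar K) + \sum_{j=1}^p \frac{\delta^j}{j!}\sum_{i_1,\dots,i_j=1}^{d+1}\bar\zeta_{i_1}\cdots\bar\zeta_{i_j}\partial_{i_1\dots i_j}a(X,\bar K) + R_{p+1}(X,\zeta;\delta),
\end{align*}
with
\begin{align*}
R_{p+1} = \frac{\delta^{p+1}}{p!}\sum_{i_1,\dots,i_{p+1}}\bar\zeta_{i_1}\cdots\bar\zeta_{i_{p+1}}\int_0^1(1-s)^p\,\partial_{i_1\dots i_{p+1}}a(X,\bar K+s\delta\bar\zeta)\,ds.
\end{align*}
We define $\delta b_{0p}$ as the polynomial sum and set $b_{1p}:=\delta^{-p-1}R_{p+1}$. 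The identity then holds for all $(X,\xi)$ and $0<\delta\le1$, with no restriction to $|\xi-K|\le C_0\delta^{1-\eta}$. Hermitian symmetry is inherited, because $\xi$- and $\delta$-derivatives of the Hermitian-valued $a$ are Hermitian and the coefficients $\bar\zeta_{i}$ are real.

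For the symbol class of $b_{0p}$, note that $b_{0p}=\sum_{j=1}^p\delta^{j-1}P_j(X,\zeta)$, where $P_j$ is a polynomial in $\zeta$ of degree at most $j$. Its coefficients are derivatives of $a$ at $\bar K$, which are smooth and bounded in $X$ together with all $X$-derivatives by \eqref{eq:reg_a} with $\nu_i=0$. For $\delta\le1$ the powers $\delta^{j-1}$ are bounded, and polynomials of degree $\le p$ with $C^\infty_b$ coefficients lie in $S^p$. The bounds are uniform in $\delta$.

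The main obstacle is showing $b_{1p}\in S(\aver{\zeta}^{p+1})$. We can write
\begin{align*}
b_{1p}=\frac1{p!}\sum \bar\zeta_{i_1}\cdots\bar\zeta_{i_{p+1}}\,g_{i}(X,\zeta;\delta),\qquad g_i=\int_0^1(1-s)^p\partial_{i}a(X,K+s\delta\zeta;s\delta)\,ds.
\end{align*}
The prefactor is a polynomial of degree $p+1$, so it suffices to show $g_i\in S(1)$ uniformly in $\delta$. This needs derivative bounds. Each $\partial_X^\alpha\partial_\zeta^\beta$ of $g_i$ produces a factor $(s\delta)^{|\beta|}\le1$ times $\partial_X^\alpha\partial_\xi^\beta\partial_i a$ evaluated along the segment. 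These quantities are bounded uniformly, using \eqref{eq:reg_a} with $\nu$'s zero for the $\xi$- and $X$-derivatives.

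The delicate point is the mixed derivatives involving $\partial_\delta$ (the index $d+1$). Here I would interpret the hypothesis $a\in C^{p+1}$ together with \eqref{eq:reg_a} as giving uniform bounds on $\partial_X^\alpha\partial_\xi^\beta\partial_\delta^k a$ for $k\le p+1$. If needed, I would state this strengthening explicitly, namely that $\partial_\delta^k a\in S(1)$ uniformly for $k\le p+1$. With this in hand, the product rule (Leibniz) applied to the $\bar\zeta$ polynomial factor gives
\begin{align*}
|\partial_X^\alpha\partial_\zeta^\beta b_{1p}|\le C\aver{\zeta}^{p+1}.
\end{align*}
This is the $S(\aver{\zeta}^{p+1})$ bound, which completes the proof.
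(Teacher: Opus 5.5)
Your argument is the paper's own proof written out in full: the paper says only that the proposition is a direct consequence of Taylor's theorem, and you make this precise with the integral remainder, $b_{1p}:=\delta^{-p-1}R_{p+1}$, and a Leibniz count for the symbol classes. You are right that the extra hypothesis $\partial_\delta^k a\in S(1)$ uniformly for $k\le p+1$ is genuinely needed and does not follow from $a\in C^{p+1}$ together with \eqref{eq:reg_a} (for example, $a=\sin((1+\delta)X_1)I_n$ satisfies both yet puts the unbounded coefficient $X_1\cos X_1$ into $b_{0p}$), so you should also invoke it for the $\partial_\delta$-coefficients of $b_{0p}$ rather than citing \eqref{eq:reg_a} alone there; the hypothesis does hold in the paper's applications, where $a$ is affine in $\delta$ with $S(1)$ coefficients.
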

\begin{proof}
    This is a direct consequence of Taylor's theorem.
\end{proof}

\section{Applications}
\label{sec:appli}

This section applies the above theory to several tight-binding Hamiltonians that find applications primarily in models of single- and multi-layer graphene. 

We begin by analyzing the Haldane model \eqref{eq:Hdelta}-\eqref{eq:a0}, which satisfies Assumptions \ref{assumption:commutator} and \ref{assumption:a_higher_order} (for any $m$ and $p$, respectively) with effective symbol $b_{0p}$ given by \eqref{eq:expansion_p}. When $p=1,2$, we provide explicit formulas \eqref{eq:b01}-\eqref{eq:b02}-\eqref{eq:macroHaldane} for $b_{0p}$ and its corresponding macroscopic Hamiltonian. 
In the presence of strain, Assumptions \ref{assumption:a} and \ref{assumption:commutator} no longer apply so we instead verify Assumptions \ref{assumption:a2} and \ref{assumption:commutator2}. Effective symbols capturing strain and twist effects are given explicitly by \eqref{eq:b_theta}-\eqref{eq:small_twist}-\eqref{eq:twist_corr}. For larger perturbations of the Haldane model (in which deformations of the lattice vectors are no longer small), we define the lattice vectors by \eqref{eq:a_X_def}-\eqref{eq:relations}-\eqref{eq:v1_X_def} and verify that Assumptions \ref{assumption:a2} and \ref{assumption:commutator2} still hold with effective symbol $b_0$ given by \eqref{eq:b0_g}-\eqref{eq:g}.

In Section \ref{sec:multilayer} we show that commensurate multilayer graphene models fit the framework of Assumptions \ref{assumption:commutator} and \ref{assumption:a_higher_order} so that Theorem \ref{thm:main} applies; see \eqref{eq:macromulti} for the macroscopic $n$-layer Hamiltonian. Finally, we apply our theory to a tight-binding model of twisted bilayer graphene (TBG) in Section \ref{subsec:tbg}, presenting a macroscopic model and establishing its validity with Proposition \ref{prop:TBG}. Our macroscopic model for TBG is unitarily equivalent to the Bistritzer-MacDonald model \cite{bistritzer2011moire, watson2023bistritzer}; see Remark \ref{remark:BM}.

\subsection{Haldane model} \label{sec:Haldane}

We start with the classical Haldane model \cite{haldane1988model}, a hexagonal bipartite model with a periodic array of sites $A$ and sites $B$.  See Fig. \ref{fig:geomgraphene} for the geometry of the TB model.
\begin{figure}  \center
\begin{tikzpicture}[scale=1.]
 \pgfmathsetmacro{\r}{1.4} 
 \pgfmathsetmacro{\s}{sqrt(3)*\r}
 \draw[thick,color=black] plot[ samples at={0,60,...,360},variable=\x] (\x:\r) ; 
 \draw[thick,color=black,shift={(1.5*\r,\s/2)}] plot[ samples at={0,60,...,360},variable=\x] (\x:\r) ; 
 \draw[thick,color=black,shift={(1.5*\r,-\s/2)}] plot[ samples at={0,60,...,360},variable=\x] (\x:\r) ; 
 \draw[thick,color=black,shift={(3*\r,0*\s)}] plot[ samples at={0,60,...,360},variable=\x] (\x:\r) ; 
  \draw[thick,magenta,->] (0,0) --(1.5*\r,\s/2);
  \draw[thick,magenta,->] (0,0) --(1.5*\r,-\s/2);
  \node[color=magenta, right] at (1.5*\r,\s/2) {$v_1$};
  \node[color=magenta, right] at (1.5*\r,-\s/2) {$v_2$};
  \draw[thin,color=blue,mark=*] plot[ samples at={60,180,...,420},variable=\x] (\x:\r) ;
  \draw[thin,color=blue,shift={(1.5*\r,\s/2)},mark=*] plot[ samples at={60,180,...,420},variable=\x] (\x:\r) ;
  \draw[thin,color=blue,shift={(1.5*\r,-\s/2)},mark=*] plot[ samples at={60,180,...,420},variable=\x] (\x:\r) ;
  \draw[thin,color=blue,shift={(3*\r,0*\s)},mark=*] plot[ samples at={60,180,...,420},variable=\x] (\x:\r) ;
  \node[color=blue, right] at (2.1*\r,+0.00*\r) {B}; \filldraw[fill=blue!80!white, draw=black]  (2*\r,-0.0*\r) circle (1mm);
  \draw[thin,color=red,mark=*] plot[ samples at={0,120,...,360},variable=\x] (\x:\r) ;
  \draw[thin,color=red,shift={(1.5*\r,\s/2)},mark=*] plot[ samples at={0,120,...,360},variable=\x] (\x:\r) ;
  \draw[thin,color=red,shift={(1.5*\r,-\s/2)},mark=*] plot[ samples at={0,120,...,360},variable=\x] (\x:\r) ;
  \draw[thin,color=red,shift={(3*\r,0*\s)},mark=*] plot[ samples at={0,120,...,360},variable=\x] (\x:\r) ;
  \node[color=red, right] at (0.97*\r,0.2*\r) {A};\filldraw[fill=red!80!white, draw=black]  (\r,-0.0*\r) circle (1mm);
\end{tikzpicture}
\hspace{1.25cm} 
\begin{tikzpicture}[scale=1]
 \pgfmathsetmacro{\r}{1.4} 
 \draw[thick,color=black,mark=*,shift={(\r,\r)}] plot[ samples at={30,90,...,390},variable=\x] (\x:\r) ; 
 \node[color=black] at (1.0*\r,-0.2*\r) {$K$}; \filldraw[fill=black!80!white, draw=black]  (1.*\r,-0.0*\r) circle (1mm);
 \node[color=black] at (1.0*\r,2.2*\r) {$K'$}; \filldraw[fill=black!80!white, draw=black]  (1.*\r,2*\r) circle (1mm);
 \node[color=black] at (1.15*\r,1.05*\r) {$\Gamma$}; \filldraw[fill=black!80!white, draw=black]  (1*\r,1*\r) circle (.65mm);
 \filldraw[fill=white!80!white, draw=white]  (1.*\r,-0.6*\r) circle (1mm);
 \node[color=black,right] at (1.9*\r,1.5*\r) {$K$};\node[color=black,right] at (1.9*\r,.5*\r) {$K'$};
\end{tikzpicture}
\caption{Geometry of graphene bipartite lattice. Left: spatial lattice. Right: dual lattice. }
\label{fig:geomgraphene}
\end{figure}
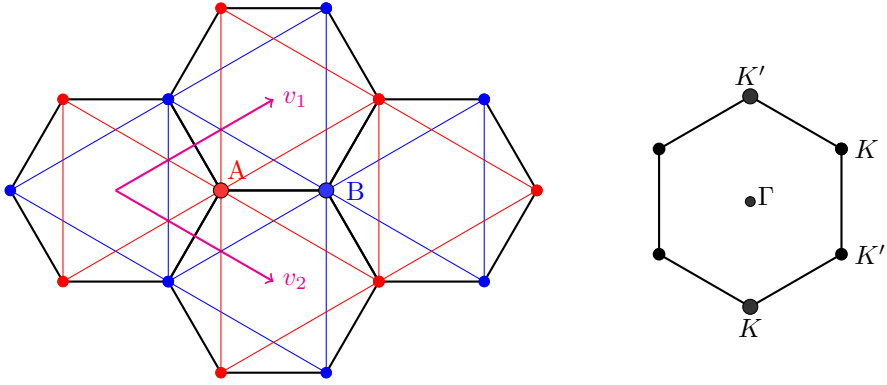

\paragraph{Microscopic description.}
Set $v > 0$ be a lattice spacing between elements of the Bravais lattice (considered to be of order $O(1)$) and define
\begin{align}\label{eq:vs}
    v_1 := \frac{v}{2} (\sqrt{3},1), \qquad v_2 := \frac{v}{2} (\sqrt{3},-1).
\end{align}
We have $\la_1=\frac13(v_1+v_2)=v\frac{\sqrt3}3 e_x$ with $\la_2=R_{2\pi/3}\la_1$ and $\la_3=R_{-2\pi/3}\la_1$ the three vectors linking a site $A$ to its nearest neighbors $B$. Here, $R_\theta$ is the anticlockwise rotation of $\theta$. We then define $\lb_1=v_1$ and $\lb_2=v_2-v_1$ with $\lb_3=-(\lb_1+\lb_2)$ the three vectors linking a site $A$ ($B$) to its nearest neighbors $A$ ($B$). Thus, the six nearest same-type neighbors are described by shifts by $\pm \lb_j$.  Note that both the sites $A$ and the sites $B$ (modulo a fixed translation) occupy the edges of a lattice $\Lambda=\Zm v_1+\Zm v_2$, whence the name of a bipartite lattice modeling the honeycomb structure.  We introduce
\begin{align}\label{eq:H1}
  H_1 = t_1 \sum_{j=1}^3 \big( \tau_{\la_j} \sigma_+ +  \tau_{-\la_j} \sigma_-\big)
\end{align}
where $\sigma_\pm=\frac12(\sigma_1\pm i\sigma_2)$ and the shift operators are defined as $\tau_\la f(x) = f(x+\la)$. 

A term $t_2$ next models interactions between next-nearest neighbors. This interaction also includes a local magnetic component (Aharonov-Bohm phase) $e^{\pm i\phi}$ leading to the Hamiltonian
\[
 H_2 = \frac{t_2}{2} \sum_{j=1}^3 (e^{i\phi} \tau_{\lb_j} + e^{-i\phi} \tau_{-\lb_j} \big) \frac{I-\sigma_3}2 +  \frac{t_2}{2} \sum_{j=1}^3 (e^{i\phi} \tau_{-\lb_j} + e^{-i\phi} \tau_{\lb_j} \big) \frac{I+\sigma_3}2.
\]
We finally have a local mass term $M$ that we assume is asymmetric at the sites $A$ and $B$. The unperturbed Haldane model is given by
\begin{equation}\label{eq:Haldanemodel}
  H = H_1 + H_2 + M \sigma_3.
\end{equation}

In the Fourier domain, $H={\mathcal F}^{-1}(\hat H_1+\hat H_2 + M\sigma_3){\mathcal F}$, where
\begin{align*}
  \hat H_1 
  = t_1 \sum_{j=1}^3\big( \cos \la_j\cdot\xi \sigma_1 - \sin \la_j\cdot\xi \sigma_2\big),\quad
  \hat H_2 = 
  t_2 \cos\phi\sum_{j=1}^3   \cos \xi\cdot \lb_j I + t_2  \sin\phi\sum_{j=1}^3  \sin \xi\cdot \lb_j \sigma_3.
\end{align*}
With this convention, we find that 
$
  \hat H = \sum_{j=0}^3 h_j \sigma_j
$
with $\sigma_0=I_2$ the identity matrix and
\begin{align*}
  h_0 & = t_2  \cos\phi\sum_{j=1}^3  \cos \xi\cdot \lb_j, 
  \ h_1 
  = t_1 \sum_{j=1}^3  \cos \la_j\cdot\xi ,\ h_2 = -  t_1 \sum_{j=1}^3  \sin \la_j\cdot\xi,\ h_3 =M+t_2 \sin\phi\sum_{j=1}^3   \sin \xi\cdot \lb_j.
\end{align*}
These two bands have a common point when all terms $h_{1,2,3}=0$. This implies 
$
 \sum_{j=1}^3 e^{i\la_i\cdot\xi} =0,
$
which admits two inequivalent Dirac points 
{$K=-\frac{4\pi}{3 v}(0,1)$} and $K'=-K$ as well as all translates of $K$ and $K'$ by multiples of the dual lattice. A macroscopic description of this band crossing is then possible for $E=0$ and $\tilde K$ any of the above Dirac points. The term $h_3$ then needs to be small, implying a choice of scaling $\delta M(\delta x)$ and $\delta t_2(\delta x)$ for the two mass terms.

\paragraph{Macroscopic description and Dirac model.}
We now verify Assumptions \ref{assumption:commutator} and \ref{assumption:a_higher_order} in order to apply Theorem \ref{thm:main} in the vicinity of the band crossing $(0,K)$ for $K$ the above Dirac point.
For ease of exposition, we set $\phi := \pi/2$. We now allow for spatial dependence of the coefficients of $H$ in \eqref{eq:Haldanemodel} and define
\begin{align} \label{eq:Hdelta}
   H_\delta =\ow a_\delta\quad \mbox{ with } \quad a_\delta (x,\xi) = a(\delta x,\xi;\delta) = a^0 (\delta x,\xi) + \delta a^1 (\delta x, \xi),
\end{align}
\begin{align}\label{eq:a0}
\begin{split}
    a^0 (X, \xi) &= t_1 (X) \begin{pmatrix}
        0 & e^{i \xi \cdot \la_1} + e^{i\xi \cdot \la_2} + e^{i\xi \cdot \la_3}\\
        e^{-i \xi \cdot \la_1} + e^{-i\xi \cdot \la_2} + e^{-i\xi \cdot \la_3} & 0
    \end{pmatrix},
    \\
    a^1 (X,\xi) &= \Big(M(X) + t_2 (X) \sum_{j=1}^3 \sin (\lb_j \cdot \xi)\Big) \sigma_3,
\end{split}
\end{align}
for some 
functions $M, t_1, t_2 \in C^\infty_b (\mathbb{R}^2)$, with $|t_1 (X)| \ge c > 0$ bounded away from zero. Defining $a^j_\delta (x, \xi) := a^j (\delta x, \xi)$ for $j=1,2$, the operator $H_\delta = \ow a^0_\delta + \ow a^1_\delta$ is given explicitly by
\begin{align*}
    \ow a^0_\delta &= \sum_{j=1}^3 (t_1 (\delta (x+\frac{1}{2}\la_j)) \tau_{\la_j} \sigma_+ + t_1 (\delta (x-\frac{1}{2}\la_j)) \tau_{-\la_j} \sigma_-),\\
    \ow a^1_\delta &= \Big(M(\delta x) + \frac{i}{2}\sum_{j=1}^3 (t_2 (\delta (x-\frac{1}{2} \lb_j))\tau_{-\lb_j} - t_2 (\delta (x+\frac{1}{2} \lb_j))\tau_{\lb_j} \big) \Big) \sigma_3.
\end{align*}
It is clear that $a (\cdot, \cdot \; ; \delta) \in S(1)$ uniformly in $0 < \delta \le 1$. By Proposition \ref{prop:commutator_sufficient}, this means that $H_\delta$ satisfies Assumption \ref{assumption:commutator} for all $m \in \mathbb{N}_0$.
To obtain the symbols $b_{0p}$ and $b_{1p}$ for $p \in \mathbb{N}$, we use the linearity of $a(X, \xi; \delta)$ in $\delta$ to verify that $a$ satisfies the conditions of Proposition \ref{prop:taylor}, with $a (X,K;0) = 0$ for all $X$. It follows that $a$ has the decomposition \eqref{eq:decomposition_higher_order} for some $b_{1p} (X, \zeta; \delta) \in S (\aver{\zeta}^{p+1})$ uniformly in $\delta$, where $\rate = 1$, $E = 0$, and
\begin{align}\label{eq:expansion_p}
    \delta b_{0p} (X,\zeta;\delta) = \sum_{j=1}^p \frac{\delta^j}{j!} \sum_{i_1, \cdots, i_j = 1}^{3} \bar{\zeta}_{i_1} \dots \bar{\zeta}_{i_j} \partial_{i_1, \dots, i_j} a(X,\bar{K}),\qquad b_{0p}\in S^p,
\end{align}
with $\partial_{\ell} a(X,\xi;\delta) := \partial_{\xi_\ell} a(X,\xi,\delta)$ if $1 \le \ell \le 2$ and $\partial_{3} a(X,\xi;\delta) := \partial_\delta a(X,\xi;\delta)$. 

Using that
$\nabla_\xi \sum_{j=1}^3 e^{i \xi \cdot \la_j}\vert_{\xi=K} = \frac{v\sqrt{3}}{2} (i,1)$ and $\sin (\lb_j \cdot K) = -\sqrt{3}/2$ for each $j$, we find that
\begin{align}\label{eq:b01}
    \delta b_0 (X,\zeta;\delta) := \delta b_{01} (X,\zeta;\delta) = \frac{\sqrt{3} v}{2} t_1 (X) \delta (\zeta_2 \sigma_1 - \zeta_1 \sigma_2) +
    \delta \Big(M(X) - \frac{3\sqrt{3}}{2} t_2 (X) \Big) \sigma_3.
\end{align}
Since $t_1$ is bounded away from zero, the ellipticity condition \eqref{eq:ellipticty} holds. We have thus verified Assumption \ref{assumption:a} with $\rate = 1$ 
for the Haldane model \eqref{eq:Hdelta}-\eqref{eq:a0}.
Note that $b_{0}$ is independent of $\delta$.

To calculate the second order terms, we use that $$\nabla^2_\xi \sum_{j=1}^3 e^{i \xi \cdot \la_j}\vert_{\xi=K} = \frac{v^2}{4} \begin{pmatrix}
    -1 & -i\\
    -i & 1
\end{pmatrix}, \qquad \nabla_\xi \sum_{j=1}^3 \sin (\lb_j \cdot \xi)\vert_{\xi=K} = 0$$
to conclude that
\begin{align}\label{eq:b02}
    \delta b_{02} (X, \zeta;\delta) = \delta b_{0} (X, \zeta;\delta) + \frac{v^2}{8} t_1 (X) \delta^2 \begin{pmatrix}
        0 & (\zeta_2 - i\zeta_1)^2\\
        (\zeta_2 + i\zeta_1)^2 & 0
    \end{pmatrix},
\end{align}
where $\delta b_{0}$ is given by \eqref{eq:b01}. 
Observe that $\delta b_{02}$ is a polynomial in $\delta \zeta$, thus our assumption that $|t_1| \ge c$ implies that the ellipticity condition \eqref{eq:ellipticty_higher_order} holds. We conclude that the symbols $a, b_0, b_{02}$ also satisfy Assumption \ref{assumption:a_higher_order}. The macroscopic Hamiltonians up to second-order are 
\begin{align}\label{eq:macroHaldane}
    \begin{split}
    \bH & = \ow b_0 = \frac{\sqrt 3 v}4 \{ t_1(X) , D_2\sigma_1-D_1\sigma_2 \} + \Big(M(X) - \frac{3\sqrt{3}}{2} t_2 (X) \Big) \sigma_3,
    \\
    \bH_2 &= \ow b_{02} = \bH + \delta \frac{v^2}{32} \big( \{ \{t_1(X), D_2-iD_1\}, D_2-iD_1\} \sigma_+ + \{\{ t_1 (X), D_2+iD_1\}, D_2+iD_1\}\sigma_- \big),
    \end{split}
\end{align}
where $\{A,B\}=AB+BA$ and we recall that $\sigma_\pm=\frac12(\sigma_1\pm i\sigma_2)$. We recognize in $\bH$ a standard Dirac operator and in $\bH_2$ a modified Dirac operator with second-order terms. In the rest of the paper, we only present the symbols $b_{0j}(X,\zeta;\delta)$ rather than the operators $\bH_j=\ow b_{0j}$.

\begin{lemma}\label{lem:ellipb0p}
    The symbols $b_{0p}$ in \eqref{eq:expansion_p} are elliptic in $S^p$ for each $p\geq2$ in the sense that \eqref{eq:ellipticty_higher_order} holds.
\end{lemma}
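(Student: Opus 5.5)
The plan is to use the structure of the Haldane symbol to reduce the ellipticity estimate \eqref{eq:ellipticty_higher_order} to the non-vanishing of the top-degree homogeneous part of a Taylor polynomial of $f(\xi):=\sum_{j=1}^3 e^{i\xi\cdot\la_j}$ at $K$. The last step of that reduction, a trigonometric non-vanishing statement, is the part I expect to be hardest and have not verified for all $p$.

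\textbf{Step 1: splitting the symbol.} Write $\eta:=\delta\zeta$. Since $a=a^0+\delta a^1$ with $a^0$ independent of $\delta$ and $a^0(X,K)=0$, the expansion \eqref{eq:expansion_p} splits as
\[
\delta b_{0p}(X,\zeta;\delta)=t_1(X)\begin{pmatrix}0& P_p(\eta)\\ \overline{P_p(\eta)}&0\end{pmatrix}+\delta\, q_{p-1}(X,\eta)\,\sigma_3 .
\]
Here $P_p(\eta)=\sum_{k=1}^p \frac1{k!}D^kf(K)[\eta,\dots,\eta]$ is the degree-$p$ Taylor polynomial of $f$ at $K$, and $q_{p-1}$ is the real Taylor polynomial of degree $p-1$ of $M(X)+t_2(X)\sum_j\sin(\lb_j\cdot\xi)$ at $\xi=K$, evaluated at $\eta$. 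Hence
\[
\det \delta b_{0p}=-|t_1|^2|P_p(\eta)|^2-\delta^2 q_{p-1}^2 ,
\]
so $|\det \delta b_{0p}|^{1/2}\ge c\,|P_p(\eta)|$, using $|t_1|\ge c$. It therefore suffices to show $|P_p(\eta)|\ge c\aver{\eta}^p-C$.

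\textbf{Step 2: reduction to the top-degree term.} Write $P_p=Q_p+R_{p-1}$, where $Q_p(\eta)=\frac{i^p}{p!}\sum_j e^{iK\cdot\la_j}(\la_j\cdot\eta)^p$ is homogeneous of degree $p$ and $|R_{p-1}(\eta)|\le C(1+|\eta|^{p-1})$. If $Q_p$ has no zero on the unit circle, then by compactness and homogeneity $|Q_p(\eta)|\ge c_0|\eta|^p$. This gives $|P_p(\eta)|\ge c_0|\eta|^p-C(1+|\eta|^{p-1})\ge c\aver{\eta}^p-C'$, which is the claim.

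\textbf{Step 3: the trigonometric reduction.} Using $K=-\frac{4\pi}{3v}(0,1)$ and the fact that $\la_j$ has angle $\theta_j\in\{0,2\pi/3,-2\pi/3\}$, one gets $e^{iK\cdot\la_j}=e^{-i\theta_j}$. Writing $\eta=|\eta|(\cos\theta,\sin\theta)$, non-vanishing of $Q_p$ on the unit circle becomes
\[
F_p(\theta):=\sum_{j=1}^3 e^{-i\theta_j}\, r_j(\theta)\neq 0 \quad\text{for all }\theta,\qquad r_j(\theta):=\cos^p(\theta-\theta_j).
\]

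\textbf{Step 4: the main obstacle.} Expanding $\cos^p$ in exponentials, $F_p$ keeps only the frequencies $m=p-2k\equiv 2 \pmod 3$. For small $p$ a single binomial coefficient dominates, for example $p=2,3$ give one term and $p=4,6,9$ are dominated. For large $p$, however, these coefficients cluster near the centre, so a naive dominance argument fails. I would instead argue geometrically.
\begin{itemize}
\item Since $F_p(\theta+2\pi/3)=e^{2\pi i/3}F_p(\theta)$, it is enough to take $\theta\in[-\pi/3,\pi/3]$.
\item Each term $e^{-i\theta_j}r_j$ points along a sixth root of unity.
\item Order the magnitudes $|r_{(1)}|\ge|r_{(2)}|\ge|r_{(3)}|$. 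The two largest terms have directions differing by an angle that is never $\pi$: they are $e^{\mp 2\pi i/3}$ times $\pm1$. Hence their sum has modulus at least $\tfrac{\sqrt3}{2}|r_{(1)}|$ when the directions differ by $2\pi/3$, and more when they differ by $\pi/3$.
\item It remains to show that the third magnitude is strictly smaller than this. I would check this directly using $\sum_j\cos(\theta-\theta_j)=0$, which forces the smallest $|\cos|$ to be at most half the largest, so $|r_{(3)}|\le 2^{-p}|r_{(1)}|$.
\item The boundary case $\theta=\pm\pi/6$, where one cosine vanishes and the other two tie, would be treated separately. There the two equal-magnitude terms have phases differing by $\pi/3$ or $2\pi/3$, so they cannot cancel.
\end{itemize}
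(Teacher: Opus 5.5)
Your proposal is correct, and despite your hedging it is in fact complete for every $p\ge1$. It follows the paper's reduction but proves the final non-vanishing statement by a different argument.

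\textbf{Shared reduction.} Like the paper, you reduce to the top-degree homogeneous part of the off-diagonal entry on the unit circle. Your Step~1 is slightly more explicit than the paper: the identity $\det\delta b_{0p}=-t_1^2|P_p|^2-\delta^2q_{p-1}^2$ shows that the $\sigma_3$ part can only help. Via $\cos(\theta\mp2\pi/3)=-\cos(\theta\pm\pi/3)$, your $F_p(\theta)$ is exactly $3f(e^{i\theta})$ in the paper's notation.

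\textbf{Where the routes differ.} The paper splits $f(e^{i\theta})$ into real and imaginary parts. It finds where the imaginary part vanishes ($\theta\in\pi\mathbb{Z}$ for odd $p$, $\theta\in\frac{\pi}{2}\mathbb{Z}$ for even $p$) and evaluates the real part at those angles. You instead use the triangle inequality. The two largest terms lie along unit vectors at angle $\pi/3$ or $2\pi/3$, so their sum has modulus at least $\frac{\sqrt3}{2}|r_{(1)}|$. The third term satisfies $|r_{(3)}|\le2^{-p}|r_{(1)}|$.

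\textbf{The one unproved claim.} The bound $|r_{(3)}|\le 2^{-p}|r_{(1)}|$ follows in one line. In a zero-sum triple of reals, the other two entries have sign opposite to the entry of largest modulus, or vanish; otherwise the third entry would exceed it in modulus. Hence the largest modulus equals the sum of the other two, and the smallest is at most half the largest.

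\textbf{Conclusion and simplifications.} For every $\theta$ some $|\cos(\theta-\theta_j)|\ge\frac{\sqrt3}{2}$. Combining this with the bounds above gives $|F_p(\theta)|\ge(\frac{\sqrt3}{2}-2^{-p})(\frac{\sqrt3}{2})^p>0$ for all $\theta$ and all $p\ge1$. So neither the reduction to $[-\pi/3,\pi/3]$ nor the separate treatment of $\theta=\pm\pi/6$ is needed. Also, the rotation identity should read $F_p(\theta+2\pi/3)=e^{-2\pi i/3}F_p(\theta)$; the sign slip is harmless.

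\textbf{Comparison.} Your route avoids the parity case split and yields an explicit uniform lower bound. The paper's route instead gives the exact values of $f$ at the critical angles.
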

The proof of this technical result is postponed to Appendix \ref{sec:Ab0p}. This result shows that Assumption \ref{assumption:a_higher_order} holds for any $p$, and thus Theorem \ref{thm:main} applies to the Haldane model to arbitrary order in powers of $\delta$ with the limiting differential model ${\bf H}_p=\ow b_{0p}$.

\subsection{Strain and twist effects in graphene} \label{subsec:shear}

We now consider generalizations of the above derivations when the lattice parameters of the model are also allowed to have macroscopic variations. In these settings, Assumption \ref{assumption:a} typically no longer holds and we need to verify the more challenging constraints in Assumption \ref{assumption:a2}. 

\paragraph{Strain effect.} 
Suppose now lattice directions depending on $X$ in the above Haldane model. We still define the Hamiltonian by \eqref{eq:Hdelta}-\eqref{eq:a0} with $|t_1 (X)| \ge c > 0$, only now the vectors $\la_j$ and $\lb_j$ are given by
\begin{align}\label{eq:shear}
    \la_j(X)=\la_{j,0}+\delta \la_{j,1}(X), \qquad
    \lb_j(X)=\lb_{j,0}+\delta \lb_{j,1}(X), \qquad
    \la_{j,1}, \lb_{j,1} \in C^\infty_b (\mathbb{R}^2),
\end{align}
with $\la_{1,0}=\frac13(v_1+v_2)$, $\la_{j,0} = \rot_{2\pi/3} \la_{j-1,0}$ for $j=2,3$, $\lb_{1,0}=v_1$, $\lb_{2,0}=v_2-v_1$, and $\lb_{3,0}=-(\lb_1+\lb_2)$.
The presence of the $X$-dependent exponential terms $e^{\pm i \xi \cdot \la_j (X)}$ in the symbol $a$ means that the latter is not in $S(1)$ and thus Assumption \ref{assumption:a} does not apply, even in the seemingly innocuous setting of small perturbations of order $\delta$. We instead verify Assumption \ref{assumption:a2}.
Since the $X$-dependence of the $\la_j$ is of $O(\delta)$, we take $K = -\frac{4\pi}{3v} (0,1)$ as before and $B \equiv 0$ so that $\dpx = K$ is independent of $X$.

It is clear that $a \in \es$, as the functions $\xi \cdot \la_j (X)$ are real-valued and belong to $S^1$.
One can then follow the derivation of \eqref{eq:b01} to show that
\begin{align*}
    b_{0} (X,\zeta;\delta) = \frac{\sqrt{3} v}{2} t_1 (X) (\zeta_2 \sigma_1 - \zeta_1 \sigma_2) +
    \left(M(X) - \frac{3\sqrt{3}}{2} t_2 (X) \right) \sigma_3
    + t_1 (X) 
    \begin{pmatrix}
    0 & i K \cdot \alpha (X)\\
    -i K \cdot \overline{\alpha (X)} & 0
    \end{pmatrix}
\end{align*}
satisfies the conditions in Assumption \ref{assumption:a2} with $\rate = 1$ and any $C_0, \eta > 0$,
where 
\begin{equation}\label{eq:alphastrain}
    \alpha (X) := \la_{1,1} (X) + e^{-i2\pi/3} \la_{2,1} (X) + e^{i2\pi/3} \la_{3,1} (X).
\end{equation}
We retrieve the well-known result (see, e.g. \cite{barsukova2024direct, guglielmon2021landau, guinea2010energy}) that strain generates an arbitrary synthetic magnetic potential modeled by $\alpha(X)$ in \eqref{eq:alphastrain}.
Note that $b_{0}$ depends only on the leading-order terms of $\delta a^1 (\delta x, \xi)$ from \eqref{eq:Hdelta}, and thus is independent of the $\lb_{j,1}$. 
That the residual symbol $b_{1}$ in \eqref{eq:decomposition2} belongs to $\cs^2$ is a direct consequence of Taylor's theorem. Indeed, with $\bar{\zeta} := (\zeta; \delta) \in \mathbb{R}^{3}$, the explicit formula for the remainder gives
\begin{align}\label{eq:rem}
    b_{1} (X, \zeta; \delta) = \bar{\zeta} \cdot \int_{0}^1  (1-s) \nabla^2 a(X, K+\delta s \zeta; \delta s) ds \bar{\zeta}
\end{align}
element-wise, where $\nabla^2 a \in \cs^0$ is the Hessian of the function $(\xi; \delta) \mapsto a (X, \xi; \delta)$. 

We next refer to 
Lemma \ref{lem:varK} below, which shows that $H_\delta = \ow a_\delta$ defined by \eqref{eq:Hdelta}-\eqref{eq:a0} and \eqref{eq:shear} satisfies Assumption \ref{assumption:commutator2} for all $m \in \mathbb{N}_0$. We conclude that Theorem \ref{thm:main2} applies to the tight-binding and effective Hamiltonians $H_\delta$ and $\bH = \ow b_0$ for any $m \in \mathbb{N}_0$ with $\rate = 1$.

\begin{remark}\label{remark:no_lattice}
    The model defined by \eqref{eq:shear} does not necessarily have a natural lattice interpretation without additional constraints on the $O(\delta)$ corrections $\la_{j,1}$ and $\lb_{j,1}$. Indeed, the operators $P_\delta$ defined in \eqref{eq:P_delta_def} for different choices of $\sigma$ and $\la$ do not necessarily commute, 
    meaning that powers of the TB Hamiltonian $H_\delta^N$ could contain near duplicates of the same translation operator that are $O(\delta)$ apart. However, recalling \eqref{eq:Pf} and using that $y(x) = x + \la (\delta x) + O(\delta)$ there, 
    one can choose the $\la_{j,1}$ and $\lb_{j,1}$ such that  
    the associated translation operators map $\tilde \Lambda$ to itself, with $\tilde \Lambda$ a slowly-varying deformation of the lattice shown in Figure \ref{fig:geomgraphene}.
\end{remark}


\paragraph{Twist effect.}

Consider now the above Haldane model with vectors $\la_j$ and $\lb_j$ respectively replaced by $\rot_\theta \la_j$ and $\rot_\theta \lb_j$, with $\rot_\theta$ the (counter-clockwise) rotation matrix by some angle $\theta$. This corresponds to an overall twist of the underlying lattice.
The symbol $a$ from \eqref{eq:Hdelta} gets replaced by 
\begin{align}\label{eq:a_theta}
    a^\theta(\delta x,\xi;\delta) = a^{0,\theta} (\delta x,\xi) + \delta a^{1,\theta} (\delta x, \xi),
\end{align}
where
\begin{align}\label{eq:a0_theta}
\begin{split}
    a^{0,\theta} (X, \xi) &= t_1 (X) \begin{pmatrix}
        0 & e^{i \xi \cdot R_\theta \la_1} + e^{i\xi \cdot R_\theta \la_2} + e^{i\xi \cdot R_\theta \la_3}\\
        e^{-i \xi \cdot R_\theta\la_1} + e^{-i\xi \cdot R_\theta\la_2} + e^{-i\xi \cdot R_\theta\la_3} & 0 
    \end{pmatrix}
    = a^0 (X, R_{-\theta} \xi),
    \\
    a^{1,\theta} (X,\xi) &= \Big(M(X) + t_2 (X) \sum_{j=1}^3 \sin ((R_\theta \lb_j) \cdot \xi)\Big) \sigma_3 = a^1 (X, R_{-\theta} \xi),
    \end{split}
\end{align}
with the $\la_j$ and $\lb_j$ given by \eqref{eq:shear} to account for strain.
We will now construct an effective symbol $b_{0} = b_{0}^\theta$ that satisfies Assumption \ref{assumption:a2} with $B \equiv 0$ as above.
The relevant Dirac point is now $K^\theta := R_{\theta} K$, where $K = -\frac{4\pi}{3v} (0,1)$ was the Dirac point for the untwisted lattice. With $\overline{K^\theta} := (K^\theta, 0)$, 
it follows that
\begin{align*}
    \nabla_\xi a^{0,\theta} (X,\overline{K^\theta})= R_\theta \nabla_\xi a^0 (X, \overline{K}) = \frac{v \sqrt{3}}{2} t_1 (X) \begin{pmatrix}
        (0,0) & R_\theta (i,1)\\
        R_\theta (-i,1) & (0,0)
    \end{pmatrix}, 
\end{align*}
so that the leading-order symbol of the effective operator is
\begin{align}\label{eq:b_theta}
    b_{0}^\theta (X,\zeta;\delta) = \frac{\sqrt{3} v}{2} t_1 (X) \zeta \cdot R_{\theta + \pi/2} \sigma +
    \left(M(X) - \frac{3\sqrt{3}}{2} t_2 (X) \right) \sigma_3
    + t_1 (X) 
    \begin{pmatrix}
    0 & i K \cdot \alpha (X)\\
    -i K \cdot \overline{\alpha (X)} & 0
    \end{pmatrix}
    ,
\end{align}
where $\alpha(X)$ is still defined in \eqref{eq:alphastrain} and $\sigma := (\sigma_1, \sigma_2)$ is a vector of Pauli matrices, and the rotation matrix $R_{\theta + \pi/2}$ is understood to act on that vector.
As before, the symbol $b_{0}^\theta$ satisfies Assumption \ref{assumption:a2} with $\rate = 1$ and any $C_0, \eta > 0$, where the ellipticity condition \eqref{eq:ellipticty2} again follows from the assumption that $t_1$ is bounded away from zero. Note that the $\theta$-dependence of $b^\theta_{0}$ restricted to the term $\zeta \cdot R_{\theta + \pi/2} \sigma$, since the Dirac point and vectors $\la_{j,1}$ all get rotated by the same angle. 

By Lemma \ref{lem:varK} 
below, we obtain that $a_\delta (x,\xi) = a^\theta(\delta x,\xi;\delta)$
satisfies Assumption \ref{assumption:commutator2} for all $m \in \mathbb{N}_0$, meaning that Theorem \ref{thm:main2} applies (with $\rate = 1$ and any $m \in \mathbb{N}_0$). 

\paragraph{Small twist angle.} If we assume that the twist angle $\theta$ above satisfies $\theta = \beta \delta + O(\delta^2)$ for some $\beta \ne 0$, then we could also keep the untwisted $K = -\frac{4\pi}{3v} (0,1)$ as the degenerate point for the twisted lattice. Following the derivation of \eqref{eq:b01}, the effective symbol $b_{0}$ for the twisted Haldane model becomes
\begin{align}\label{eq:small_twist}
\begin{split}
    b_{0} (X,\zeta;\delta) = \frac{\sqrt{3} v}{2} t_1 (X) (\zeta_2 \sigma_1 - \zeta_1 \sigma_2) &+
    \Big(M(X) - \frac{3\sqrt{3}}{2} t_2 (X) \Big) \sigma_3 \\
    &+ t_1 (X) 
    \begin{pmatrix}
    0 & i K \cdot \alpha (X)\\
    -i K \cdot \overline{\alpha (X)} & 0
    \end{pmatrix}+ \partial_\delta a^{0,\theta} (X,K) \vert_{\delta = 0},
    \end{split}
\end{align}
with $a^{0,\theta}$ given by \eqref{eq:a0_theta}. The third term on the above right-hand side accounts for the fact that $a^{0,\theta} (X,K)$ is nonzero for $\delta > 0$ (due to the twist). Note that $a^{1,\theta}$ does not need a twist correction since it comes with a factor of $\delta$ in \eqref{eq:a_theta}. With $f_j (\delta) := e^{-i K \cdot \rot_\theta \la_j}$, we find that $f_j'(0) = -i \beta K \cdot \rot_{\pi/2} \la_j e^{-i K \cdot \la_j}$. Using the relations
\begin{align*}
    K \cdot \la_1 &= 0, & K \cdot \la_2 &= -\frac{2\pi}{3}, & K \cdot \la_3 &= \frac{2\pi}{3},\\
    K \cdot \rot_{\pi/2} \la_1 &= -\frac{4\pi}{3\sqrt{3}}, & K \cdot \rot_{\pi/2} \la_2 &= \frac{2\pi}{3 \sqrt{3}}, & K \cdot \rot_{\pi/2} \la_3 &= \frac{2\pi}{3 \sqrt{3}},
\end{align*}
we find that
$
    \sum_{j=1}^3 f_j'(0) = \frac{i2\pi}{\sqrt{3}} \beta,
$
and thus
\begin{align}\label{eq:twist_corr}
    \partial_\delta a^{0,\theta} (X,K) \vert_{\delta = 0} = \frac{2\pi}{\sqrt{3}} \beta t_1 (X) \sigma_2.
\end{align}

\paragraph{Expansion about other Dirac points.}
One can similarly derive an effective operator in the vicinity of the Dirac point $K' := -K$. Using that $\nabla_\xi \sum_{j=1}^3 e^{i \xi \cdot \la_j}\vert_{\xi=K'} = \frac{v\sqrt{3}}{2} (i,-1)$ and $\sin (\lb_j \cdot K') = 3\sqrt{3}/2$ for each $j$, the symbol in \eqref{eq:b_theta} becomes
\begin{align*}
    (b_{0}^{\theta})' (X,\zeta;\delta) = \frac{\sqrt{3} v}{2} t_1 (X) \zeta \cdot R_{\theta} \check{\sigma} +
    \left(M(X) + \frac{3\sqrt{3}}{2} t_2 (X) \right) \sigma_3
    + t_1 (X) 
    \begin{pmatrix}
    0 & i K' \cdot \alpha (X)\\
    -i K' \cdot \overline{\alpha (X)} & 0
    \end{pmatrix},
\end{align*}
where $\check\sigma := (\sigma_2, \sigma_1)$. For all other Dirac points $\tilde{K} \in K + \Lambda^*$ and $\tilde{K}' \in K' + \Lambda^*$ (with $\Lambda^*$ the reciprocal lattice), the leading order symbols are, respectively
\begin{align*}
    \tilde{b}_{0}^\theta (X,\zeta;\delta) &= \frac{\sqrt{3} v}{2} t_1 (X) \zeta \cdot R_{\theta + \pi/2} \sigma +
    \left(M(X) - \frac{3\sqrt{3}}{2} t_2 (X) \right) \sigma_3
    + t_1 (X) 
    \begin{pmatrix}
    0 & i \tilde{K} \cdot \alpha (X)\\
    -i \tilde{K} \cdot \overline{\alpha (X)} & 0
    \end{pmatrix}
    \\
    (\tilde{b}_{0}^{\theta})' (X,\zeta;\delta) &= \frac{\sqrt{3} v}{2} t_1 (X) \zeta \cdot R_{\theta} \check{\sigma} +
    \left(M(X) + \frac{3\sqrt{3}}{2} t_2 (X) \right) \sigma_3
    + t_1 (X) 
    \begin{pmatrix}
    0 & i \tilde{K}' \cdot \alpha (X)\\
    -i \tilde{K}' \cdot \overline{\alpha (X)} & 0
    \end{pmatrix}.
\end{align*}

\paragraph{Position-dependent degenerate points.}
We extend our analysis of the 
Haldane model \eqref{eq:Hdelta}-\eqref{eq:a0} to more general position-dependent lattice vectors
\begin{align}\label{eq:a_X_def}
\la_j(X)=\la_{j,0}(X)+\delta \la_{j,1}(X), \qquad
    \lb_j(X)=\lb_{j,0}(X)+\delta \lb_{j,1}(X), \qquad
    \la_{j,k}, \lb_{j,k} \in C^\infty_b (\mathbb{R}^2).
\end{align}
\tm{We are motivated by applications such as bent graphene nanoribbons, effectively making the local $K$ point position dependent \cite{Geim2010, Pavel2017, Koskinen2012}. This can be either a purely in-plane deformation when the graphene is fixed by a substrate, or a suspended graphene material with out-of-plane dependence built into the hopping functions}.
The above generalization of \eqref{eq:shear}, where $\la_{j,0}$ and $\lb_{j,0}$ are now allowed to depend on $X$, is technically significantly more challenging. We showed in the preceding section that variations in the first-order terms $\la_{j,1}(X)$ and $\lb_{j,1}(X)$ already led to leading-order effects in the macroscopic models. It should therefore not be surprising that variations in the leading terms $\la_{j,0}(X)$ and $\lb_{j,0}(X)$ need to satisfy suitable constraints. The main constraint is that the effect on the $K$ point be encoded by an irrotational field. \tbr{Some obstacles to deriving a continuum model in settings such as \eqref{eq:a_X_def} are already highlighted in \cite{roberts2022curved}.}

Assume $\lv_2$, $\la_{j,0}$ and $\lb_{j,0}$ are all determined by a vector $\lv_1 (X) = \lv_1 \in C^\infty_b (\mathbb{R}^2)$ as  
\begin{align}\label{eq:relations}
    \begin{split}
    \lv_2 &:= \rot_{-\pi/3} \lv_1, \quad \la_{1,0}:= \frac{1}{3} (\lv_1 + \lv_2), \quad
    \la_{2,0} := \rot_{2\pi/3} \la_{1,0}, \quad \la_{3,0} := \rot_{-2\pi/3} \la_{1,0}, \\
    \lb_{1,0} &:= \lv_1, \quad \lb_{2,0} := \lv_2 - \lv_1, \qquad \lb_{3,0} := -(\lb_{1,0} + \lb_{2,0}),
    \end{split}
\end{align}
where the $X$-dependence of the above vectors is implied. These relations are consistent with the original model \eqref{eq:Haldanemodel}, only now the vector $\lv_1$ depends on $X$. Assume that $|\lv_1 (X)| \ge c > 0$ is bounded away from zero.
Our previous derivations then imply that the point
\begin{align}\label{eq:dpx_def}
    \dpx (X) := \frac{4\pi}{3 |\lv_1 (X)|^2} \rot_{-2\pi/3} \lv_1 (X)
\end{align}
and symbol
\begin{align*}
    b_{0} (X,\zeta;\delta) &= \frac{\sqrt{3}|\lv_1 (X)|}{2} t_1 (X) \zeta \cdot \rot_{\theta (X) + \pi/2} \sigma +
    \left(M(X) - \frac{3\sqrt{3}}{2} t_2 (X) \right) \sigma_3
    \\&\hspace{4cm} 
    + t_1 (X) 
    \begin{pmatrix}
    0 & i \dpx (X) \cdot \alpha (X)\\
    -i \dpx (X) \cdot \overline{\alpha (X)} & 0
    \end{pmatrix}
\end{align*}
satisfy the decomposition \eqref{eq:decomposition2} with $E=0$ and $\rate = 1$ for some $b_{1} \in \cs^2$, 
where $\theta (X) := \arg \lv_1 (X) - \pi/6$ is the local twist angle (relative to the original model \eqref{eq:vs} for which $\arg \lv_1 = \pi/6$) at the point $X$, and $\alpha (X)$ is given by \eqref{eq:alphastrain} as before.
Note that $a \in \es$ and $b_{0} \in S^1$ as required, and the decomposition \eqref{eq:decomposition2} holds for any choice of $C_0, \eta > 0$.
Using that
\begin{align*}
    |\lv_1 (X)| \rot_{\arg \lv_1 (X)} = \begin{pmatrix}
        \lv_1^1 (X) & -\lv_1^2 (X)\\
        \lv_1^2 (X) & \lv_1^1 (X)
    \end{pmatrix}, \qquad 
    (\lv^1_1 (X), \lv^2_1 (X)) := \lv_1 (X),
\end{align*}
the symbol $b_{0}$ simplifies to
\begin{align}\label{eq:b0_g}
\begin{split}
    b_{0} (X,\zeta; \delta) = \sum_{j,k=1}^2\zeta_j g_{jk} (X) \sigma_k &+
    \left(M(X) - \frac{3\sqrt{3}}{2} t_2 (X) \right) \sigma_3\\
    &+ t_1 (X) 
    \begin{pmatrix}
    0 & i \dpx (X) \cdot \alpha (X)\\
    -i \dpx (X) \cdot \overline{\alpha (X)} & 0
    \end{pmatrix},
\end{split}
\end{align}
where the matrix $g_{jk}$ is defined by
\begin{align}\label{eq:g}
    g (X) := \frac{\sqrt{3}}{2}t_1 (X) \rot_{\pi/3} \begin{pmatrix}
        \lv_1^1 (X) & -\lv_1^2 (X)\\
        \lv_1^2 (X) & \lv_1^1 (X)
    \end{pmatrix}.
\end{align}
Since $|t_1 (X) \lv_1 (X)| \ge c > 0$ is bounded away from zero (by assumption), standard properties of the Pauli matrices imply that the ellipticity condition \eqref{eq:ellipticty2} is satisfied.

Recall Assumption \ref{assumption:a2}, which requires the degenerate point $\dpx (X)$ to be a gradient field. As such, 
let $B \in C^\infty_b (\mathbb{R}^2)$ such that $\nabla B (X) \ne -K$ for all $X$, and take
\begin{align}\label{eq:v1_X_def}
    \vp (X) := K \cdot X + B(X), \qquad \lv_1 (X) := \frac{4\pi/3}{|\nabla \vp (X)|^2} \rot_{2\pi/3} \nabla \vp (X).
\end{align}
Under this assumption on $v_1(X)$, then \eqref{eq:dpx_def} implies that $\dpx (X) = \nabla \vp (X)$ as desired. 

In order for Theorem \ref{thm:main2} to apply, it remains to verify Assumption \ref{assumption:commutator2} which we do with the following
\begin{lemma}\label{lem:varK}
    Let $a_\delta$ be as in \eqref{eq:Hdelta} with lattice vectors satisfying \eqref{eq:a_X_def}-\eqref{eq:relations}. Then Assumption \ref{assumption:commutator2} holds for all $m \in \mathbb{N}_0$. 
\end{lemma}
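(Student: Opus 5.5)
Self-adjointness is immediate. The operator $H_\delta = \ow a_\delta$ is a finite sum of terms of the form $c(\delta x)\,P_\delta$, where each $P_\delta$ is a ``variable shift'' operator. Concretely, $\ow[\alpha(\delta x) e^{i\xi\cdot \la(\delta x)}]$, with $\la = \la_{j,0}+\delta\la_{j,1}$ or $\lb_j$, has real phase $\xi\cdot\la(X)\in S^1$, so $a\in\es$ (cf. \eqref{eq:P_delta_def}, \eqref{eq:Pf}). The symbol $a$ is Hermitian by construction, since the $\sigma_\pm$ terms pair $e^{\pm i\xi\cdot\la_j}$ with the same $t_1$, and the $\sigma_3$ term is real. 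Boundedness of $\ow$ of exponential-type symbols (Appendix \ref{subsec:exp}) then gives that $H_\delta$ is bounded and symmetric on $L^2$, hence self-adjoint.

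For the commutator bound, I would use that $[\nabla_x, \ow a_\delta] = \ow(\nabla_x a_\delta) = \delta\,\ow\big((\nabla_X a)(\delta x,\xi;\delta)\big)$. For a single term $e^{i\xi\cdot\la(X)}\sigma(X,\xi)$ we have
\begin{align*}
\nabla_X\big(e^{i\xi\cdot\la(X)}\sigma\big) = e^{i\xi\cdot\la(X)}\big(i(\nabla_X\la)^T\xi\,\sigma + \nabla_X\sigma\big).
\end{align*}
This is again of exponential type, with amplitude in $S^1$ instead of $S(1)$. I would therefore write it as $e^{i\phase}\tilde\sigma\,\aver{\xi}\cdot\aver{\xi}^{-1}$ and factor the operator as $\ow(e^{i\phase}\tilde\sigma\aver{\xi}^{-1})\,\aver{D}$ plus a lower-order remainder. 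The factorization is exact up to a symbolic composition correction whose symbol is again exponential type times $S(1)$, carrying extra factors of $\delta$ from $X$-derivatives. Since $\aver{D}:H^{N+1}\to H^N$ is bounded, the claim $\norm{[\nabla_x,H_\delta]}_{H^{N+1}\to H^N}\le C\delta$ reduces to showing that exponential-type operators with $S(1)$ amplitudes and slowly varying ($X=\delta x$) dependence are bounded $H^N\to H^N$ uniformly in $\delta$.

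For that uniform $H^N$ boundedness, I would argue by induction on $N$. The case $N=0$ is the $L^2$ boundedness from Appendix \ref{subsec:exp}. For the induction step, write $\norm{Pu}_{H^{N}}\lesssim\norm{Pu}_{H^{N-1}}+\norm{\nabla Pu}_{H^{N-1}}$ and $\nabla P = P\nabla + [\nabla,P]$. The commutator $[\nabla,P]$ is $\delta$ times an operator of the same type but with an $S^1$ amplitude, i.e. bounded $H^{N}\to H^{N-1}$ by the previous factorization. This gives $\norm{Pu}_{H^N}\le C\norm{u}_{H^N}+C\delta\norm{u}_{H^N}$, which closes the induction. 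Alternatively, and more simply, I would use the explicit form $P_\delta f(x)=c(\delta x)f(y(x))$ with $y(x)=x+\la(\delta x)+O(\delta)$ from \eqref{eq:Pf}. For $\delta$ small, $y$ is a diffeomorphism whose Jacobian is $I+O(\delta)$ with all derivatives bounded, so the chain rule directly gives uniform $H^N$ bounds. Differentiating $c(\delta x)f(y(x))$ in $x$ produces $\delta(\nabla c)f\circ y + c\,(Dy)^T(\nabla f)\circ y$, with $Dy = I+\delta\nabla\la$, and the commutator with $\nabla$ is exactly
\begin{align*}
\delta\big[(\nabla c) f\circ y + c\,(\nabla\la)^T(\nabla f)\circ y\big] + O(\delta^2)\text{-terms},
\end{align*}
which loses one derivative and gains $\delta$.

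The main obstacle is making this change-of-variables picture rigorous. This means identifying $\ow(\alpha(\delta x)e^{i\xi\cdot\la(\delta x)})$ as a composition-with-diffeomorphism operator up to an $O(\delta)$ operator bounded on every $H^N$, since the Weyl quantization evaluates the shift at the midpoint. I would handle this through \eqref{eq:Pf}, where $y$ solves the implicit midpoint equation $y = x+\la(\delta(x+y)/2)$; this is solvable by a contraction argument for small $\delta$, with smooth dependence and bounded derivatives. The Jacobian factor arising from the Weyl kernel is a smooth bounded function of $\delta x$, and all its derivatives are $O(\delta)$. Any $\delta>\delta_1$ can be absorbed into constants on a compact range of $\delta$.
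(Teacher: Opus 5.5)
Your change-of-variables route, which is the one you settle on, is exactly the paper's proof. You write each piece $P_\delta=\ow\big(e^{i\xi\cdot\la(\delta x)}\sym(\delta x)\big)$ via \eqref{eq:Pf} as a slowly varying multiplier times $f\circ y$, obtain $y$ by contraction from $y=x+\la(\frac{\delta}{2}(x+y))$, get $L^2$ and $H^N$ bounds by change of variables, and get the $O(\delta)$ commutator from the chain rule with $\nabla y-I=O(\delta)$; you are in fact more careful than the paper in keeping the factor $|\det(I-\frac{\delta}{2}\nabla\la)|^{-1}$ from integrating the Dirac mass in $y$, which \eqref{eq:Pf} omits (harmlessly). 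Three caveats: the $L^2$-boundedness needed for self-adjointness must come from this representation, not from Appendix~\ref{subsec:exp}, whose Lemma~\ref{lemma:es_bd} only gives $H^{N+d+1}\to H^N$; your first, symbolic route does not close as written, since the Weyl-composition correction against $\aver{D}$ is $\delta$ times an exponential-type symbol with an $S^1$ rather than $S(1)$ amplitude; and larger $\delta$ should be handled by shrinking $\delta_0$ as the paper does, not by compactness, because $P_\delta$ need not be bounded once $\frac{\delta}{2}\norm{\nabla\la}_{L^\infty}$ is no longer small.
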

The proof is postponed to Appendix \ref{sec:varK}. As was the case when the $\la_{j,0}$ and $\lb_{j,0}$ were independent of $X$, additional assumptions on the $\la_{j,1}$ and $\lb_{j,1}$ would be required for the model \eqref{eq:a_X_def}-\eqref{eq:relations}-\eqref{eq:v1_X_def} to respect a deformed-lattice structure; see Remark \ref{remark:no_lattice}.

\subsection{Commensurate multilayer graphene models.} \label{sec:multilayer}

The calculations of section \ref{sec:Haldane} extend to untwisted multilayer graphene models with no difficulty. Consider for instance the following $n$ layer model:
\begin{equation}\label{eq:multilayer}
  H_n = \Omega_n\otimes I_2 + I_n\otimes H_1 + \Gamma_n,
\end{equation}
where $\Omega_n$ implements a different gate potential on each layer, for instance
\begin{equation}\label{eq:Omegan} \Omega_{n} = {\rm Diag}((-n+1)\omega,(-n+3)\omega,\ldots,(n-1)\omega)
\end{equation}
with $2\omega$ the difference of potential between each successive layers, where $H_1$ is a single-layer tight-binding model of graphene, and where $\Gamma_n$ implements a local coupling between layers. 

When $\Omega_n$ and $\Gamma_n$ are of order $\delta$, then $2n$ branches of spectrum coexist in the vicinity of $(K,E)$ for $E=0$ and $K$ a Dirac point of the single-layer model as described in the preceding section. We may then consider a macroscopic description for the rescaled Hamiltonian:
\begin{equation}\label{eq:Hndelta}
  H_{n\delta} = \delta\Omega_n(\delta x)\otimes I_2 + I_n\otimes H_{1\delta} + \delta\Gamma_n(\delta x).
\end{equation}
Here, $\Omega_n(X)$ implements a slowly varying gate potential on each layer while $\Gamma_n(X)$ implements a slowly varying interlayer coupling. Finally, $H_{1\delta}=H_\delta=\ow a_\delta$ is the single-layer Hamiltonian described in \eqref{eq:a0}.

Following the derivation of the preceding section, we obtain that the macroscopic Hamiltonian corresponding to $H_{n\delta}$ is
\begin{equation}\label{eq:macromulti}
    \bH_{np} = \Omega_n(X)\otimes I_2 + I_n \otimes \bH_{p} + \Gamma_n(X),
\end{equation}
where $\bH_{p}$ is the macroscopic model in \eqref{eq:macroHaldane} for single layer graphene.

\paragraph{Examples of bilayer stacking.}
The AB stacking is defined as follows. We assume the two layers untwisted and with a relative shift of $\la_1$.  More precisely, 
we assume the $B$ sites in layer $1$ and $A$ sites in layer $2$ have the positions $\cR := \{n_1 v_1 + n_2 v_2 : (n_1, n_2) \in \mathbb{Z}^2\}$. The $A$ sites in layer $1$ are located at $\cR - \la_1$, and the $B$ sites in layer $2$ at $\cR + \la_1$. We assume the two layers are coupled only through the interactions of the $(B,1)$ and $(A,2)$ sites. With $H_1$ defined by \eqref{eq:H1} and $\sigma_\pm := \frac{1}{2} (\sigma_1 \pm i \sigma_2)$, the coupling term reads
\begin{align*}
    \Gamma_2 =\Gamma_{AB}= \gamma \big(\sigma_+ \otimes \sigma_- + \sigma_- \otimes \sigma_+\big).
\end{align*}
If instead the two layers have a relative shift of $-\la_1$ so that they are coupled via $(B,2)$ and $(A,1)$ interactions, the Hamiltonian is
\begin{align*}
    \Gamma_2 =\Gamma_{BA} = \gamma (\sigma_+ \otimes \sigma_+ + \sigma_- \otimes \sigma_-).
\end{align*}
The work in \cite{bal2023mathematical} analyzes the macroscopic model arising from a domain transition between AB and BA stackings. The corresponding tight-binding model is of the form
\begin{align*}
    H_\delta = \delta \omega (\delta x) \sigma_3 \otimes I_2 + I_2 \otimes H_1 + \delta \chi (\delta x)\Gamma_{AB}+ \delta (1- \chi (\delta x)) \gamma \Gamma_{BA},
\end{align*}
where $\omega$ and $\chi$ are switch functions with $\omega \in \fs (\omega_-, \omega_+)$ and $\chi \in \fs (0,1)$ for some $\omega_\pm \in \mathbb{R}$. Here, $\fs(a,b)$ is the set of smooth functions on $\Rm$ equal to $a$ near $-\infty$ and equal to $b$ near $+\infty$. The corresponding macroscopic Hamiltonian, analyzed in detail in \cite{bal2023mathematical}, takes the following form to leading order:
\[ \bH_{2,1} = \omega(X)\sigma_3\otimes I_2 +  I_2 \otimes \bH + \chi (X)\Gamma_{AB} + (1- \chi (X)) \Gamma_{BA}.\]

The above stacking model easily generalizes to multilayered settings with larger values of the number of layers $n$ (referred to as rhombohedral graphene \cite{han2024large,min2008electronic}). It is also straightforward to show that the hypotheses leading to Theorem \ref{thm:main} are satisfied by both the TB and continuous models. Twist and shear effects may also be considered as we did for the Haldane model, although we do not pursue this here.





\subsection{Twisted bilayer graphene}\label{subsec:tbg}
In this section, we demonstrate how to apply our theory to moir\'e materials by focusing on the setting of twisted bilayer graphene, which is mathematically more challenging than the untwisted case we just considered. We show that the Bistritzer-MacDonald model \cite{bistritzer2011moire} for twisted bilayer graphene emerges as an effective PDE model describing the Schr\"odinger dynamics associated with a symbol {\em almost} satisfying Assumption \ref{assumption:a}; see also \cite{watson2023bistritzer} for an alternative derivation.

The main result of this section is Proposition \ref{prop:TBG} below, which establishes that the tight-binding (TB) dynamics for twisted bilayer graphene are accurately approximated by a first-order continuum Hamiltonian $\bH = \ow b_0$. 
The TB Hamiltonian $H_\delta = \ow a_\delta$ is a self-adjoint operator on $L^2 (\mathbb{R}^2; \mathbb{C}^4)$ given by
\begin{align*}
    H_\delta = \begin{pmatrix}
        H^{11}_\delta & \Hperp_\delta \vspace{0.2cm} \\
        (\Hperp_\delta)^* & H^{22}_\delta
    \end{pmatrix},
\end{align*}
where each $H_\delta^{ij} = \ow a_\delta^{ij}$ is a bounded operator on $L^2(\mathbb{R}^2; \mathbb{C}^2)$, with the symbols of the familiar form $a_\delta^{ij} (x, \xi) = a_{ij} (\delta x, \xi; \delta)$. 
Let $-\pi < \theta \le \pi$ and
define $a_{11} := a^{-\theta/2}$ and $a_{22} := a^{\theta/2}$ with $a^{\pm \theta/2}$ defined by \eqref{eq:a_theta}-\eqref{eq:a0_theta}. We assume that $\la_{j1}$ and $\lb_{j1}$ in \eqref{eq:shear} all vanish, so that the vectors $\la_j$ and $\lb_j$ are constant (see section \ref{sec:Haldane} for their definitions). Thus the operator $H_\delta^{jj}$ corresponding to graphene sheet $j$ is the Haldane Hamiltonian with small twist $(-1)^j \theta/2$ and no shear; see section \ref{sec:Haldane}. As before, we assume that the functions $M, t_1, t_2$ from \eqref{eq:a0_theta} all belong to $C^\infty_b (\mathbb{R}^2)$ with $|t_1 (X)| \ge c > 0$ bounded away from zero.

We fix $\beta \ne 0$ and make the important assumption that the {\em twist angle is small} and scales as
\begin{align}\label{assumption:theta}
    \theta = 2 \sin^{-1} (\beta \delta/2), \qquad 0 < \delta < \delta_0.
\end{align}

We next construct the symbol
\begin{align*}
    a_{12} = \begin{pmatrix}
        a_{12}^{AA} & a_{12}^{AB}\vspace{0.2cm}\\
        a_{12}^{BA} & a_{12}^{BB}
    \end{pmatrix}
\end{align*}
corresponding to the interlayer coupling Hamiltonian $H_\delta^{12}$. Set $\theta_j := (-1)^j \theta/2$ and define the Bravais lattice for each layer $j=1,2$ by $\lattice_j^0 := \rot_{\theta_j} \lattice$, with $\lattice = \mathbb{Z}v_1 + \mathbb{Z} v_2$ the Bravais lattice from section \ref{sec:Haldane}.
Since these lattices describe a discrete set of points but our model is on the continuum, we also define
\begin{align*}
    \lattice_1 (x) = \rot_{-\theta} (\lattice_2^0 + x), \qquad \lattice_2 (x) = \rot_\theta (\lattice_1^0 + x), \qquad x \in \mathbb{R}^2
\end{align*}
as the shifted lattices with respect to a given reference point. Observe that $\lattice_1 (x + r_2) = \lattice_1 (x)$ and $\lattice_2 (x + r_1) = \lattice_2 (x)$ for all $r_1 \in \lattice_1^0$ and $r_2 \in \lattice_2^0$, meaning that $\lattice_i$ is $\lattice_j^0$-periodic when $i \ne j$.
Fix $\delta_0, \gamma > 0$ and $0 < \tbgparam \le 1$. The interlayer coupling is modeled by the function 
$\hperp : \mathbb{R}^2 \to \mathbb{C}$. We assume a sufficiently rapid decay encoded in a Fourier transform
$\hathperp (k) := \int_{\mathbb{R}^2} e^{-ik \cdot x} \hperp (x)d x$ that satisfies
\begin{align}\label{eq:assumption_hperp}
    |\partial^\alpha \hathperp (k)| \le C_\alpha(\delta/\delta_0)^{\aver{k/\gamma}/\aver{K/\gamma} - (1-\tbgparam)|\alpha|}, \qquad k \in \mathbb{R}^2, \quad 0 < \delta < \delta_0
\end{align}
for all multi-indices $\alpha \in \mathbb{N}^2$ and some $\delta_0 > 0$, where $K := -\frac{4\pi}{3v} (0,1)$ as in Section \ref{sec:Haldane}.
\begin{remark}
    The decay assumption \eqref{eq:assumption_hperp} states that the two layers interact weakly as $\delta \to 0$, with an effective interaction strength of $O(\delta)$ when $|k| = |K|$. Physically, this corresponds to a large separation between the layers, typically of $O(|\log \delta|)$ in practice. We refer to \cite[Section 2.4]{watson2023bistritzer} and \cite[Section 2.3]{quinn2025} for more details and similar assumptions on interlayer hopping functions. In particular, \cite[Example 2.1]{quinn2025} considers
    \begin{align*}
        \hperp (x) = \frac{e^{-\gamma \sqrt{|x|^2 + \ell^2 (\delta)}}}{\sqrt{|x|^2 + \ell^2 (\delta)}}, \qquad \ell (\delta) := -\frac{1}{\sqrt{|K|^2 + \gamma^2}} \log \left(\delta/\delta_0\right), \qquad \delta_0 := \frac{2\pi}{\lambda_0 \sqrt{|K|^2 + \gamma^2}},
    \end{align*}
    for some fixed 
    $\gamma, \lambda_0 > 0$. The Fourier transform
    $\hathperp (k) = \frac{2\pi}{\sqrt{|k|^2+ \gamma^2}} \left( \delta/\delta_0 \right)^{\aver{k/\gamma}/\aver{K/\gamma}}$
    then satisfies \eqref{eq:assumption_hperp}, as
    \begin{align*}
        |\partial^\alpha \hathperp (k)| \le C_\alpha |\log \delta|^{|\alpha|}(\delta/\delta_0)^{\aver{k/\gamma}/\aver{K/\gamma}}, \qquad k \in \mathbb{R}^2, \quad {0 < \delta < \delta_0}.
    \end{align*}
\end{remark}
The interlayer Hamiltonian and its adjoint are defined by
\begin{align*}
    (H_\delta^{12} \psi)^\sigma (x) = \sum_{r_2 \in \lattice_2 (x)} \sum_{\sigma' \in \{A,B\}} \hperp^{\sigma \sigma'} (x-r_2) \psi^{\sigma'} (r_2), \quad ((H_\delta^{12})^* \psi)^\sigma (x) = \sum_{r_1 \in \lattice_1 (x)} \sum_{\sigma' \in \{A,B\}}\overline{\hperp^{\sigma \sigma'} (r_1 - x)} \psi^{\sigma'} (r_1), 
\end{align*}
where $x \in \mathbb{R}^2$ and $\sigma \in \{A,B\}$. 
The hopping functions are given by
\begin{align}\label{eq:os}
    \hperp^{\sigma \sigma'} (x) := \hperp (x + \os^{\sigma \sigma'}), \qquad \os^{\sigma \sigma'} := \os_1^\sigma - \os_2^{\sigma'}, \qquad \os_j^\sigma := \rot_{\theta_j} \os^\sigma, \qquad \os^A := (0,0), \qquad \os^B := \la_1,
\end{align}
where the Fourier transform of $\hperp$ satisfies \eqref{eq:assumption_hperp}. The operator $\Hperp_\delta$ is bounded on $L^2 (\mathbb{R}^2; \mathbb{C}^2)$, as established by the technical Lemma \ref{lemma:Hperp_bdd}.  
We then verify that $H_{\delta}^{12} = \ow a_\delta^{12}$, with the explicit symbol
\begin{align*}
    (a_\delta^{12})^{\sigma\sigma'} (x,\xi) = \ff \sum_{r \in \lattice} e^{-i \xi \cdot \varphi_r (x)} \hperp^{\sigma \sigma'} (\varphi_r (x)), \quad \ff := \frac{2}{1+\cos \theta}, \quad \varphi_r (x) := 2 (I + \rot_\theta)^{-1} ((I - \rot_\theta)x-\rot_{\theta/2} r).
\end{align*}
Indeed, with $a^{12}_\delta$ as above,
\begin{align*}
    (\ow a_\delta^{12} \psi)^\sigma (x) &= \frac{\ff}{(2\pi)^2}\int_{\mathbb{R}^4} e^{i \xi \cdot (x-y)} \sum_{r \in \lattice} \sum_{\sigma' \in \{A,B\}} e^{-i \xi \cdot \varphi_r ((x+y)/2)} \hperp^{\sigma \sigma'} (\varphi_r ((x+y)/2)) \psi^{\sigma'} (y) dy d\xi\\
    &=\ff\int_{\mathbb{R}^2} \sum_{r \in \lattice}\sum_{\sigma' \in \{A,B\}}\delta (x-y-\varphi_r ((x+y)/2)) \hperp^{\sigma \sigma'} (\varphi_r ((x+y)/2)) \psi^{\sigma'} (y) dy\\
    &=\ff\int_{\mathbb{R}^2} \sum_{r \in \lattice}\sum_{\sigma' \in \{A,B\}}\frac{\delta (y-\rot_\theta x - \rot_{\theta/2} r)}{\det  [I + (I+\rot_\theta)^{-1} (I - \rot_\theta)]} \hperp^{\sigma \sigma'} (\varphi_r ((x+y)/2)) \psi^{\sigma'} (y) dy,
\end{align*}
which by $\ff = \det  [I + (I+\rot_\theta)^{-1} (I - \rot_\theta)]$ and
$\varphi_r ((x + \rot_\theta x + \rot_{\theta/2} r)/2) = x - \rot_\theta - \rot_{\theta/2} r$ implies that
\begin{align*}
    (\ow a_\delta^{12} \psi)^\sigma (x) &= \sum_{r \in \lattice} \sum_{\sigma' \in \{A,B\}}\hperp^{\sigma \sigma'} (x - \rot_\theta x - \rot_{\theta/2} r) \psi^{\sigma'} (\rot_\theta x + \rot_{\theta/2} r)\\
    &= \sum_{r_2 \in \lattice_2 (x)} \sum_{\sigma' \in \{A,B\}} \hperp^{\sigma \sigma'} (x-r_2) \psi^{\sigma'} (r_2) = (H_\delta^{12} \psi)^\sigma (x).
\end{align*}
We now recast $a_\delta^{12}$ in order to apply Theorem \ref{thm:main_0}.
Writing $\hperp^{\sigma \sigma'}$ in terms of its Fourier transform, we obtain
\begin{align*}
    (a^{12}_\delta)^{\sigma \sigma'} (x,\xi) = \frac{\ff}{(2\pi)^2} \sum_{r \in \lattice} \int_{\mathbb{R}^2} e^{i (p-\xi) \cdot \varphi_r (x)} \hathperp^{\sigma \sigma'}(p) dp,
\end{align*}
which by the Poisson summation formula becomes
\begin{align*}
    (a^{12}_\delta)^{\sigma \sigma'} (x,\xi) = \frac{\ff}{|\Gamma|} \int_{\mathbb{R}^2} \sum_{q \in\lattice^*} \delta (2 \rot_{-\theta/2} (I + \rot_{-\theta})^{-1} (p-\xi) + q) e^{i (p-\xi) \cdot 2 (I+\rot_\theta)^{-1} (I - \rot_\theta) x} \hathperp^{\sigma \sigma'} (p) dp,
\end{align*}
with $|\Gamma|$ the area of the unit cell for $\lattice$, and $\lattice^* := \mathbb{Z} w_1 + \mathbb{Z} w_2$ the reciprocal lattice for one untwisted layer for $\begin{pmatrix}
     w_1 & w_2
\end{pmatrix} := 2\pi \begin{pmatrix}
    v_1 & v_2
\end{pmatrix}^{-1}$.
Evaluating the integral over $p$, this simplifies to
\begin{align*}
    (a^{12}_\delta)^{\sigma \sigma'} (x,\xi) &
    = \frac{1}{|\Gamma|} \sum_{q \in \lattice^*} e^{i q \cdot (\rot_{\theta/2} - \rot_{-\theta/2})x} \hathperp^{\sigma \sigma'} (\xi - \frac{1}{2} (\rot_{\theta/2} + \rot_{-\theta/2}) q).
\end{align*}
It follows from \eqref{assumption:theta} that
\begin{align}\nonumber
    (a^{12}_\delta)^{\sigma \sigma'} (x,\xi) &=\frac{1}{|\Gamma|} \sum_{q \in \lattice^*} e^{i \beta \delta q \cdot \rot_{\pi/2} x} \hathperp^{\sigma \sigma'} (\xi -(1 - \frac{1}{4}\beta^2 \delta^2)^{1/2}q),
    \\
    a_{12}^{\sigma \sigma'} (X, \xi; \delta) &= (a^{12}_\delta)^{\sigma \sigma'} (X/\delta,\xi) =\frac{1}{|\Gamma|} \sum_{q \in \lattice^*} e^{i \beta q \cdot \rot_{\pi/2} X} \hathperp^{\sigma \sigma'} (\xi -(1 - \frac{1}{4}\beta^2 \delta^2)^{1/2}q).\label{eq:a12_F}
\end{align}
We observe that $a_{12}$ does not quite satisfy Assumption \ref{assumption:a} since its derivatives with respect to $X$ are not bounded in $\xi$. Still, we will see that an approximation to $a_{12}$ does satisfy Assumption \ref{assumption:a}. Clearly, Assumption \ref{assumption:a2} does not hold either as $a_{12}$ does not belong to the class $\es$ of Definition \ref{def:es}.
However, we can 
still establish the validity of a first-order PDE model for twisted bilayer graphene at small twist angles.
\begin{proposition}\label{prop:TBG}
    Define $H_\delta = \ow a_\delta$ 
    as above, where $t_1 (X) \ge c > 0$ for all $X \in \mathbb{R}^2$, the interlayer hopping function $\hperp$ satisfies \eqref{eq:assumption_hperp}, and the relative twist angle $\theta$ satisfies \eqref{assumption:theta} for some $\beta \ne 0$. Set 
    \begin{align*}
        T_0 := \begin{pmatrix}
            1 & 1\\
            1 & 1
        \end{pmatrix} \qquad T_1 := \begin{pmatrix}
        1 & e^{i2\pi/3}\\
        e^{-i2\pi/3} & 1
    \end{pmatrix}
        \qquad T_2 := \begin{pmatrix}
        1 & e^{-i2\pi/3}\\
        e^{i2\pi/3} & 1
    \end{pmatrix}.
    \end{align*}
    Suppose that for $j=0,1,2$, there exist constants $\lambda_j \in \mathbb{C}$ such that $\hat{h} (\rot_{-2\pi j/3}K;\delta) = \delta \lambda_j$.
    Define
    \begin{align*}
        w_0 := (0,0), \qquad
        w_1 :=-\frac{4\pi\beta}{\sqrt{3}v} \left(\frac{\sqrt{3}}{2}, \frac{1}{2}\right), \qquad
        w_2 := \frac{4\pi\beta}{\sqrt{3}v} \left(-\frac{\sqrt{3}}{2}, \frac{1}{2}\right),
    \end{align*}
    and with $\tM := M - \frac{3\sqrt{3}}{2} t_2$, define the symbol $b_{0} (X,\zeta) = b_{0} \in S(\aver{\zeta})$ by 
    \begin{align*}
        b_{0} &:= \begin{pmatrix}
    b_{0}^{11} & b_{0}^{12}\vspace{0.2cm}\\
    (b_{0}^{12})^* & b_{0}^{22}
\end{pmatrix},\\
        b_{0}^{jj} (X,\zeta) &:= \frac{\sqrt{3} v}{2} t_1 (X) \zeta \cdot R_{\pi/2} \sigma +
    \tM (X) \sigma_3 + (-1)^j t_1 (X) \frac{\pi}{\sqrt{3}} \beta \sigma_2, \quad
    b_{0}^{12} (X) := \frac{1}{|\Gamma|}\sum_{j=0}^2 \lambda_j T_j e^{i w_j \cdot X}.
    \end{align*}
    Define $\varphi_\delta$ in terms of $H_\delta$ by \eqref{eq:microWP}, and take $\psi_\delta$ in terms of $b_{0}$ by \eqref{eq:macroH_0}-\eqref{eq:macro_0}-\eqref{eq:psi_def}.
    Then for any 
    \begin{align}\label{eq:s_eta}
        0 < \rate < \min \{\tbgparam, \; \aver{2K/\gamma}/\aver{K/\gamma} - 1 \}
    \end{align}
    and $N \ge 0$, there exists a constant $C > 0$ such that 
    \begin{align}\label{eq:convergence_tbg}
        \|(\psi_\delta-\varphi_\delta)(t, \cdot)\|_{H^N(\Rm^2; \mathbb{C}^4)} \leq C\delta^{1+\rate}t \left(1 + (\delta t)^N \right), \qquad t \ge 0, \quad 0 < \delta \le \delta_0/2.
    \end{align}
\end{proposition}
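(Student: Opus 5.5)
The plan is to split the interlayer symbol $a_{12}$ into a truncated part $\atrunc$ and a remainder $\ares$. The truncated part keeps only the finitely many terms $q\in\lattice^*$ of \eqref{eq:a12_F} with $|q|$ bounded, namely those $q$ for which $(1-\frac14\beta^2\delta^2)^{1/2}q$ lies within $O(\delta^{1-\eta})$ of $K$. The claim is that the full symbol with $a_{12}$ replaced by $\atrunc$, call it $\atruncall$, satisfies Assumptions \ref{assumption:a} and \ref{assumption:commutator}. The remainder $\ares$ will instead be shown to give an operator of norm $O(\delta^{1+\rate})$ on every $H^N$. Theorem \ref{thm:main_0} then applies to $\Htrunc:=\ow\atruncall_\delta$, and a Duhamel/unitarity argument transfers the estimate from $\Htrunc$ to $H_\delta$.

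\textbf{Step 1: the truncated symbol.} Since $\hathperp$ decays like $(\delta/\delta_0)^{\aver{k/\gamma}/\aver{K/\gamma}}$, only the $q$ with $|q-K|$ small contribute at leading order near $\xi=K$. These are the three vectors $q=\rot_{-2\pi j/3}K+(\text{shift})$, i.e.\ $q\in\{K,\rot_{2\pi/3}K,\rot_{-2\pi/3}K\}$ translated so that $\xi-q$ lands at $\rot_{-2\pi j/3}K$. I will keep exactly these three terms in $\atrunc$.

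Because each kept term carries $e^{i\beta q\cdot\rot_{\pi/2}X}$ with a fixed $q$, the symbol $\atrunc$ lies in $S(1)$ uniformly. Its $X$-derivatives only produce bounded factors $\beta q$, but its $\xi$-derivatives cost $\delta^{-(1-\tbgparam)}$ by \eqref{eq:assumption_hperp}. This gives \eqref{eq:reg_a} with $\nu_1=0$ and $\nu_2=1-\tbgparam$.

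Next, Taylor-expand $\hathperp^{\sigma\sigma'}(K+\delta\zeta-q')$ about $\zeta=0$. The value at $\zeta=0$ is $\delta\lambda_j$ times the phase $e^{i\cdot}$ generated by the shifts $\os^{\sigma\sigma'}$ in \eqref{eq:os}, and these phases produce the matrices $T_j$. The first-order term is $O(\delta\cdot\delta|\zeta|\,\delta^{-(1-\tbgparam)})=\delta^{1+\tbgparam}O(|\zeta|)$, which gives the $\rate<\tbgparam$ restriction. Two further corrections are also $O(\delta^2)$: the rotation of $\os^B$ by $\theta_j$, and the factor $(1-\frac14\beta^2\delta^2)^{1/2}$.

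Conjugating by $e^{iK\cdot x}$ turns $e^{i\beta q\cdot\rot_{\pi/2}X}$ into $e^{iw_j\cdot X}$. A short computation checks that $\beta(q_j-q_0)\cdot\rot_{\pi/2}$ reproduces $w_j$, with an overall phase $e^{i\beta K\cdot\rot_{\pi/2}X}$. This overall phase must either cancel against the two layers' Dirac points $\rot_{\pm\theta/2}K$ or be absorbed. This is where the twist corrections $(-1)^j t_1\frac{\pi}{\sqrt3}\beta\sigma_2$ in $b_0^{jj}$ enter: they come from \eqref{eq:twist_corr} with $\beta$ replaced by $\pm\beta/2$.

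The diagonal blocks are handled exactly as for the Haldane model with small twist, as in \eqref{eq:small_twist}, with $\rate=1$. Ellipticity \eqref{eq:ellipticty} follows from $t_1\ge c$, since $b_0^{12}$ is bounded.

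Assumption \ref{assumption:commutator} for all $m$ follows by a variant of Proposition \ref{prop:commutator_sufficient}. The $X$-derivatives of $\atruncall$ are uniformly bounded in $S(1)$, while the $\xi$-derivatives of $\atrunc$ carry growing powers $\delta^{-(1-\tbgparam)}$. These enter the Calder\'on--Vaillancourt constant (Lemma \ref{lemma:S_bd}) only through finitely many derivatives. Each such loss is multiplied by $\delta$ from $\hathperp$ and by the $\delta$ from $\partial_x$, so it is harmless after reducing $\rate$ slightly. To keep this clean I will rescale $\xi$ appropriately.

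\textbf{Step 2: the remainder.} For $q$ outside the kept set, the point $\xi-q$ stays at distance $\gtrsim|2K|$-type from the origin when $\xi$ is within $C_0\delta^{1-\eta}$ of $K$. More precisely, $|\xi-q|\ge |2K|-o(1)$ for the nearest discarded $q$, and then $\hathperp=O(\delta^{\aver{2K/\gamma}/\aver{K/\gamma}})=O(\delta^{1+\rate})$. This is the origin of the second constraint in \eqref{eq:s_eta}.

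The remainder is not small for all $\xi$, but the wave packet is spectrally localized. I will therefore not bound $\ow\ares$ on all of $H^N$. Instead I bound $\ow\ares\psi_\delta$ directly, writing $\psi_\delta$ as $e^{iK\cdot x}$ times a function with Fourier transform concentrated in $|\zeta|\le\delta^{-\eta}$ (using $\phi\in H^\infty$ uniformly by elliptic regularity), plus a Schwartz-tail.

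Summability over $q\in\lattice^*$ uses the exponential decay in $|k|$ from \eqref{eq:assumption_hperp}. For the $X$-derivatives, which grow like $|\beta q|^{|\alpha|}$, the same decay absorbs the polynomial growth; Lemma \ref{lemma:Hperp_bdd} gives boundedness.

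\textbf{Step 3: combining, and the main obstacle.} Write $\psi_\delta-\varphi_\delta=(\psi_\delta-\tilde\varphi_\delta)+(\tilde\varphi_\delta-\varphi_\delta)$, where $\tilde\varphi_\delta$ solves the evolution with $\Htrunc$. The first difference is controlled by Theorem \ref{thm:main_0}.

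For the second difference, I estimate $\tilde\varphi_\delta-\varphi_\delta=-i\int_0^t e^{-i(t-s)H_\delta}\ow\ares\,\tilde\varphi_\delta(s)\,ds$. Here I replace $\tilde\varphi_\delta$ by $\psi_\delta$ (at cost already controlled) and use Step 2, obtaining $C\delta^{1+\rate}t$. Propagation in $H^N$ uses Lemma \ref{lemma:u_delta_reg} together with the commutator bounds for $H_\delta$, which yield the factor $(1+(\delta t)^N)$.

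The main obstacle is Step 2 for the full operator $H_\delta$. The symbol $a_{12}$ is not in any standard class, because of the unbounded $X$-derivatives. Both the commutator bound $[\nabla_x,H_\delta]=O(\delta)$ on $H^N$ and the smallness of $\ow\ares$ on spectrally localized data must therefore be proven by hand from the lattice-sum representation. To do this I will rely on the decay \eqref{eq:assumption_hperp} over $q\in\lattice^*$, together with the restriction $\delta\le\delta_0/2$, which ensures $(\delta/\delta_0)^{s}$ is summable geometrically.
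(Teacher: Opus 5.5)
Your three-term truncation (keeping only $q\in\NN$ in $\atrunc$ and moving every other $q$, including the shell $|K-q|=2|K|$, into $\ares$) is a legitimate alternative to the paper's cut $|q|<\delta^{-\tpn}$. It gives $\nu_0=\nu_1=0$ and removes the term $\res_{2,\delta}$. The price is that $\ow\ares$ is small only on data spectrally localized near $K$. That is acceptable, because $\psi_\delta$ is explicit and localized for every $t$. Step 3, however, does not close, for two reasons.

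First, the detour through $\tilde\varphi_\delta$ loses a factor of $t$. You replace $\tilde\varphi_\delta(s)$ by $\psi_\delta(s)$ inside $\int_0^t e^{-i(t-s)H_\delta}\Hres_\delta(\cdot)\,ds$. The cost of this replacement is only controlled by $\int_0^t\|\Hres_\delta\|\,\|(\tilde\varphi_\delta-\psi_\delta)(s)\|\,ds\lesssim\|\Hres_\delta\|\,\delta^{1+\rate}t^2$. Moreover $\Hres_\delta$ is not $O(\delta)$ in operator norm: for each discarded $q$, the multiplier $\hathperp(\xi-(1-\frac14\beta^2\delta^2)^{1/2}q)$ attains the value $\hathperp(0)$, which \eqref{eq:assumption_hperp} allows to be of size $\delta^{1/\aver{K/\gamma}}\gg\delta$. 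At $t\sim\delta^{-1}$ this bound is $\delta^{\rate+1/\aver{K/\gamma}-1}$. That is much larger than $\delta^{\rate}$ and need not even tend to zero. The fix is to drop $\tilde\varphi_\delta$ altogether. The difference $u_\delta=\psi_\delta-\varphi_\delta$ satisfies $(D_t+H_\delta)u_\delta=(D_t+\Htrunc_\delta)\psi_\delta+\Hres_\delta\psi_\delta$. The first term is $O(\delta^{1+\rate})$ in every $H^N$ by Lemma~\ref{lemma:res}, which needs only Assumption~\ref{assumption:a}, so no commutator bound for $\Htrunc_\delta$ is required. The second term is your Step 2 applied to $\psi_\delta$ itself. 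This is what the paper does.

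Second, the commutator estimate you plan to prove for the full $H_\delta$ is false. Since $\Hperp_\delta\psi(x)=\sum_{r\in\lattice}\hperp((I-\rot_\theta)x-\rot_{\theta/2}r)\psi(\rot_\theta x+\rot_{\theta/2}r)$, the commutator $[\nabla_x,\Hperp_\delta]\psi$ contains the term $(\rot_\theta^\top-I)\Hperp_\delta[\nabla\psi]$. This is an $O(\delta)$ multiple of a first-order operator, and $\Hperp_\delta$ is not smoothing because its multipliers cover all frequencies. Hence $[\nabla_x,H_\delta]=O(\delta)$ holds only as a map $H^{N+1}\to H^N$. With that bound only Lemma~\ref{lemma:u_delta} applies, giving growth $e^{\gamma\delta t}$ instead of the stated $t(1+(\delta t)^N)$ for all $t\ge0$.

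The missing idea is the layer-twisted gradient $\diag(\nabla_x I_2,\rot_\theta^\top\nabla_x I_2)$. Lemma~\ref{lemma:Hperp_bdd} gives $\nabla_x\Hperp_\delta-\Hperp_\delta\rot_\theta^\top\nabla_x=O(\delta)$ on $H^N$, and the diagonal blocks have $S(1)$ symbols. Remark~\ref{remark:twisted_grad} then lets Lemma~\ref{lemma:u_delta_reg} run with this operator.

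A smaller point: no overall phase $e^{i\beta K\cdot\rot_{\pi/2}X}$ appears in $b_0^{12}$, since $0\in\NN$ gives $w_0=0$. The terms $(-1)^j t_1\frac{\pi}{\sqrt3}\beta\sigma_2$ come only from $a^{jj}(X,K)\neq0$ in the diagonal blocks.
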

The proof of Proposition \ref{prop:TBG} is postponed to Appendix \ref{subsec:pf_TBG}. It consists of decomposing the TB symbol $a_\delta$ into two terms (one that satisfies Assumption \ref{assumption:a} and one that is negligible; see Lemma \ref{lemma:a12}) so that Theorem \ref{thm:main_0} can be applied. By including more terms in the symbol $b_0$ above and invoking Theorem \ref{thm:main}, the $O(\delta^\rate)$ convergence in \eqref{eq:convergence_tbg} over timescales $t \lesssim \delta^{-1}$ could be improved to any power of $\delta$. 
Such computationally tedious higher-order convergence for twisted bilayer graphene can be found in \cite{quinn2025}, and hence is not pursued here further.
\begin{remark}\label{remark:BM}
    The effective Hamiltonian $\bH = \ow b_0$ is an elliptic first-order differential operator (recall that $t_1$ is bounded away from zero by assumption).
    Besides the domain wall $\tM$, spatial dependence of $t_1$ and a permutation of the Pauli matrices in the $b_0^{jj}$, the only difference between our continuum model and the Bistritzer-MacDonald (BM) model \cite{bistritzer2011moire} comes from our wave-packet ansatz \eqref{eq:psi_def}. Indeed, we use the \emph{untwisted} degenerate point $K$ for each layer, whereas the BM model is derived from an ansatz of the form
    \begin{align}\label{eq:ansatz_j}
        \psi_\delta^j(t,x) = e^{i(K_j \cdot x-Et)} \delta^{\frac d2} \phi_j(\delta t,\delta x;\delta)
    \end{align}
    in layer $j$, with $K_j := \rot_{\theta_j} K$ the rotated Dirac point corresponding to layer $j$; see \cite{watson2023bistritzer} for more details. As a result of this difference, our model includes an extra constant term in the $b_0^{jj}$ (accounting for the fact that $a^{jj}_\delta (X, K) \ne 0$ for $\delta > 0$) and the vectors $w_j$ appearing in $b_0^{12}$ are shifted by $\beta \rot^\top_{\pi/2} K$, which is the leading-order approximation to $K_1 - K_2 = (\rot_{\theta_1} - \rot_{\theta_2}) K$. These two effective models are related by a unitary transformation since $\bH_{{\rm BM}} = U \bH U^*$ for $U$ the point-wise multiplication operator defined by
    \begin{align*}
        U := \begin{pmatrix}
            e^{-\frac{i\beta}{2} (\rot_{\pi/2}^\top K) \cdot x}I_2 & 0\\
            0 & e^{\frac{i\beta}{2} (\rot_{\pi/2}^\top K) \cdot x}I_2
        \end{pmatrix},
    \end{align*}
    and $\bH_{\rm BM} = \ow \tilde b_0$ is the modified BM model corresponding to the ansatz \eqref{eq:ansatz_j}, given explicitly by 
    \begin{align*}
        \tilde b_0 = \begin{pmatrix}
    \tilde b_{0}^{11} & \tilde b_{0}^{12}\vspace{0.2cm}\\
    (\tilde b_{0}^{12})^* & \tilde b_{0}^{22}
    \end{pmatrix}, \qquad 
    \tilde b_{0}^{jj} (X,\zeta) := \frac{\sqrt{3} v}{2} t_1 (X) \zeta \cdot R_{\pi/2} \sigma +
    \tM (X) \sigma_3, \qquad
    \tilde b_{0}^{12} (X) := \frac{1}{|\Gamma|}\sum_{j=0}^2 \lambda_j T_j e^{i \tilde w_j \cdot X},
    \end{align*}
    with $\tilde w_j := w_j - \beta \rot^\top_{\pi/2} K$. Observe that $\tilde w_0 = \frac{4\pi\beta}{3v} (1,0)$ and $\tilde w_j = \rot^\top_{2\pi/3} \tilde w_{j-1}$.
\end{remark}

%
\section{Numerical simulations}
\label{sec:num}
We present numerical simulations of propagating edge and bulk modes for a Bernal stacked gated bilayer graphene model described in section \ref{sec:multilayer} using the discrete (tight-binding) model~\eqref{eq:Hndelta} and the continuum model~\eqref{eq:macromulti} to illustrate and complement the analysis presented in previous sections. We choose $t_1(X) = 1$, and $t_2(X) = 0$, and hence our tight-binding Hamiltonian is
\begin{equation}\label{eq:tb_blg}
H_{\br \ell, \br'\ell'} = (-1)^\ell \omega m \left ( \frac{\delta}{\omega} \br \right )\delta_{\ell\ell'}\delta_{\br\br'} + \delta_{\ell\ell'}\delta_{\br-\br' \in\{\pm\mathfrak{a}_j\}_j} + \gamma \delta_{\ell \neq \ell'}\delta_{\br\br'}
\end{equation}
where $\br,\br'$ denote atomic positions in the multi-lattices of the two layers and $\ell,\ell'$ denote the layer indices. Throughout this section we set $\gamma = 0.121$, which is the proportion to $t_1 = 1$ for Bernal-stacked bilayer graphene calculated in \cite{McCann_2013}, $\delta = 0.01$ and $\omega = 0.25$. The mass function $m$ takes values in $[-1,1]$ describing a gating switch across a domain wall, and we choose a racetrack profile described by the function
\[
    m(x) = \tanh(r(x)) \qquad \text{with} \qquad r(x) = \begin{cases}
       \sqrt{(\vert x_1 \vert - \ell /2)^2+x_2^2} - w/2 & \text{ if } \vert x_1 \vert > \ell/2, \\
       \vert x_2 \vert - w/2 & \text{else.}
    \end{cases}
\]
Parameters $\ell$ and $w$ describe the length and radius of a racetrack forming the zero level set of the mass function, as seen in Figure~\ref{fig:racetrack}, and given in Table~\ref{tab:numparameters}.
 Note that the $\delta$-dependence in the Hamiltonian~\eqref{eq:tb_blg} does not conform precisely to Assumption~\ref{assumption:a} as we aim to present an example with realistic parameters, though the dependence is implicitly there under rescaling as $\omega$, $\delta/\omega$ and $\gamma$ have relatively small values. 
We will compare to the first and second order continuum models centered at valley $K$ given by
\begin{align}
    &\bH_1 =  \omega m\left ( \frac{\delta}{\omega}X \right )\delta_{\ell\ell'}\sigma_3\otimes I + I\otimes \frac{\sqrt{3}v}{2}(D_2\sigma_1 - D_1\sigma_2) + \gamma (\sigma_+\otimes \sigma_++\sigma_-\otimes \sigma_-)\\
    &\bH_2 = \bH_1 + \frac{v^2}{16}\bigl((D_2-iD_1)^2 I\otimes\sigma_+ + (D_2+iD_1)^2 I\otimes\sigma_-\bigr)
\end{align}
where recall $v$ is the length of the primitive lattice vectors.

\subsection{Spatial discretization and time-stepping methods}
Since our goal is to illustrate numerically the long-time behavior of the approximation error between the wave packet evolved according to the miscroscopic tight-binding model and the wave packet evolved according to the corresponding continuum model obtained by Taylor expansion, long-time accuracy of the time-stepping methods is important to minimize numerical error while keeping numerical costs reasonable for large domains.

\paragraph{Continuum model.}
Similar to the earlier work~\cite{bal2023mathematical}, computations are conducted on a finite rectangular domain of size $L_x \times L_y$ equipped with periodic boundary conditions. Localized solutions of the macroscopic Dirac equation are approximated in a pseudo-spectral approach using truncated symmetric Fourier series
\begin{equation}\label{eq:pseudospectral}
    {\boldsymbol \phi}(X,Y) \approx \sum_{k=-K_x}^{K_x} \sum_{k=-K_y}^{K_y} \widehat{\boldsymbol \phi}_{k,l} e^{2i\pi k X/L_x + 2i\pi l Y/L_y},
\end{equation}
where $\widehat{\boldsymbol \phi}_{k,l} \in \mathbb{C}^4$ and truncation parameters $K_x$ and $K_y$ are chosen appropriately to ensure convergence of the results. Derivation operators such as the monolayer Dirac terms are represented as diagonal matrices, and pointwise multiplication operators such as the gate potential $\Omega_2(X) \otimes I_2$ are applied efficiently using Fourier interpolation on a uniform real-space grid of size $3(K_x+1) \times 3(K_y+1)$ by using the discrete Fourier transform $\mathcal{F}$, avoiding aliasing of function products for sharp gate or interlayer coupling profiles $\Omega(X)$, $\Gamma_2(X)$ (at the cost of a small error regarding wave packet mass conservation).

Given that the time evolution of the macroscopic Dirac equation for constant-coefficient differential operators as well as multiplication operators can be computed exactly for the pseudo-spectral ansatz~\eqref{eq:pseudospectral}, we implement the 6-stage, 4-th order symplectic Runge-Kutta splitting approach proposed in~\cite{blanes2002practical}, allowing for relatively large time step $h_c$ compared for example to 2nd-order Strang splitting. Indeed, the numerical error of such time integrators scales with the commutator of the Dirac term with the gate and interlayer potential terms, hence is proportional to the small parameter $\delta$.
\begin{remark}
The resulting time integration is not quite symplectic due to the truncation error when applying multiplication operators, however, for smooth enough solutions this effect is negligible. 
\end{remark}
\paragraph{Tight-binding model.} We will focus numerical examples exclusively on periodic systems with mass terms that can vary in the center but are approximately constant on the boundary of a supercell compatible with the underlying crystal lattice, e.g., the racetrack profile described above. Hence our solutions can also be restricted to a finite cutout of the hexagonal multi-lattice 
\[
    \tilde \Lambda_{L_x,L_y} = \{ m v_1 + n v_2 + s \ \vert \  m,n \in \mathbb{Z}, \ \vert m \vert \leq L_x, \vert n \vert \leq L_y, \; s \in \{s_A,s_B\} \ \}
\]
given by~\eqref{eq:vs} and equipped with periodic boundary conditions.  In order to make the domain rectangular rather than a parallelogram, in practice we implement an augmented unit cell with four sites per cell, but for simplicity of presentation we will neglect this detail from the discussion as the only effect is to allow a rectangular domain cutout.
The corresponding Hamiltonian is a sparse matrix of size $8(2L_x+1)(2L_y+1) \times 8(2L_x+1)(2L_y+1)$. We employ the standard 4-th order Runge-Kutta method to integrate accurately the Schrödinger equation with time-step $h_{TB}$.

\paragraph{Error computation.}
Finally, at a given time $t \geq 0$ the above approaches yield numerical approximations to the the microscopic tight-binding solution $\psi^{L_x,L_y,h_{TB}}(t,\cdot)$ at the atomic positions $X_{m,n}^{\ell,\alpha} = m v_1 + n v_2 + \mathfrak{a}_1(\delta_{\alpha B}+\delta_{\ell 2})$, and to the macroscopic continuum solution $\phi^{K_x, K_y, h_c}(t,\cdot)$ in the form of the ansatz~\eqref{eq:pseudospectral}. We interpolate the macroscopic solution first to the uniform rectangular grid described above by a fast Fourier transform, and second to the lattice points using accurate cubic B-splines~\cite{interpol}. 
Numerical integration parameters (continuum and tight-binding time-steps, Fourier cutoffs $K_x$ and $K_y$, see Table~\ref{tab:numparameters})
are carefully chosen such that the model error (Theorem~\ref{thm:main}) dominates the computed mean square error.

\begin{table}
    \centering
    \begin{tabular}{|c|c|c|c|c|c|c|c|c|} \hline
      & $h_{TB}$ & $h_c$ & $K_x$ & $K_y$ & $L_x$ & $L_y$ & $\ell$ & $w$ \\ \hline 
      Edge modes, section~\ref{sec:edgestatenum}   & $1/8$ & $1/2$ & $128$ & $128$ & $577\sqrt{3} (\approx 1000)$ & $500$ & $500$ & $300$ \\ \hline
      Bulk mode, section~\ref{sec:bulkstatenum}   & $1/8$ & $1/2$ & $128$ & $128$ & $577\sqrt{3} (\approx 1000)$ & $1000$ & 0 & $500$ \\
        \hline
    \end{tabular}
    \caption{Numerical parameters for the simulations presented in paragraphs~\ref{sec:edgestatenum} and~\ref{sec:bulkstatenum}.}
    \label{tab:numparameters}
\end{table}
\[
    E_{K_x, K_y,L_x,L_y}^{h_c, h_{TB}}(t) = \left ( \sum_{\alpha \in \{A,B\}}\sum_{\ell=1}^2\sum_{m=-L_x}^{L_x} \sum_{n=-L_y}^{L_y} \vert \psi^{L_x,L_y,h_{TB}}_{\ell}(t,X_{m,n}^{\ell,\alpha}) - \phi^{K_x, K_y, h_c}_{\ell\alpha}(t,X_{m,n}^{\ell,\alpha})\vert^2 \right )^{1/2}.
\]
According to Theorem \ref{thm:main} and using Sobolev embedding, we expect for $t \lesssim \delta^{-1}$ to have
\begin{equation}
E_{p, K_x, K_y,L_x,L_y}^{h_c, h_{TB}}(t) = O(\delta^{p+1} t).
\end{equation}
We note that for this system $\rate = 1$ (cf. \eqref{eq:decomposition_higher_order} in Assumption \ref{assumption:a_higher_order}) and recall $p$ is the order of the continuum approximation $\bH_p$.

\subsection{Edge state propagation}\label{sec:edgestatenum}

\begin{figure}[t!]
    \centering
    \begin{subcaptionblock}{3.25in}
        \includegraphics[width=3.25in]{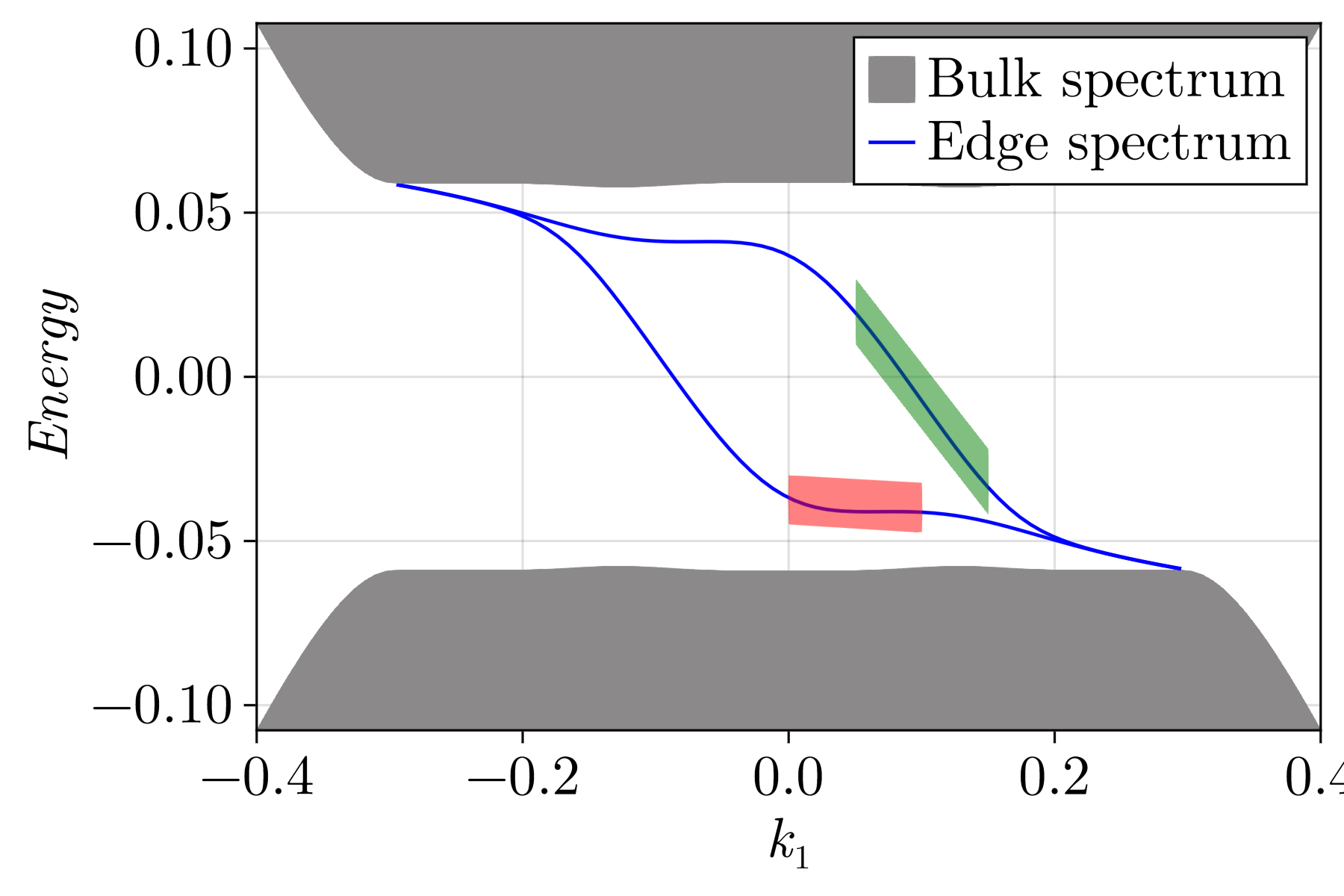}
    \caption{Numerical band structure for a straight edge profile, computed using the continuum Hamiltonian of order 1, $H_1$. Highlighted respectively in green and red is the support for the two localized edge modes studied in section~\ref{sec:edgestatenum}.}
    \label{fig:edgebands}
    \end{subcaptionblock}
    \hfill
    \begin{subcaptionblock}{3in}
        \centering
        \includegraphics[width=3in]{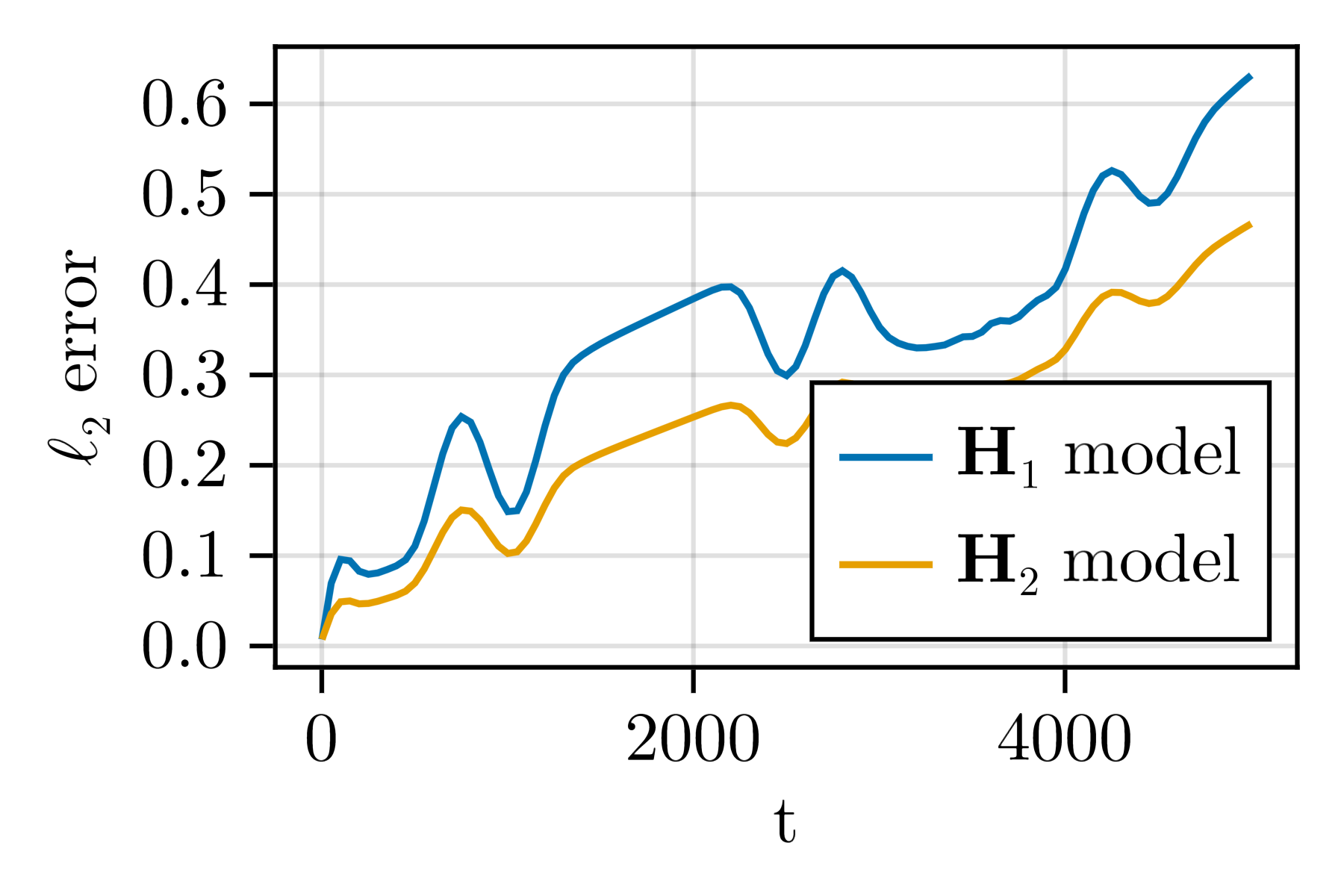}
        \caption{Error plot as a function of time for both the first and second order continuum models compared to tight-binding for the fast edge mode supported by the green momentum-energy region in Figure~\ref{fig:edgebands}.}
        \label{fig:fastedge_error}
    \end{subcaptionblock}
    \newline
    \begin{subcaptionblock}{3.25in}
        \centering
        \includegraphics[width=3.25in]{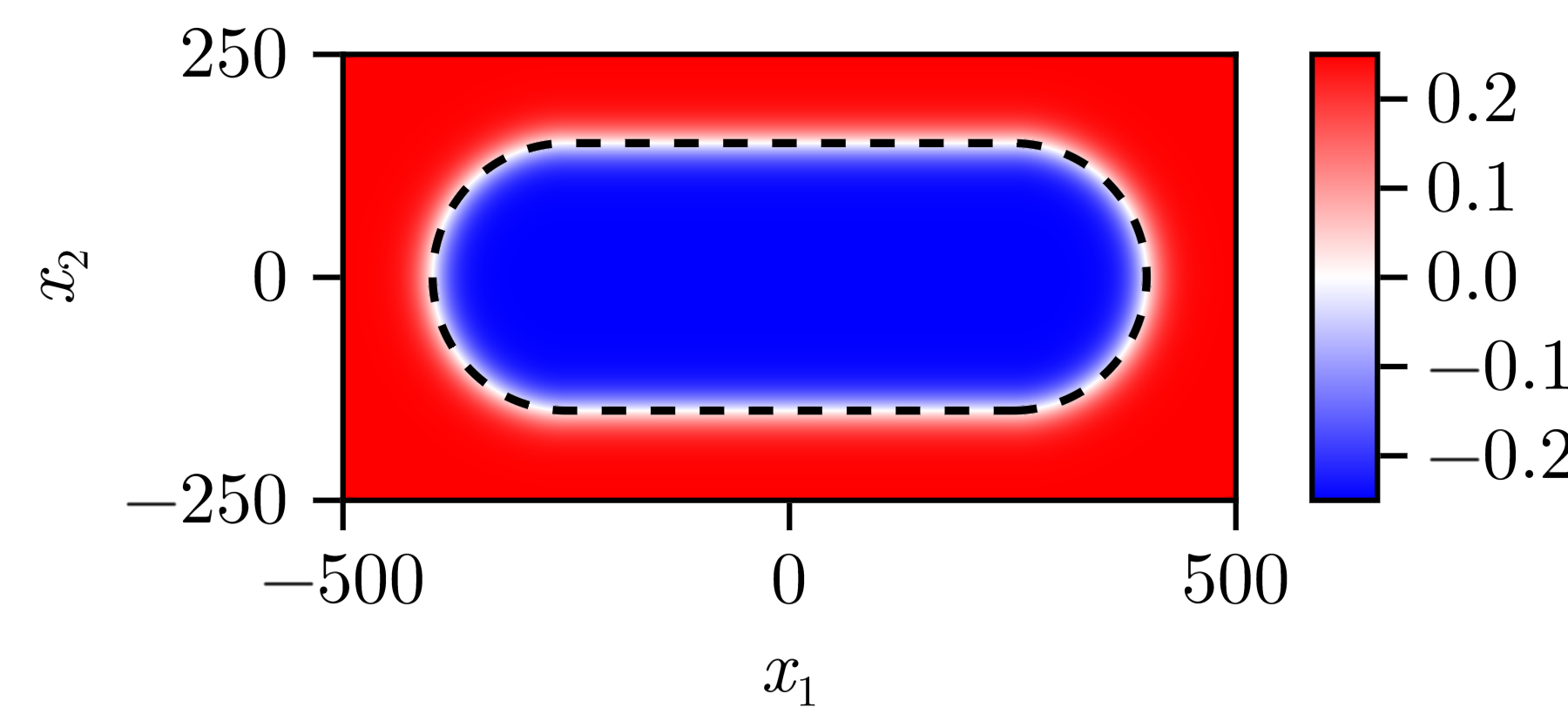}
        \bigskip
        
        \caption{Profile of the gating parameter $\omega(x_1,x_2) = \omega m \left ( \frac{\delta}{\omega} \br \right )$ used for the edge mode propagation simulations in section~\ref{sec:edgestatenum}.}
        \label{fig:racetrack}
    \end{subcaptionblock}
    \hfill
    \begin{subcaptionblock}{3in}
        \centering
        \includegraphics[width=3in]{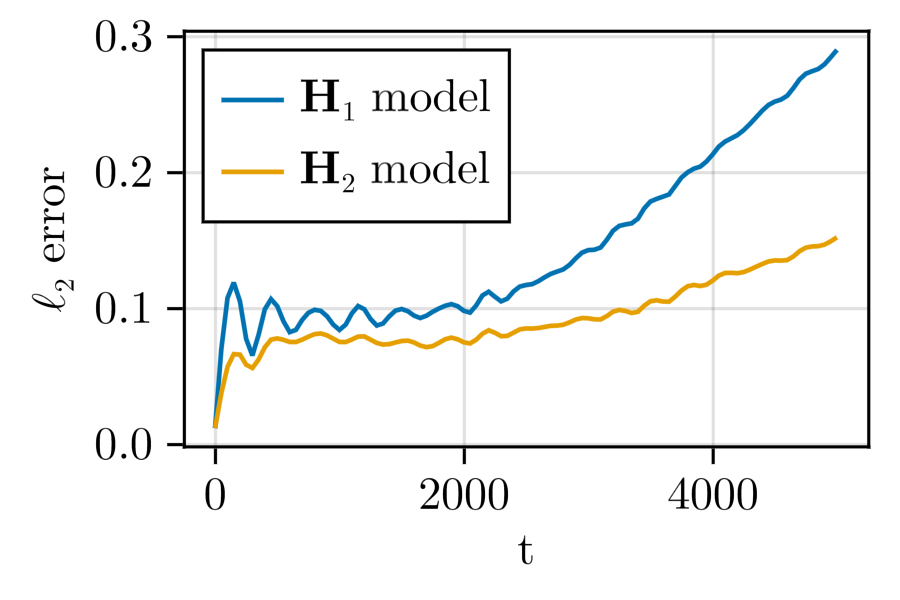}
        \caption{Error plot as a function of time for both the first and second order continuum models compared to tight-binding for the slow edge mode supported by the red momentum-energy region in Figure~\ref{fig:edgebands}.}
        \label{fig:slowedge_error}
    \end{subcaptionblock}
    \caption{One dimensional band structure, geometry and error estimates for the edge modes presented in section~\ref{sec:edgestatenum}.}
    \label{fig:edgestatenum}
\end{figure}
\subsubsection{Initial wave packet preparation.}
In Figure \ref{fig:edgebands}, we plot the edge band structure for bernal stacked bilayer graphene modeled by the first order continuum model $\bH_1$ with parameters described above for a straight edge profile along the $x_1$-axis $m(x_2) := \tanh(x_2)$, obtained by numerical diagonalization of the corresponding one-dimensional Hamiltonian parameterized by the momentum $k_1$ with appropriate boundary conditions. We highlight in particular two regions from this band structure from which an appropriate linear combination of eigenstates is used to prepare localized initial wave packets for both the tight-binding and continuum models, as analytic formulae for these edge modes are not available. 
We observe the group velocity is much larger in the green region than in the red one, and hence we refer to the former as the fast edge mode and the second as the slow edge mode. After Fourier transform in the $x_1$ direction, these wave packets are then translated and interpolated on the top edge of the racetrack profile, see Figure~\ref{fig:racetrack}, where the gating function $\omega(x)$ locally matches the $x_1$-invariant edge profile described above.

\paragraph{Fast edge mode}
\begin{figure}[ht!]
    \centering
    \includegraphics[width=.94\textwidth]{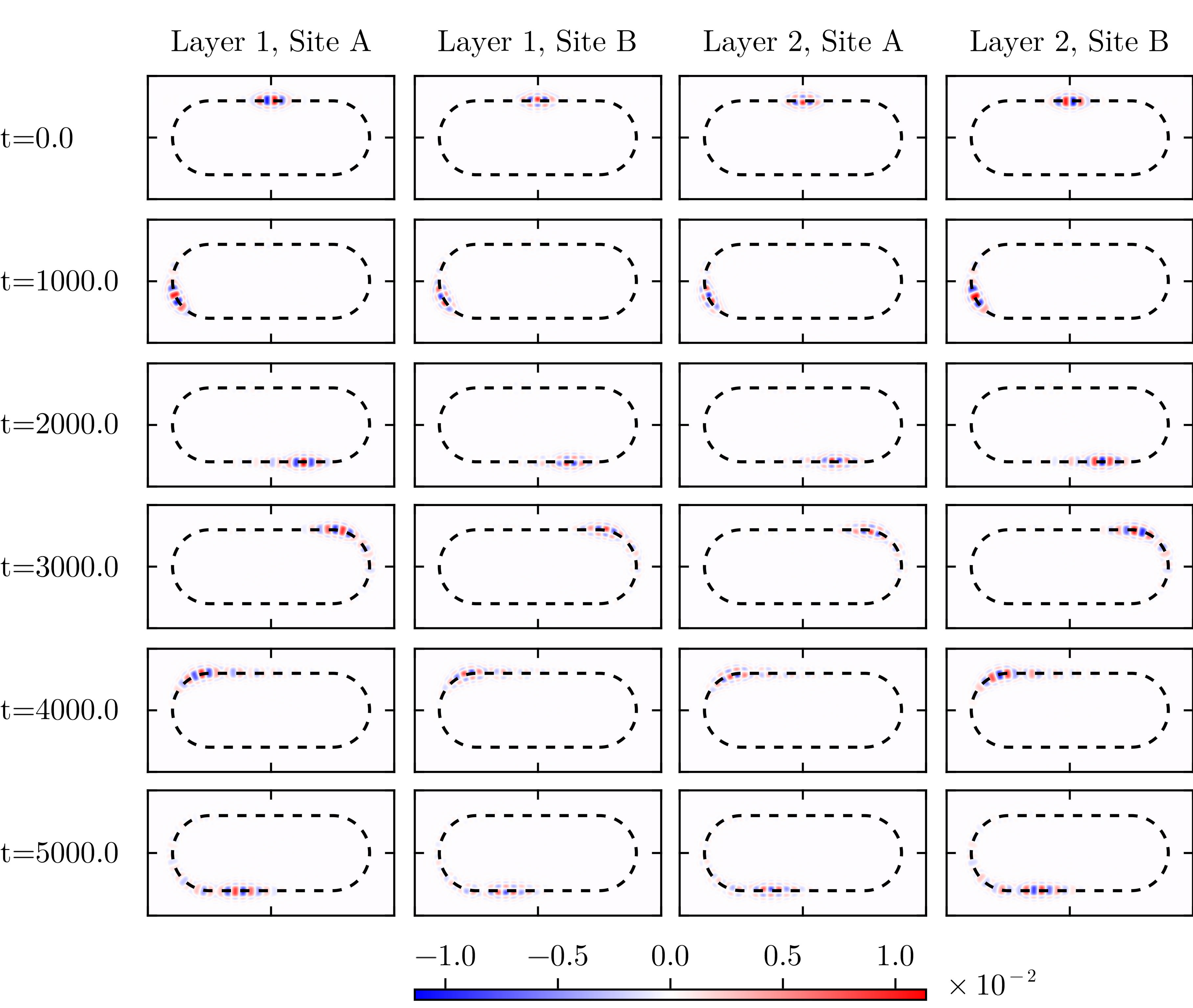}
    \caption{Snapshots of the real part of the fast edge mode envelope simulated\textbf{} using the tight-binding model.}
    \label{fig:fastedge}
\end{figure}
\begin{figure}[ht!]
    \centering
    \includegraphics[width=.94\textwidth]{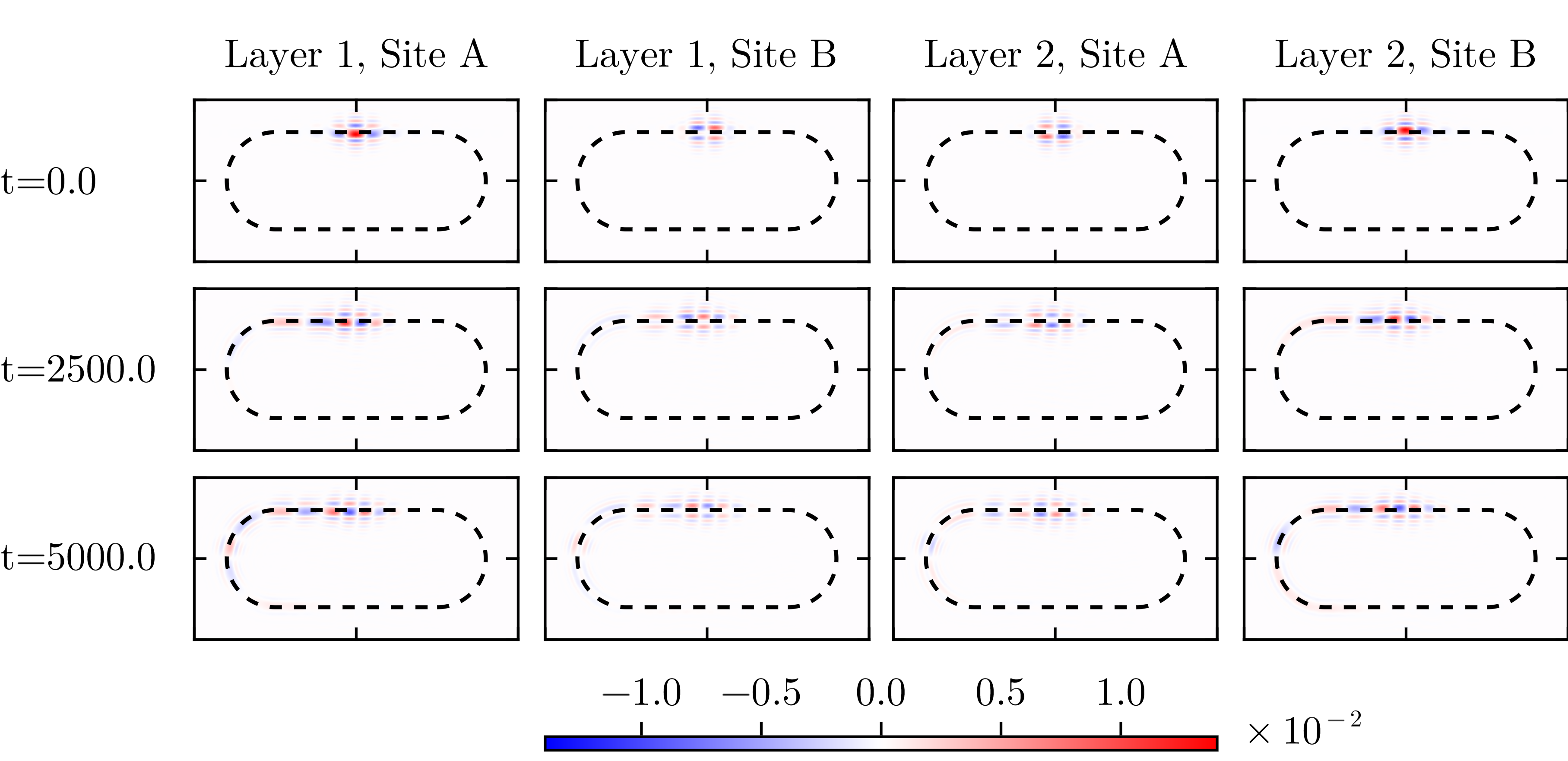}
    \caption{Snapshots of the real part of the slow edge mode envelope simulated using the tight-binding model.}
    \label{fig:slowedge}
\end{figure}
We plot 6 snapshots of the real part of the edge mode envelope $\widetilde{\psi}(t,x) = \psi(t,x) e^{-iK \cdot x}$, see ansatz~\eqref{eq:psi_def}, for all four components using the tight binding model in Figure~\ref{fig:fastedge}. 
Simulations using the continuum models are qualitatively undistinguishable from the tight binding results, and the corresponding plots are omitted.
We note that this edge mode has much more complex spatial structure compared to the massive Dirac equation with two components~\cite{bal2023edge}, with significant oscillations perpendicular to the edge and components 1B and 2A exhibiting an asymmetrical profile, unlike components 1A and 2B.

Propagation in time shows significant dispersion as the mode travels one and a half times around the racetrack at approximately unit speed, induced in particular by the curved profile of the edge~\cite{bal2023edge}, although the packet still propagates coherently for a long time.

Error plots comparing tight-binding to first and second order continuum models are presented in Figure~\ref{fig:fastedge_error}. Recall that solutions are normalized such that $\Vert \psi \Vert_{\ell_2} = 1$ for both tight-binding and continuum interpolated wave packets. While the error between the two models is significant and becomes relatively large at the end of the integration interval, we note that the timescale of simulations goes far beyond the theoretically predicted $T \simeq \delta^{-1}$ and qualitative agreement is still very good. The error curve displays the expected linear time dependence for small $t \ll \delta^{-1}$, then error growth slows down and exhibits a more oscillatory behavior (although this is evidently setup-dependent). Finally, we note the significant improvement between first and second order models which is uniform over the simulation time period.

\paragraph{Slow edge mode}
We plot 3 snapshots of the second edge mode, which is prepared from the momentum-energy window highlighted in red in Figure~\ref{fig:edgebands}. While the mode looks rather similar in its spatial structure to the fast edge mode studied above, the group velocity of this second mode is vanishingly small over a significant portion of the relevant part of the edge bands. Although the system is time integrated over the same interval as the previous example, the wavepacket barely moves, although a relatively small part of the packet advances slowly and turns onto the curved portion of the racetrack after $t \approx 2500$. Note that due to its relatively higher energy, the mode also oscillates at a relatively faster rate, even as it does not advance.

Error plots comparing evolution with the tight-binding model to first- and second-order continuum models are presented in Figure~\ref{fig:slowedge_error}. After an initial transient in the regime $t \ll \delta^{-1}$, which highlights the increased accuracy of the second order model in this region, the error plateaus at a relatively low level for both models for some time. 
Interestingly, the error starts increasing again at a higher rate, roughly linearly in time, around $t \approx 2500$; this coincides roughly with the time a portion of the wavepacket turns onto the curved portion of the interface, and analysis of the spatial profile of the error 
shows that the error is initially mostly composed of bulk modes due to the slight mismatch between models when preparing the initial wavepacket, until $t\approx 2500$ where a significant model error starts to accumulate on the curved section of the interface and eventually dominates the difference between the tight binding and continuum models.
\begin{figure}[b!]
    \centering
    \begin{subcaptionblock}{3in}
    \centering
    \includegraphics[width=3in]{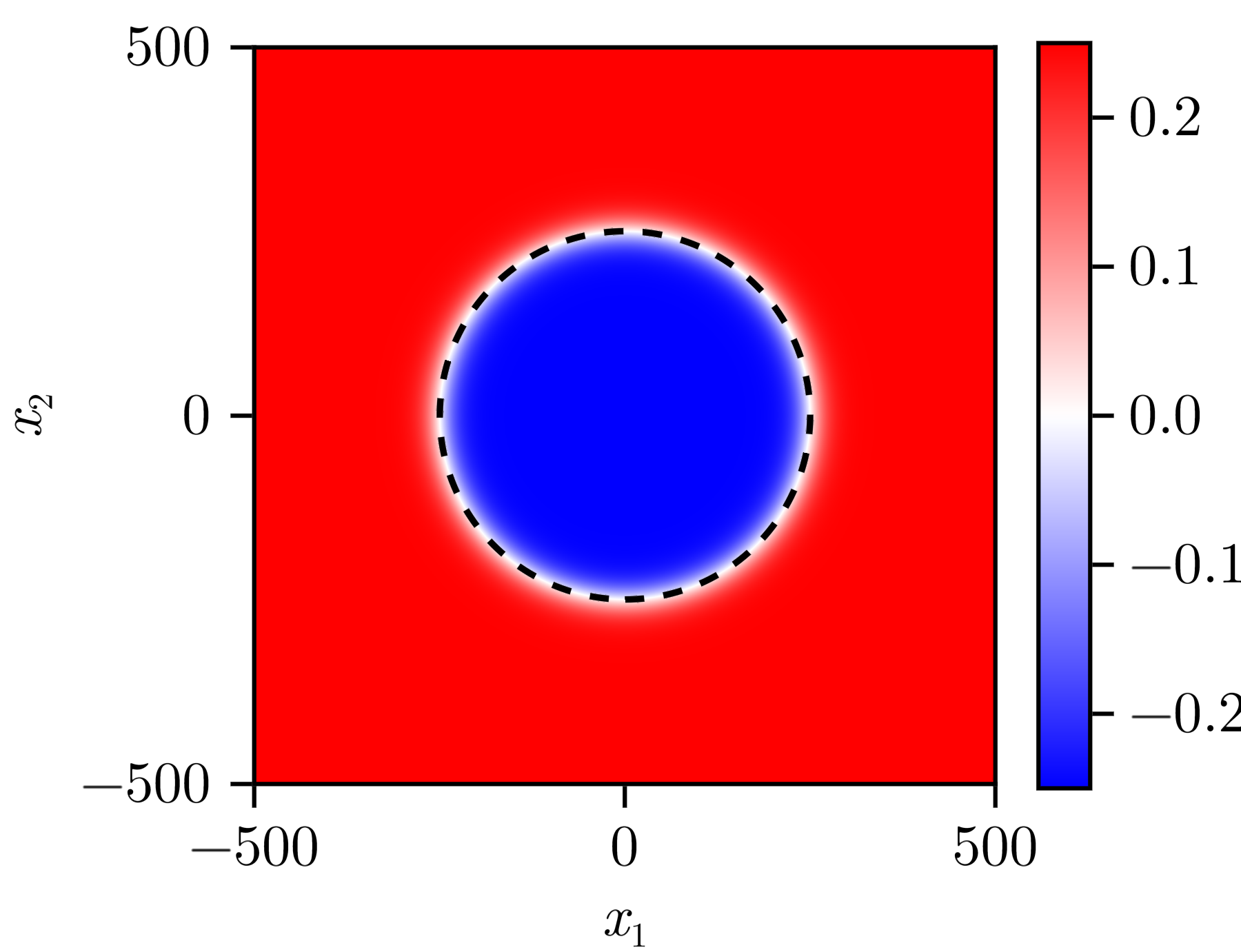}
    \caption{Profile of the gating parameter $\omega(x)$ used for the bulk mode propagation simulations in section~\ref{sec:bulkstatenum}.}
    \label{fig:circle}
    \end{subcaptionblock}
    \hfill
    \begin{subcaptionblock}{3.25in}
    \centering
        \includegraphics[width=3.25in]{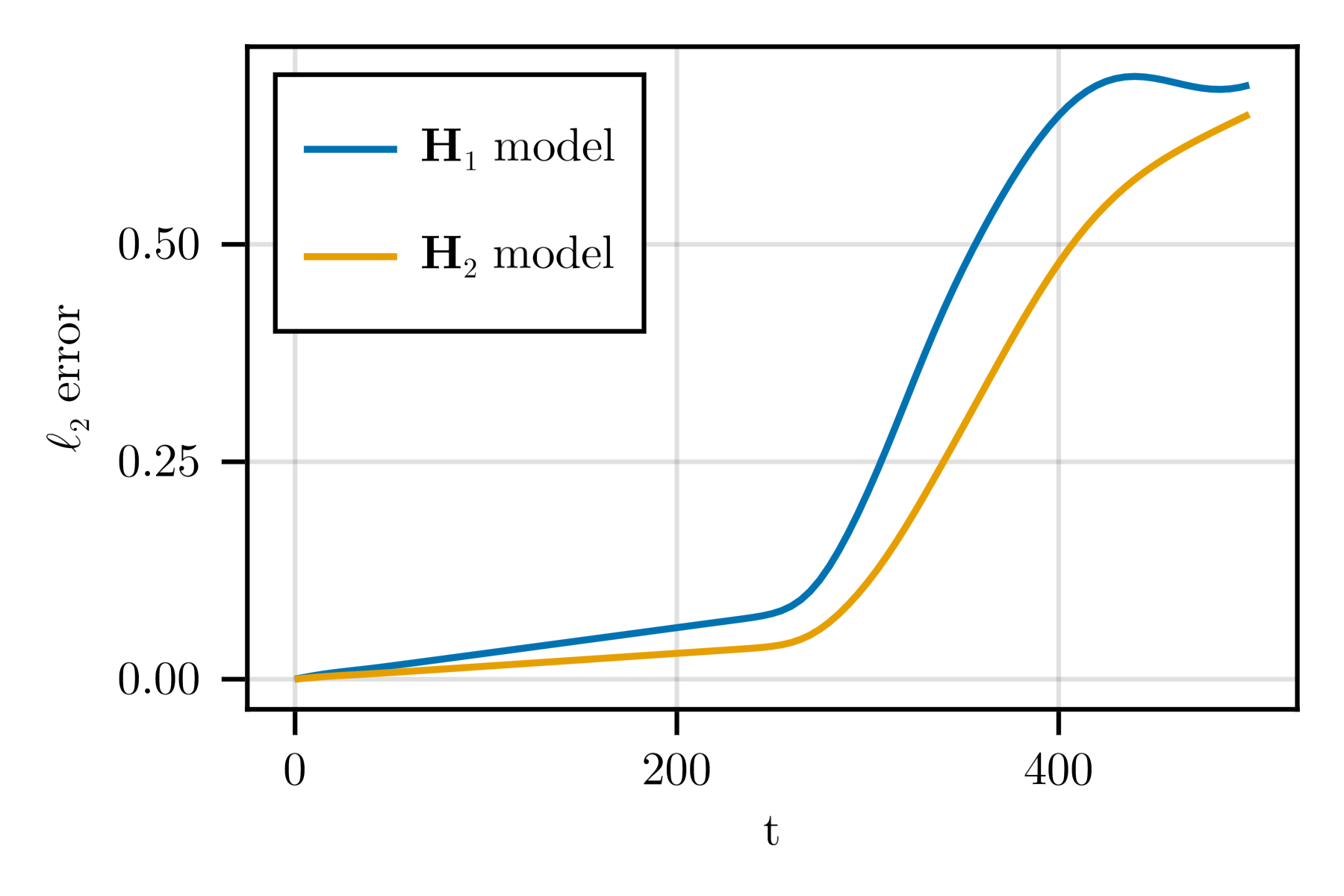}
        \caption{Error plot as a function of time for both the first and second order continuum models compared to tight-binding for the bulk edge mode.}
        \label{fig:bulkmode_error}
        \end{subcaptionblock}
        \caption{Geometry and error estimates for the bulk edge mode presented in section~\ref{sec:bulkstatenum}.}
        \label{fig:bulksetup}
\end{figure}

\begin{figure}[ht!]
    \centering
    \begin{subcaptionblock}{3.25in}
    \centering
    \includegraphics[width=3in]{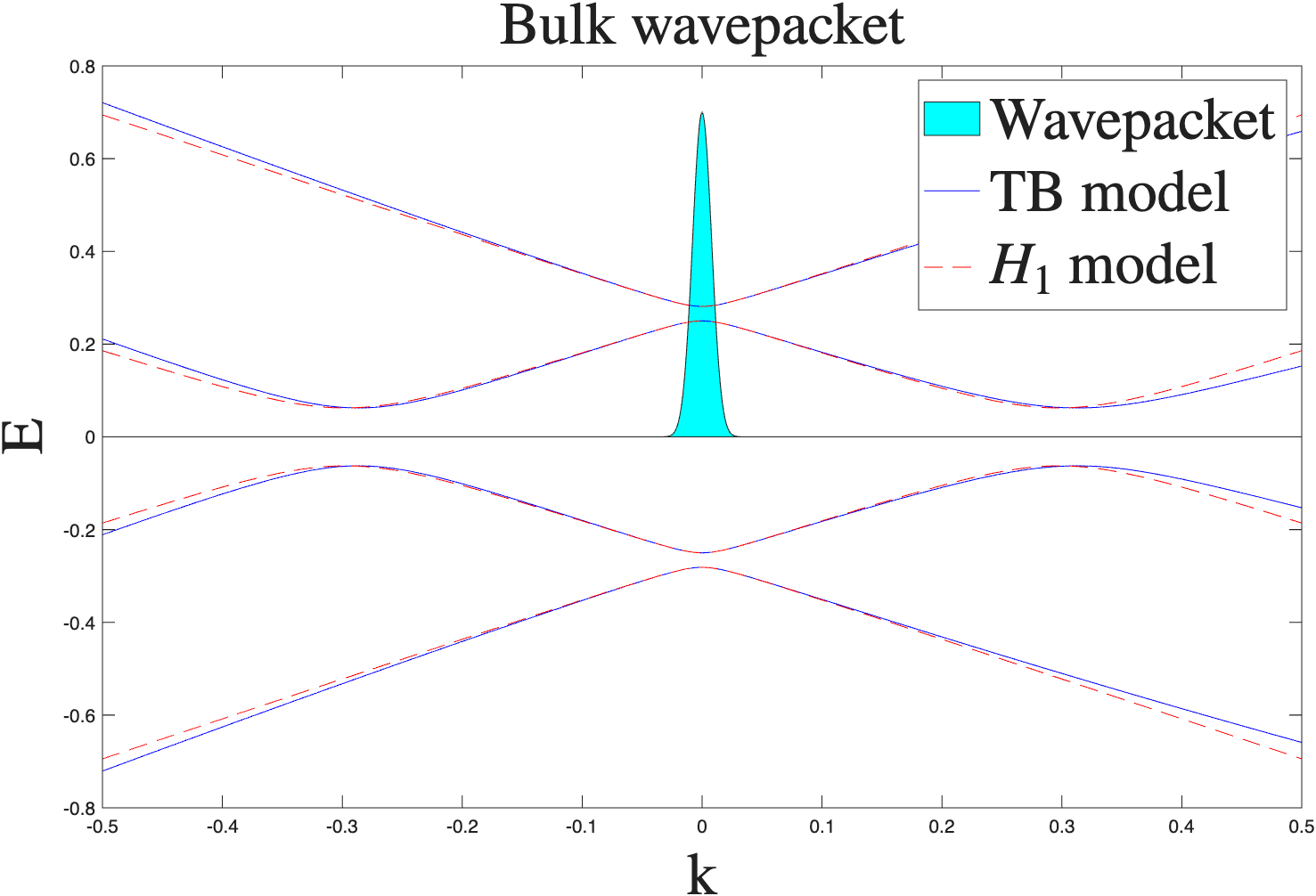}
    \caption{The gaussian profile of $\hat \phi_0(k)$ plotted against the band structure of $\bH_1$ and the tight-binding model along the k-space line $K + \hat e[-r,r]$ for $r = 1/2$ and $\hat e$ the unit vector in the $K'-K$ direction.}
    \label{fig:circle2}
    \end{subcaptionblock}
    \hfill
    \begin{subcaptionblock}{3in}
    \centering
        \includegraphics[width=3in]{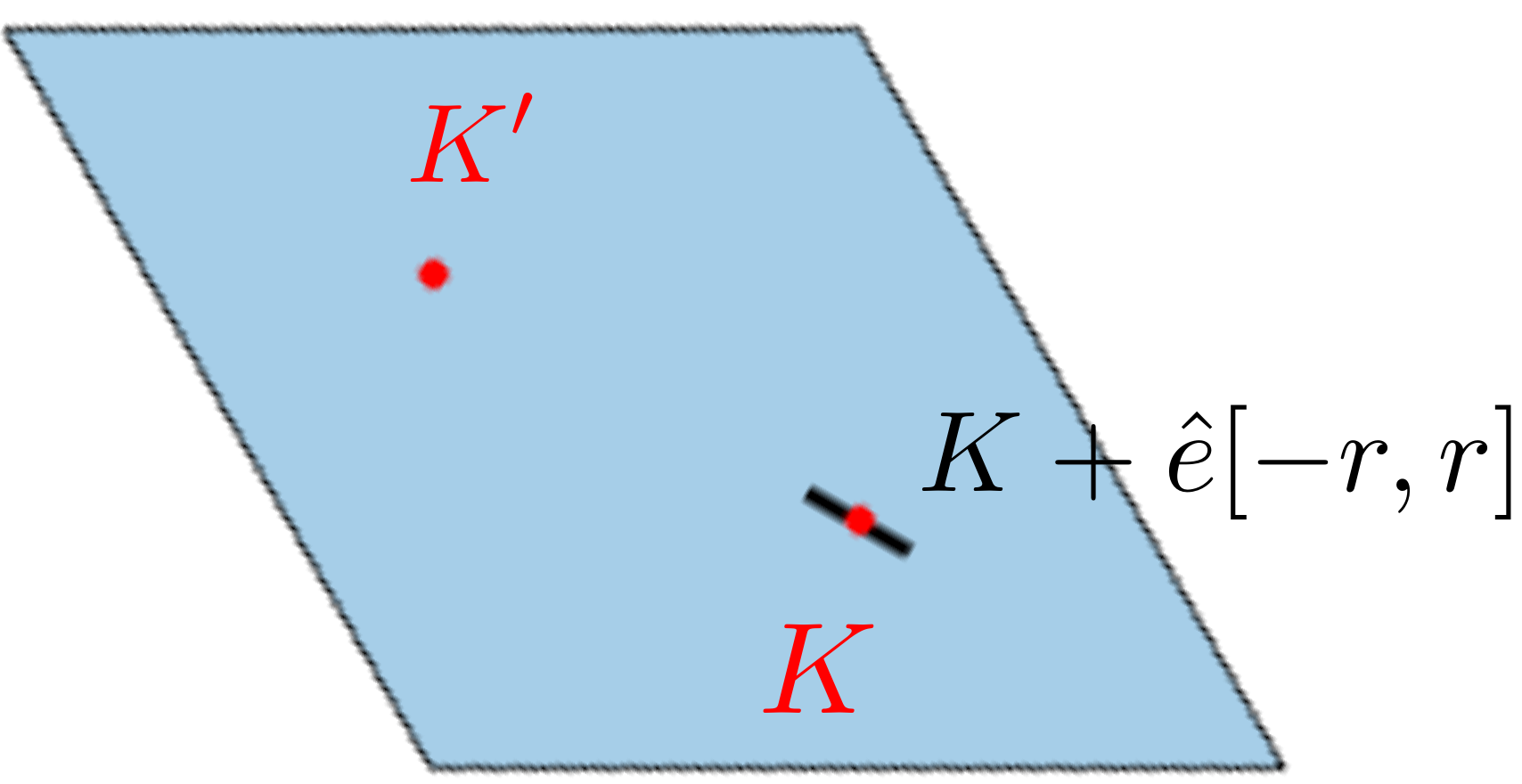}
        \vspace{.1in}
        \caption{The Brillouin Zone with $K$ points labeled and the interval $K+\hat e [-r,r]$ drawn.}
        \vspace{.3in}
        \label{fig:bulkmode_error2}
        \end{subcaptionblock}
        \caption{Wavepacket description for the initial bulk state.}
        \label{fig:bulk_state}
\end{figure}


\subsection{Bulk state propagation}\label{sec:bulkstatenum}

As a third and last example, we propose to study the propagation of a low-energy bulk mode, that is a mode which is not carefully prepared such that it stays confined to the interface. To highlight some interesting geometric features of the solution, we modify parameters for the racetrack such that it traces a rotationally invariant profile, see~Figure~\ref{fig:circle}. 

\noindent An initial Gaussian pulse of the form
\[
    \phi_0(x) = c_{s,l} e^{-x^2/2\sigma^2}
\]
is prepared with width $\sigma_0 = 20$ and parameters $c_{s,l}$ which are randomly chosen to avoid spurious symmetries.
We illustrate the Fourier transform of this wavepacket against the band structure of the bulk tight-binding model and $\bH_1$ in Figure \ref{fig:bulk_state} to show the initial state lies within the regime the continuum model is accurate.
Snapshots of the time integrated tight-binding model solution are presented in Figure~\ref{fig:bulksnapshots} and show the solution traveling outwards at roughly unit speed. At time $t \approx 250$ the interface and the wavepacket interact, with some of the wave being reflected towards the origin and forming a spiral pattern at $t=500$, and some of the packet transmitted and radiating outwards, showing some curious asymmetrical, nearly triangular features especially for components 1B and 2A. 
By comparison, the continuum (order 1) solution at the same time, although qualitatively in agreement, lacks such higher order features, with components 1B and 2A exhibiting a nearly rotationally invariant profile.

An error plot comparing evolution with the tight-binding model to first- and second-order continuum models is presented in Figure~\ref{fig:bulkmode_error}. Similarly to the slow edge mode, we observe a slow error accumulation until $t=250$, when the mode interacts with the interface, at which time the error starts to increase rapidly. This showcases how one of the main contributors to model error for the macroscopic continuum models is the spatially dependent coefficients. As in earlier experiments, the second order model is also uniformly more accurate, especially in the early regime, $t \leq 250$.

\section{High-order expansions, gauge transformations, and topology}
\label{sec:topo}
The results presented in Theorem \ref{thm:main} show that high-order expansions of the Weyl symbol of a degenerate Hamiltonian lead to more accurate descriptions of wave-packet transport. In this section, we demonstrate that such expansions do not always lead to more accurate spectral descriptions. In particular, we show that topological invariants naturally associated to the TB and macroscopic models $\bH_p = \ow b_{0p}$ in general depend on $p$. We even show that the invariants associated to $\bH_p$ may depend on ($\delta-$dependent unitary) gauge transformations of the TB model $H_\delta$.

\begin{figure}[ht!]
    \centering
    \begin{subcaptionblock}{.95\textwidth}
    \includegraphics[width=.95\textwidth]{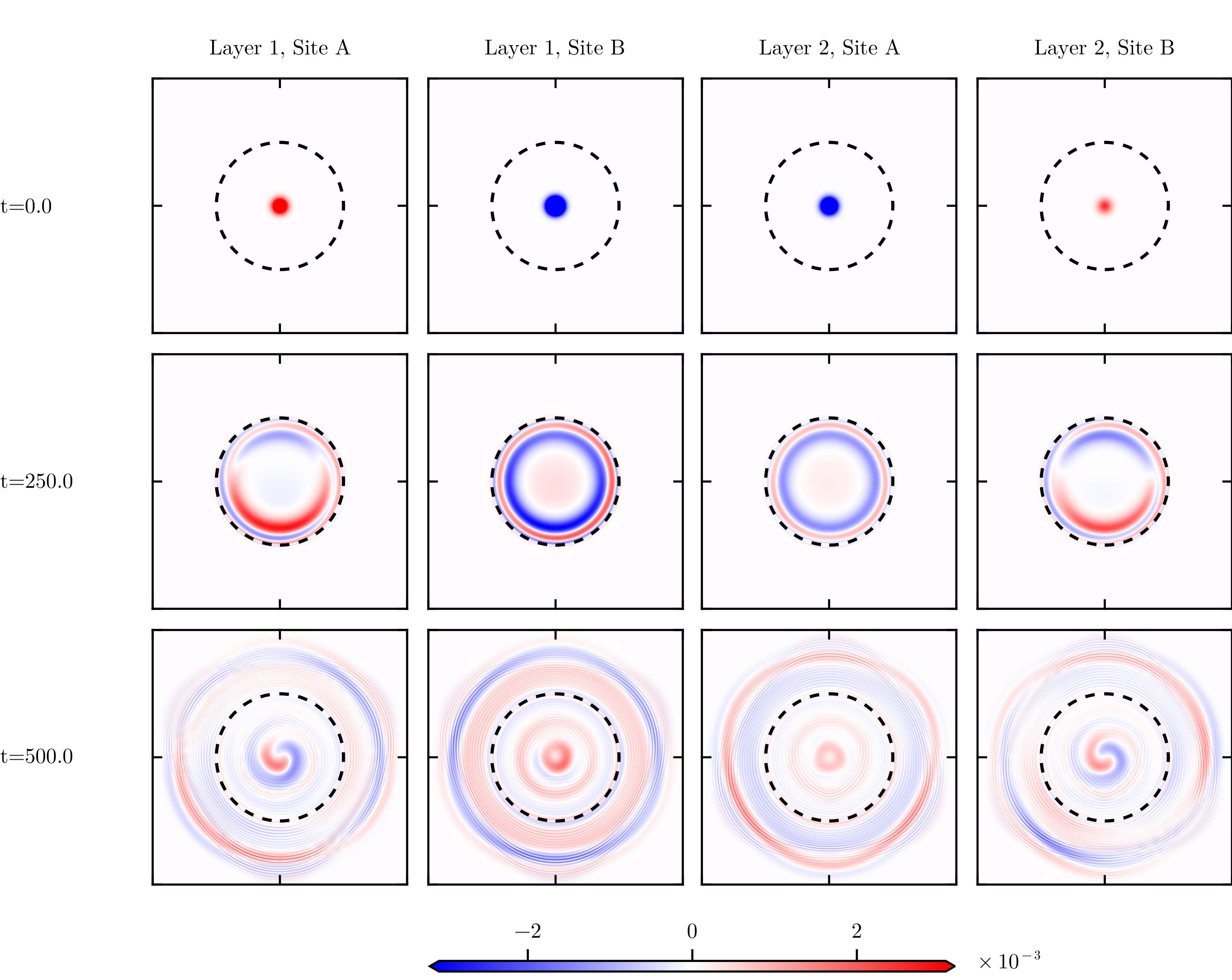}
    \caption{Snapshots of the tight-binding solution envelope.}
    \label{fig:bulksnapshots_tb}
    \end{subcaptionblock}
    \hfill
    \begin{subcaptionblock}{.95\textwidth}
    \includegraphics[width=.95\textwidth]{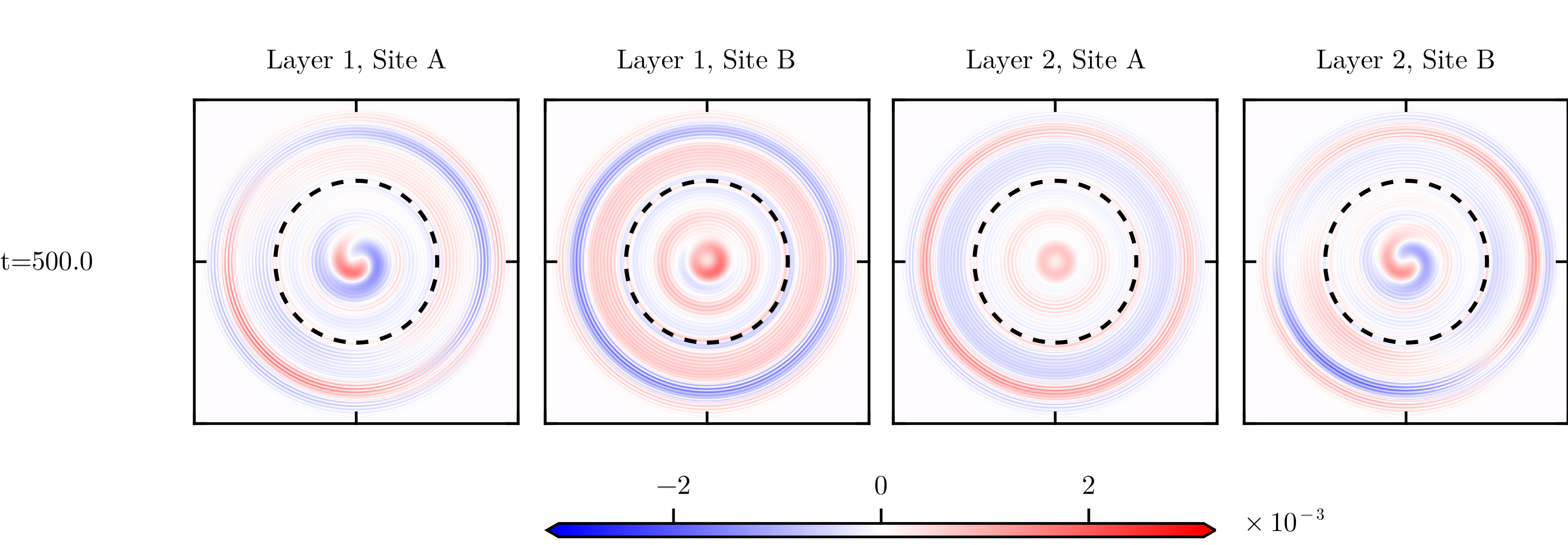}
    \caption{Snapshot of the solution computed using the continuum model of order 1, $\bH_1$.}
    \label{fig:bulksnapshots_cont}
    \end{subcaptionblock}
    \caption{Snapshot of the real part of the bulk mode envelope.}
    \label{fig:bulksnapshots}
\end{figure}
We illustrate this behavior on the Haldane model of section \ref{sec:Haldane}, and in particular the macroscopic Hamiltonians $\bH_1$ and $\bH_2$ given by \eqref{eq:macroHaldane}. The Haldane model \cite{haldane1988model} is the typical example displaying a topological quantum anomalous Hall effect; see \cite{BH} for detail on the topological insulators and topological invariants of many tight binding models. The topological classification of the effective (elliptic) differential operators appearing in this work has been analyzed in \cite{bal2022topological,bal2023topological,quinn2024approximations}; see also the review \cite{bal2024continuous} for details. Topological classifications of effective models of Rhombohedral graphene and related models of Floquet topological insulators are also available at \cite{bal2022multiscale,frazier2025quantum}.

We consider the following modified version of the Haldane model where $h \in \mathbb{R}$ and the $B$ sites are shifted by $-h\la_1=-\frac{hv}{\sqrt3}(1,0)$. Note that the choice $h=0$ corresponds to the Hamiltonians $\bH_1$ and $\bH_2$ in \eqref{eq:macroHaldane}, while $h=1$ corresponds to the $B$ sites located on top of the $A$ sites.
Dropping the subscript $\delta$ from $H_\delta$ for brevity, we define the $L^2$ unitarily equivalent Hamiltonian
\[ \tilde H = \tilde\tau H \tilde \tau^*,\qquad \tilde \tau = \begin{pmatrix}
    1 & 0 \\ 0 & {\tau_{h\la_1}}
\end{pmatrix}.\]
Thus, $H$ and $\tilde H$ are unitarily equivalent. When $h=1$, the latter is computationally convenient as nearest-neighbor shifts no longer appear since they are gauged away. A macroscopic approximation that follows from our framework, however, places A and B sites at the same spatial location, which is physically incorrect. This inaccuracy appears at second-order approximation in powers of $\delta$ as we now demonstrate.
Using \eqref{eq:a0} and the above transformation yields 
\begin{align*}
    \tilde{H} = \ow\tilde{a}_\delta, \qquad \tilde{a}_\delta (x,\xi) = \begin{pmatrix}
        a_\delta^{11} (x,\xi) & e^{-ih\la_1 \cdot \xi} a_\delta^{12} (x + \frac{h\la_1}{2},\xi)\\
        e^{ih\la_1 \cdot \xi} a_\delta^{21} (x + \frac{h\la_1}{2},\xi) & a_\delta^{22} (x + h\la_1, \xi)
    \end{pmatrix},
\end{align*}
where $a_\delta^{ij}$ denotes entry $(i,j)$ of $a_\delta$. Let us assume that $t_1(X)=t_1$ is constant to simplify calculations. Then
$
\tilde a_\delta (x,\xi) = \tilde a_0 (\delta x,\xi;\delta) + \delta \tilde a_1 (\delta x, \xi;\delta),
$
with 
\[ \begin{array}{rcl}
\tilde a_0(X,\xi;\delta) &=&  \begin{pmatrix}
        0 & e^{-ih\la_1 \cdot \xi} a_\delta^{12} (\xi)\\ 
        e^{ih\la_1 \cdot \xi} \overline{a_\delta^{12} (\xi)} & 0
    \end{pmatrix},
\\[5mm]
\tilde a_1(X,\xi;\delta) &=& \begin{pmatrix}
        M(X) + t_2(X) \sum_j \sin \lb_j\cdot\xi & 0\\ 
        0 & -M(X+\delta h\la_1) - t_2(X+\delta h \la_1) \sum_j \sin \lb_j\cdot\xi
    \end{pmatrix},
\end{array}\]
where we have used the shorthand $a_\delta^{12} (\xi) =a_\delta^{12} (x,\xi)$ since the latter is independent of $x$.
Performing Taylor expansions on this un-centered model, we observe that
\begin{align}\label{eq:tilde_b_01}
    \tilde b_{01}(X,\zeta)= \frac{\sqrt{3} v}{2} t_1 (\zeta_2 \sigma_1 - \zeta_1 \sigma_2) +
    \Big(M(X) - \frac{3\sqrt{3}}{2} t_2 (X) \Big) \sigma_3 = b_{01}(X,\zeta) 
\end{align}
so that the leading term in the Dirac equation is independent of $h$. 

Let us for concreteness assume that $t_1 > 0$ and that $\tM (X_2):= M (X) - \frac{3\sqrt{3}}{2} t_2 (X)$ depend only on $X_2$ with $\tM \in \fs (E_-, E_+)$ for some $E_- < 0 < E_+$. Here, $\fs(a,b)$ is the set of smooth {\em switch} functions $f(x)$ on $\Rm$ such that $f(x)=a$ for $x<x_0$ and $f(x)=b$ for $x>x_1$ for some $x_0<x_1$. The function $\tilde M(X_2)$ therefore generates a domain wall from one insulator (for energies close to $E=0$) as $X_2\to-\infty$ to another insulator (for the same energy range) as $X_2\to+\infty$. The interface $X_2\approx0$ separating the insulating phases may be conducting with a quantized asymmetric transport described by a topological invariant \cite{BH} referred to in this setting as a bulk-difference invariant (BDI) \cite{bal2022topological,bal2023topological,quinn2024approximations}. The BDI associated with symbols of the form $b_{0p} (X_2,\zeta; \delta) = \sum_{j=1}^3f^{p}_j (X_2,\zeta; \delta) \sigma_j$ takes the particular form
\begin{equation}\label{eq:BDI}
    \bdi [\ow b_{0p}] = -\sum_{\ell=1}^L \sign \det J_\ell \in \Zm,
\end{equation}
where $J_\ell \in \mathbb{R}^{3 \times 3}$ is the Jacobian matrix defined by $J_\ell^{mn} = \partial_m f^p_n (X_{2}^\ell, \zeta_1^\ell, \zeta_2^\ell)$, the variables are identified by {$(1,2,3) = (X_2,\zeta_1, \zeta_2)$}, and $\{(X_{2}^\ell, \zeta_1^\ell, \zeta_2^\ell) : \ell \in \{1, \dots, L\}\}$ is the set of points $(X_2, \zeta_1, \zeta_2)$ for which $f^p (X_2, \zeta_1, \zeta_2) = (0,0,0)$, with $f^p := (f^p_1, f^p_2, f^p_3)$. We refer to \cite[Proposition 2.7]{quinn2024approximations} for more details.

Assume without loss of generality that the function $\tM$ has one zero and has slope $1$ there. Then for the first-order model with symbol $\tilde{b}_{01}$ given by \eqref{eq:tilde_b_01}, the vector field $$f^1 (X_2, \zeta_1, \zeta_2) := (\frac{\sqrt{3} v}{2} t_1 \zeta_2, -\frac{\sqrt{3} v}{2} t_1 \zeta_1, \tM (X_2))$$ has exactly one zero, and the associated Jacobian matrix is
\begin{align*}
    J_1 = \begin{pmatrix}
        0 & 0 & 1\\
        0 & -\frac{\sqrt{3} v}{2} t_1 & 0\\
        \frac{\sqrt{3} v}{2} t_1 & 0 & 0
    \end{pmatrix}.
\end{align*}
It follows that
\begin{align*}
    \bdi [\ow \tilde b_{01}] = -1,
\end{align*}
which is the standard asymmetric current associated with the Dirac operator \cite{bal2022topological,quinn2024approximations}.

However, at second-order, we find
\begin{align*}
    \tilde b_{02} (X,\zeta;\delta) = \sum_{j=1}^3 \tilde f_j (X_2, \zeta) \sigma_j  - \delta h\la_1\cdot \left( \nabla M(X) - \frac{3\sqrt3}2  \nabla t_2(X) \right) \frac{I-\sigma_3}2,
\end{align*}
where
\begin{align*}
    \tilde f_1 (X_2,\zeta) &= t_1 \Big( \frac{\sqrt{3} v}{2} \zeta_2 + \delta \frac{v^2}{8} (\zeta_2^2 -\zeta_1^2 +4 h \zeta_1^2)\Big), \qquad \tilde f_2 (X_2,\zeta) = t_1 \Big( -\frac{\sqrt{3} v}{2} \zeta_1 + \delta \frac{v^2}{4} (1+2h) \zeta_1 \zeta_2 \Big), \\
    \tilde f_3 (X_2, \zeta) &= \tM (X_2).
\end{align*}
We observe an $O(\delta)$ term proportional to $I-\sigma_3$ in $\delta\tilde b_{02}$, which is negligible compared to the spectral gap modeled by $(-E_0,E_1)$ and to the term $\tilde f_3$ and hence may be neglected in the computation of the BDI \cite{quinn2024approximations}.

The BDI associated to $\tilde b_{02}$ is determined in \eqref{eq:BDI} by the zeros of $(\tilde f_1, \tilde f_2,\tilde f_3)$.
Since $(\tilde f_1, \tilde f_2)$ is independent of $X_2$ and $\tilde f_3$ independent of $\zeta$, it suffices to analyze the zeros of
the vector field $\zeta \mapsto (\tilde f_1, \tilde f_2)$, two of which are independent of $h$ and located at
\begin{align*}
    \zeta^1 := (0,0), \qquad \zeta^2:= \left( 0, -\frac{4 \sqrt{3}}{\delta v} \right).
\end{align*}
If $-1/2 < h < 1/4$, then the vector field $\zeta \mapsto (\tilde f_1, \tilde f_2)$ has two additional zeros, given by
\begin{align*}
    \zeta^3 := (\zeta_{1,h}, \zeta_{2,h}), \qquad \zeta^4 := (-\zeta_{1,h}, \zeta_{2,h}), \qquad \zeta_{1,h} := \frac{6}{\delta v} \sqrt{\frac{1+\frac{4}{3} h}{1-4h}} \frac{1}{1+2h}, \qquad \zeta_{2,h}:= \frac{2\sqrt{3}}{\delta v} \frac{1}{1+2h}.
\end{align*}
The Jacobian matrices $J_{ij} = \partial_i \tilde{f}_j$ associated with these zeros are
\begin{align*}
    J (\zeta^1) &= \begin{pmatrix}
        0 & -\frac{\sqrt{3} v}{2} t_1\\
        \frac{\sqrt{3} v}{2} t_1 & 0
    \end{pmatrix}, \qquad
    J (\zeta^2) = \begin{pmatrix}
        0 & -\frac{3\sqrt{3}}{2} v t_1 (1 + \frac{4}{3} h)\\
        -\frac{\sqrt{3}}{2} v t_1 & 0
    \end{pmatrix},\\
    J (\zeta^3) &= \begin{pmatrix}
        -\frac{3}{2} v t_1 \frac{\sqrt{(1-4h)(1+\frac{4}{3}h)}}{1+2h} & 0 \\
        \frac{\sqrt{3}}{2} v t_1 (1 + \frac{1}{1+2h}) & \frac{3}{2} v t_1 \sqrt{\frac{1+\frac{4}{3} h}{1-4h}}
    \end{pmatrix}, \qquad
        J (\zeta^4) = \begin{pmatrix}
            \frac{3}{2} v t_1 \frac{\sqrt{(1-4h)(1+\frac{4}{3}h)}}{1+2h} & 0 \\
        \frac{\sqrt{3}}{2} v t_1 (1 + \frac{1}{1+2h}) & -\frac{3}{2} v t_1 \sqrt{\frac{1+\frac{4}{3} h}{1-4h}}
        \end{pmatrix},
\end{align*}
with the variables identified by $(1,2) = (\zeta_1, \zeta_2)$. The associated topological charges are
\begin{align*}
    -\sgn \det J (\zeta^1) = -1, \qquad -\sgn \det J (\zeta^2) = +1, \qquad -\sgn \det J (\zeta^3) = +1, \qquad -\sgn \det J (\zeta^4) = +1,
\end{align*}
and thus
\begin{align*}
    \bdi [\ow \tilde b_{02}] = -\sum_{\ell=1}^4 \sgn \det J (\zeta^\ell) = +2, \qquad -1/2 < h < 1/4.
\end{align*}
But if $h > 1/4$, then the zeros $\zeta^{3,4}$ are no longer real-valued, meaning that the edge current is
\begin{align*}
    \bdi [\ow \tilde b_{02}] = -\sum_{\ell=1}^2 \sgn \det J (\zeta^\ell) = 0, \qquad h > 1/4.
\end{align*}

To summarize, making the $h$-dependence of the operator $\bH_p^h := \ow \tilde b_{0p}$ explicit, we have shown that
$$\bdi [\tilde \bH_1^h] = -1, \qquad h \in \mathbb{R}$$ is independent of $h$, while
$$\bdi [\tilde \bH_2^h] = \begin{cases}
    +2, & -1/2 < h < 1/4\\
    0, & h> 1/4
\end{cases}$$
jumps at $h=1/4$. The above example illustrates that the edge current invariant is not preserved when we add higher order terms to the continuum model, and that it even depends on the gauge used in the definition of unitarily equivalent Hamiltonians. Although the higher order terms improve the accuracy of the continuum model in the vicinity of $K$, they induce spurious zeros (and topological charge) of the Hamiltonian's symbol at high frequencies of order $O(\delta^{-1})$.

Note that high-order approximations with the correct topological characteristics may be obtained by regularizing the correction term, for instance by defining a modification $\check \bH_2$ of $\bH_2$ with symbol given by
\[
\check b_{02} (X, \zeta;\delta)
    = 
    b_{01} (X, \zeta;\delta) + \frac{v^2}{8} t_1 (X) \frac{\delta ((\zeta_2^2 - \zeta_1^2) \sigma_1 + 2\zeta_1 \zeta_2 \sigma_2)}{1+\alpha^2\delta \aver{\zeta}}.
\]
The regularization by the $\alpha-$dependent term preserves the asymptotic expansion of $H_\delta$ to second order and for $\alpha^2$ large enough (namely larger than {$v/2\sqrt3$}) does not generate any spurious zeros as one may verify. 
As a consequence the BDI of the corresponding operator would be $\bdi[\check \bH_2]=-1$ for all $\alpha$ sufficiently large, which is the correct invariant for the Haldane model.

This shows that the macroscopic models $\tilde \bH_2^h$ and $\check \bH_2$ of $H_\delta$ (up to a unitary transformation) all lead to second-order approximations for the temporal propagation of wave packets as underlined in Theorem \ref{thm:main} while providing different topological invariants (spectral information). Only $\check \bH_2$ has the correct invariant.



\section*{Acknowledgments} 
GB's research was funded in part by NSF grant DMS-230641 and ONR grant N00014-26-1-2017. DM’s research was supported by AFRL grant FA9550-24-1-0177. PC's research was supported in part by the Simons Foundation Travel Support for
Mathematicians award MPS-TSM-00966604. The Flatiron Institute is a division of the Simons Foundation.

\bibliographystyle{siam}
\bibliography{refs}

\appendix
\numberwithin{equation}{section}

\section{Pseudodifferential operators}

\subsection{Notation and the Calder\'on-Vaillancourt Theorem}\label{subsec:notation}
We first introduce notation on pseudo-differential operators used throughout the text; see, e.g. \cite{Bony, DS, Hormander, Zworski} for additional details. Let $\mathcal{S} (\mathbb{R}^d) \otimes \mathbb{C}^n$ be the Schwartz space of vector-valued functions and 
$\mathcal{S}' (\mathbb{R}^d) \otimes \mathbb{C}^n$ its dual. Let $\mathbb{M}_n$ denote the space of Hermitian $n \times n$ matrices. Given a 
symbol $a(x,\xi) \in \mathcal{S}' (\mathbb{R}^d \times \mathbb{R}^d) \otimes \mathbb{M}_n$, we define the Weyl quantization of $a$ as the operator 
\begin{align}\label{eq:weylquanth}
    \ow(a) \psi (x) :=
    \frac{1}{(2\pi)^d} \int_{\mathbb{R}^{2d}}
    e^{i(x-y)\cdot \xi}
    a(\frac{x+y}{2}, \xi) \psi (y) dy d\xi,
    \qquad
    \psi \in \mathcal{S} (\mathbb{R}^d) \otimes \mathbb{C}^n.
\end{align}
When $a$ is polynomial in $\xi$, it follows that $\ow (a)$ is a differential operator. 

For $(x, \xi) = X\in \mathbb{R}^{2d}$, we define $\aver{X} := \sqrt{1 + |X|^2}$. A function $\fm: \mathbb{R}^{2d} \rightarrow [0,\infty)$ is called an {\em order function} if there exist constants $C_0 > 0$, $N_0 > 0$ such that
$\fm(X) \le C_0 \aver{X-Y}^{N_0} \fm(Y)$ for all $X,Y \in \mathbb{R}^{2d}$. Moreover, if $\fm_1$ and $\fm_2$ are order functions, then so are $\fm_1^{-1}$ and $\fm_1 \fm_2$.  Note that $\aver{X}^p$ is an order function. 

\begin{definition}
We say that $a \in S(\fm)$ ({\em symbol class} associated to the order function $\fm$) if for every $\alpha \in \mathbb{N}^{2d}$, there exists $C_\alpha > 0$ such that $|\partial^\alpha a (X)| \le C_\alpha \fm(X)$
for all $X \in \mathbb{R}^{2d}$. 
If $a = a (X; \delta)$ also depends on some parameter $\delta \in \mathbb{R}$, we say that $a \in S (\fm)$ 
if all the constants $C_\alpha$ can be chosen independent of $\delta$. 
\end{definition}

We will also use the following H\"ormander classes of symbols defined in \cite{hormander1971fourier} and \cite[Chapter 18]{Hormander}. 
\begin{definition}
    For $m \in \mathbb{R}$, let $S^m$ denote the set of $a \in C^\infty (\mathbb{R}^d \times \mathbb{R}^d)$ such that for every $(\alpha, \beta) \in \mathbb{N}^d \times \mathbb{N}^d$,
\begin{align*}
    |\partial^\alpha_\xi \partial^\beta_x a (x,\xi)| \le C_{\alpha, \beta} \aver{\xi}^{m - |\alpha|}, \qquad (x, \xi) \in \mathbb{R}^d \times \mathbb{R}^d.
\end{align*}
Let $\cs^m$ denote the set of $a \in C^\infty (\mathbb{R}^d \times \mathbb{R}^d)$ such that for every $(\alpha, \beta) \in \mathbb{N}^d \times \mathbb{N}^d$,
\begin{align*}
    |\partial^\alpha_\xi \partial^\beta_x a (x,\xi)| \le C_{\alpha, \beta} \aver{\xi}^{m + |\beta|}, \qquad (x, \xi) \in \mathbb{R}^d \times \mathbb{R}^d.
\end{align*}
If $a = a (x,\xi; \delta)$ also depends on some parameter $\delta \in \mathbb{R}$, we say that $a \in S^m$ (or $a \in \cs^m$) 
if all the constants $C_{\alpha, \beta}$ can be chosen independent of $\delta$.
\end{definition}



Finally, we recall the following Calder\'on-Vaillancourt result \cite{calderon1972class}:
\begin{lemma}\label{lemma:S_bd}
    If $a = a(x,\xi)$ belongs to $S(\aver{\xi}^m)$, then for any $N \ge 0$, $\ow a : H^{N+m} \to H^N$ is bounded. Moreover, there exists $\ncv> 0$ (independent of $N$) such that
    \begin{align*}
        \norm{\ow (a) f}_{H^N (\mathbb{R}^d)} \le C \max_{|\alpha+\beta| \le \ncv+N} \norm{\aver{\xi}^{-m}\partial^\alpha_x \partial^\beta_\xi a (x,\xi)}_{L^\infty (\mathbb{R}^{2d})} \norm{f}_{H^{N+m} (\mathbb{R}^d)}
    \end{align*}
    uniformly in $a \in S(\aver{\xi}^m)$ and $f \in H^{N+m} (\mathbb{R}^d)$.
\end{lemma}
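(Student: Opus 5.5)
We plan to deduce the statement from the $L^2$ case ($m=N=0$) and then raise the Sobolev indices by exact commutator identities. Throughout we write $|a|_{k,m} := \max_{|\alpha+\beta|\le k}\norm{\aver{\xi}^{-m}\partial^\alpha_x\partial^\beta_\xi a}_{L^\infty}$.

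\emph{Step 1 ($L^2$ bound, $m=N=0$).} We would prove $\norm{\ow a}_{L^2\to L^2}\le C|a|_{\ncv,0}$ for a dimensional constant $\ncv$ by Cotlar--Stein almost orthogonality. Take a smooth partition of unity $1=\sum_{\gamma\in\Zm^{2d}}\chi(\cdot-\gamma)$ on phase space $\Rm^{2d}$, and set $A_\gamma:=\ow(\chi(\cdot-\gamma)a)$. Each $A_\gamma$ has a Schwartz kernel which, after integrating by parts in $\xi$, is bounded by $C|a|_{k,0}\aver{x-y}^{-2d}$; Schur's test then bounds $\norm{A_\gamma}$. For the products $A_\gamma^*A_{\gamma'}$ and $A_\gamma A_{\gamma'}^*$ we write the kernels as oscillatory integrals and integrate by parts in both $x$ and $\xi$. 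This exploits the phase-space separation $|\gamma-\gamma'|$ and should give $\norm{A_\gamma^*A_{\gamma'}}+\norm{A_\gamma A_{\gamma'}^*}\le C|a|_{k,0}^2\aver{\gamma-\gamma'}^{-4d-2}$, which is summable. The number $k$ of derivatives used fixes $\ncv$; it depends only on $d$, in particular not on $N$.

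\emph{Step 2 (general $m$, $N=0$).} We would show that $\ow(a)\,\aver{D}^{-m}=\ow(c)$ with $c\in S(1)$ and $|c|_{\ncv,0}\le C|a|_{\ncv',m}$. Since $\aver{\xi}^{-m}$ is independent of $x$, the Weyl product $a\#\aver{\xi}^{-m}$ has the exact representation
\[
c(x,\xi)=\pi^{-2d}\int e^{2i(y\cdot\eta-\ldots)}\,a(x+\cdot,\xi+\cdot)\,\aver{\xi+\cdot}^{-m}\,\ldots
\]
We would bound this oscillatory integral by integrating by parts with the operator $\aver{y}^{-2}(1-\Delta_\eta)$ (and its analogue in $y$), using Peetre's inequality $\aver{\xi+\eta}^{s}\le C\aver{\xi}^s\aver{\eta}^{|s|}$. \textbf{This is the step I expect to be the main obstacle.} The difficulty is tracking that only finitely many seminorms of $\aver{\xi}^{-m}a$ enter, with a count independent of $N$. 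If the shift $\ncv'$ relative to $\ncv$ causes trouble, we would simply absorb it by renaming $\ncv$.

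\emph{Step 3 (integer $N$).} For the Weyl quantization one has the exact identity $[\partial_{x_j},\ow a]=\ow(\partial_{x_j}a)$. Iterating it gives
\[
\partial_x^\alpha\,\ow(a)f=\sum_{\beta\le\alpha}\binom{\alpha}{\beta}\ow(\partial_x^{\alpha-\beta}a)\,\partial_x^\beta f .
\]
Each symbol $\partial_x^{\alpha-\beta}a$ lies in $S(\aver{\xi}^m)$ and costs $|\alpha-\beta|\le N$ extra derivatives. Applying Step 2 to each term bounds $\norm{\partial^\alpha_x\ow(a)f}_{L^2}$ by $C|a|_{\ncv+N,m}\norm{f}_{H^{N+m}}$. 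Summing over $|\alpha|\le N$ yields the claimed estimate.

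\emph{Step 4 (non-integer $N$).} For non-integer $N$ we would use complex interpolation between $\lfloor N\rfloor$ and $\lceil N\rceil$. Here $\ow a$ is fixed and the constants are controlled by $|a|_{\ncv+\lceil N\rceil,m}$; the resulting loss of at most one derivative in the index is harmless after enlarging $\ncv$ by one.
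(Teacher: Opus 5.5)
Your proof is correct, and it does more than the paper, which gives no argument for this lemma. The paper only recalls it with a citation to \cite{calderon1972class}. That reference gives $L^2$-boundedness for symbols with bounded derivatives, i.e.\ the case $m=N=0$, which is your Step 1. Your Cotlar--Stein sketch, with decay $\aver{\gamma-\gamma'}^{-4d-2}$ so that the square roots are summable over $\Zm^{2d}$, is the standard proof of that case.

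Steps 2--4 supply what the citation leaves implicit: the $H^{N+m}\to H^N$ bound with seminorm count $\ncv+N$, which the paper actually uses in Lemmas \ref{lemma:res} and \ref{lemma:res_higher_order}. What makes the count grow with slope exactly one in $N$ is your identity $[\partial_{x_j},\ow a]=\ow(\partial_{x_j}a)$. It is exact because the Weyl symbol $i\xi_j$ of $\partial_{x_j}$ is linear. By contrast, reducing to $S(1)$ via $\aver{D}^{N}\ow(a)\aver{D}^{-N-m}$ and the composition calculus would route the $N$-dependence through Peetre's inequality and make it harder to track.

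Step 2, which you flag as the main obstacle, is routine. Since $\aver{\xi}^{-m}$ is independent of $x$, one has exactly $c(x,\xi)=\pi^{-d}\iint e^{2iy\cdot\zeta}a(x+y,\xi)\aver{\xi+\zeta}^{-m}\,dy\,d\zeta$. Integrating by parts with $\aver{y}^{-2}(1-\tfrac14\Delta_\zeta)$ and $\aver{\zeta}^{-2}(1-\tfrac14\Delta_y)$ places only $x$-derivatives on $a$, roughly $d+|m|+1$ of them, once you use $\aver{\xi+\zeta}^{-m}\le C\aver{\xi}^{-m}\aver{\zeta}^{|m|}$. The resulting bound is linear in $a$. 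Consequently your $\ncv$ depends on $d$ and $m$; this is compatible with the statement, which only asserts independence of $N$. Step 4 correctly handles non-integer $N$ at the cost of enlarging $\ncv$ by one.
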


\subsection{Symbols of exponential type}\label{subsec:exp}
We collect two useful properties of symbols of the form $a_\delta (x, \xi) =a(\delta x, \xi; \delta)$, where $a$ is of exponential type; recall Definition \ref{def:es}.
\begin{lemma}\label{lemma:es_bd}
    Suppose $a (X,\xi;\delta) = a \in \es$ uniformly in $0<  \delta \le 1$ and define $a_\delta (x, \xi) := a (\delta x, \xi; \delta)$. Then there exists $\delta_0 > 0$ such that for any $N \ge 0$, the operator $\ow a_\delta : H^{N+d+1} \to H^N$ is bounded with norm uniform in 
    $0 < \delta \le \delta_0$.
\end{lemma}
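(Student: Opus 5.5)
The plan is to reduce the boundedness of $\ow a_\delta$ to a finite sum of operators, each built from one term $e^{i\phase(\delta x,\xi)}\sym(\delta x,\xi)$ of Definition \ref{def:es}. For each term, the idea is to separate the oscillatory phase from the amplitude. Write $\phase(X,\xi)=\phase(X,0)+\xi\cdot\Theta(X,\xi)$ with $\Theta(X,\xi)=\int_0^1\nabla_\xi\phase(X,s\xi)\,ds$, which lies in $S^0$ because $\phase\in S^1$. The factor $e^{i\phase(X,0)}$ lies in $C^\infty_b$ (its imaginary part is bounded below) and can be absorbed into $\sym$. The remaining factor $e^{i\xi\cdot\Theta(\delta x,\xi)}$ is a ``variable translation'' symbol, and this is where the $x$-derivatives produce growth in $\xi$.

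Rather than quantizing directly, I would estimate the Schwartz kernel. In Weyl form,
\[
K_\delta(x,y)=(2\pi)^{-d}\int e^{i(x-y)\cdot\xi}\,a(\delta\tfrac{x+y}{2},\xi;\delta)\,d\xi .
\]
For each term, integrate by parts in $\xi$ using $(1-\Delta_\xi)$ against $\aver{x-y+\nabla_\xi\phase}^{-2}$. The key point is that every $\xi$-derivative of $e^{i\phase}\sym$ stays bounded: $\partial_\xi\phase\in S^0$, so there is no growth in $\xi$ from $\xi$-derivatives. The growth comes only from $x$-derivatives, where $\partial_X e^{i\phase}$ brings down $\delta\,\partial_X\phase=O(\delta\aver{\xi})$.

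For the $H^{N+d+1}\to H^N$ bound, I would use the following strategy. Write $\ow a_\delta f=\ow(a_\delta\aver{\xi}^{-d-1})\aver{D}^{d+1}f$ plus composition corrections. Each $x$-derivative of the operator costs at most a factor $\delta\aver{\xi}$, and these factors are compensated by the extra Sobolev derivatives on $f$. This is why the loss $d+1$ is independent of $N$, with each additional derivative costing one more power of $\aver{\xi}$ that is absorbed by the $H^{N}$-to-$H^{N+1}$ shift via the commutator $[\nabla_x,\ow a_\delta]=\ow(\delta\partial_X a)$. Concretely, I would argue by induction on $N$. The case $N=0$ is the core estimate. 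For $N\ge1$, write $\nabla_x\ow a_\delta f=\ow a_\delta\nabla_x f+\ow(\delta\partial_Xa)f$ and note that $\delta\partial_X a=\delta(i\partial_X\phase\,\sym+\partial_X\sym)e^{i\phase}$. This is $\delta\aver{\xi}$ times an exponential-type symbol, i.e.\ $\ow(\delta\partial_Xa)=\ow(\tilde a)\aver{D}$ up to a lower-order remainder, with $\tilde a\in\es$.

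For $N=0$, the bound $\norm{\ow a_\delta f}_{L^2}\le C\norm{f}_{H^{d+1}}$ would come from Schur's test on the kernel of $\ow(a_\delta\aver{\xi}^{-d-1})$. This is the integrable-in-$\xi$ version: $|a_\delta|\aver{\xi}^{-d-1}\in L^1_\xi$ uniformly, so the kernel is bounded. The $(1-\Delta_\xi)^{k}$ integration by parts gives decay $\aver{x-y+\nabla_\xi\phase(\cdot,\xi)}^{-2k}$, and since $\nabla_\xi\phase$ is bounded this yields $\aver{x-y}^{-2k}$ decay. Schur's test then gives an $L^2$ bound.

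I expect the main obstacle to be the Weyl midpoint: the $\xi$-integration by parts also hits $a(\delta\frac{x+y}{2},\xi)$ only through $\xi$, which is fine, but converting $\ow(a_\delta)\aver{D}^{-d-1}$ into $\ow(a_\delta\aver{\xi}^{-d-1})$ involves a Moyal expansion whose terms contain $x$-derivatives of $a$, each growing like $\delta\aver{\xi}$. I would handle this by writing $\ow(a_\delta)\aver{D}^{-d-1}$ directly as an integral operator. Its kernel is $\int\!\!\int$ of $e^{i(x-z)\xi}a_\delta(\frac{x+z}{2},\xi)$ against the kernel of $\aver{D}^{-d-1}$, which is integrable with exponential decay. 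Estimating it by Minkowski avoids symbol composition entirely, and choosing $\delta_0$ small ensures that $\delta\aver{\xi}$-type losses appear only inside the bounded neighborhoods where the cutoff acts.
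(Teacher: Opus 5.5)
Your Schur-test step is fine, but it bounds $\ow(a_\delta\aver{\xi}^{-d-1})$. That is not the operator you need, and the bridge from it to $\ow a_\delta$ is where the whole content of the lemma sits.

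\emph{The Weyl correction is not lower order.} Because of the Weyl midpoint, $\ow(a_\delta)\aver{D}^{-d-1}\neq\ow(a_\delta\aver{\xi}^{-d-1})$. Each $x$-derivative of $a(\delta x,\xi;\delta)$ brings down $\delta\nabla_X\phase=O(\delta\aver{\xi})$. So the Moyal terms $\partial_x^\alpha a_\delta\,\partial_\xi^\alpha\aver{\xi}^{-d-1}$ are $O(\delta^{|\alpha|}\aver{\xi}^{-d-1})$: they gain powers of $\delta$ but no decay in $\xi$. The same objection applies to your claim that $\ow(\delta\partial_Xa)=\ow(\tilde a)\aver{D}$ up to a lower-order remainder.

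\emph{Your workaround does not fix this.} In $\iint e^{i(x-z)\cdot\xi}a_\delta(\frac{x+z}{2},\xi)G(z-y)\,dz\,d\xi$ the $\xi$-integral is not absolutely convergent. Decay in $\xi$ would have to come from integrating by parts in $z$, but every $z$-derivative that falls on $e^{i\phase(\delta\frac{x+z}{2},\xi)}$ returns a factor $O(\delta\aver{\xi})$ that cancels the gain. Minkowski's inequality cannot supply the decay either. Applied to $\ow(a_\delta)(G*f)=\int G(w)\,\ow(a_\delta)f(\cdot-w)\,dw$ it presupposes the bound being proved, and applied to the kernel it needs exactly the missing $\xi$-decay.

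\emph{Smallness of $\delta$ is never used.} Your Schur argument works for every $\delta\le1$, yet the conclusion fails at $\delta=1$. Take $d=1$ and $a=e^{i\xi v(X)}$ with $v=4\sin$, so $a\in\es$. At $\delta=1$ one finds $\ow a_\delta f(x)=\sum_j f(y_j)/|1-\frac12 v'(\frac{x+y_j}{2})|$, summed over the roots $y_j$ of $x-y+v(\frac{x+y}{2})=0$. Near a fold, where $v'=2$ at the midpoint, this behaves like $|x-x_0|^{-1/2}$. Hence $\ow a_\delta f\notin L^2$ for any $f\in\mathcal{S}(\mathbb{R})$ not vanishing at the corresponding point. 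Smallness of $\delta$ must therefore enter essentially, and your only mention of it (the ``cutoff'' in your last sentence) is not attached to any estimate.

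\emph{The missing idea} is to keep $\phase$ in the phase rather than in the amplitude; this is how the paper argues. Write $\ow a_\delta f(x)=(2\pi)^{-d}\int e^{iB}\sym(\delta x-\frac{\delta z}{2},\xi;\delta)f(x-z)\,dz\,d\xi$ with $B=z\cdot\xi+\phase(\delta x-\frac{\delta z}{2},\xi;\delta)$. Then $\nabla_zB=\xi-\frac{\delta}{2}\nabla_X\phase$, so $|\nabla_zB|\ge c\aver{\xi}-c^{-1}$ for $\delta\le\delta_0$; this is exactly where smallness enters.

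After cutting off in $z$ and $\xi$, integrate by parts $d+1$ times in $z$ against $e^{iB}$. The derivatives then land only on $\sym$ (harmless, $O(\delta)$), on the coefficients built from $\nabla_zB$, and on $f$. This produces the integrable factor $\aver{\xi}^{-d-1}$ at the cost of $d+1$ derivatives of $f$. Integration by parts in $\xi$, using $\nabla_\xi B=z+\nabla_\xi\phase$ with $\nabla_\xi\phase$ bounded, gives decay in $z$.

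For $N\ge1$ no symbol composition is needed. Each $x$-derivative landing on $e^{iB}$ costs $O(\delta\aver{\xi})$ and is absorbed by one more integration by parts in $z$. Equivalently, prove the core estimate $H^{m+d+1}\to L^2$ for amplitudes of order $m$ and apply it to $\delta\partial_Xa$ with $m=1$.
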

\begin{proof}
    Let $\phase (X, \xi; \delta) = \phase \in S^1$ and $\sym (X, \xi; \delta) = \sym \in S(1)$ uniformly in $\delta$, and define
    \begin{align*}
        \phase_\delta (x, \xi) := \phase (\delta x, \xi; \delta), \qquad \sym_\delta (x, \xi) := \sym (\delta x, \xi; \delta).
    \end{align*}
    It suffices to show that $A := \ow e^{i \phase_\delta} \sym_\delta$ is bounded $H^{N+d+1} \to H^N$ uniformly in $\delta$. We write
    \begin{align*}
        Af (x) = \dint_{\Rm^{2d}} e^{i(x-y)\cdot\xi} e^{i \phase(\delta \frac{x+y}{2}, \xi; \delta)} \sym (\delta \frac{x+y}2,\xi; \delta) f(y) \dfrac{dy d\xi}{(2\pi)^d} = \dint_{\Rm^{2d}} e^{i B} \sym (\delta x - \frac{\delta z}2,\xi; \delta) f(x-z) \dfrac{dz d\xi}{(2\pi)^d},
    \end{align*}
    where we changed variables $z := x-y$ and use the shorthand $B :=z \cdot \xi + \phase (\delta x - \frac{\delta z}{2}, \xi; \delta)$.
    Using that
    \begin{align*}
        \nabla_z B = \xi - \frac{\delta}{2} \nabla_X \phase (\delta x - \frac{\delta z}{2}, \xi; \delta)
    \end{align*}
    with $\nabla_X \phase (X, \xi; \delta) \le C \aver{\xi}$ uniformly in all variables, it follows that for some $c, \delta_0 >0$,
    \begin{align}\label{eq:growth_B}
        |\nabla_z B| \ge c \aver{\xi} - c^{-1}, \qquad (x,z,\xi;\delta) \in \mathbb{R}^{3d} \times (0,\delta_0].
    \end{align}
    Similarly, $\nabla_\xi B = z + \nabla_\xi \phase (\delta x - \frac{\delta z}{2}, \xi; \delta)$
    with (each entry of) $\nabla_\xi \phase$ in $S^0$. It follows that (for possibly a smaller choice of $c, \delta_0 > 0$)
    \begin{align}\label{eq:growth_B2}
        |\nabla_\xi B| \ge c \aver{z} - c^{-1}, \qquad (x,z,\xi;\delta) \in \mathbb{R}^{3d} \times (0,\delta_0].
    \end{align}
    Now, let $\chi_0 \in C^\infty_c (\mathbb{R}^d)$ such that $\chi_0 (z) = 1$ 
    for all $|z| \le c^{-2}$. This way, $\nabla_z B \ne 0$ whenever $\xi \in \supp (1-\chi_0)$ and $\nabla_\xi B \ne 0$ whenever $z \in \supp (1-\chi_0)$. Set $\chi_1 := 1-\chi_0$ and for $j,k \in \{0,1\}$, define the operators $A_{jk}$ by
    \begin{align*}
        A_{jk} f(x) = \dint_{\Rm^{2d}} \chi_j (z) \chi_k (\xi) e^{i B} \sym (\delta x - \frac{\delta z}2,\xi; \delta) f(x-z) \dfrac{dz d\xi}{(2\pi)^d},
    \end{align*}
    so that $A = \sum_{j,k = 0}^1 A_{jk}$.
    By the regularity of $B$ and $\sigma$, it follows from Cauchy-Schwarz (applied to the integral over $x$) that $A_{00} : H^N \to H^N$ is bounded uniformly in $\delta$. To handle unbounded $\xi$, we write
    \begin{align*}
        A_{01} f(x) = \dint_{\Rm^{2d}} \chi_0 (z) \chi_1 (\xi) \sym (\delta x - \frac{\delta z}2,\xi; \delta) f(x-z) \left( \frac{1}{|\nabla_z B|^2} (-i \nabla_z B \cdot \nabla_z )\right)^n e^{i B} \dfrac{dz d\xi}{(2\pi)^d},
    \end{align*}
    which after integrating by parts becomes
    \begin{align*}
        A_{01} f(x) = \dint_{\Rm^{2d}}e^{i B} \left(i \nabla_z \cdot \nabla_z B\frac{1}{|\nabla_z B|^2}\right)^n  \chi_0 (z) \chi_1 (\xi) \sym (\delta x - \frac{\delta z}2,\xi; \delta) f(x-z) \dfrac{dz d\xi}{(2\pi)^d}.
    \end{align*}
    The boundedness of $A_{01}: H^{N+d+1} \to H^N$ then follows from taking $n:= d+1$ and applying \eqref{eq:growth_B} to ensure integrability in $\xi$. 
    We can similarly establish the boundedness of the $A_{1k}$ by integrating by parts in $\xi$ and applying \eqref{eq:growth_B2}.
\end{proof}
From the above proof, we can also deduce the following
\begin{lemma}\label{lemma:chi}
    Take $a_\delta$ 
    as in Lemma \ref{lemma:es_bd}, and let $0 \le \chi \in C^\infty_c (\mathbb{R}^d)$ such that $\chi \equiv 1$ in a neighborhood of the origin. Fix $\nu > 0$ and set $\chi_\delta (z) := \chi (\delta^\nu z)$.
    Then with $H_\delta := \ow a_\delta$, the operator $H_{00}^\delta$ defined by
    \begin{align*}
        H_{00}^\delta f(x) = \dint_{\Rm^{2d}} \chi_\delta (z) e^{iz \cdot \xi} a (\delta x - \frac{\delta z}2,\xi; \delta) f(x-z) \dfrac{dz d\xi}{(2\pi)^d}
    \end{align*}
    satisfies
    \begin{align*}
        \norm{(H^\delta_{00} - H_\delta) f}_{H^N} \le C_{n,N} \delta^n \norm{f}_{H^{N+d+1}}, \qquad n,N \in \mathbb{N}_0, \quad 0 < \delta \le 1, \quad f \in H^N.
    \end{align*}
\end{lemma}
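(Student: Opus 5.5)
The plan is to write the difference $H_\delta f - H^\delta_{00} f$ explicitly, using the same change of variables as in the proof of Lemma \ref{lemma:es_bd}:
\begin{align*}
    (H_\delta - H^\delta_{00}) f(x) = \dint_{\Rm^{2d}} (1-\chi_\delta(z))\, e^{iB}\, \sym(\delta x - \tfrac{\delta z}{2},\xi;\delta)\, f(x-z) \dfrac{dz\,d\xi}{(2\pi)^d},
\end{align*}
with $B = z\cdot\xi + \phase(\delta x - \frac{\delta z}2,\xi;\delta)$. By linearity we may treat one term $e^{i\phase}\sym$ at a time. The cutoff $1-\chi_\delta$ is supported where $|z| \ge c_\chi \delta^{-\nu}$, so this is a piece of the operator that lives only at large separation $|z|$. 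The mechanism that gives decay there is integration by parts in $\xi$, which is exactly what was used for the operators $A_{1k}$ in Lemma \ref{lemma:es_bd}.

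The first step is to split in $\xi$ with $\chi_0(\xi)+\chi_1(\xi)=1$, as before.

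For the high-frequency part $\chi_1(\xi)$, I would first integrate by parts $d+1$ times in $z$ using $L_z = |\nabla_z B|^{-2}(-i\nabla_z B\cdot\nabla_z)$, as for $A_{01}$. This gains integrability in $\xi$ at the cost of $d+1$ derivatives on $f$. Derivatives falling on $1-\chi_\delta(z)$ only produce factors $\delta^{\nu}$ and are harmless.

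Then, for both frequency pieces, I would integrate by parts $M$ times in $\xi$ using $L_\xi = |\nabla_\xi B|^{-2}(-i\nabla_\xi B\cdot \nabla_\xi)$. This is legitimate on the support, since \eqref{eq:growth_B2} gives $|\nabla_\xi B| \ge c\aver{z} - c^{-1} \ge c'\aver{z}$ once $\delta$ is small, because there $|z|\gtrsim \delta^{-\nu}$. Each application gains a factor $\aver{z}^{-1}$. Here one uses that $\nabla_\xi\phase \in S^0$, that $\partial_\xi$ derivatives of $\sym$ are bounded, and that $\partial^2_\xi B = \partial_\xi^2\phase$ is bounded. The resulting kernel is then bounded by $C_M\aver{z}^{-M}$ times a function integrable in $\xi$, acting on $f(x-z)$ and its first $d+1$ derivatives.

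On the support, $\aver{z}^{-M} \le C\delta^{\nu(M-d-1)}\aver{z}^{-d-1}$. Applying Young's/Schur's inequality in $z$ then gives an $L^2$ bound of order $\delta^{\nu(M-d-1)}\norm{f}_{H^{d+1}}$. Choosing $M \ge n/\nu + d+1$ yields $\delta^n$.

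For the $H^N$ bound, $\partial_x^\alpha$ with $|\alpha|\le N$ is applied to the integral. When it falls on $f(x-z)$, we get the same structure with $\partial^\alpha f$. When it falls on $e^{i\phase(\delta x - \delta z/2,\cdot)}$ or on $\sym$, it produces factors $\delta\,\nabla_X\phase = O(\delta\aver{\xi})$, i.e. extra powers of $\aver{\xi}$. These are absorbed by performing additional $z$-integrations by parts, using the growth of $|\nabla_z B|$ in \eqref{eq:growth_B}. The norm therefore remains controlled by $\norm{f}_{H^{N+d+1}}$, possibly after adjusting the number of $z$-integrations by parts (one per extra power of $\aver{\xi}$), with $f$ derivatives traded appropriately.

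\textbf{Main obstacle.} The main obstacle is the bookkeeping when derivatives from both families of integration by parts hit $\phase(\delta x - \delta z/2,\xi)$. Each $z$-derivative produces $\delta\nabla_X\phase \sim \delta\aver{\xi}$, which could spoil the gain from the $\xi$-integrations and the $z$-decay. I would handle this by noting that $\delta\aver{\xi}\le \aver{\xi}$ is dominated by the $|\nabla_z B|^{-1}\sim\aver{\xi}^{-1}$ factor from each $z$-step. The operators $L_z$ and $L_\xi$ have symbol-type coefficients (in the $S^0$ sense in $\xi$ and in $z$, respectively), so each step keeps the net gain. Tracking this carefully, possibly by first performing all $\xi$-integrations and then the $z$-ones, gives the stated $C_{n,N}\delta^n$ bound.
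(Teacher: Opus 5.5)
Your proposal is correct and is essentially the paper's argument. The paper's proof is only the one-line instruction to integrate by parts as in Lemma~\ref{lemma:es_bd} using \eqref{eq:growth_B} and \eqref{eq:growth_B2}, and you carry it out: the $\xi$-integrations by parts on the support of $1-\chi_\delta$ (where $|z|\gtrsim\delta^{-\nu}$) give $\aver{z}^{-M}\lesssim\delta^{\nu(M-d-1)}\aver{z}^{-d-1}$, and the extra $z$-integrations absorb each factor $\delta\nabla_X\phase$ so that $f$ still receives at most $N+d+1$ derivatives. Like the paper's argument, yours needs $\delta$ small enough for \eqref{eq:growth_B}--\eqref{eq:growth_B2} to hold.
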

\begin{proof}
    This result follows from integrating by parts as in the proof of Lemma \ref{lemma:es_bd} and applying 
    \eqref{eq:growth_B} and \eqref{eq:growth_B2}.
\end{proof}

\section{Proofs of main results}\label{sec:proofs}
\subsection{Results used in proof of Theorem \ref{thm:main_0}}\label{subsec:proofs_main}
\begin{lemma}\label{lemma:Sobolev_0}
    Suppose Assumption \ref{assumption:a} holds, and let $\phi_0 \in \cS (\mathbb{R}^d; \mathbb{C}^n)$.
    For any $N \ge 0$, there exists $C > 0$ such that the solution $\phi$ to \eqref{eq:macro_0} satisfies $\norm{\phi (T,\cdot \; ; \delta)}_{H^N} \le C$ uniformly in $T \ge 0$ and $0 < \delta \le \delta_0$.
\end{lemma}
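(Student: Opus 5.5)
The plan is to combine unitarity of the evolution generated by the self-adjoint operator $\bH$ with the equivalence of norms \eqref{eq:equivalent_norms}. Since $\bH=\ow b_0$ is self-adjoint on $L^2(\mathbb{R}^d;\mathbb{C}^n)$ with domain $H^1$, the solution of \eqref{eq:macro_0} is $\phi(T)=e^{-iT\bH}\phi_0$. The propagator $e^{-iT\bH}$ is a function of $\bH$, so it commutes with $(i+\bH)^N$ for every integer $N\ge0$ on the domain of $(i+\bH)^N$. Because $\phi_0\in\cS\subset H^N$ for all $N$, the lower bound in \eqref{eq:equivalent_norms} shows that $\phi_0$ lies in that domain.

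For integer $N$ we then obtain the chain
\[
c\norm{\phi(T)}_{H^N}\le \norm{(i+\bH)^N e^{-iT\bH}\phi_0}_{L^2}=\norm{e^{-iT\bH}(i+\bH)^N\phi_0}_{L^2}=\norm{(i+\bH)^N\phi_0}_{L^2}\le C\norm{\phi_0}_{H^N}.
\]
Since $c$ and $C$ are uniform in $\delta$, this gives the desired bound uniformly in $T\ge0$ and $0<\delta\le\delta_0$. Non-integer $N$ then follows by monotonicity of Sobolev norms, since $\norm{\cdot}_{H^N}\le\norm{\cdot}_{H^{\lceil N\rceil}}$.

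The substantive step is justifying \eqref{eq:equivalent_norms} with constants uniform in $\delta$, which the paper asserts but which is where the work lies. For the upper bound, $(i+\bH)^N$ is the Weyl quantization of a symbol in $S(\aver{\zeta}^N)$, obtained by symbolic calculus with uniform bounds. Lemma \ref{lemma:S_bd} then gives $\norm{(i+\bH)^N u}_{L^2}\le C\norm{u}_{H^N}$.

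For the lower bound, I would use the ellipticity \eqref{eq:ellipticty}, since $b_0$ is Hermitian. It implies that $b_0+i$ is invertible with $(b_0+i)^{-1}\in S(\aver{\zeta}^{-1})$ uniformly in $\delta$: on the region where $c\aver{\zeta}-c^{-1}$ is small, the bound on the inverse is supplied by the $+i$. A standard parametrix construction then gives $(i+\bH)^{-1}=\ow q+R$ with $q\in S(\aver{\zeta}^{-1})$ and $R$ smoothing, so $(i+\bH)^{-1}$ maps $H^{k}\to H^{k+1}$ uniformly. Iterating yields $\norm{u}_{H^N}\le C\norm{(i+\bH)^N u}_{L^2}$.

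I expect the main obstacle to be the uniformity in $\delta$ of the parametrix remainder. I would handle it by noting that all symbol seminorms of $b_0$ are $\delta$-independent by assumption, so every constant in the composition formulas is uniform.
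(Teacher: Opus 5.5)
Your proposal is correct and is essentially the paper's argument: the paper applies Lemma \ref{lemma:Sobolev_0_appendix} with $\Lambda=(i+\bH)^N$ and $r\equiv 0$, whose proof is the $L^2$ energy identity for the symmetric $\bH$ applied to $\Lambda\phi$ (the same as your unitarity of $e^{-iT\bH}$ together with its commutation with $(i+\bH)^N$), followed by the norm equivalence \eqref{eq:equivalent_norms}, which the paper simply asserts and which you additionally sketch via a uniform parametrix. One minor slip: $\phi_0\in H^N$ lies in the domain of $(i+\bH)^N$ because of the upper bound (that is, $\bH:H^{k+1}\to H^k$ boundedly), not the lower bound; the lower bound, i.e.\ elliptic regularity, is what gives $\phi(T)\in H^N$ and justifies the first inequality in your chain.
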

\begin{proof}
    This result follows immediately from the ellipticity condition \eqref{eq:equivalent_norms} and Lemma \ref{lemma:Sobolev_0_appendix} (with $\Lambda = (i+\bH)^N$ and $r \equiv 0$ there).
\end{proof}
\begin{lemma}\label{lemma:res}
    Suppose Assumption \ref{assumption:a} holds. Take $H_\delta$ as in \eqref{eq:microWP} and $\psi_\delta$ as in \eqref{eq:psi_def}. Then
    \begin{align*}
        \norm{(D_t + H_\delta) \psi_\delta (t, \cdot)}_{H^N} \le C_N \delta^{1+\rate}, \qquad t \ge 0, \quad 0 < \delta \le \delta_0, \quad N \ge 0.
    \end{align*}
\end{lemma}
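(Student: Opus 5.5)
Starting from the residual identity \eqref{eq:residual} in the proof of Theorem \ref{thm:main_0}, I would write $(D_t+H_\delta)\psi_\delta(t,x) = e^{i(K\cdot x-Et)}\delta^{d/2}(R_\delta\phi)(\delta t,\delta x)$. Here $R_\delta=\ow r_\delta$ acts in the macroscopic variable $X$ and has symbol
\[
r_\delta(X,\zeta) := a(X,K+\delta\zeta;\delta) - EI_n - \delta b_0(X,\zeta;\delta).
\]
Changing variables $x\mapsto X=\delta x$ preserves $L^2$ norms with the $\delta^{d/2}$ normalization. Derivatives in $x$ produce factors $\delta\nabla_X$ and $iK$, so the $H^N$ norm in $x$ is bounded by $C\|R_\delta\phi(\delta t)\|_{H^N(dX)}$ with $\delta\le1$. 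It therefore suffices to show $\|R_\delta\phi(T)\|_{H^N}\le C\delta^{1+\rate}$ uniformly in $T$. Lemma \ref{lemma:Sobolev_0} gives uniform bounds $\|\phi(T)\|_{H^{M}}\le C_M$ for every $M$, so the task reduces to bounding $R_\delta: H^{M}\to H^N$ by $C\delta^{1+\rate}$ for some large $M$.

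I would split $r_\delta$ with a smooth cutoff $\chi(\delta^{\eta}\zeta/C_0)$, where $\chi\equiv1$ near $0$ and $\chi$ has support in the unit ball. On its support $|\xi-K|=\delta|\zeta|\le C_0\delta^{1-\eta}$, so by \eqref{eq:decomposition} the near part equals $\chi\,\delta^{1+\rate}b_1(X,\zeta;\delta)$. The cutoff and all its derivatives are bounded uniformly in $\delta\le1$, and $b_1\in S(\aver{\zeta}^2)$. The product is therefore $\delta^{1+\rate}$ times a symbol in $S(\aver{\zeta}^2)$ uniformly, and Lemma \ref{lemma:S_bd} gives the bound $C\delta^{1+\rate}\|\phi\|_{H^{N+2}}$.

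The far part $(1-\chi(\delta^\eta\zeta/C_0))r_\delta$ is supported where $\aver{\zeta}\gtrsim\delta^{-\eta}$. There I would use \eqref{eq:reg_a}: derivatives of $a(X,K+\delta\zeta;\delta)$ satisfy $|\partial_X^\alpha\partial_\zeta^\beta|\le C\delta^{-\nu_0-\nu_1|\alpha|+(1-\nu_2)|\beta|}$. Together with $\delta b_0\in\delta S^1$ and $E$ constant, this bounds the $S(\aver{\zeta}^{1})$ seminorms of order $\le\ncv+N$ by $C\delta^{-\nu_0-\nu(\ncv+N)}$ times a power of $\delta^{-1}$ coming from the cutoff derivatives. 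I would trade decay for powers of $\delta$ by writing the far symbol as $\aver{\zeta}^{k}\cdot[\aver{\zeta}^{-k}(\cdots)]$. On the support, $\aver{\zeta}^{-k}\le C\delta^{\eta k}$, so $\aver{\zeta}^{-k}\times(\text{far symbol})$ has $S(\aver{\zeta}^{1})$ seminorms of order $\delta^{\eta k-\nu_0-\nu(\ncv+N)}$. Here I need $\nu_1<\eta$: an $X$-derivative costs $\delta^{-\nu_1}$, and this loss must be outrun by the gain from the cutoff region. The mechanism is clearest if I write the far part as a symbol in $S(\aver{\zeta}^{1+k})$ with small seminorms. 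Choosing $k$ large then yields a bound $C\delta^{1+\rate}\|\phi\|_{H^{N+1+k}}$, which is finite by Lemma \ref{lemma:Sobolev_0}.

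The main obstacle is this far-field bookkeeping. The $\delta^{-\nu_1|\alpha|}$ losses from $X$-derivatives enter the Calder\'on–Vaillancourt seminorms, and one must check that $\eta>\nu_1$ makes the exponent $\eta k-\nu_0-\nu_1(\ncv+N)-\cdots$ exceed $1+\rate$ for large $k$. The $\zeta$-derivatives are harmless, since each gives a factor $\delta^{1-\nu_2}$ or hits the cutoff and gives $\delta^\eta\aver{\zeta}^{-1}$-type factors. I would handle this by choosing $k=\nbp$-type integers as in the constant $\nbp$ stated after Theorem \ref{thm:main}. I expect that to be exactly the same computation.
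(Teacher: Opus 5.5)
Your outline matches the paper's proof except in the far-field step. The shared parts are the cutoff $\chi(\delta^{\eta}\zeta)$, the near-field term $\delta^{1+\rate}\chi_\delta b_{1}\in\delta^{1+\rate}S(\aver{\zeta}^2)$ controlled by Lemma~\ref{lemma:S_bd} and the uniform Sobolev bounds of Lemma~\ref{lemma:Sobolev_0}, and the idea that the far field is negligible because $\aver{\zeta}^{-1}\lesssim\delta^{\eta}$ on its support.

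The paper extracts this decay by integrating by parts $N'$ times in $Y=\delta y$, using $e^{i(X-Y)\cdot\zeta}=(\frac{i}{|\zeta|}\hat\zeta\cdot\nabla_Y)^{N'}e^{i(X-Y)\cdot\zeta}$. The gain $|\zeta|^{-N'}\le C\delta^{\eta N'}$ is paid for by $N'$ derivatives landing on $a(\frac{X+Y}{2},K+\delta\zeta;\delta)$, each costing $\delta^{-\nu_1}$, and on $\phi$. Lemma~\ref{lemma:S_bd} is then used only at a fixed order, which gives $\delta^{(\eta-\nu_1)N'-\nu_0-\nu(\ncv+N)}\norm{\phi}_{H^{N+N'+1}}$.

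You instead place the far symbol in $S(\aver{\zeta}^{1+k})$ with small seminorms and invoke Lemma~\ref{lemma:S_bd} at order $m=1+k$. That is a more compact packaging. However, your exponent $\eta k-\nu_0-\nu(\ncv+N)$ assumes the number of seminorms consumed by Lemma~\ref{lemma:S_bd} does not depend on $m$, and for Weyl quantization this is false. Take $a(x,\xi)=e^{i\omega\cdot x}\aver{\xi}^m$. Then $\ow a f=e^{i\omega\cdot x/2}\aver{D}^m(e^{i\omega\cdot x/2}f)$, so $\norm{\ow a}_{H^m\to L^2}\ge c\aver{\omega}^m$. Yet every seminorm $\sup\aver{\xi}^{-m}|\partial_x^\alpha\partial_\xi^\beta a|$ with $|\alpha+\beta|\le M$ is $O(\aver{\omega}^M)$. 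Hence at least about $m$ derivatives must be consumed, and they are $x$-derivatives. A symptom of the problem is that your displayed exponent would make the hypothesis $\eta>\nu_1$ unnecessary, which contradicts your own (correct) remark that it is needed.

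The repair is to count these extra derivatives. Each additional power of $\aver{\zeta}^{-1}$ costs one $X$-derivative of $a(X,K+\delta\zeta;\delta)$, i.e.\ a factor $\delta^{-\nu_1}$. The far-field bound therefore becomes $\delta^{(\eta-\nu_1)k-C}$ with $C$ independent of $k$. This is the paper's rate, and it is exactly where $\eta>\nu_1$ enters.

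To recover precisely the hypothesis $\eta>\nu_1$, rather than a condition also involving $\nu_2$, you need to know that only the number of $X$-derivatives grows with $k$. The lumped seminorms in Lemma~\ref{lemma:S_bd} do not show this; the paper's explicit integration by parts in $Y$ makes it automatic, and that is what its route buys. With this correction, and Lemma~\ref{lemma:Sobolev_0} to bound $\norm{\phi}_{H^{N+1+k}}$ uniformly, your argument proves the lemma.
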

\begin{proof}
  We recall that $(D_t + H_\delta) \psi_\delta(t,x)$ is given in \eqref{eq:residual}.
    With $C_0>0$ defined in Assumption \ref{assumption:a} {and $B_r \subset \mathbb{R}^d$ the ball of radius $r$ centered at the origin}, let $\chi \in C^\infty_c (B_{C_0})$ such that $\chi \equiv 1$ in $B_{\frac{1}{2}C_0}$. For $0 < \delta \le \delta_0$, define $\chi_\delta( \zeta):= \chi (\delta^\eta \zeta)$. 
    Assumption \ref{assumption:a} then implies that
    \begin{align}\label{eq:2terms_0}
        (D_t+H_\delta) \psi_\delta (t,x) = e^{i(K\cdot x-Et)} \left( \delta^{1+\rate} \ow \tilde{b}_{1} [\delta^{\frac d2}\phi(\delta t,\cdot \; ;\delta)](\delta x) + \ow \tilde{b}_{1}^{{\rm out}} [\delta^{\frac d2}\phi(\delta t,\cdot\; ;\delta)](\delta x) \right),
    \end{align}
    where
    \begin{align*}
        \tilde{b}_{1} (X,\zeta; \delta) := \chi_\delta (\zeta) b_{1}(X,\zeta; \delta), \quad \tilde{b}_{1}^{{\rm out}} (X,\zeta;\delta) := (1-\chi_\delta (\zeta))\left( a(X,K+\delta \zeta;\delta) - E I_n - \delta b_{0}(X, \zeta; \delta)\right).
    \end{align*}
    With $N' \in \mathbb{N}$ and $\hat{\zeta} := \zeta/|\zeta|$ the unit vector in the direction of $\zeta$,
    we next write
    \begin{align*}
        \ow \tilde{b}_{1}^{{\rm out}} [\delta^{\frac d2}\phi(\delta t,\cdot\; ;\delta)](\delta x) &= \frac{1}{(2\pi)^d} \int_{\mathbb{R}^{d} \times \{|\zeta| > \frac{1}{2}C_0\delta^{-\eta}\}}e^{i(\delta x-\delta y)\cdot \zeta} (1-\chi_\delta(\zeta)) \\
        & \qquad \times \left( a(\frac{\delta x+\delta y}2,K+\delta \zeta;\delta) - E I_n - \delta b_{0}(\frac{\delta x+\delta y}2, \zeta; \delta)\right)
        \delta^{\frac d2} \phi(\delta t,\delta y;\delta) {\rm d} \delta y {\rm d}\zeta\\
        &= \frac{1}{(2\pi)^d} \int_{\mathbb{R}^{d} \times \{|\zeta| > \frac{1}{2}C_0\delta^{-\eta}\}}\left(\frac{i}{|\zeta|} \hat{\zeta} \cdot \nabla_{\delta y} \right)^{N'}\left[ e^{i(\delta x-\delta y)\cdot \zeta}\right] (1-\chi_\delta(\zeta)) \\
        & \qquad \times \left( a(\frac{\delta x+\delta y}2,K+\delta \zeta;\delta) - E I_n - \delta b_{0}(\frac{\delta x+\delta y}2, \zeta; \delta)\right)
        \delta^{\frac d2} \phi(\delta t,\delta y;\delta) {\rm d} \delta y {\rm d}\zeta\\
        &= \frac{1}{(2\pi)^d} \int_{\mathbb{R}^{d} \times \{|\zeta| > \frac{1}{2}C_0\delta^{-\eta}\}}e^{i(\delta x-\delta y)\cdot \zeta}(1-\chi_\delta(\zeta)) \\
        & \hspace{-2.5cm} \times \left(-\frac{i}{|\zeta|} \hat{\zeta} \cdot \nabla_{\delta y} \right)^{N'}\left[ \left( a(\frac{\delta x+\delta y}2,K+\delta \zeta;\delta) - E I_n - \delta b_{0}(\frac{\delta x+\delta y}2, \zeta; \delta)\right)
        \delta^{\frac d2} \phi(\delta t,\delta y;\delta)\right] {\rm d} \delta y {\rm d}\zeta,
    \end{align*}
    where we have integrated by parts in $\delta y$ to obtain the last equality.
    Note that by \eqref{eq:reg_a}, for any $\alpha, \beta \in \mathbb{N}_0^d$,
    \begin{align*}
        \left| \partial_{\delta y}^\alpha \partial_\zeta^\beta \left( (1-\chi_\delta (\zeta))\left(-\frac{i}{|\zeta|} \hat{\zeta} \cdot \nabla_{\delta y} \right)^{N'} a(\frac{\delta x+\delta y}2,K+\delta \zeta;\delta) \right) \right| \le C_{\alpha,\beta,N'} \delta^{\eta N'-\nu_0-\nu_1 (|\alpha|+N') - \nu_2 |\beta|},
    \end{align*}
    while the fact that $b_{0} \in S^1$ implies that
    \begin{align*}
        \left| \partial_{\delta y}^\alpha \partial_\zeta^\beta \left( (1-\chi_\delta (\zeta))\left(-\frac{i}{|\zeta|} \hat{\zeta} \cdot \nabla_{\delta y} \right)^{N'} \left( E I_n + \delta b_{0} (\frac{\delta x+\delta y}2, \zeta;\delta) \right) \right) \right| \le C_{\alpha,\beta,N'} \delta^{\eta N'} \aver{\zeta}
    \end{align*}
    uniformly in $(x,y,\zeta; \delta) \in \mathbb{R}^{3d} \times (0, \delta_0]$. Therefore, we apply Lemma \ref{lemma:S_bd} with $|\alpha+\beta| = \ncv$ to conclude that
    \begin{align*}
        \| \ow \tilde{b}_{1}^{{\rm out}} [\phi(\delta t,\cdot\; ; \delta)]\|_{H^N} \le C_{N,N'} \delta^{(\eta - \nu_1) N'-\nu_0 - \nu (\ncv+N)}\| \phi (\delta t, \cdot \; ;\delta)\|_{H^{N+N'+1}}
    \end{align*}
    for any $N \ge 0$,
    where $\nu = \max\{ \nu_1, \nu_2\}$.
    Recalling that $N'$ was arbitrary and $\eta - \nu_1 > 0$ by Assumption \ref{assumption:a}, and applying Lemma \ref{lemma:Sobolev_0}, 
    we conclude that for any $s > 0$ there exists a positive constant $C_s$ such that
    \begin{align*}
        \| \ow \tilde{b}_{1}^{{\rm out}} [\phi(\delta t,\cdot\; ; \delta)]\|_{H^N} \le C_s \delta^{s}, 
        \qquad t \ge 0, \quad 0 < \delta \le \delta_0. 
    \end{align*}
    On the other hand, since $\chi$ is smooth and compactly supported, $\tilde{b}_{1} (\cdot, \cdot\; ; \delta) \in S(\aver{\zeta}^{2})$ uniformly in $0 < \delta \le \delta_0$. Thus Lemmas \ref{lemma:S_bd} and \ref{lemma:Sobolev_0} imply that
    \begin{align*}
        \| \ow \tilde{b}_{1} [\phi(\delta t,\cdot\; ; \delta)]\|_{H^N} \le C \| \phi(\delta t,\cdot\; ; \delta)\|_{H^{N+2}} \le C, \qquad t \ge 0, \quad 0 < \delta \le \delta_0.
    \end{align*}
    We combine \eqref{eq:2terms_0} with the above estimates to complete the proof.
\end{proof}

\subsection{Proof of Theorem \ref{thm:main2}}\label{subsec:proof_main_2}
This proof is similar to that of Theorem \ref{thm:main_0}; the main additional challenge is to handle the spatial dependence of the degenerate point. We again begin by establishing the regularity of the solution to the effective equation.
\begin{lemma}\label{lemma:Sobolev_0_2}
    Suppose Assumption \ref{assumption:a2} holds, and let $\phi_0 \in \cS (\mathbb{R}^d; \mathbb{C}^n)$.
    For any $N \ge 0$, there exists $C > 0$ such that the solution $\phi$ to \eqref{eq:macro_0} satisfies $\norm{\phi (T,\cdot \; ; \delta)}_{H^N} \le C$ uniformly in $T \ge 0$ and $0 < \delta \le \delta_0$.
\end{lemma}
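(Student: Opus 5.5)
The plan is to mirror the proof of Lemma \ref{lemma:Sobolev_0}. Under Assumption \ref{assumption:a2} the effective symbol $b_0$ satisfies exactly the same hypotheses as in Assumption \ref{assumption:a}: it is Hermitian-valued, belongs to $S^1$ uniformly in $\delta$, and obeys the ellipticity bound \eqref{eq:ellipticty2}, which is identical to \eqref{eq:ellipticty}. The symbol $b_1\in\cs^2$ and the exponential-type structure of $a$ play no role here, since $\phi$ solves \eqref{eq:macro_0}, which involves only $\bH=\ow b_0$.

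First, we record that $\bH$ is self-adjoint on $L^2$ with domain $H^1$. We also need the norm equivalence \eqref{eq:equivalent_norms} for every $N\ge0$, with constants uniform in $0<\delta\le\delta_0$.

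The upper bound $\norm{(i+\bH)^N u}_{L^2}\le C\norm{u}_{H^N}$ follows from Lemma \ref{lemma:S_bd} applied to $b_0\in S^1\subset S(\aver{\zeta})$, composed $N$ times. Every seminorm entering that lemma is uniform in $\delta$ by assumption.

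For the lower bound, we build a parametrix for $i+b_0$. Ellipticity \eqref{eq:ellipticty2} gives $|(i+b_0)^{-1}|\le C\aver{\zeta}^{-1}$. Away from a compact set in $\zeta$ this follows from the determinant bound together with $|b_0|\le C\aver\zeta$, via the cofactor formula. On the compact set it follows from Hermitian-ness, since $|(i+b_0)^{-1}|\le 1$ there. Symbolic calculus then gives $(i+b_0)^{-1}\in S^{-1}$ uniformly in $\delta$, and a left parametrix $Q$ with $Q(i+\bH)=I+R$, where $R\in S^{-\infty}$. Iterating this yields $\norm{u}_{H^N}\le C(\norm{(i+\bH)^N u}_{L^2}+\norm{u}_{L^2})$. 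The $L^2$ term is absorbed because $\norm{u}_{L^2}\le\norm{(i+\bH)^N u}_{L^2}$, which holds since $\bH$ is self-adjoint and $|i+\lambda|\ge1$. For non-integer $N$ we interpolate.

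Second, we use that $(i+\bH)^N$ commutes with the unitary group $e^{-iT\bH}$. This gives
\[\norm{\phi(T)}_{H^N}\le C\norm{(i+\bH)^N\phi(T)}_{L^2}=C\norm{(i+\bH)^N\phi_0}_{L^2}\le C'\norm{\phi_0}_{H^N},\]
uniformly in $T\ge0$ and $\delta$. Since $\phi_0$ is Schwartz, the right-hand side is finite. This is precisely the content of Lemma \ref{lemma:Sobolev_0_appendix} with $\Lambda=(i+\bH)^N$ and $r\equiv0$, which we would invoke directly.

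The only point needing care is the uniformity in $\delta$ of the elliptic parametrix constants. This follows because all seminorms of $b_0$ and the ellipticity constant $c$ are $\delta$-independent by assumption.
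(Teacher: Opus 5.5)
Your proposal is correct and is essentially the paper's proof. The paper notes that the hypotheses on $b_0$ in Assumptions \ref{assumption:a} and \ref{assumption:a2} coincide, so \eqref{eq:equivalent_norms} holds uniformly in $\delta$, and then invokes Lemma \ref{lemma:Sobolev_0_appendix} with $\Lambda=(i+\bH)^N$ and $r\equiv 0$. The only difference is that you sketch the elliptic parametrix argument behind \eqref{eq:equivalent_norms}, which the paper simply takes from standard references; for non-integer $N$ you could also skip interpolation, since $\norm{\phi}_{H^N}\le\norm{\phi}_{H^{\lceil N\rceil}}$.
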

\begin{proof}
    The operator $\bH = \ow b_0$ satisfies \eqref{eq:equivalent_norms}, as the regularity conditions on $b_0$ in Assumptions \ref{assumption:a} and \ref{assumption:a2} are exactly the same. The result then follows from Lemma \ref{lemma:Sobolev_0_appendix}.
\end{proof}
\begin{lemma}\label{lemma:res_2}
    Suppose Assumption \ref{assumption:a2} holds, and take $H_\delta$ and $\psi_\delta$ as in \eqref{eq:H_delta} and \eqref{eq:ansatz_2}. Then for any $\eps > 0$,
    \begin{align*}
        \norm{(D_t + H_\delta) \psi_\delta (t, \cdot)}_{H^N} \le C_N \delta^{1+\rate -\eps}, \qquad t \ge 0, \quad 0 < \delta \le \delta_0, \quad N \ge 0.
    \end{align*}
\end{lemma}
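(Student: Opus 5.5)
The plan follows the proof of Lemma \ref{lemma:res}, except that the phase $K\cdot x$ becomes $\frac1\delta \vp(\delta x)$. The first step is to compute $(D_t+H_\delta)\psi_\delta$. Writing $H_\delta\psi_\delta(x)=\int e^{iz\cdot\xi}a(\delta x-\frac{\delta z}{2},\xi;\delta)\psi_\delta(t,x-z)\,dz\,d\xi/(2\pi)^d$ and using Lemma \ref{lemma:chi}, I replace $H_\delta$ by $H_{00}^\delta$. This localizes to $|z|\lesssim\delta^{-\nu}$ at cost $O(\delta^\infty)$ in $H^N$. Here $\nu<1$ is a small parameter, and $\psi_\delta\in H^{N+d+1}$ uniformly by Lemma \ref{lemma:Sobolev_0_2}. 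On this region I Taylor expand the phase:
\[
\tfrac1\delta\vp(\delta x-\delta z)=\tfrac1\delta\vp(\delta x)-\dpx(\delta x-\tfrac{\delta z}{2})\cdot z+\delta^2 O(|z|^3).
\]
The midpoint expansion kills the quadratic term. On $|z|\le\delta^{-\nu}$ the remainder is $O(\delta^{2-3\nu})$, and it can be expanded or kept as a bounded multiplicative correction. After the shift $\xi=\dpx(\delta x-\frac{\delta z}{2})+\delta\zeta$ and the rescaling $z=y/\delta$, the main part becomes
\[
e^{i(\vp(\delta x)/\delta-Et)}\,\ow\big[a(X,\dpx(X)+\delta\zeta;\delta)\big]\phi(\delta t,\cdot)(\delta x).
\]
Since $\dpx$ is evaluated at the midpoint, the shift is exactly compatible with the Weyl quantization. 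This is where conservativity of $\dpx$ enters (Remark \ref{remark:dpx}).

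Next I subtract $D_t\psi_\delta$, using \eqref{eq:macro_0} as in \eqref{eq:residual}. What remains has the symbol $a(X,\dpx(X)+\delta\zeta;\delta)-EI_n-\delta b_0(X,\zeta;\delta)$, plus the phase-remainder error. I split this symbol with a cutoff $\chi(\delta^{\eta}\zeta)$ as in Lemma \ref{lemma:res}. Inside the cutoff, \eqref{eq:decomposition2} gives $\delta^{1+\rate}b_1$ with $b_1\in\cs^2$. Outside the cutoff, $|\zeta|\gtrsim\delta^{-\eta}$, and I integrate by parts in $y$ repeatedly. Each step gains $\delta^{\eta}$ per power of $|\zeta|^{-1}$, while $X$-derivatives of the exponential-type symbol $a$ cost only powers of $\aver{\delta\zeta}$, which are bounded in this region. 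This gives $O(\delta^s)$ for any $s$, with $\phi\in H^{M}$ for all $M$ supplied by Lemma \ref{lemma:Sobolev_0_2}.

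The main obstacle is that $b_1\in\cs^2$ is not in a standard class: its $X$-derivatives grow like $\aver{\zeta}^{2+|\beta|}$, so Lemma \ref{lemma:S_bd} does not apply directly. My plan is to exploit the cutoff. On $\supp\chi(\delta^\eta\cdot)$ each $X$-derivative costs at most $\delta^{-\eta}$. Choosing $\eta$ small (permitted, since $C_0,\eta$ may be shrunk) and applying Calderón–Vaillancourt with its fixed number $\ncv+N$ of derivatives gives
\[
\norm{\ow(\chi_\delta b_1)\phi}_{H^N}\le C\delta^{-\eta(\ncv+N)}\norm{\phi}_{H^{N+2}}.
\]
I then take $\eta(\ncv+N)\le\eps$; the same choice of $\nu$ makes the phase remainder $O(\delta^{1+\rate-\eps})$. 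Alternatively, one could first mollify $b_1$ in $X$ at scale $\delta^{\eta}$; I expect the direct cutoff argument to suffice. The exponential-type cross terms from the phase Taylor remainder (terms involving $\delta^2 z^3$) should also be absorbed this way, since after rescaling they are $\delta^{2-3\nu}$-small times symbols that are polynomial in $y$ and localized by $\chi_\delta$. Combining the three contributions gives the stated $C_N\delta^{1+\rate-\eps}$ bound, uniformly in $t$.
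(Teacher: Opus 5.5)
Your architecture is the paper's: localize in $z$ via Lemma \ref{lemma:chi}, expand $\frac1\delta\vp$ about the midpoint with an $O(\delta^{2-3\nu})$ remainder, shift $\xi=\dpx+\delta\zeta$, subtract $D_t\psi_\delta$, and split at $|\zeta|\sim\delta^{-\eta}$ with a small exponent. Your low-frequency estimate for $\chi_\delta b_1$ with $b_1\in\cs^2$ is sound. However, you have put the main difficulty in the wrong place. The gap is in the high-frequency region $|\zeta|\gtrsim\delta^{-\eta}$, and in the phase-remainder term, where $a$ is of exponential type and belongs to no standard symbol class.

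Your argument there rests on the claim that the factors $\aver{\delta\zeta}$ produced by $X$-derivatives of $a(X,\dpx(X)+\delta\zeta;\delta)$ are bounded in that region. This is false: the region is unbounded, and $\aver{\delta\zeta}\approx\delta|\zeta|$ for $|\zeta|\gtrsim\delta^{-1}$. If you integrate by parts in $y$ with $e^{i\Phi}$ kept in the amplitude, each step gains only $\aver{\delta\zeta}/|\zeta|$. As $|\zeta|\to\infty$ this ratio tends to $\delta$, not $0$, so you never obtain integrable decay in $\zeta$. Moreover, the resulting symbols have $X$-derivatives growing like powers of $\delta|\zeta|$, so the Calder\'on--Vaillancourt step borrowed from Lemma \ref{lemma:res} does not apply as stated. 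The same objection applies to the phase-remainder term: it carries the full frequency integral of $a$ with no $\zeta$-cutoff, so it cannot be ``absorbed'' by your low-frequency bound. By contrast, the $EI_n+\delta b_0$ part is harmless, since $X$-derivatives of $b_0\in S^1$ do not raise its order.

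The missing idea is the one behind Lemma \ref{lemma:es_bd}.
\begin{itemize}
\item Write each entry of $a$ as a sum of terms $e^{i\Phi}\sigma$ with $\Phi\in S^1$ and $\sigma\in S(1)$.
\item Move $\Phi$ into the oscillatory phase: $Q:=z\cdot\zeta+\Phi(\delta x-\frac z2,\dpx(\delta x-\frac z2)+\delta\zeta;\delta)$.
\item Since $|\nabla_z\Phi(\cdots)|\le C\aver{\delta\zeta}$, one has $|\nabla_z Q|\ge c\aver{\zeta}-c^{-1}$ for small $\delta$.
\item Integrating by parts in $z$ with $-i|\nabla_zQ|^{-2}\nabla_zQ\cdot\nabla_z$ then gives genuine decay $\aver{\zeta}^{-n}$, with amplitudes built from $\sigma$.
\item Integrating by parts in $\zeta$ against $e^{iz\cdot\zeta}$ gives decay in $z$, and $\int_{|\zeta|\gtrsim\delta^{-\nu}}\aver{\zeta}^{-n}d\zeta\lesssim\delta^{\nu(n-d)}$.
\end{itemize}
The phase remainder is controlled by the same $z$-integration by parts, at the cost of $d+1$ derivatives of $\phi$, giving $O(\delta^{2-3\nu})$. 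Any version of your $y$-integration by parts would have to track the $\aver{\delta\zeta}$ growth explicitly rather than assume it away.

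A minor omission: after replacing the phase remainder by $1$, you must remove the $z$-cutoff again (Lemma \ref{lemma:chi}) before the main term is an exact Weyl quantization.
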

\begin{proof}
    Let $0< \nu < \eta$ and 
    $\chi_\delta (z) := \chi (\delta^\nu z)$, and define the operator $H_{00}^\delta$ by
    \begin{align*}
        H_{00}^\delta f(x) = \dint_{\Rm^{2d}} \chi_\delta (z) e^{iz \cdot \xi} a (\delta x - \frac{\delta z}2,\xi; \delta) f(x-z) \dfrac{dz d\xi}{(2\pi)^d}.
    \end{align*}
    Again,
    $u_\delta := \psi_\delta - \varphi_\delta$ satisfies
    \(
        (D_t + H_\delta) u_\delta(t,x) = (D_t + H_\delta)\psi_\delta (t,x),\) and  \(u_\delta (0, \cdot)= 0\)
    where by Lemma \ref{lemma:chi},
    \begin{align*}
        H_\delta \psi_\delta(t,x) &= H^\delta_{00} \psi_\delta (t,x) + r_\delta^0 (t,x),\\
        \norm{r_\delta^0 (t, \cdot)}_{H^{N}} &\le C_{n,N} \delta^n \norm{\psi_{\delta} (t, \cdot)}_{H^{N+d+1}} \le C_{n,N} \delta^n \norm{\phi (\delta t, \cdot \; ; \delta)}_{H^{N+d+1}}.
    \end{align*}
    Note that the last inequality in the bound for $r_\delta$ follows from the uniform boundedness (in $\delta$) of the map $x \mapsto e^{i(\frac{1}{\delta} \vp (\delta x) -Et)}$ and all of its derivatives. 
    Introduce $g_\delta (X, Z) := e^{\frac{i}{\delta} (Z \cdot \nabla B( X - \frac{Z}{2}) + B (X-Z) - B(X))}$
   and  write
    \begin{align}\label{eq:H00}
    \begin{split}
        & H_{00}^\delta \psi_\delta (t,x) = e^{-iEt} \dint_{\Rm^{2d}} \chi_\delta (z) e^{iz \cdot \xi} a (\delta x - \frac{\delta z}2,\xi; \delta) e^{\frac{i}{\delta} \vp (\delta (x-z))} \delta^{d/2} \phi (\delta t, \delta (x-z); \delta) \dfrac{dz d\xi}{(2\pi)^d}
        \\&
        = e^{i (\frac{1}{\delta} \vp (\delta x)-Et)} \dint_{\Rm^{2d}} \chi_\delta (z) e^{iz \cdot (\xi-K)} a (\delta x - \frac{\delta z}2,\xi; \delta) 
        e^{\frac{i}{\delta} (B(\delta (x-z))- B(\delta x))} \delta^{d/2} \phi (\delta t, \delta (x-z); \delta) \dfrac{dz d\xi}{(2\pi)^d}\\
        &= e^{i (\frac{1}{\delta} \vp (\delta x)-Et)} \dint_{\Rm^{2d}} \chi_\delta (z) e^{iz \cdot (\xi-\dpx (\delta x - \frac{\delta z}{2}))} a (\delta x - \frac{\delta z}2,\xi; \delta) 
        g_\delta (\delta x, \delta z) \delta^{d/2} \phi (\delta t, \delta (x-z); \delta) \dfrac{dz d\xi}{(2\pi)^d}.
    \end{split}
    \end{align}
    We will now show that $g_\delta$ can be approximated by $1$ in the above integrand, up to $O (\delta^{2 - 3\nu})$ error.
    By Taylor's theorem, we have
    \begin{align*}
        B(X) &= B(X - \frac{\delta z}{2}) + \frac{\delta z}{2} \cdot \nabla B (X - \frac{\delta z}{2}) + \frac{\delta^2}{4} z \cdot \nabla^2 B (X - \frac{\delta z}{2}) z + \delta^3 s_1 (X, z; \delta),\\
        B(X - \delta z) &= B(X - \frac{\delta z}{2}) - \frac{\delta z}{2} \cdot \nabla B (X - \frac{\delta z}{2}) + \frac{\delta^2}{4} z \cdot \nabla^2 B (X - \frac{\delta z}{2}) z + \delta^3 s_2 (X, z; \delta),
    \end{align*}
    where the $s_j$ are smooth functions of $(X,z)$ satisfying
    \begin{align*}
        |\partial^\alpha_X \partial^\beta_z s_j (X,z;\delta)| \le C_{\alpha,\beta} \aver{z}^3 \le C_{\alpha,\beta} \delta^{-3 \nu}, \qquad X \in \mathbb{R}^d, \quad z \in \operatorname{supp}(\chi_0^\delta), \quad 0 < \delta \le \delta_0
    \end{align*}
    for any multi-indices $\alpha, \beta \in \mathbb{N}_0^d$.
    Writing $g_\delta (\delta x, \delta z) = e^{i \delta^2 (s_2 (\delta x,z;\delta) - s_1 (\delta x,z;\delta))}$, it follows that
    \begin{align*}
       g_\delta (\delta x,\delta z) - 1 = \delta^{2-3\nu} R (\delta x,z;\delta), 
       \qquad x \in \mathbb{R}^d, \quad z \in \operatorname{supp}(\chi_0^\delta), \quad 0 < \delta \le \delta_0
    \end{align*}
    for some function $R$ satisfying
    \begin{align*}
        |\partial^\alpha_X \partial^\beta_z R (X,z;\delta)| \le C_{\alpha,\beta}, \qquad X \in \mathbb{R}^d, \quad z \in \operatorname{supp}(\chi_0^\delta), \quad 0 < \delta \le \delta_0.
    \end{align*}
    Therefore, using that $\nabla_x (a (\delta x - \frac{\delta z}{2})) = \delta \nabla_X a (\delta x - \frac{\delta z}{2})$ is bounded by $C\delta \aver{\xi}$, we conclude that
    \(H^\delta_{00} \psi_\delta (t,x) = \fri_\delta^0 (t,x) + r_\delta^1 (t,x),\)
    where
    \begin{align}\label{eq:fri_0_delta}
        \fri_\delta^0 (t,x) &:= e^{i (\frac{1}{\delta} \vp (\delta x)-Et)} \dint_{\Rm^{2d}} \chi_\delta (z) e^{iz \cdot (\xi-\dpx (\delta x - \frac{\delta z}{2}))} a (\delta x - \frac{\delta z}2,\xi; \delta)\delta^{d/2} \phi (\delta t, \delta (x-z); \delta) \dfrac{dz d\xi}{(2\pi)^d}
    \end{align}
    replaces $g_\delta$ by $1$ in \eqref{eq:H00}, and
    \(\norm{r_\delta^1 (t, \cdot)}_{H^N} \le C_N \delta^{2-3\nu} \norm{\phi (\delta t, \cdot \; ; \delta)}_{H^{N+d+1}}\).
    Note that we again had to integrate by parts in $z$ to obtain decay in $\xi$, which is why we lose $d+1$ derivatives.
    Applying Lemma \ref{lemma:chi} to the symbol $a_\delta (x, \xi + \dpx (\delta x - \frac{\delta z}{2}))$, we obtain that
    \(
        \fri_\delta^0 (t,x) = \fri_\delta (t,x) + r_\delta^2 (t,x),
    \)
    where
    \begin{align*}
        \fri_\delta (t,x) &:= e^{i (\frac{1}{\delta} \vp (\delta x)-Et)} \dint_{\Rm^{2d}} e^{iz \cdot (\xi-\dpx (\delta x - \frac{\delta z}{2}))} a (\delta x - \frac{\delta z}2,\xi; \delta)\delta^{d/2} \phi (\delta t, \delta (x-z); \delta) \dfrac{dz d\xi}{(2\pi)^d}
    \end{align*}
    replaces $\chi_\delta$ in \eqref{eq:fri_0_delta} by $1$, and
    \begin{align*}
        \norm{r^2_\delta (t, \cdot)}_{H^N} \le C_{n,N} \delta^n \norm{\phi (\delta t, \cdot \; ; \delta)}_{H^{N+d+1}}, \qquad n, N \in \mathbb{N}_0, \quad 0 < \delta \le \delta_0.
    \end{align*}
    Thus we have shown that
    \begin{align*}
        H_\delta \psi_\delta (t,x) = \fri_\delta (t,x) + \sum_{j=0}^2 r_\delta^j (t,x), \qquad \Big\|\sum_{j=0}^2 r_\delta^j (t,\cdot)\Big\|_{H^N} \le C_N \delta^{2-3\nu} \Big\|\phi (\delta t, \cdot \; ; \delta)\Big\|_{H^{N+d+1}}.
    \end{align*}
We now evaluate
\begin{align}\label{eq:Dt}  
        &D_t \psi_\delta (t,x) = -E \psi_\delta - \delta e^{i (\frac{1}{\delta} \vp (\delta x) - Et)} \delta^{d/2} \bH \phi (\delta t, \delta x; \delta) = \\
        & -e^{i (\frac{1}{\delta} \vp (\delta x) - Et)} \dint_{\Rm^{2d}} e^{iz \cdot (\xi-\dpx (\delta x - \frac{\delta z}{2}))} \left(E + \delta b_{0} \left(\delta x - \frac{\delta z}2,\frac{\xi-\dpx (\delta x - \frac{\delta z}{2})}{\delta}; \delta\right)\right)
        \delta^{d/2} \phi (\delta t, \delta (x-z); \delta) \dfrac{dz d\xi}{(2\pi)^d},
       \nonumber
\end{align} 
which implies that
\begin{align*}
        D_t \psi_\delta (t,x) &+ \fri_\delta (t,x) = e^{i (\frac{1}{\delta} \vp (\delta x) - Et)} \dint_{\Rm^{2d}} e^{iz \cdot (\xi-\dpx (\delta x - \frac{\delta z}{2}))}\\
        &\hspace{0cm} \left(a (\delta x - \frac{\delta z}{2}, \xi; \delta) - E - \delta b_{0} \left(\delta x - \frac{\delta z}2,\frac{\xi-\dpx (\delta x - \frac{\delta z}{2})}{\delta}; \delta\right)\right)\delta^{d/2} \phi (\delta t, \delta (x-z); \delta) \dfrac{dz d\xi}{(2\pi)^d}.
    \end{align*}
    After changing variables $\zeta := \frac{\xi-\dpx (\delta x - \frac{\delta z}{2})}{\delta}$, this becomes
    \begin{align*}
        D_t \psi_\delta (t,x) &+ \fri_\delta (t,x) = e^{i (\frac{1}{\delta} \vp (\delta x) - Et)} \dint_{\Rm^{2d}} e^{i \delta z \cdot \zeta}\\
        &\hspace{0cm} \left(a (\delta x - \frac{\delta z}{2},\dpx (\delta x - \frac{\delta z}{2}) + \delta \zeta; \delta) - E - \delta b_{0} \left(\delta x - \frac{\delta z}2,\zeta; \delta\right)\right)\delta^{d/2} \phi (\delta t, \delta (x-z); \delta) \dfrac{\delta^d dz d\zeta}{(2\pi)^d}.
    \end{align*}
    Define $\chi_\delta^0 := \chi_\delta$ and $\chi_\delta^1 := 1 - \chi_\delta$, and for $j=0,1$ set
    \begin{align*}
        \Delta_\delta^j (t,x) 
        &:= e^{i (\frac{1}{\delta} \vp (\delta x) - Et)} \dint_{\Rm^{2d}} e^{i z \cdot \zeta} \chi_\delta^j (\zeta) \\
        &\hspace{0cm} \left(a (\delta x - \frac{z}{2},\dpx (\delta x - \frac{z}{2}) + \delta \zeta; \delta) - E - \delta b_{0} \left(\delta x - \frac{z}2,\zeta; \delta\right)\right)\delta^{d/2} \phi (\delta t, \delta x-z; \delta) \dfrac{dz d\zeta}{(2\pi)^d},
    \end{align*}
    so that $D_t \psi_\delta + \fri_\delta = \Delta^0_\delta + \Delta^1_\delta$. Using that $\nu < \eta$, it follows from the decomposition \eqref{eq:decomposition2} that
    \begin{align*}
        \Delta_\delta^0 (t,x) &= \delta^{1+\rate} e^{i (\frac{1}{\delta} \vp (\delta x) - Et)} \dint_{\Rm^{2d}} e^{i z \cdot \zeta} \chi_\delta (\zeta) b_{1} \left(\delta x - \frac{z}2,\zeta; \delta\right)\delta^{d/2} \phi (\delta t, \delta x-z; \delta) \dfrac{dz d\zeta}{(2\pi)^d}.
    \end{align*}
    Using that $$\left| \partial^\alpha_x \partial^\beta_\zeta \left( b_{1} \left(\delta x - \frac{z}2,\zeta; \delta\right)\right)\right| \le C_{\alpha,\beta} \delta^{|\alpha|} \aver{\zeta}^{2+|\alpha|},$$
    it follows from standard arguments (involving integration by parts in $\zeta$ to obtain decay of the integrand in $z$, and recalling that the integrand is compactly supported in $\zeta$) that
    \begin{align*}
        \norm{\Delta_\delta^0 (t, \cdot)}_{H^N} &\le 
        C \delta^{1+\rate} \norm{\phi (\delta t, \cdot \; ; \delta))}_{H^N} \sup \{ C_{\alpha,\beta} \delta^{|\alpha|} \aver{\zeta}^{2+|\alpha|} \; : \; |\alpha| \le N, \, |\beta| \le d+1, \, \zeta \in \supp \chi^0_\delta\}\\
        &\le C \delta^{1+\rate -2\nu} \norm{\phi (\delta t, \cdot \; ; \delta))}_{H^N}.
    \end{align*}
    It remains to control $\Delta^1_\delta = \Delta^{1,0}_\delta - \Delta^{1,1}_\delta$, where
    \begin{align*}
        \Delta^{1,0}_\delta (t,x) &:= e^{i (\frac{1}{\delta} \vp (\delta x) - Et)} \dint_{\Rm^{2d}} e^{i z \cdot \zeta} \chi_\delta^1 (\zeta) a (\delta x - \frac{z}{2},\dpx (\delta x - \frac{z}{2})+\delta\zeta;\delta)\delta^{d/2} \phi (\delta t, \delta x-z; \delta) \dfrac{dz d\zeta}{(2\pi)^d},\\
        \Delta^{1,1}_\delta (t,x) &:= e^{i (\frac{1}{\delta} \vp (\delta x) - Et)} \dint_{\Rm^{2d}} e^{i z \cdot \zeta} \chi_\delta^1 (\zeta) (E+\delta b_{0} (\delta x - \frac{z}{2}, \zeta; \delta))\delta^{d/2} \phi (\delta t, \delta x-z; \delta) \dfrac{dz d\zeta}{(2\pi)^d}.
    \end{align*}
    By assumption on $a$, for $\Delta_\delta^{1,0}$ it suffices to consider
    \begin{align*}
        \tilde{\Delta}_\delta^{1,0} (t,x) &:= e^{i (\frac{1}{\delta} \vp (\delta x) - Et)} \dint_{\Rm^{2d}} e^{i z \cdot \zeta} \chi_\delta^1 (\zeta) e^{i \phase (\delta x - \frac{z}{2},\dpx (\delta x - \frac{z}{2})+\delta\zeta;\delta)} \\
        &\hspace{3cm} \sym (\delta x - \frac{z}{2},\dpx (\delta x - \frac{z}{2})+\delta\zeta;\delta)\delta^{d/2} \phi (\delta t, \delta x-z; \delta) \dfrac{dz d\zeta}{(2\pi)^d},
    \end{align*}
    where $\phase \in S^1$ and $\sym \in S(1)$ uniformly in $\delta$. Since $\dpx$ is bounded, we know that
    \begin{align*}
        |\nabla_z (\phase (\delta x - \frac{z}{2},\dpx (\delta x - \frac{z}{2})+\delta\zeta;\delta))| \le C \aver{\delta \zeta}, 
    \end{align*}
    and thus there exists a constant $c>0$ such that
    \begin{align*}
        |\nabla_z Q| \ge c \aver{\zeta} - c^{-1}, \qquad Q := z \cdot \zeta + \phase (\delta x - \frac{z}{2},\dpx (\delta x - \frac{z}{2})+\delta\zeta;\delta)).
    \end{align*}
    Hence for sufficiently small $\delta$ (recalling that $\chi_\delta^1$ vanishes in a large neighborhood of the origin), we have
    \begin{align*}
        \tilde{\Delta}_\delta^{1,0} (t,x) &:= e^{i (\frac{1}{\delta} \vp (\delta x) - Et)} \dint_{\Rm^{2d}} \chi_\delta^1 (\zeta) \sym (\delta x - \frac{z}{2},\dpx (\delta x - \frac{z}{2})+\delta\zeta;\delta)\delta^{d/2} \phi (\delta t, \delta x-z; \delta) 
        {\mathfrak V}^n (e^{iQ})\dfrac{dz d\zeta}{(2\pi)^d},
    \end{align*}
    using ${\mathfrak V}=\frac{-i}{|\nabla_z Q|^2}\nabla_z Q \cdot \nabla_z$,
    which after integrating by parts becomes
    \begin{align*}
        \tilde{\Delta}_\delta^{1,0} (t,x) &:= e^{i (\frac{1}{\delta} \vp (\delta x) - Et)} \dint_{\Rm^{2d}} e^{iQ} \chi_\delta^1 (\zeta) {\mathfrak V}^n  
        \left( \sym (\delta x - \frac{z}{2},\dpx (\delta x - \frac{z}{2})+\delta\zeta;\delta)\delta^{d/2} \phi (\delta t, \delta x-z; \delta) \right) \dfrac{dz d\zeta}{(2\pi)^d}.
    \end{align*}
    Similarly, using that $(-i\frac{z}{|z|^2} \cdot \nabla_\zeta)^{d+1} e^{iz \cdot \zeta} = e^{iz \cdot \zeta}$ when $z \ne 0$, we integrate by parts in $\zeta$ to establish that the above integrand decays super-algebraically in $z$ (uniformly in $\delta$). We conclude that
    \begin{align}\label{eq:tD}
    \begin{split}
        \norm{\tilde{\Delta}_\delta^{1,0} (t,\cdot)}_{H^N}^2 &\le C \delta^d \int_{\mathbb{R}^{5d}} \chi_\delta^1 (\zeta) \chi_\delta^1 (\zeta') \aver{\zeta}^{-n} \aver{\zeta'}^{-n} \aver{z}^{-d-1} \aver{z'}^{-d-1} \\
        &\qquad \sup_{|\alpha| \le N+n} |\partial^\alpha_z \phi (\delta t, \delta x - z; \delta)|
        \sup_{|\alpha'| \le N+n} |\partial^{\alpha'}_{z'} \phi (\delta t, \delta x - z'; \delta)| dx dz dz' d\zeta d \zeta '\\
        &\le C \norm{\phi (\delta t, \cdot; \delta)}_{H^{N+n}}^2 \left( \int_{\mathbb{R}^d} \chi^1_\delta (\zeta) \aver{\zeta}^{-n} d \zeta \right)^2 \le C \delta^{2\nu (n-d)} \norm{\phi (\delta t, \cdot; \delta)}_{H^{N+n}}^2
        \end{split}
    \end{align}
    for any $n \ge d+1$.

    A parallel argument establishes that for any $n \ge d+2$ and any $N \in \mathbb{N}_0$, 
    \begin{align*}
        \norm{\Delta_\delta^{1,1} (t,\cdot)}_{H^N}^2 \le C \delta^{2\nu (n-d-1)} \norm{\phi (\delta t, \cdot; \delta)}_{H^{N+n}}^2.
    \end{align*}
    Compared with \eqref{eq:tD}, this comes with a slightly worse rate of convergence due to the fact that $b_{0} \in S^1$ is not bounded in $\xi$.

    Putting together the above estimates, we have shown that for any $n \ge d+2$ and $N \in \mathbb{N}_0$,
    \begin{align*}
        \norm{D_t \psi_\delta (t, \cdot) + H_\delta \psi_\delta (t, \cdot)}_{H^N} &\le C (\delta^{1+\rate -2\nu}\norm{\phi (\delta t, \cdot \; ; \delta))}_{H^N}\\
        &\hspace{2cm} + \delta^{2-3\nu} \norm{\phi (\delta t, \cdot \; ; \delta))}_{H^{N+d+1}} + \delta^{\nu (n-d-1)} \norm{\phi (\delta t, \cdot; \delta)}_{H^{N+n}}).
    \end{align*}
    By Lemma 
    \ref{lemma:Sobolev_0_2}, all Sobolev norms of $\phi$ are bounded uniformly in time, hence
    \begin{align*}
        \norm{D_t \psi_\delta (t, \cdot) + H_\delta \psi_\delta (t, \cdot)}_{H^N} &\le C \delta^{1+\rate -2\nu}.
    \end{align*}
    Since $\nu$ was arbitrary, the result is complete. 
\end{proof}

\begin{proof}[Proof of Theorem \ref{thm:main2}]
By Lemma \ref{lemma:res_2}, $u_\delta := \psi_\delta - \varphi_\delta$ satisfies $u_\delta (0, \cdot) \equiv 0$ and
    \begin{align*}
        \norm{(D_t + H_\delta) u_\delta (t, \cdot)}_{H^N}\le C_N \delta^{1+\rate-\eps}, \qquad t \ge 0, \quad 0 < \delta \le \delta_0, \quad N \ge 0.
    \end{align*}
    The result then follows from Assumption \ref{assumption:commutator2} and Lemma \ref{lemma:u_delta} (with $M=0$ and $\mu \leftarrow \mu -\eps$ there).
\end{proof}

\subsection{Proof of Theorem \ref{thm:main}}\label{subsec:pf_higher_order}
We begin as in the previous sections, establishing the regularity of the solution to the higher-order effective equation. 
\begin{lemma}\label{lemma:Sobolev_bd_higher_order}
    Suppose Assumption \ref{assumption:a_higher_order} holds, and let $\phi_0 \in \cS (\mathbb{R}^d; \mathbb{C}^n)$.
    Fix $N \ge 0$ and 
    set $M = \lceil N/\rate \rceil$. Then 
    there exists $C > 0$ such that $\norm{\phi (T,\cdot \; ; \delta)}_{H^N} \le C (1 + T^{M})$ uniformly in $T \ge 0$ and $0 < \delta \le \delta_0$.
\end{lemma}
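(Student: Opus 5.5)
We need to bound $\|\phi(T)\|_{H^N}$ growing polynomially in $T$, where $\phi$ solves $(D_T+\bH_p)\phi=0$. The difficulty is that \eqref{eq:equivalent_norms_p} only gives $c\delta^N\|u\|_{H^N}\le \|(i+\bH_p)^{N/p}u\|_{L^2}$, which loses powers of $\delta$. Conserving $\|(i+\bH_p)^k\phi\|$ alone would therefore give a bound $C\delta^{-N}$, not a uniform one. The plan is instead to compare $\bH_p$ with the elliptic first-order operator $\bH=\ow b_0$, which satisfies the uniform equivalence \eqref{eq:equivalent_norms}. We then control $\|(i+\bH)^N\phi(T)\|_{L^2}$ by a Duhamel/commutator argument in which each step gains a factor $\delta^\rate$ but costs one power of $T$. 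This is consistent with $M=\lceil N/\rate\rceil$.

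\textbf{Step 1 (relating $b_{0p}$ and $b_0$).} Comparing \eqref{eq:decomposition} with \eqref{eq:decomposition_higher_order} on $|\zeta|\le C_0\delta^{-\eta}$ gives
\[
\delta b_{0p}-\delta b_0=\delta^{1+\rate}b_1-\delta^{p+\rate}b_{1p}.
\]
Both sides are smooth in $\zeta$, and $b_{0p}-b_0$ is a symbol in $S^p$. We write $b_{0p}=b_0+\delta^\rate r$ with $r\in S(\aver{\zeta}^{p+1})$ on this region, using $\delta^{p-1}\aver{\zeta}^{p+1}\lesssim\aver{\zeta}^2\cdot\delta^{p-1}\aver{\zeta}^{p-1}$, which stays bounded where $\delta|\zeta|\lesssim1$. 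Outside $|\zeta|\gtrsim\delta^{-\eta}$ we do not use this relation. There we rely on the semiclassical ellipticity \eqref{eq:ellipticty_higher_order} to keep high frequencies controlled: $\phi$ never leaves the range where $\delta\zeta$ is bounded in energy, since $\|(i+\bH_p)^k\phi\|$ is conserved. The practical formulation is therefore this. For each $j$, the operator $[(i+\bH)^j,\bH_p]=[(i+\bH)^j,\bH_p-\bH]$ has symbol $\delta^{\rate}\times$(a symbol of order $j+p$), up to errors that are $O(\delta^\infty)$ after microlocalizing with a cutoff $\chi(\delta^\eta\zeta)$, exactly as in the proof of Lemma \ref{lemma:res}.

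\textbf{Step 2 (energy induction).} Let $u_j(T):=(i+\bH)^j\phi(T)$. Then
\[
(D_T+\bH_p)u_j=[\bH_p,(i+\bH)^j]\phi=:F_j,
\]
so unitarity of $e^{-iT\bH_p}$ gives $\|u_j(T)\|\le\|u_j(0)\|+\int_0^T\|F_j\|$. By Step 1, $\|F_j\|\lesssim\delta^\rate\|\phi\|_{H^{j+p}}$. This costs derivatives, so we would not close it directly. Instead we interpolate: conservation gives $\|\phi\|_{H^{s}}\lesssim\delta^{-s}$ for all $s$, so the high-frequency part is harmless. We then prove the bound by induction on $N$ in steps of $\rate$. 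Concretely, $\|\phi\|_{H^{N}}\lesssim 1+\delta^{\rate}\int_0^T\|\phi\|_{H^{N+p}}$, combined with the a priori bound $\|\phi\|_{H^{N+p}}\lesssim\delta^{-\rate}\|\phi\|_{H^{N+p-\ldots}}$. The cleaner realization is the conjugation approach: apply Lemma \ref{lemma:Sobolev_0_appendix} with $\Lambda=(i+\bH)^N$ and $r$ equal to the commutator remainder. That lemma is presumably designed to give $(1+T^M)$ growth when the commutator is $O(\delta^\rate)$ relative to one additional level of smoothness, measured in the $\bH_p$-graph norms, where $\delta^{\rate}\cdot\delta^{-1}$ per lost derivative is absorbed level by level.

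\textbf{Main obstacle.} The main obstacle is precisely this bookkeeping: showing that each iteration trades a factor $\delta^\rate$ for one power of $T$ while losing at most $\rate$ worth of $\delta$-weighted derivatives, so that after $M=\lceil N/\rate\rceil$ steps the loss $\delta^{-N}$ from \eqref{eq:equivalent_norms_p} is fully compensated. I would first try to build the scale of norms $\|\delta^{\rate k}(i+\bH)^{N}\phi\|$ intermediate between the $\bH$- and $\bH_p$-graph norms, and iterate Duhamel $M$ times. Each iteration produces a $T^k/k!$ term, and the last term is closed by the crude conserved bound.
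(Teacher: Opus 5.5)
\textbf{Genuine gap: Step 2 lets $B$ act on the wrong function.} Your diagnosis is correct: the $\delta^{-N}$ loss in \eqref{eq:equivalent_norms_p} must be beaten by comparing with the uniformly elliptic $\bH$. Step 1 also gives the right ingredient: $B:=\bH_p-\bH$ is $O(\delta^{\rate})$ as a map $H^{q+L}\to H^q$ for some fixed derivative loss $L$. But Step 2 cannot close. In any estimate where $B$, or the commutator $[\bH_p,(i+\bH)^j]=[B,(i+\bH)^j]$, acts on the $\bH_p$-solution $\phi$ itself, you get $\norm{\phi(T)}_{H^N}\lesssim 1+\delta^{\rate}\int_0^T\norm{\phi(S)}_{H^{N+L}}\,dS$. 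The only information on higher norms of $\phi$ is the crude bound $\norm{\phi}_{H^s}\lesssim\delta^{-s}$. Iterating $k$ times leaves a remainder of order $(\delta^{\rate}T)^k\delta^{-N-kL}$, which does not improve on $\delta^{-N}$ unless $\rate>L$. Interpolating between $L^2$ and high Sobolev norms does not change this count. The bound you sketch, controlling $\norm{\phi}_{H^{N+p}}$ by $\delta^{-\rate}$ times a lower norm, is unjustified. Even if it held, inserting it into the energy inequality gains nothing in $\delta$. Nor can Lemma \ref{lemma:Sobolev_0_appendix} be applied with $\Lambda=(i+\bH)^N$ to the $\bH_p$-equation: it requires $\Lambda$ to commute with the propagating operator and contains no commutator mechanism.

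\textbf{The missing idea: generate the Duhamel iterates with the $\bH$-propagator.} Set $(D_T+\bH)\phi_1=0$, $\phi_1(0)=\phi_0$, and $(D_T+\bH)\phi_{n+1}=-B\phi_n$, $\phi_{n+1}(0)=0$. Since $(i+\bH)^q$ commutes with $\bH$ and \eqref{eq:equivalent_norms} is uniform in $\delta$, Lemma \ref{lemma:Sobolev_0_appendix} gives $\norm{\phi_{n+1}(T)}_{H^q}\le C_{q,n}(\delta^{\rate}T)^n$ for every $q$. The derivative loss of $B$ is then harmless, because it only ever falls on these uniformly smooth iterates (the data are Schwartz). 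For the same reason you need no microlocal claim that $\phi$ stays in $|\zeta|\lesssim\delta^{-\eta}$, which $\bH_p$-energy conservation does not give when $\eta<1$. Only the remainder $u_M:=\phi-\sum_{j\le M}\phi_j$, solving $(D_T+\bH_p)u_M=-B\phi_M$ with $u_M(0)=0$, is propagated by $\bH_p$. A single use of \eqref{eq:equivalent_norms_p} then gives $\norm{u_M(T)}_{H^N}\le C\delta^{-N}(\delta^{\rate}T)^M\le CT^M$ once $\rate M\ge N$. Your closing sentence (iterate Duhamel $M$ times, close with the crude bound) points this way. But you never say that the iterates are $\bH$-evolved, and without that the $\delta^{-N}$ loss compounds at each step instead of being paid once.
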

\begin{proof}
    Set $B := \bH_p - \bH$ with $\bH$ defined in \eqref{eq:macroH_0}, and define the sequence $(\phi_n)_{n \in \mathbb{N}}$ by
    \[ (D_T+\bH)\phi_1=0,\quad \phi_1(0, \cdot)=\phi_0,\qquad (D_T+\bH)\phi_{n+1}+B\phi_n=0,\quad \phi_{n+1}(0, \cdot)\equiv 0,\quad n\geq1.\]
    It follows from \eqref{eq:equivalent_norms} and Lemma \ref{lemma:Sobolev_0_appendix} with $\Lambda = (i+\bH)^q$ that for any $q \ge 0$, $$\norm{\phi_1 (T, \cdot)}_{H^q} \le C, \qquad \norm{\phi_{n+1} (T, \cdot)}_{H^q} \le C T \sup_{0 \le S \le T} \norm{B \phi_n (S, \cdot)}_{H^q},\quad n\geq1$$ uniformly in $T \ge 0$ and $0 < \delta \le \delta_0$. Assumption \ref{assumption:a_higher_order} implies that for any $q \ge 0$, the operator $B$ has norm bounded by $C\delta^\rate$ as a map from $H^{q+p+1}$ to $H^q$. It follows from induction that
    \begin{align*}
        \norm{\phi_{n+1} (T, \cdot)}_{H^q} \le C (\delta^\mu T)^n, \qquad T \ge 0, \quad 0 < \delta \le \delta_0
    \end{align*}
    for all $n \in \mathbb{N}$ and $q \ge 0$.
    Now, fix $M \in \mathbb{N}$, define $u_M := \phi - \sum_{j=1}^M \phi_j$, and observe that
    \begin{align*}
        (D_T + \bH_p) u_M = -B \phi_M, \qquad u_M (0, \cdot) \equiv 0.
    \end{align*}
    Applying \eqref{eq:equivalent_norms_p} and Lemma \ref{lemma:Sobolev_0_appendix} with $\Lambda = (i+\bH_p)^{N/p}$ and $C_1 = c \delta^N$ and $C_2 = C$ there, 
    it follows that
    \begin{align*}
        \norm{u_M (T, \cdot)}_{H^N} \le C \delta^{-N} T \sup_{0 \le S \le T} \norm{B \phi_M (S, \cdot)}_{H^N} \le C \delta^{-N} \delta^\mu T \sup_{0 \le S \le T} \norm{\phi_M (S, \cdot)}_{H^{N+p+1}} \le C \delta^{-N} (\delta^\mu T)^M
    \end{align*}
    uniformly in $T \ge 0$ and $0 < \delta \le \delta_0$. Choosing $M \ge N/\rate$, it follows that
    \begin{align*}
        \norm{u_M (T, \cdot)}_{H^N} \le C T^M, \qquad T \ge 0, \quad 0 < \delta \le \delta_0.
    \end{align*}
    We have thus verified that
    \begin{align*}
        \norm{\phi (T, \cdot)}_{H^N} \le \norm{u_M (T, \cdot)}_{H^N} + \sum_{j=1}^M \norm{\phi_j (T, \cdot)}_{H^N} \le C (T^M + 1),
    \end{align*}
    and the proof is complete.
\end{proof}
\begin{lemma}\label{lemma:res_higher_order}
    Suppose Assumption \ref{assumption:a_higher_order} holds. Take $H_\delta$ as in \eqref{eq:microWP} and $\psi_\delta$ as in \eqref{eq:psi_def}. Let $N \ge 0$, take $\ncv$ as in Lemma \ref{lemma:S_bd}, set $\nu = \max\{\nu_1, \nu_2\}$, and define $$\nbp^{N} := \lceil (\rate + p + \nu_0 + \nu (\ncv + N))/(\eta - \nu_1) \rceil, \quad \mbp^N := \lceil (N + \max\{\nbp^N,p\} + 1)/\rate\rceil.$$ Then
    \begin{align*}
        \norm{(D_t + H_\delta) \psi_\delta (t, \cdot)}_{H^N} \le C_N \delta^{p+\rate} (1 + (\delta t)^{\mbp^N}), \qquad t \ge 0, \quad 0 < \delta \le \delta_0, \quad N \ge 0.
    \end{align*}
\end{lemma}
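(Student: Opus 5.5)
The plan follows the proof of Lemma \ref{lemma:res} almost line by line. The two changes are that $b_{0}$ is replaced by $b_{0p}$, and that the Sobolev bounds on $\phi$ now come from Lemma \ref{lemma:Sobolev_bd_higher_order}, so they grow polynomially in $T=\delta t$.

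First, as in \eqref{eq:residual}, the equation $(D_T+\bH_p)\phi=0$ gives the residual $(D_t+H_\delta)\psi_\delta$ as $e^{i(K\cdot x-Et)}$ times the Weyl quantization, evaluated at $\delta x$, of the symbol
\[
a(X,K+\delta\zeta;\delta)-EI_n-\delta b_{0p}(X,\zeta;\delta),
\]
applied to $\delta^{\frac d2}\phi(\delta t,\cdot\,;\delta)$. Take the same cutoff $\chi_\delta(\zeta)=\chi(\delta^\eta\zeta)$, with $\chi\in C^\infty_c(B_{C_0})$ equal to $1$ on $B_{\frac12 C_0}$. This splits the residual into two parts:
\[
\delta^{p+\rate}\,\ow\big(\chi_\delta b_{1p}\big)\qquad\text{and}\qquad \ow\tilde b_{1p}^{\rm out},
\]
where $\tilde b_{1p}^{\rm out}:=(1-\chi_\delta)\big(a(X,K+\delta\zeta;\delta)-EI_n-\delta b_{0p}\big)$. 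The first form uses \eqref{eq:decomposition_higher_order} on the support of $\chi_\delta$. Since $\chi_\delta b_{1p}\in S(\aver{\zeta}^{p+1})$ uniformly in $\delta$, Lemma \ref{lemma:S_bd} bounds the inner part in $H^N$ by $C\delta^{p+\rate}\|\phi(\delta t)\|_{H^{N+p+1}}$. Lemma \ref{lemma:Sobolev_bd_higher_order} then gives $C\delta^{p+\rate}(1+(\delta t)^{\lceil (N+p+1)/\rate\rceil})$, which is at most the claimed exponent $\mbp^N$.

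For the outer part, integrate by parts $N'$ times in $\delta y$ using the operator $(\frac{i}{|\zeta|}\hat\zeta\cdot\nabla_{\delta y})^{N'}$, exactly as in the proof of Lemma \ref{lemma:res}. On $|\zeta|\gtrsim\delta^{-\eta}$, each factor $|\zeta|^{-1}$ gains $\delta^{\eta}$. Each $X$-derivative falling on $a$ costs at most $\delta^{-\nu_1}$, by \eqref{eq:reg_a}. When these hit $\delta b_{0p}\in S^p$, they instead cost a growth $\aver{\zeta}^p$, which is absorbed by choosing $N'\ge p$ so that the net power of $|\zeta|$ is nonpositive. Counting the $\ncv+N$ derivatives required by Lemma \ref{lemma:S_bd} gives the bound
\[
\|\ow\tilde b_{1p}^{\rm out}[\phi(\delta t)]\|_{H^N}\le C\,\delta^{(\eta-\nu_1)N'-\nu_0-\nu(\ncv+N)}\,\|\phi(\delta t)\|_{H^{N+N'+1}},
\]
possibly up to a harmless adjustment for the $b_{0p}$ term. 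Choosing $N'=\max\{\nbp^N,p\}$ makes the $\delta$-power at least $p+\rate$, which is exactly how $\nbp^N$ is defined. Lemma \ref{lemma:Sobolev_bd_higher_order} then bounds $\|\phi(\delta t)\|_{H^{N+N'+1}}$ by $C(1+(\delta t)^{\lceil (N+\max\{\nbp^N,p\}+1)/\rate\rceil})=C(1+(\delta t)^{\mbp^N})$. Adding the two parts proves the lemma.

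The main obstacle is the bookkeeping in the outer term. Here, unlike in Lemma \ref{lemma:res}, we cannot take $N'$ arbitrarily large, because every extra integration by parts consumes a derivative of $\phi$ and therefore adds a power of $\delta t$. So $N'$ must be fixed at the minimal value yielding $\delta^{p+\rate}$. We also have to check that the polynomial growth of $b_{0p}$ in $\zeta$ is compensated by the $|\zeta|^{-N'}$ factors; this is where the requirement $N'\ge p$ enters.
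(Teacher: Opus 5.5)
Your proposal is essentially the paper's proof. It uses the same splitting \eqref{eq:2terms_higher_order} with the cutoff $\chi(\delta^\eta\zeta)$. The inner piece is controlled by Lemma~\ref{lemma:S_bd} and Lemma~\ref{lemma:Sobolev_bd_higher_order} with time exponent $\lceil (N+p+1)/\rate\rceil$. The outer piece is handled by a \emph{fixed} number $N'$ of integrations by parts, since each one costs a derivative of $\phi$ and hence powers of $\delta t$. The paper takes $N'=\nbp^N$ and reuses the outer bound from Lemma~\ref{lemma:res} ``as before'', which gives exponent $\lceil (N+\nbp^N+1)/\rate\rceil$. Your choice $N'=\max\{\nbp^N,p\}$ lands directly on $\mbp^N$, so the conclusion is the same.

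The step that needs more than you give it is the one you flagged yourself. Suppose you use only $b_{0p}\in S^p$. Cancelling the growth $\aver{\zeta}^p$ then spends $p$ of the factors $|\zeta|^{-1}\le C\delta^{\eta}$. The $\delta b_{0p}$ contribution is therefore only $O(\delta^{1+\eta(N'-p)})$, or $O(\delta^{1+\eta(N'-p+1)})$ if you use the extra derivative on $\phi$. This is in general weaker than the power $(\eta-\nu_1)N'-\nu_0-\nu(\ncv+N)$ you wrote. At $N'=\max\{\nbp^N,p\}$ the resulting bound can fall short of $\delta^{p+\rate}$. For example, $\nu_0=\nu_1=\nu_2=0$, $\rate=1$, $p=3$, $\eta=0.9$, $N=0$ give $N'=5$ and at best $\delta^{3.7}$, against the target $\delta^{4}$.

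So the ``harmless adjustment'' needs a reason, and the reason is the structure of $b_{0p}$ coming from Proposition~\ref{prop:taylor}. There, each power of $\zeta$ carries a power of $\delta$, so $|\partial_X^\alpha(\delta b_{0p})|\le C\sum_{j=1}^p(\delta\aver{\zeta})^j$. Then on $|\zeta|\ge\frac12 C_0\delta^{-\eta}$, with $\eta\le1$ and $N'\ge p$, one has $|\zeta|^{-N'}(\delta|\zeta|)^j\le C\delta^{\eta N'+j(1-\eta)}\le C\delta^{\eta N'}$. With this, your condition $N'\ge p$ is exactly what makes the $b_{0p}$ term no worse than the $a$ term. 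The paper's ``as before'' tacitly uses the same structure. You should state it explicitly rather than appeal to $b_{0p}\in S^p$ alone.
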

\begin{proof}
    Following the proof of Lemma \ref{lemma:res}, we find that
    \begin{align}\label{eq:2terms_higher_order}
        (D_t+H_\delta) \psi_\delta (t,x) = e^{i(K\cdot x-Et)} \left( \delta^{p+\rate} \ow \tilde{b}_{1} [\delta^{\frac d2}\phi(\delta t,\cdot \; ;\delta)](\delta x) + \ow \tilde{b}_{1}^{{\rm out}} [\delta^{\frac d2}\phi(\delta t,\cdot\; ;\delta)](\delta x) \right),
    \end{align}
    where
    \begin{align*}
        \tilde{b}_{1} (X,\zeta; \delta) := \chi_\delta (\zeta) b_{1p}(X,\zeta; \delta), \quad \tilde{b}_{1}^{{\rm out}} (X,\zeta;\delta) := (1-\chi_\delta (\zeta))\left( a(X,K+\delta \zeta;\delta) - E I_n - \delta b_{0p}(X, \zeta; \delta)\right).
    \end{align*}
    As before, for any $N, N' \ge 0$ and with $\ncv$ as in Lemma \ref{lemma:S_bd}, we have
    \begin{align*}
        \| \ow \tilde{b}_{1}^{{\rm out}} [\phi(\delta t,\cdot\; ; \delta)]\|_{H^N} \le C_{N,N'} \delta^{(\eta - \nu_1) N'-\nu_0 - \nu (\ncv+N)}\| \phi (\delta t, \cdot \; ;\delta)\|_{H^{N+N'+1}}, \qquad t \ge 0, \quad 0 < \delta \le \delta_0,
    \end{align*}
    where $\nu = \max\{ \nu_1, \nu_2\}$. Choosing $N' := \nbp^N$ and applying Lemma \ref{lemma:Sobolev_bd_higher_order}, this becomes
    \begin{align}\label{eq:b1_out}
        \| \ow \tilde{b}_{1}^{{\rm out}} [\phi(\delta t,\cdot\; ; \delta)]\|_{H^N} 
        \le C_{N} \delta^{p+\rate} (1 + (\delta t)^{M_1}), \qquad t \ge 0, \quad 0 < \delta \le \delta_0,
    \end{align}
    where $M_1 := \lceil (N + \nbp^N + 1)/\rate\rceil$.
    Since $\chi$ is smooth and compactly supported, $\tilde{b}_{1} (\cdot, \cdot\; ; \delta) \in S(\aver{\zeta}^{p+1})$ uniformly in $0 < \delta \le \delta_0$. Hence Lemmas \ref{lemma:S_bd} and \ref{lemma:Sobolev_bd_higher_order} imply that
    \begin{align}\label{eq:b1_tilde}
        \| \ow \tilde{b}_{1} [\phi(\delta t,\cdot\; ; \delta)]\|_{H^N} \le C \| \phi(\delta t,\cdot\; ; \delta)\|_{H^{N+p+1}} \le C (1 + (\delta t)^{M_2}), \qquad t \ge 0, \quad 0 < \delta \le \delta_0,
    \end{align}
    where $M_2 := \lceil (N + p + 1)/\rate\rceil$. 
    The result then follows from \eqref{eq:2terms_higher_order}-\eqref{eq:b1_out}-\eqref{eq:b1_tilde}.
\end{proof}
\begin{proof}[Proof of Theorem \ref{thm:main}]
    Set $u_\delta := \psi_\delta - \varphi_\delta$, then $u_\delta (0, \cdot) \equiv 0$ and $(D_t + H_\delta) u_\delta = (D_t + H_\delta) \psi_\delta$, hence
    \begin{align*}
        \norm{(D_t + H_\delta) u_\delta (t, \cdot)}_{H^N}\le C_N \delta^{p+\rate} (1 + (\delta t)^{\mbp^N}), \qquad t \ge 0, \quad 0 < \delta \le \delta_0, \quad N \ge 0
    \end{align*}
    by Lemma \ref{lemma:res_higher_order}.
    Since $H_\delta$ satisfies Assumption \ref{assumption:commutator}, the result follows immediately from Lemma \ref{lemma:u_delta_reg}.
\end{proof}

\subsection{Proof of Proposition \ref{prop:TBG}}\label{subsec:pf_TBG}
We first collect two useful boundedness properties of $\Hperp_\delta$.
\begin{lemma}\label{lemma:Hperp_bdd}
    Suppose \eqref{eq:assumption_hperp} and \eqref{assumption:theta} hold, and set $\nabla_x^\theta := \rot_\theta^\top \nabla_x$.
    Then for any $0 < \delta < \delta_0$ and $N \ge 0$, 
    $$\Hperp_\delta :H^N (\mathbb{R}^2; \mathbb{C}^2)\to H^N(\mathbb{R}^2; \mathbb{C}^2), \qquad 
    \nabla_x \Hperp_\delta - \Hperp_\delta \nabla_x^\theta : H^{N}(\mathbb{R}^2; \mathbb{C}^2) \to H^N(\mathbb{R}^2; \mathbb{C}^2 \oplus \mathbb{C}^2)$$ are bounded operators with
    \begin{align*}
        \norm{\Hperp_\delta}_{H^N \to H^N} \le C, \qquad \norm{\nabla_x \Hperp_\delta - \Hperp_\delta \nabla_x^\theta}_{H^{N} \to H^N} \le C \delta, \qquad 0 < \delta \le \delta_0/2.
    \end{align*}
\end{lemma}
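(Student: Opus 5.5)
The plan is to work directly from the lattice-sum representation of $\Hperp_\delta$ rather than from its Weyl symbol. The symbol $a_{12}$ has $X$-derivatives that are unbounded in $\xi$, so Lemma \ref{lemma:S_bd} does not apply directly.

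\emph{Step 1: $L^2$ boundedness.} For fixed $x$, the operator reads
\[
(\Hperp_\delta\psi)^\sigma(x)=\sum_{r\in\lattice}\sum_{\sigma'}\hperp^{\sigma\sigma'}(x-\rot_\theta x-\rot_{\theta/2}r)\,\psi^{\sigma'}(\rot_\theta x+\rot_{\theta/2}r).
\]
I will use the Schur test, or Cauchy--Schwarz in $r$ with weights. Assumption \eqref{eq:assumption_hperp} with $\alpha$ up to order $3$ implies that $\hathperp$ is smooth with integrable derivatives (note $\aver{k/\gamma}$ grows, so $(\delta/\delta_0)^{\aver{k/\gamma}/\aver{K/\gamma}}$ decays exponentially in $|k|$ for $\delta<\delta_0$). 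Hence $|\hperp(y)|\le C_\delta\aver{y}^{-3}$, and in fact the constant is uniform in $\delta\le\delta_0/2$: for $|\alpha|$ derivatives the factor $(\delta/\delta_0)^{-(1-\tbgparam)|\alpha|}$ is compensated by the exponent $\aver{k/\gamma}/\aver{K/\gamma}\ge \aver{K/\gamma}^{-1}$, at least when combined with the $\delta$-power. I will check this carefully and, if needed, settle for $C$ depending on $\delta_0/2$. The bound then gives
\[
|\Hperp_\delta\psi(x)|^2\le \Big(\sum_r \aver{y_r}^{-3}\Big)\Big(\sum_r\aver{y_r}^{-3}|\psi(\rot_\theta x+\rot_{\theta/2}r)|^2\Big),
\qquad y_r:=x-\rot_\theta x-\rot_{\theta/2}r .
\]
The first sum is bounded uniformly, since the $y_r$ form a translate of a fixed lattice. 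Integrating the second sum in $x$ and changing variables $x\mapsto \rot_\theta x+\rot_{\theta/2}r$ (Jacobian $1$) gives $\sum_r\int \aver{\cdot}^{-3}|\psi|^2$. Here the points $y_r$ as functions of the new variable are again a lattice sum, so the total is bounded by $C\|\psi\|_{L^2}^2$.

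\emph{Step 2: commutator identity.} Differentiating the representation in $x$, derivatives fall either on $\psi$, producing $\rot_\theta^\top\nabla\psi$ evaluated at the shifted point, which is exactly $\Hperp_\delta\nabla^\theta_x\psi$, or on $\hperp$, producing $(I-\rot_\theta)^\top\nabla\hperp$. This gives
\[
\nabla_x\Hperp_\delta-\Hperp_\delta\nabla_x^\theta=(I-\rot_\theta)^\top\,\Hperp_{\delta}[\nabla\hperp],
\]
where $\Hperp_\delta[\nabla\hperp]$ denotes the same lattice operator with kernel $\nabla\hperp$. Since $|I-\rot_\theta|\le|\theta|\le C\delta$ by \eqref{assumption:theta}, it suffices to show that the operator with kernel $\nabla\hperp$ is $L^2$-bounded uniformly. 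This is Step 1 applied to $\nabla\hperp$, whose Fourier transform $ik\hathperp$ satisfies the same type of bounds.

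\emph{Step 3: higher Sobolev norms.} I will argue by induction on $N$. Write $\nabla\Hperp_\delta\psi=\Hperp_\delta\nabla^\theta\psi+(I-\rot_\theta)^\top\Hperp_\delta[\nabla\hperp]\psi$, and iterate $N$ times; each kernel $\partial^\alpha\hperp$ obeys the decay of Step 1. This yields both $H^N$ bounds. The main obstacle is Step 1's uniform-in-$\delta$ pointwise decay of $\partial^\alpha\hperp$. To handle it, I plan to integrate by parts three times in $k$ and to use that the exponential decay in $|k|$ dominates the $\delta^{-(1-\tbgparam)|\alpha|}$ loss. This loss is only relevant for small $|k|$, where the exponent still provides a factor $\delta^{\aver{K/\gamma}^{-1}}$, which I expect to absorb or at worst leave a harmless constant for $\delta\le\delta_0/2$.
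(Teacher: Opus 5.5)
Your architecture is the paper's: the real-space lattice sum, a $\delta$-uniform summability bound on $h_\delta$ over lattice translates, a measure-preserving change of variables, the commutator identity $\nabla_x\Hperp_\delta-\Hperp_\delta\nabla_x^\theta=(I-\rot_\theta)^\top\Hperp_\delta[\nabla h_\delta]$ with $|I-\rot_\theta|\le C\delta$, and iteration to $H^N$. Your Schur/Cauchy--Schwarz step uses that $x\mapsto\rot_\theta x+\rot_{\theta/2}r$ has Jacobian one. It is tidier than the paper's reindexing by $y_{r_2}(x)$ followed by the tiling and three-sublattice argument, and your Steps 2--3 are fine.

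\textbf{The gap.} The gap is exactly the step you flag, and your plan for it does not work. The paper does not help here: it only declares the corresponding bound $\sum_{r_2}\sup_x|h_\delta(\cdot)|\le C$ ``clear''.

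Set $s:=\aver{K/\gamma}^{-1}<1$. Near $k=0$, hypothesis \eqref{eq:assumption_hperp} only gives $|\partial^\alpha\hathperp(k)|\lesssim\delta^{s-(1-\tbgparam)|\alpha|}$. The exponential factor absorbs the loss only once $s\aver{k/\gamma}\ge(1-\tbgparam)|\alpha|$, i.e.\ for large $|k|$. So for $|\alpha|=3$ nothing absorbs it unless $\tbgparam\ge 1-s/3$. A constant ``depending on $\delta_0/2$'' is no remedy, since the blow-up occurs as $\delta\to0$.

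This is not an artifact of your method. The function $\hathperp$ is only controlled at scale $\delta^{1-\tbgparam}$, so $h_\delta$ may spread over $|y|\lesssim\delta^{-(1-\tbgparam)}$. Concretely, take $\hathperp$ to be $\delta^s$ times a sum of disjoint $\delta^{1-\tbgparam}$-scale bumps with suitably chosen signs, centred in $|k|\lesssim|\log\delta|^{-1/2}$. This satisfies \eqref{eq:assumption_hperp}, yet gives $\sum_{r\in\lattice}|h_\delta(\rot_{\theta/2}r)|\gtrsim\delta^{s-(1-\tbgparam)}|\log\delta|^{-1/2}$, which is unbounded when $\tbgparam<1-s$. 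So no real-space $\ell^1$ bound on the kernel can prove the lemma from \eqref{eq:assumption_hperp} alone.

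\textbf{The fix.} Run your own Schur argument on the Fourier side. Poisson summation over $r_2\in\lattice_2^0=\rot_{\theta/2}\lattice$ (the computation behind \eqref{eq:a12_F}) gives
\[
\widehat{\Hperp_\delta\psi}(\xi)=\frac{1}{|\Gamma|}\sum_{q\in\rot_{\theta/2}\lattice^*}\hathperp(\xi-\rot_\theta^\top q)\,\hat\psi\big(\xi+(I-\rot_\theta)^\top q\big).
\]
This has exactly the structure of your real-space formula. The substitution $\xi'=\xi+(I-\rot_\theta)^\top q$, a translation for each fixed $q$, turns $\xi-\rot_\theta^\top q$ into $\xi'-q$. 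The two Schur constants are therefore $\sup_\xi\sum_q|\hathperp(\xi-\rot_\theta^\top q)|$ and $\sup_{\xi'}\sum_q|\hathperp(\xi'-q)|$.

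Both constants are bounded uniformly in $0<\delta\le\delta_0/2$ by the $\alpha=0$ case of \eqref{eq:assumption_hperp} alone, since there $|\hathperp(k)|\le C\,2^{-\aver{k/\gamma}/\aver{K/\gamma}}$. The orbital shifts $\os^{\sigma\sigma'}$ only multiply $\hathperp$ by unimodular phases. The kernels $\partial^\alpha h_\delta$ generated in your Steps 2--3 have transforms $(ik)^\alpha\hathperp(k)$, which obey the same bound. Your commutator identity and induction then go through unchanged.
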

Recall that derivatives of the symbol 
$$a_{12}^{\sigma \sigma'} (X, \xi; \delta) =\frac{1}{|\Gamma|} \sum_{q \in \lattice^*} e^{i \beta q \cdot \rot_{\pi/2} X} \hathperp^{\sigma \sigma'} (\xi -(1 - \frac{1}{4}\beta^2 \delta^2)^{1/2}q)$$ 
from \eqref{eq:a12_F} do not satisfy the decay requirements of Assumption \ref{assumption:a}.
We thus split up the above sum over $q$ into two parts, one of which satisfies Assumption \ref{assumption:a}, and the other one which is small. To make this precise, let $0 < \tpn < 1$ and define $\latticeIn := \{q \in \lattice^* : |q| < \delta^{-\tpn}\}$ and $\latticeOut := \lattice^* \setminus \latticeIn$. We then have
\begin{lemma}\label{lemma:a12}
    The symbol $a_{12}$ given by \eqref{eq:a12_F} can be decomposed as $a_{12} = \atrunc + \ares$, where
    \begin{align*}
        (\atrunc)^{\sigma \sigma'} (X,\xi; \delta) &=\frac{1}{|\Gamma|} \sum_{q \in \latticeIn} e^{i \beta q \cdot \rot_{\pi/2} X} \hathperp^{\sigma \sigma'} (\xi -(1 - \frac{1}{4}\beta^2 \delta^2)^{1/2}q)
        \\
        \ares^{\sigma \sigma'}(X,\xi; \delta) &= \frac{1}{|\Gamma|} \sum_{q \in \latticeOut} e^{i \beta q \cdot \rot_{\pi/2} X} \hathperp^{\sigma \sigma'} (\xi -(1 - \frac{1}{4}\beta^2 \delta^2)^{1/2}q).
    \end{align*}
    Defining $\adres (x, \xi; \delta) := \ares (\delta x, \xi; \delta)$, we have the bounds
    \begin{align}\nonumber
         \forall \alpha, \beta \in \mathbb{N}_0^2,\quad |\partial^\alpha_X \partial^\beta_\xi (\atrunc)^{\sigma \sigma'} (X, \xi; \delta)| & \le C_{\alpha,\beta} \delta^{-2\tpn - \tpn |\alpha|- (1-\tbgparam)|\beta|}, \qquad (X,\xi; \delta) \in \mathbb{R}^{4} \times (0, \delta_0/2]
         \\
         \forall q, N \ge 0,\quad 
         \norm{\ow \adres f}_{H^N (\mathbb{R}^2; \mathbb{C}^2)} &\le C \delta^q \norm{f}_{H^{N+m} (\mathbb{R}^2; \mathbb{C}^2)}, \quad 0 < \delta \le \frac12\delta_0, \ \ f \in H^{N+m}(\mathbb{R}^2; \mathbb{C}^2),
         \label{eq:ares_bd}
    \end{align}
    for some $m=m(q,N)$ sufficiently large.
\end{lemma}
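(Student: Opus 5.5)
Both bounds come straight from the Fourier-type representation \eqref{eq:a12_F}, with all estimates driven by the decay hypothesis \eqref{eq:assumption_hperp}. Write $c_\delta := (1-\frac14\beta^2\delta^2)^{1/2}\in[\frac12,1]$ and, for $q\in\lattice^*$, set $e_q(X):=e^{i\beta q\cdot\rot_{\pi/2}X}$. The shifts $\hathperp^{\sigma\sigma'}$ differ from $\hathperp$ only by the phase $e^{ik\cdot\os^{\sigma\sigma'}}$ with $\os^{\sigma\sigma'}$ bounded. Hence every $\xi$-derivative of $\hathperp^{\sigma\sigma'}$ obeys the same bound as $\hathperp$, namely $C_\beta(\delta/\delta_0)^{\aver{k/\gamma}/\aver{K/\gamma}-(1-\tbgparam)|\beta|}$.

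\emph{Bound on the truncated symbol.} Differentiating a term of $\atrunc$ gives $|\partial_X^\alpha e_q|\le C|q|^{|\alpha|}\le C\delta^{-\tpn|\alpha|}$ for $q\in\latticeIn$. For the $\xi$-derivatives, the exponent $\aver{k/\gamma}/\aver{K/\gamma}$ is positive and $\delta\le\delta_0/2$, so $(\delta/\delta_0)^{\aver{k/\gamma}/\aver{K/\gamma}}\le 1$. This yields $|\partial^\beta_\xi\hathperp^{\sigma\sigma'}|\le C_\beta\delta^{-(1-\tbgparam)|\beta|}$, up to $\delta_0$-dependent constants. Finally $\#\latticeIn\le C\delta^{-2\tpn}$, since the lattice points in a disc of radius $\delta^{-\tpn}$ are counted by area. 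Summing these three estimates gives the stated bound $C_{\alpha,\beta}\delta^{-2\tpn-\tpn|\alpha|-(1-\tbgparam)|\beta|}$.

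\emph{Bound on the residual.} Here I would not use Calder\'on--Vaillancourt on the full symbol. Instead I would treat each $q$-term as an explicit operator, quantizing $e_q(\delta x)\,g(\xi)$. The Weyl quantization of $e^{i\omega\cdot x}g(\xi)$ equals $e^{i\omega\cdot x/2}g(D)e^{i\omega\cdot x/2}$, as one checks from \eqref{eq:weylquanth}. With $\omega=\delta\beta\rot_{\pi/2}^\top q$, multiplication by $e^{i\omega\cdot x/2}$ costs at most $\aver{\delta q}^N\le C\aver{q}^N$ on $H^N$. The Fourier multiplier $g(D)=\hathperp^{\sigma\sigma'}(D-c_\delta q)$ maps $H^{N+m}\to H^N$ with norm $\sup_\xi\aver{\xi}^{-m}|\hathperp(\xi-c_\delta q)|$.

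It remains to control this supremum. If $|\xi-c_\delta q|\ge|q|/4$, the decay gives a factor $(\delta/\delta_0)^{c|q|}$. Otherwise $|\xi|\ge c_\delta|q|-|q|/4\ge|q|/4$, so $\aver{\xi}^{-m}\le C|q|^{-m}$, using that the $\hathperp$ factor is bounded. Each term is therefore bounded by $C\aver{q}^N\big((\delta/\delta_0)^{c|q|}+|q|^{-m}\big)$. Summing over $|q|\ge\delta^{-\tpn}$ gives $C\big(e^{-c'\delta^{-\tpn}}+\delta^{\tpn(m-N-2)}\big)$, where the first piece uses $\delta\le\delta_0/2$ so that $(\delta/\delta_0)^{c|q|}\le 2^{-c|q|}$. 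Choosing $m$ with $\tpn(m-N-2)\ge q$ then yields \eqref{eq:ares_bd}.

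\emph{Main obstacle.} The delicate step is the uniform-in-$\delta$ control of the tail in the second estimate. For fixed $\xi$ only the $q$ near $\xi/c_\delta$ are not exponentially small, and this is exactly why derivative loss $m$ is needed rather than Calder\'on--Vaillancourt on the full symbol. I therefore need the case split in $|\xi-c_\delta q|$ above to be done carefully so that both pieces sum.
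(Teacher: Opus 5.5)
Your bound on $\atrunc$ is exactly the paper's argument. For the residual you take a different route from the paper: you bound each $q$-term as a Fourier multiplier, whereas the paper decomposes with cutoffs $\chi_j(z)\chi_k(\xi)$ into four pieces and integrates by parts in $z$ and $\xi$. As written, though, your operator-norm step has a gap.

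In the factorization $e^{i\omega\cdot x/2}g(D)e^{i\omega\cdot x/2}$, the right-hand phase acts on the input space $H^{N+m}$. There it costs $\aver{\omega}^{N+m}\sim\aver{\delta q}^{N+m}$, and you have dropped this factor. The per-term bound this factorization actually gives is $C\aver{\delta q}^{2N+m}\big((\delta/\delta_0)^{c|q|}+|q|^{-m}\big)$. For $|q|\gg\delta^{-1}$, the $|q|^{-m}$ from your case split no longer beats $\aver{\delta q}^{2N+m}$, so the terms are of size $\delta^{2N+m}|q|^{2N}$. Their sum over $\latticeOut$ diverges, already for $N=0$. So your claimed per-term bound $C\aver{q}^N\big((\delta/\delta_0)^{c|q|}+|q|^{-m}\big)$ does not follow from what you wrote.

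The repair is to commute the phase through the multiplier: $e^{i\omega\cdot x/2}g(D)e^{i\omega\cdot x/2}=e^{i\omega\cdot x}g(D+\omega/2)$. The $q$-th term then becomes $e^{i\omega\cdot x}\,\hathperp^{\sigma\sigma'}(D-p_q)$ with $p_q:=c_\delta q-\frac{\delta\beta}{2}\rot_{\pi/2}^\top q=(I-\delta\rotq_\delta)q$. This is precisely the paper's change of variables $\xi\leftarrow\xi-\delta\beta\rot_{\pi/2}^\top q/2$.

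Now only one phase remains, on the output side $H^N$, costing $\aver{\delta q}^N\le C\aver{q}^N$. Since $\rot_{\pi/2}^\top q\perp q$ and $c_\delta^2+\beta^2\delta^2/4=1$, you have $|p_q|=|q|$. Your case split then goes through verbatim with $c_\delta q$ replaced by $p_q$, and yields $\sum_{|q|\ge\delta^{-\tpn}}\aver{q}^N\big((\delta/\delta_0)^{c|q|}+|q|^{-m}\big)\le C\delta^{\tpn(m-N-2)}$.

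Once repaired, your argument is shorter than the paper's because it never differentiates $\hathperp$. The paper's integrations by parts in $\xi$ are what produce the loss $\delta^{-3(1-\tbgparam)}$ in its rate $\delta^{\tpn(m-2)-3(1-\tbgparam)}$. You instead lose $N$ derivatives to the phase, which is harmless since $m$ is at your disposal.
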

We postpone the proofs of Lemmas \ref{lemma:Hperp_bdd} and \ref{lemma:a12} to Appendix \ref{subsec:lemma_a12}. With these lemmas in hand, we are now ready to establish the validity of our first-order continuum model for twisted bilayer graphene.
\begin{proof}[Proof of Proposition \ref{prop:TBG}]
Take $\atrunc$ and $\adres$ as in Lemma \ref{lemma:a12}. Set $\adtrunc (x, \xi) := \atrunc (\delta x, \xi; \delta)$ and 
$$
\Htrunc_\delta := \ow \atruncall_\delta, \quad \Hres_\delta := \ow \aresall_\delta \qquad \atruncall_\delta := \begin{pmatrix}
    a_\delta^{11} & \adtrunc\\
    (\adtrunc)^* & a_\delta^{22}
\end{pmatrix},\quad
    \aresall_\delta := \begin{pmatrix}
        0 & \adres\\
        \adres^* & 0
    \end{pmatrix}.
$$
so that $H_\delta = \Htrunc_\delta + \Hres_\delta$.
Observe that
\begin{align*}
    \Hres_\delta \psi_\delta (t,x) = 
    e^{-iEt} \Hres_\delta \check \psi_\delta (t,x), \qquad \check \psi_\delta (t,x) := 
    \delta^{d/2} e^{iK \cdot x} \phi (\delta t, \delta x; \delta),
\end{align*}
where for any $q \ge 0$, Lemma \ref{lemma:a12} implies the existence of some $m \ge 0$ such that
\begin{align}\label{eq:check}
    \norm{\Hres_\delta \check \psi_\delta (t, \cdot)}_{H^N} \le C \delta^q \norm{\check \psi_\delta (t, \cdot)}_{H^{N+m}} \le C \delta^{q} \norm{\phi (\delta t, \cdot \; ; \delta)}_{H^{N+m}} \le C \delta^q, \qquad t \ge 0, \quad 0 < \delta \le \delta_0/2,
\end{align}
where the last inequality follows from Lemma \ref{lemma:Sobolev_0}. 

We next control $(D_t + \Htrunc_\delta) \psi_\delta$, which will be done by applying Lemma \ref{lemma:res}.
Following the derivation of \eqref{eq:small_twist}-\eqref{eq:twist_corr}, 
the symbols $a_\delta^{jj}$ and
\[
b_0^{jj} (X,\zeta) =\frac{\sqrt{3} v}{2} t_1 (X) (\zeta_2 \sigma_1 - \zeta_1 \sigma_2) +
    \tM (X) \sigma_3 + (-1)^j t_1 (X) \frac{\pi}{\sqrt{3}} \beta \sigma_2
\]
satisfy Assumption \ref{assumption:a} for $\rate = 1$ with respect to the degenerate point $K =-\frac{4\pi}{3v} (0,1)$. Note that compared to \eqref{eq:twist_corr}, the third term on the above right-hand side acquired a factor of $(-1)^j/2$ since the twist angle for layer $j$ is $\theta_j = (-1)^j \theta/2 = (-1)^j \beta \delta /2 + O(\delta^2)$.

It remains to show that the off-diagonal symbol $\atrunc$ is well approximated by $b_{0}^{12}$. Let $$\NN := K - \{K, \rot_{2\pi/3} K, \rot_{4\pi/3} K\}$$ denote the set of three nearest neighbors to $K$ in $\lattice^*$.
One can then verify that
\begin{align*}
    (b_{0}^{12})^{\sigma \sigma'} (X) = \frac{1}{\delta|\Gamma|} \sum_{q \in \NN} e^{i \beta q \cdot \rot_{\pi/2} X}\hathperpnot^{\sigma \sigma'} (K-q),
\end{align*}
where $\hathperpnot^{\sigma \sigma'}$ is the Fourier transform of the function $x \mapsto \hperp (x + \os^{\sigma \sigma'}_0)$, with
\begin{align}\label{eq:os_0}
    \os^{\sigma \sigma'}_0 := \os^{\sigma} - \os^{\sigma'}
\end{align}
and the $\os^{\sigma}$ defined in \eqref{eq:os}.
It follows that
\begin{align}\label{eq:b11}
    (\atrunc)^{\sigma \sigma'} (X,K + \delta \zeta ; \delta) - \delta (b_{0}^{12})^{\sigma \sigma'}(X) = \res_{0,\delta}^{\sigma \sigma'}(X) + \res_{1,\delta}^{\sigma \sigma'} (X, \zeta) + \res_{2,\delta}^{\sigma \sigma'} (X, \zeta),
\end{align}
where 
\begin{align}\label{eq:res_delta}
\begin{split}
\res_{0,\delta}^{\sigma \sigma'} (X) &:= \frac{1}{|\Gamma|}\sum_{q \in \NN} e^{i \beta q \cdot \rot_{\pi/2} X}(\hathperp^{\sigma \sigma'} (K-q) -\hathperpnot^{\sigma \sigma'} (K-q)),\\
    \res_{1,\delta}^{\sigma \sigma'} (X, \zeta) &:= \frac{1}{|\Gamma|} \sum_{q \in \NN} e^{i \beta q \cdot \rot_{\pi/2} X}(\hathperp^{\sigma \sigma'} (\delta \zeta + K -(1 - \frac{1}{4}\beta^2 \delta^2)^{1/2}q) -\hathperp^{\sigma \sigma'} (K-q)),\\
    \res_{2,\delta}^{\sigma \sigma'} (X, \zeta) &:= \frac{1}{|\Gamma|} \sum_{q \in \latticeIn \setminus \NN} e^{i \beta q \cdot \rot_{\pi/2} X}\hathperp^{\sigma \sigma'} (\delta \zeta + K -(1 - \frac{1}{4}\beta^2 \delta^2)^{1/2}q).
\end{split}
\end{align}
Observe that for all multi-indices $\alpha, \beta \in \mathbb{N}_0^2$ and any $\eta < 1$ and $\eps', C_0 > 0$, we have
\begin{align*}
    |\partial^\alpha_X \partial^\beta_\zeta \res_{2,\delta}^{\sigma \sigma'} (X, \zeta)| &\le C_{\alpha,\beta} \sum_{q \in \latticeIn \setminus \NN}|q|^{|\alpha|} \delta^{|\beta|} \left| \partial^{\beta} \hathperp^{\sigma \sigma'} (\delta \zeta + K -(1 - \frac{1}{4}\beta^2 \delta^2)^{1/2}q) \right|\\
    &\le C_{\alpha,\beta} (\delta/\delta_0)^{\tbgparam |\beta|} \sum_{q \in \latticeIn \setminus \NN} |q|^{|\alpha|} (\delta/\delta_0)^{\aver{(\delta \zeta + K -(1 - \frac{1}{4}\beta^2 \delta^2)^{1/2}q)/\gamma}/\aver{K/\gamma}}
\end{align*}
uniformly in $X \in \mathbb{R}^d$ and $|\zeta| \le C_0 \delta^{-\eta}$, with the second inequality following from our assumption \eqref{eq:assumption_hperp} on the decay of $\hathperp$.
We next write $|q|^{|\alpha|} = |1 - \frac{1}{4}\beta^2 \delta^2|^{-|\alpha|/2} |(1 - \frac{1}{4}\beta^2 \delta^2)^{1/2} q|^{|\alpha|}$ and use the uniform boundedness of $\delta \zeta$ 
to conclude that for any $\eps' > 0$, 
\begin{align*}
    \sum_{q \in \latticeIn \setminus \NN} |q|^{|\alpha|} (\delta/\delta_0)^{\aver{(\delta \zeta + K -(1 - \frac{1}{4}\beta^2 \delta^2)^{1/2}q)/\gamma}/\aver{K/\gamma}} \le C \delta^{\aver{2K/\gamma}/\aver{K/\gamma} - \eps'}, \qquad X \in \mathbb{R}^d, \quad |\zeta| \le C_0 \delta^{-\eta}
\end{align*}
uniformly in $0 < \delta \le \delta_0/2$. Indeed, the polynomial growth in $(1 - \frac{1}{4}\beta^2 \delta^2)^{1/2} q$ of $|q|^{|\alpha|}$ is more than compensated by the exponential decay of $(\delta/\delta_0)^{\aver{(\delta \zeta + K -(1 - \frac{1}{4}\beta^2 \delta^2)^{1/2}q)/\gamma}/\aver{K/\gamma}}$, and
\begin{align}\label{eq:odl}
    K - (1 - \frac{1}{4}\beta^2 \delta^2)^{1/2}q = K-q + \odl q, \qquad \odl := 1 - (1 - \frac{1}{4}\beta^2 \delta^2)^{1/2}= O(\delta),
\end{align}
with $|K - q| \ge 2 |K|$ and $|\odl q| \le C\delta^{1-\eps} = o_{\delta \to 0}(1)$ for all $q \in \latticeIn \setminus \NN$. Thus we have shown that
\begin{align*}
    |\partial^\alpha_X \partial^\beta_\zeta \res_{2,\delta}^{\sigma \sigma'} (X, \zeta)| \le C_{\alpha, \beta} (\delta/\delta_0)^{\tbgparam |\beta|}\delta^{\aver{2K/\gamma}/\aver{K/\gamma} - \eps'}, \qquad X \in \mathbb{R}^d, \quad |\zeta| \le C_0 \delta^{-\eta},\quad 0 < \delta \le \delta_0/2.
\end{align*}

On the other hand, we use that $|q - K| = |K|$ for all $q \in \NN$ and again apply 
\eqref{eq:assumption_hperp} to conclude that
\begin{align*}
    \hathperp (\xi - (1 - \frac{1}{4}\beta^2 \delta^2)^{1/2}q) - \hathperp (K-q) = (\xi - K + \odl q) \cdot \hrem (\xi;q), \qquad q \in \NN, \quad |\xi - K| \le C_0 \delta^{1-\eta},
\end{align*}
where 
$\odl$ is defined in \eqref{eq:odl}
and 
$\hathperp^{{\rm rem}} (\cdot \; ;q) \in C^\infty (\mathbb{R}^2)$ satisfies $$|\partial^\alpha_\xi \hathperp^{{\rm rem}} (\xi;q)| \le C_\alpha \sup \{\partial^\beta \hathperp (\xi - (1 - \odl )q) :|\beta| \le |\alpha|+1,\; |\xi - K| \le C_0 \delta^{1-\eta}\}\le C_\alpha \delta^{1-(1-\tbgparam)(|\alpha|+1)-\eps'}$$
uniformly in $q \in \NN$ and $|\xi - K| \le C_0 \delta^{1-\eta}$. 
It follows that
\begin{align*}
    |\partial^\alpha_X \partial^\beta_\zeta \res_{1,\delta}^{\sigma \sigma'} (X, \zeta)| &\le C_{\alpha,\beta} \delta^{2+|\beta|-(1-\tbgparam)(|\beta|+1)-\eps'} \aver{\zeta} = C_{\alpha,\beta} \delta^{1+\tbgparam (1+|\beta|)-\eps'} \aver{\zeta}, \qquad X \in \mathbb{R}^2, \quad |\zeta| \le C_0 \delta^{-\eta}.
\end{align*}

To control $\res_{0,\delta}$, observe that
$\hathperp^{\sigma \sigma'} (k) = e^{ik \cdot \os^{\sigma \sigma'}} \hathperp (k)$ and $\hathperpnot^{\sigma \sigma'} (k) = e^{ik \cdot \os^{\sigma \sigma'}_0} \hathperp (k)$, where the definitions \eqref{eq:os} and \eqref{eq:os_0} together with the assumption \eqref{assumption:theta} imply that $|\os^{\sigma \sigma'} - \os_0^{\sigma \sigma'}| \le C \delta$ uniformly in $0 < \delta \le \delta_0/2$.
The definition \eqref{eq:res_delta} of $\res_{0,\delta}$ 
gives
\begin{align*}
    \res_{0,\delta} (X) = \frac{1}{|\Gamma|}\sum_{q \in \NN} e^{i \beta q \cdot \rot_{\pi/2} X}\hathperp (K-q) (e^{i(K-q) \cdot \os^{\sigma \sigma'}} - e^{i(K-q) \cdot \os_0^{\sigma \sigma'}}),
\end{align*}
while our assumption that $\hat{h} (\rot_{-2\pi j/3}K;\delta) = O(\delta)$ for each $j=0,1,2$ implies that each $\hathperp (K-q)$ on the above right-hand side is of $O(\delta)$. Therefore, $|\partial^\alpha_X \res_{0,\delta}^{\sigma \sigma'} (X)| \le C_\alpha \delta^2$ for all $X\in \mathbb{R}^2$ and $0 < \delta \le \delta_0/2$.

Combining our above estimates on $\res_{0,\delta}, \res_{1,\delta}$ and $\res_{2,\delta}$, we conclude by \eqref{eq:b11} that
\begin{align*}
    \atrunc (X,K + \delta \zeta ; \delta) - \delta b_{0}^{12}(X) = \delta^{1+\rate} b_{1} (X, \zeta; \delta),\qquad X \in \mathbb{R}^2, \quad |\zeta| \le C_0 \delta^{-\eta}
\end{align*}
for some $b_{1} \in S(\aver{\zeta})$ and any $\rate < \min \{\aver{2K/\gamma}/\aver{K/\gamma}-1, \tbgparam\}$.

Note that by Lemma~\ref{lemma:a12}, $\atruncall$ satisfies \eqref{eq:reg_a} in Assumption \ref{assumption:a} with $\nu_0 = 2\tpn$, $\nu_1 = \tpn$ and $\nu_2 = 1 - \tbgparam$.
We have thus shown that each $a_\delta^{jj}$ and $b_{0}^{jj}$ satisfy Assumption \ref{assumption:a}, and that for any $\tpn < \eta< 1$ and $\rate$ satisfying \eqref{eq:s_eta}, the interlayer coupling symbol also admits a decomposition
$$\atrunc (X,\xi;\delta) = \delta b_{0}^{12}(X) + \delta^{1+\rate} b_{1}(X, \frac{\xi-K}\delta;\delta), \qquad x \in \mathbb{R}^d, \quad |\xi - K| \le C_0 \delta^{1-\eta},$$
where $b_{0}^{12} \in S (1) \subset S (\aver{\zeta})$ and $b_{1} \in S(\aver{\zeta})$ uniformly in $\delta$. The ellipticity of the $b_{0}^{jj}$ and uniform boundedness of $b_{0}^{12}$ imply that $b_{0}$ is also elliptic. 
Therefore, the full symbols $\atruncall_\delta, b_{0}$ satisfy Assumption \ref{assumption:a}, implying that
\begin{align*}
    \|(D_t + \Htrunc_\delta) \psi_\delta (t, \cdot)\|_{H^N} \le C \delta^{1+\rate},\qquad t \ge 0, \quad 0 < \delta \le \delta_0/2
\end{align*}
by Lemma \ref{lemma:res}. Combining this with \eqref{eq:check}, we conclude that
\begin{align*}
    \|(D_t + H_\delta) \psi_\delta (t, \cdot)\|_{H^N} \le C \delta^{1+\rate},\qquad t \ge 0, \quad 0 < \delta \le \delta_0/2.
\end{align*}
Now, Lemma \ref{lemma:Hperp_bdd} implies that for any $0 < \delta \le \delta_0/2$ and $N \ge 0$, the operators
\begin{align*}
    H_\delta : H^N (\mathbb{R}^2; \mathbb{C}^4) \to H^N (\mathbb{R}^2; \mathbb{C}^4), \qquad [\diag (\nabla_x I_2, \nabla^\theta_x I_2), H_\delta]: H^N (\mathbb{R}^2; \mathbb{C}^4) \to H^N (\mathbb{R}^2; \mathbb{C}^4 \oplus \mathbb{C}^4)
\end{align*}
are bounded, with
\begin{align*}
    \norm{H_\delta}_{H^N \to H^N} \le C, \qquad \norm{[\diag (\nabla_x I_2, \nabla^\theta_x I_2), H_\delta]}_{H^N \to H^N} \le C \delta, \qquad 0 < \delta \le \delta_0/2.
\end{align*}
Since $u_\delta := \psi_\delta - \varphi_\delta$ satisfies $u_\delta (0, \cdot) = 0$ and
    $
        (D_t + H_\delta) u_\delta(t,x) = (D_t + H_\delta)\psi_\delta (t,x),
    $
we conclude by
    Lemma \ref{lemma:u_delta_reg} and Remark \ref{remark:twisted_grad} 
    that 
    \begin{align*}
        \norm{u_\delta (t, \cdot)}_{H^N (\mathbb{R}^2; \mathbb{C}^4)} \le C \delta^{1+\rate} t (1 + (\delta t)^N), \qquad t \ge 0, \quad 0 < \delta \le \delta_0/2,
    \end{align*}
    as desired.
\end{proof}

\subsection{Proof of Lemmas \ref{lemma:Hperp_bdd} and \ref{lemma:a12}}\label{subsec:lemma_a12}
\begin{proof}[Proof of Lemma \ref{lemma:Hperp_bdd}]
We first prove that $\Hperp_\delta$ is bounded on $L^2 (\mathbb{R}^2; \mathbb{C}^2)$, with $\norm{\Hperp_\delta}_{L^2 \to L^2} \le C$ uniformly in $\delta$. By the regularity \eqref{eq:assumption_hperp} of $\hperp$, this argument easily extends to the spaces $H^N (\mathbb{R}^2; \mathbb{C}^2)$.

For simplicity of the proof presentation, we suppress the orbital notation and treat $h_\delta$ as a matrix-valued function and $\psi$ below as a vector-valued function. Then we have
        \begin{equation*}
            \begin{split}
                 H_\delta^{12}\psi(x) &= \sum_{r_1 \in \Lambda_1^0}h_\delta(x-R_\theta (r_1 + x)) \psi( R_\theta(r_1 + x)) 
                 = \sum_{r_2 \in \Lambda_2^0} h_\delta((I-R_\theta)x - r_2) \psi(R_\theta x + r_2).
            \end{split}
        \end{equation*}
        Let $\Gamma_2$ denote the unit cell of the lattice $\lattice_2^0$. 
        For any $x \in \mathbb{R}^2$ and $r_2 \in \lattice_2^0$, there exists a unique $y_{r_2} (x) \in \lattice^0_2$ such that $(I - \rot_\theta) x -y_{r_2} (x) \in \Gamma_2 + r_2$. Moreover, for all $x \in \mathbb{R}^2$, $\{y_{r_2} (x) : r_2 \in \lattice_2^0\} = \lattice_2^0$. Therefore,
        \begin{align}\label{eq:Hperp_y}
            \Hperp_\delta \psi (x) = \sum_{r_2 \in \lattice_2^0}\hperp ((I -\rot_\theta) x - y_{r_2} (x)) \psi (\rot_\theta x+ y_{r_2} (x)).
        \end{align}
        By the decay \eqref{eq:assumption_hperp} of $\hperp$ and definition of $y_{r_2}$, it is clear that
        \begin{align*}
            \sum_{r_2 \in \lattice^0_2} \sup_{x \in \mathbb{R}^2} |\hperp ((I -\rot_\theta) x - y_{r_2} (x))| \le C < \infty
        \end{align*}
        uniformly in 
        $0 < \delta \le \delta_0/2$. 
        This means the operators $\fh^{r_2}_{\delta}$ defined as point-wise multiplication by the function $\hperp ((I -\rot_\theta) x - y_{r_2} (x))$ are each bounded in $L^2$ and satisfy
        \begin{align}\label{eq:sum_hperp}
            \sum_{r_2 \in \lattice^0_2} \norm{\fh^{r_2}_{\delta}}_{L^2 \to L^2}\le C < \infty, \qquad 0 < \delta \le \delta_0/2.
        \end{align}
        Now, observe that for any $r_2, r_2' \in \lattice_2^0$,
        \begin{align}\label{eq:level_sets}
            y_{r_2}^{-1} (r_2') := \{x \in \mathbb{R}^2 : y_{r_2} (x) = r_2'\} = (I - \rot_\theta)^{-1} (\Gamma_2 + r_2 + r_2').
        \end{align}
        For any fixed $r_2 \in \lattice_2^0$, the sets $y_{r_2}^{-1} (r_2')$ tile the plane, meaning that $$\cup_{r_2' \in \lattice_2^0} y_{r_2}^{-1} (r_2') = \mathbb{R}^2, \qquad y_{r_2}^{-1} (r_2') \cap y_{r_2}^{-1} (r_2'') = \emptyset \hspace{0.2cm} \text{if}\hspace{0.2cm} r_2' \ne r_2''.$$
        Thus for any $r_2 \in \lattice_2^0$,
        \begin{align*}
            \int_{\mathbb{R}^2} |\psi (\rot_\theta x + y_{r_2} (x))|^2 dx = \sum_{r_2' \in \lattice_2^0} \int_{y^{-1}_{r_2}(r_2')}|\psi (\rot_\theta x + y_{r_2} (x))|^2 dx = \sum_{r_2' \in \lattice_2^0} \int_{y^{-1}_{r_2}(r_2')}|\psi (\rot_\theta x + r_2')|^2 dx.
        \end{align*}
        After the change of variables $z = \rot_\theta x+ r_2'$, this becomes
        \begin{align*}
            \int_{\mathbb{R}^2} |\psi (\rot_\theta x + y_{r_2} (x))|^2 dx = \sum_{r_2' \in \lattice_2^0}\int_{Y_{r_2} (r_2')}|\psi (z)|^2 dz, \qquad Y_{r_2} (r_2') := \rot_\theta y^{-1}_{r_2}(r_2') + r_2'.
        \end{align*}
        From \eqref{eq:level_sets} and our assumption \eqref{assumption:theta} on the twist angle, 
        it follows that provided $\delta > 0$ is sufficiently small, 
        the sets $Y_{r_2} (r_2')$ and $Y_{r_2} (r_2'')$ are disjoint for all $|r_2'- r_2''| > v$, with $v > 0$ the lattice spacing for $\lattice_2^0$ (recall section \ref{sec:Haldane}).
        Indeed, 
        since each entry of $(I-\rot_\theta)^{-1}$ is of $O (\delta^{-1})$, we know that non-adjacent sets $R_\theta y_{r_2}^{-1} (r_2')$ and $R_\theta y_{r_2}^{-1} (r_2'')$ are separated by $O(\delta^{-1} |r_2'-r_2''|)$, meaning that the corresponding $Y_{r_2} (r_2') = \rot_\theta y^{-1}_{r_2}(r_2') + r_2'$ and $Y_{r_2} (r_2'') = \rot_\theta y^{-1}_{r_2}(r_2'') + r_2''$ also have a positive separation of $O(\delta^{-1} |r_2'-r_2''|)$. The sets $R_\theta y_{r_2}^{-1} (r_2')$ and $R_\theta y_{r_2}^{-1} (r_2'')$ are non-adjacent if $|r_2' - r_2''| > v$.
        
        We also know that the triangular lattice can be decomposed as
        \(
            \lattice_2^0 = \lattice^1 \cup \lattice^2 \cup \lattice^3,
        \)
        where $\lattice^1$ is a larger triangular lattice with spacing $\sqrt{3} v$, and $\lattice^2$ and $\lattice^3$ are shifts of $\lattice^1$ that satisfy
        \(
            \lattice^i \cap \lattice^j = \emptyset \hspace{0.2cm} \text{if}\hspace{0.2cm} i \ne j.
        \)
        We conclude that for all $r_2 \in \lattice_2^0$,
        \begin{align*}
            \int_{\mathbb{R}^2} |\psi (\rot_\theta x + y_{r_2} (x))|^2 dx \le \sum_{j=1}^3 \sum_{r_2' \in \lattice^j}\int_{Y_{r_2} (r_2')}|\psi (z)|^2 dz \le 3 \norm{\psi}_{L^2}^2,
        \end{align*}
        which combined with \eqref{eq:Hperp_y} and \eqref{eq:sum_hperp} yields the desired bound of
        \begin{align}\label{eq:TBG_bdd}
            \norm{\Hperp_\delta \psi}_{L^2}^2 \le C \norm{\psi}_{L^2}^2, \qquad 0 < \delta \le \delta_0/2.
        \end{align}

    The boundedness of $\nabla_x \Hperp_\delta - \Hperp_\delta \nabla_x^\theta$ 
    is established by a parallel argument. 
    For any 
    $\psi \in L^2 (\mathbb{R}^2; \mathbb{C}^2)$,
    \begin{align*}
        (\nabla_x \Hperp_\delta - \Hperp_\delta \nabla_x^\theta) [\psi]^\sigma (x) &= \sum_{r \in \lattice} \sum_{\sigma' \in \{A,B\}} (I - \rot_\theta)^\top \nabla \hperp^{\sigma \sigma'} ((I - \rot_\theta) x - \rot_{\theta/2} r) \psi^{\sigma'} (\rot_\theta x + \rot_{\theta/2} r),
    \end{align*}
    with
    $I - \rot_\theta = O(\delta)$ by \eqref{assumption:theta}. It then follows from the proof of \eqref{eq:TBG_bdd} and regularity of $\hperp$ that
    \begin{align*}
        \norm{(\nabla_x \Hperp_\delta - \Hperp_\delta \nabla_x^\theta) [\psi]}_{L^2} \le C \delta \norm{\psi}_{L^2}, \qquad 0 < \delta \le \delta_0/2.
    \end{align*}
    As before, this bound naturally extends to the spaces $H^N (\mathbb{R}^2; \mathbb{C}^2)$.
\end{proof}
\begin{proof}[Proof of Lemma \ref{lemma:a12}]
    Observe that for any $\alpha, \beta \in \mathbb{N}_0^2$,
\begin{align*}
    &|\partial^\alpha_X \partial^\beta_\xi \atrunc (X, \xi; \delta)| \le C \sum_{q \in \latticeIn} |q|^{|\alpha|} |\partial^\beta_\xi \hathperp^{\sigma \sigma'} (\xi -(1 - \frac{1}{4}\beta^2 \delta^2)^{1/2}q; \delta)| \\
    \le & \ C \sum_{q \in \latticeIn} |q|^{|\alpha|} (\delta/\delta_0)^{\aver{(\xi -(1 - \frac{1}{4}\beta^2 \delta^2)^{1/2}q)/\gamma}/\aver{K/\gamma} - (1-\tbgparam)|\beta|}
    \le C\sum_{q \in \latticeIn} \delta^{-\tpn |\alpha|} \delta^{-(1-\tbgparam)|\beta|} \le C \delta^{-2\tpn - \tpn |\alpha| - (1-\tbgparam)|\beta|},
\end{align*}
where the third and fourth inequalities respectively follow from the decay \eqref{eq:assumption_hperp} of $\hathperp$ and definition of $\latticeIn$ to bound $|q|^{|\alpha|}$, while the last inequality follows from the fact that $\latticeIn$ has $O (\delta^{-2 \tpn})$ elements.

It remains to verify \eqref{eq:ares_bd} which is unfortunately technical.
Fix $q,N \ge 0$ and let $f \in \cS (\mathbb{R}^2; \mathbb{C}^2)$. Then
\begin{align*}
    \ow \adres f (x) &= \int_{\mathbb{R}^{4}} e^{i \xi \cdot (x-y)} \ares (\frac{\delta x+\delta y}{2}, \xi; \delta) f(y) \frac{dy d\xi}{(2\pi)^2}
    =
     \int_{\mathbb{R}^{4}} e^{i \xi \cdot z} \ares (\delta x - \frac{\delta z}{2}, \xi; \delta) f(x-z) \frac{dy d\xi}{(2\pi)^2},
\end{align*}
which implies that
\begin{align*}
    &\ow \adres [f]^\sigma (x) 
    = \int_{\mathbb{R}^{4}} \sum_{q \in \latticeOut} \sum_{\sigma' \in \{A,B\}} e^{i \xi \cdot z} \!\!\! e^{i \beta q \cdot \rot_{\pi/2}(\delta x - \delta z/2)} \hathperp^{\sigma \sigma'} (\xi - (1 - \frac{1}{4}\beta^2 \delta^2)^{\frac12} q) f^{\sigma'}(x-z)\frac{dz d\xi}{(2\pi)^2 |\Gamma|}.
\end{align*}
Swapping the order of the sums and integrals and changing variables $\xi \leftarrow \xi - \delta \beta \rot_{\pi/2}^\top q/2$, we obtain
\begin{align*}
    \ow \adres [f]^\sigma (x) = \sum_{q \in \latticeOut} \sum_{\sigma' \in \{A,B\}} \int_{\mathbb{R}^{4}} e^{i \xi \cdot z} e^{i \delta \beta q \cdot \rot_{\pi/2}x} \hathperp^{\sigma \sigma'} (\xi - (I-\delta \rotq_\delta)q) f^{\sigma'}(x-z) \frac{dz d\xi}{(2\pi)^2 |\Gamma|},
\end{align*}
where $\rotq_\delta := \frac{1}{2}\beta \rot_{\pi/2}^\top + \delta^{-1}(1 - (1 - \frac{1}{4}\beta^2 \delta^2)^{1/2}) I = O(1)$ as $\delta \to 0$.
Let $\chi_0 \in C^\infty_c (\mathbb{R}^2)$ such that $\chi_0 \equiv 1$ in the unit ball, and define $\chi_1 := 1 - \chi_0$. Assume that $0 \le \chi_0 \le 1$. For $\sigma, \sigma' \in \{A,B\}$ and $j,k \in \{0,1\}$, define
\begin{align}\label{eq:f_decomp}
    \ow (\adres) [f]^\sigma (x) & = \sum_{\sigma' \in \{A,B\}} \sum_{j,k =0}^1 g_{j,k}^{\sigma,\sigma'} (x) 
    \\
    g^{\sigma, \sigma'}_{j,k} (x) &:= 
    \sum_{q \in \latticeOut} \int_{\mathbb{R}^{4}}\chi_j(z) \chi_k (\xi) e^{i \xi \cdot z} e^{i \delta \beta q \cdot \rot_{\pi/2}x} \hathperp^{\sigma \sigma'} (\xi - (I-\delta \rotq_\delta)q) f^{\sigma'}(x-z) \frac{dz d\xi}{(2\pi)^2 |\Gamma|}.\nonumber
\end{align}
Dropping the subscripts $\sigma, \sigma'$ for brevity, we find that for any $\alpha \in \mathbb{N}_0^2$ with $|\alpha| \le N$,
\begin{align*}
    |\partial^\alpha_x g_{0,0} (x) | \le C \sup_{|\alpha_1| + |\alpha_2| \le N} \int_{\Omega^2} \sum_{q \in \latticeOut} \delta^{|\alpha_1|} |q|^{|\alpha_1|} \left|\hathperp (\xi - (I-\delta \rotq_\delta)q) \right| \left| \partial^{\alpha_2} f(x-z) \right| dz d\xi,
\end{align*}
where $\Omega \subset \mathbb{R}^2$ is a bounded set containing $\supp (\chi_0)$.
By the decay \eqref{eq:assumption_hperp} of $\hathperp$, this becomes
\begin{align*}
    |\partial^\alpha_x g_{0,0} (x) | \le C \sup_{|\alpha_1| + |\alpha_2| \le N} \int_{\Omega^2} \sum_{q \in \latticeOut} \delta^{|\alpha_1|} |q|^{|\alpha_1|} (\delta/\delta_0)^{\aver{(\xi - (I-\delta \rotq_\delta)q)/\gamma}/\aver{K/\gamma}} \left| \partial^{\alpha_2} f(x-z) \right| dz d\xi.
\end{align*}
Since $\xi$ is restricted to a bounded set, it follows that for some $\eta_1, \eta_2 > 0$,
\begin{align*}
    |\partial^\alpha_x g_{0,0} (x) | \le C \sup_{|\alpha_1| + |\alpha_2| \le N} \int_{\Omega^2} \sum_{q \in \latticeOut} \delta^{|\alpha_1|} |q|^{|\alpha_1|} (\delta/\delta_0)^{\eta_1 |q| - \eta_2} \left| \partial^{\alpha_2} f(x-z) \right| dz d\xi.
\end{align*}
Using that $\sum_{q \in \latticeOut}|q|^{|\alpha_1|} (\delta/\delta_0)^{\eta_1 |q|} \le C (\delta/\delta_0)^{\tilde\eta_1 \delta^{-\tpn}}$ for any $0 < \tilde\eta_1 < \eta_1$, it follows that
\begin{align*}
    |\partial^\alpha_x g_{0,0} (x) | \le C \sup_{|\alpha_1| + |\alpha_2| \le N} \int_{\Omega^2} \delta^{|\alpha_1|} (\delta/\delta_0)^{\eta_1 \delta^{-\tpn} - \eta_2} \left| \partial^{\alpha_2} f(x-z) \right| dz d\xi,
\end{align*}
where we have dropped the tilde from $\tilde \eta_1$. Since $\Omega^2$ is bounded, we conclude that
\begin{align}\label{eq:bd_g00}
    \norm{g_{0,0}}_{H^N} \le C (\delta/\delta_0)^{\eta_1 \delta^{-\tpn} - \eta_2} \norm{f}_{H^N}
\end{align}
decays super-algebraically in $\delta$.

We next consider $g_{1,0}$, which by integration by parts satisfies
\begin{align*}
    &g_{1,0}^{\sigma, \sigma'} (x) = \int_{\mathbb{R}^{4}} \left( \left( \frac{z \cdot \nabla_\xi}{i|z|^2}\right)^m e^{i \xi\cdot z} \right) \chi_1 (z) \chi_0 (\xi)  \sum_{q \in \latticeOut} e^{i \delta \beta q \cdot \rot_{\pi/2}x} \hathperp^{\sigma \sigma'} (\xi - (I-\delta \rotq_\delta)q)  f^{\sigma'}(x-z) \frac{dz d\xi}{(2\pi)^2 |\Gamma|}\\
    = & \int_{\mathbb{R}^{4}} \chi_1 (z)e^{i \xi\cdot z} \sum_{q \in \latticeOut} e^{i \delta \beta q \cdot \rot_{\pi/2}x} f^{\sigma'}(x-z) 
    \left( \frac{i z \cdot \nabla_\xi}{|z|^2}\right)^m \left( \chi_0 (\xi) \hathperp^{\sigma \sigma'} (\xi - (I-\delta \rotq_\delta)q) \right) \frac{dz d\xi}{(2\pi)^2 |\Gamma|}
\end{align*}
for any $m \in \mathbb{N}_0$.
It follows that for any $\alpha \in \mathbb{N}_0^2$ with $|\alpha| \le N$ (again dropping the superscripts $\sigma, \sigma'$),
\begin{align*}
    |\partial^\alpha_x g_{1,0} (x) | &\le C \sup_{\substack{|\alpha_1| + |\alpha_2| \le N, \\
    |\beta| \le m}} \int_{\mathbb{R}^2 \times \Omega} \chi_1 (z)\sum_{q \in \latticeOut} \delta^{|\alpha_1|} |q|^{|\alpha_1|} |z|^{-m} 
    \left|\partial^\beta \hathperp (\xi - (I-\delta \rotq_\delta)q) \right| \left| \partial^{\alpha_2} f(x-z) \right| dz d\xi,
\end{align*}
which by the decay \eqref{eq:assumption_hperp} of $\hathperp$ becomes
\begin{align*}
    |\partial^\alpha_x g_{1,0} (x) | &\le C \sup_{|\alpha_1| + |\alpha_2| \le N} \int_{\mathbb{R}^2 \times \Omega} \chi_1 (z)\sum_{q \in \latticeOut} \delta^{|\alpha_1|} |q|^{|\alpha_1|} |z|^{-m} \\
    &\hspace{4cm}(\delta/\delta_0)^{\aver{(\xi - (I-\delta \rotq_\delta)q)/\gamma}/\aver{K/\gamma}- (1-\tbgparam)m} \left| \partial^{\alpha_2} f(x-z) \right| dz d\xi.
\end{align*}
Choosing $m = 3$ so that the above function of $z$ is integrable, the same arguments that led to \eqref{eq:bd_g00} imply that (for possibly a different choice of positive constants $C, \eta_1, \eta_2$)
\begin{align}\label{eq:bd_g10}
    \norm{g_{1,0}}_{H^N} \le C (\delta/\delta_0)^{\eta_1 \delta^{-\tpn} - \eta_2} \norm{f}_{H^N}.
\end{align}

Next, we consider $g_{0,1}$. Again, integration by parts (this time in $z$) implies
\begin{align*}
    g_{0,1}^{\sigma,\sigma'} (x) &= \frac{1}{(2\pi)^2|\Gamma|} \int_{\mathbb{R}^{4}} \left( \left( \frac{\xi \cdot \nabla_z}{i|\xi|^2}\right)^m e^{i \xi\cdot z} \right) \chi_0 (z) \chi_1 (\xi)
    \sum_{q \in \latticeOut} e^{i \delta \beta q \cdot \rot_{\pi/2}x} \hathperp^{\sigma \sigma'} (\xi - (I-\delta \rotq_\delta)q) f^{\sigma'}(x-z) dz d\xi
    \\&
    = \frac{1}{(2\pi)^2 |\Gamma|} \int_{\mathbb{R}^{4}} e^{i \xi\cdot z} \chi_1 (\xi)\sum_{q \in \latticeOut} e^{i \delta \beta q \cdot \rot_{\pi/2}x} \hathperp^{\sigma \sigma'} (\xi - (I-\delta \rotq_\delta)q)
    \left( \frac{i\xi \cdot \nabla_z}{|\xi|^2}\right)^m \left( \chi_0 (z) f^{\sigma'}(x-z)\right) dz d\xi
\end{align*}
for any $m \in \mathbb{N}_0$. Thus, for any $\alpha \in \mathbb{N}_0^2$ with $|\alpha| \le N$, we have
\begin{align*}
    |\partial^\alpha_x g_{0,1} (x)| &\le C \sup_{\substack{|\alpha_1|+|\alpha_2| \le N\\
    |\beta| \le m}} \int_{\Omega \times \mathbb{R}^2} \chi_1 (\xi) |\xi|^{-m} \delta^{|\alpha_1|} 
    \sum_{q \in \latticeOut} |q|^{|\alpha_1|} |\hathperp (\xi - (I-\delta \rotq_\delta)q)| |\partial^{\alpha_2 + \beta} f (x-z)| dz d\xi.
\end{align*}
It follows from \eqref{eq:assumption_hperp} that
\begin{align}\label{eq:g_bd}
\begin{split}
    |\partial^\alpha_x g_{0,1} (x)| &\le C \sup_{\substack{|\alpha_1|+|\alpha_2| \le N\\
    |\beta| \le m}} \int_{\Omega \times \mathbb{R}^2} \aver{\xi}^{-m} \delta^{|\alpha_1|} 
    \sum_{q \in \latticeOut} |q|^{|\alpha_1|} (\delta/\delta_0)^{\aver{(\xi - (I-\delta \rotq_\delta)q)/\gamma}/\aver{K/\gamma}} |\partial^{\alpha_2 + \beta} f (x-z)| dz d\xi.
\end{split}
\end{align}
We next 
separate the contributions to the above integral of $\xi \in \Omega_\delta$ and $\xi \in \mathbb{R}^2 \setminus \Omega_\delta =: \Omega_\delta^c$, where 
\begin{align}\label{eq:Omega_delta}
    \Omega_\delta := \{\xi \in \mathbb{R}^2: |\xi| < \frac{1}{2} \delta^{-\tpn}\}.
\end{align}
The first contribution is then
\begin{align*}
    \iin (x) &:= \sup_{\substack{|\alpha_1|+|\alpha_2| \le N\\
    |\beta| \le m}} \int_{\Omega \times \Omega_\delta} \aver{\xi}^{-m} \delta^{|\alpha_1|} 
    \sum_{q \in \latticeOut} |q|^{|\alpha_1|} (\delta/\delta_0)^{\aver{(\xi - (I-\delta \rotq_\delta)q)/\gamma}/\aver{K/\gamma}} |\partial^{\alpha_2 + \beta} f (x-z)| dz d\xi\\
    &\le \sup_{\substack{|\alpha_1|+|\alpha_2| \le N\\
    |\beta| \le m}} \int_{\Omega \times \Omega_\delta} \aver{\xi}^{-m} \delta^{|\alpha_1|} 
    \sum_{q \in \latticeOut} |q|^{|\alpha_1|} (\delta/\delta_0)^{\aver{q/4\gamma}/\aver{K/\gamma}} |\partial^{\alpha_2 + \beta} f (x-z)| dz d\xi,
\end{align*}
where we used the fact that $|\xi - (I-\delta \rotq_\delta)q| \ge |q| - \delta |\rotq_\delta q| - \frac{1}{2} \delta^{-\tpn} \ge \frac{1}{2} |q| - \delta |\rotq_\delta q|\ge\frac{1}{4} |q|$ for all $\xi \in \Omega_\delta$ and $q \in \latticeOut$. As before, there exist some $C, \eta_1 > 0$ such that $\sum_{q \in \latticeOut}|q|^{|\alpha_1|} (\delta/\delta_0)^{\aver{q/4\gamma}/\aver{K/\gamma}} \le C (\delta/\delta_0)^{\eta_1 \delta^{-\tpn}}$,
and thus
\begin{align*}
    \iin (x) &\le C\sup_{|\alpha'| \le N+m} \int_{\Omega \times \Omega_\delta} \aver{\xi}^{-m} (\delta/\delta_0)^{\eta_1 \delta^{-\tpn}} |\partial^{\alpha'} f (x-z)| dz d\xi.
\end{align*}
Choosing $m=3$ so that $\aver{\xi}^{-m} \in L^1 (\mathbb{R}^2)$, we conclude that $\norm{\iin}_{L^2} \le C(\delta/\delta_0)^{\eta_1 \delta^{-\tpn}} \norm{f}_{H^{N+3}}$.
The remaining contribution to the right-hand side of \eqref{eq:g_bd} is
\begin{align*}
    \iout (x) &:= \sup_{\substack{|\alpha_1|+|\alpha_2| \le N\\
    |\beta| \le m}} \int_{\Omega \times \Omega_\delta^c} \aver{\xi}^{-m} \delta^{|\alpha_1|} 
    \sum_{q \in \latticeOut} |q|^{|\alpha_1|} (\delta/\delta_0)^{\aver{(\xi - (I-\delta \rotq_\delta)q)/\gamma}/\aver{K/\gamma}} |\partial^{\alpha_2 + \beta} f (x-z)| dz d\xi.
\end{align*}
Observe that for any $|\alpha_1| \le N$,
\begin{align}\label{eq:sum_q_unif_bd}
    \sum_{q \in \latticeOut} |q|^{|\alpha_1|} (\delta/\delta_0)^{\aver{(\xi - (I-\delta \rotq_\delta)q)/\gamma}/\aver{K/\gamma}} \le C
\end{align}
uniformly in $\xi \in \mathbb{R}^2$ and $0 < \delta \le \delta_0/2$, hence
\begin{align*}
    \iout (x) \le C \sup_{|\alpha'| \le N+m} \int_{\Omega \times \Omega_\delta^c} \aver{\xi}^{-m} |\partial^{\alpha'} f (x-z)| dz d\xi.
\end{align*}
It follows from the boundedness of $\Omega$ that
\begin{align*}
    \norm{\iout}_{L^2} \le C \norm{f}_{H^{N+m}} \int_{\Omega_\delta} \aver{\xi}^{-m} d\xi \le C\norm{f}_{H^{N+m}} \delta^{\tpn (m-2)}
\end{align*}
for any $m \ge 3$. Combining our estimates on $\iin$ and $\iout$, and using the fact that $|\partial^\alpha_x g_{0,1} (x)| \le C (\iin (x) + \iout (x))$, we obtain that for any $m \ge 3$,
\begin{align}\label{eq:bd_g_01}
    \norm{g_{0,1}}_{H^N} \le C\norm{f}_{H^{N+m}} \delta^{\tpn (m-2)}.
\end{align}

It remains to consider $g_{1,1}$. Integrating by parts as before, we have using ${\mathfrak V}=\frac{\xi \cdot \nabla_z}{i|\xi|^2}$ that 
\begin{align*}
    g_{1,1}^{\sigma, \sigma'} (x) &=  \int_{\mathbb{R}^{4}} \left( {\mathfrak V}^{m_1} e^{i \xi\cdot z} \right) \chi_1 (z) \chi_1 (\xi)
    \sum_{q \in \latticeOut} e^{i \delta \beta q \cdot \rot_{\pi/2}x} \hathperp^{\sigma \sigma'} (\xi - (I-\delta \rotq_\delta)q) f^{\sigma'}(x-z) \frac{dz d\xi}{(2\pi)^2|\Gamma|}
    \\
    &=  \int_{\mathbb{R}^{4}} e^{i \xi\cdot z} \chi_1 (\xi)\sum_{q \in \latticeOut}e^{i \delta \beta q \cdot \rot_{\pi/2}x} \hathperp^{\sigma \sigma'} (\xi - (I-\delta \rotq_\delta)q)
    {\mathfrak V}^{m_1} \left( \chi_1 (z) f^{\sigma'}(x-z)\right) \frac{dz d\xi}{(2\pi)^2|\Gamma|}
    \\
    &=  \int_{\mathbb{R}^{4}} \left( {\mathfrak V}^{m_2} e^{i \xi\cdot z}\right) \chi_1 (\xi)\sum_{q \in \latticeOut} e^{i \delta \beta q \cdot \rot_{\pi/2}x} \hathperp^{\sigma \sigma'} (\xi - (I-\delta \rotq_\delta)q)
    {\mathfrak V}^{m_1} \left( \chi_1 (z) f^{\sigma'}(x-z)\right) \frac{dz d\xi}{(2\pi)^2|\Gamma|}
    \\
    &= \int_{\mathbb{R}^{4}}  e^{i \xi\cdot z}{\mathfrak V}^{m_2} \Bigg ( \chi_1 (\xi)\sum_{q \in \latticeOut} e^{i \delta \beta q \cdot \rot_{\pi/2}x} \hathperp^{\sigma \sigma'} (\xi - (I-\delta \rotq_\delta)q)
    {\mathfrak V}^{m_1} \left( \chi_1 (z) f^{\sigma'}(x-z)\right) \Bigg) 
    \frac{dz d\xi}{(2\pi)^2|\Gamma|}
\end{align*}
for any $m_1, m_2 \in \mathbb{N}_0$. It follows that for any $\alpha \in \mathbb{N}_0^2$ with $|\alpha| \le N$, using the notation ${\mathfrak w}=\aver{z}^{-m_2} \aver{\xi}^{-m_1} \delta^{|\alpha_1|}$,
\begin{align*}
    |\partial^{\alpha} g_{1,1}^{\sigma, \sigma'} (x)| &\le C \sup_{\substack{|\alpha_1| + |\alpha_2| \le N\\
    |\beta_j| \le m_j}}
    \int_{\mathbb{R}^4} 
    {\mathfrak w}
    \sum_{q \in \latticeOut}|q|^{|\alpha_1|} |\partial^{\beta_2}_\xi \hathperp^{\sigma \sigma'} (\xi - (I-\delta \rotq_\delta)q)| |\partial^{\alpha_2+\beta_1} f^{\sigma'} (x-z)| dz d\xi\\
     &\le C \sup_{\substack{|\alpha_1| + |\alpha_2| \le N\\
    |\beta_j| \le m_j}}
    \int_{\mathbb{R}^4} 
    {\mathfrak w}
    \sum_{q \in \latticeOut} |q|^{|\alpha_1|} (\delta/\delta_0)^{\aver{(\xi - (I-\delta \rotq_\delta)q)/\gamma}/\aver{K/\gamma} - (1-\tbgparam)|\beta_2|} |\partial^{\alpha_2+\beta_1} f^{\sigma'} (x-z)| dz d\xi,
\end{align*}
where we used \eqref{eq:assumption_hperp} to obtain the second inequality. As before, we separate the contributions from $\xi \in \Omega_\delta$ and $\xi \in \Omega_\delta^c$, with $\Omega_\delta$ defined by \eqref{eq:Omega_delta}. Following the arguments below \eqref{eq:Omega_delta}, we obtain
\begin{align*}
    \iin^\sharp (x) &:= \sup_{\substack{|\alpha_1| + |\alpha_2| \le N\\
    |\beta_j| \le m_j}}
    \int_{\mathbb{R}^2 \times \Omega_\delta} {\mathfrak w}
    \sum_{q \in \latticeOut}|q|^{|\alpha_1|} (\delta/\delta_0)^{\aver{(\xi - (I-\delta \rotq_\delta)q)/\gamma}/\aver{K/\gamma} - (1-\tbgparam)|\beta_2|}
    \  |\partial^{\alpha_2+\beta_1} f^{\sigma'} (x-z)| dz d\xi\\
    &\le \sup_{\substack{|\alpha_1| + |\alpha_2| \le N\\
    |\beta_j| \le m_j}}
    \int_{\mathbb{R}^2 \times \Omega_\delta} {\mathfrak w}
    \sum_{q \in \latticeOut}|q|^{|\alpha_1|} (\delta/\delta_0)^{\aver{q/4\gamma}/\aver{K/\gamma} - (1-\tbgparam)|\beta_2|} \ |\partial^{\alpha_2+\beta_1} f^{\sigma'} (x-z)| dz d\xi\\
    &\le C\sup_{\substack{|\alpha_1| + |\alpha_2| \le N\\
    |\beta_j| \le m_j}}
    \int_{\mathbb{R}^2 \times \Omega_\delta} {\mathfrak w}
    (\delta/\delta_0)^{\eta_1 \delta^{-\tpn}}
    (\delta/\delta_0)^{- (1-\tbgparam)|\beta_2|}|\partial^{\alpha_2+\beta_1} f^{\sigma'} (x-z)| dz d\xi.
\end{align*}
Setting $m_1 = m_2 = 3$, it follows that
\begin{align*}
    \iin^\sharp (x) &\le C\sup_{|\alpha'| \le N+3} \int_{\mathbb{R}^2 \times \Omega_\delta}\aver{z}^{-3} \aver{\xi}^{-3}(\delta/\delta_0)^{\eta_1 \delta^{-\tpn}}(\delta/\delta_0)^{- 3(1-\tbgparam)}|\partial^{\alpha'} f^{\sigma'} (x-z)| dz d\xi, \\
    \norm{\iin^\sharp}_{L^2} &\le C (\delta/\delta_0)^{\eta_1 \delta^{-\tpn} - 3(1-\tbgparam)} \norm{f}_{H^{N+3}}.
\end{align*}
The contribution from $\xi \in \Omega_\delta^c$ satisfies
\begin{align*}
    \iout^\sharp (x) &:= \sup_{\substack{|\alpha_1| + |\alpha_2| \le N\\
    |\beta_j| \le m_j}}
    \int_{\mathbb{R}^2 \times \Omega_\delta^c} {\mathfrak w}
    \sum_{q \in \latticeOut}|q|^{|\alpha_1|} (\frac \delta{\delta_0})^{\aver{(\xi - (I-\delta \rotq_\delta)q)/\gamma}/\aver{K/\gamma} - (1-\tbgparam)|\beta_2|}|\partial^{\alpha_2+\beta_1} f^{\sigma'} (x-z)| dz d\xi
    \\
    &\leq C \sup_{\substack{|\alpha_1| + |\alpha_2| \le N\\
    |\beta_j| \le m_j}}
    \int_{\mathbb{R}^2 \times \Omega_\delta^c} \aver{z}^{-m_2} \aver{\xi}^{-m_1} \delta^{|\alpha_1|} (\delta/\delta_0)^{- (1-\tbgparam)|\beta_2|} |\partial^{\alpha_2+\beta_1} f^{\sigma'} (x-z)| dz d\xi
\end{align*}
Using \eqref{eq:sum_q_unif_bd}.
Setting $m_2 =3$, this gives
\begin{align*}
    \iout^\sharp (x) &\le
    C \sup_{|\alpha'| \le N+m_1}
    \int_{\mathbb{R}^2 \times \Omega_\delta^c} \aver{z}^{-3} \aver{\xi}^{-m_1} (\delta/\delta_0)^{-3 (1-\tbgparam)} |\partial^{\alpha'} f^{\sigma'} (x-z)| dz d\xi,
    \\
     \norm{\iout^\sharp}_{L^2} &\le C \norm{f}_{H^{N+m_1}}(\delta/\delta_0)^{-3 (1-\tbgparam)} \int_{\Omega_\delta^c} \aver{\xi}^{-m_1} d\xi \le
    C \norm{f}_{H^{N+m_1}}\delta^{\tpn(m_1-2)-3 (1-\tbgparam)}
\end{align*}
for any $m_1 \ge 3$.
Using that $|\partial^\alpha_x g_{1,1} (x)| \le C (\iin^\sharp (x) + \iout^\sharp (x))$,
we have now shown that for any $m_1 \ge 3$,
\begin{align}\label{eq:bd_g_11}
    \norm{g_{1,1}}_{H^N} \le C\norm{f}_{H^{N+m_1}}\delta^{\tpn(m_1-2)-3 (1-\tbgparam)}.
\end{align}
Combining \eqref{eq:bd_g00}, \eqref{eq:bd_g10}, \eqref{eq:bd_g_01}, \eqref{eq:bd_g_11} and recalling \eqref{eq:f_decomp}, we have shown that for any $m \ge 3$, 
\begin{align*}
    \norm{\ow (\adres) [f]^\sigma}_{H^N} \le C \norm{f}_{H^{N+m}}\delta^{\tpn(m-2)-3 (1-\tbgparam)}.
\end{align*}
The result then follows from choosing $m$ sufficiently large so that $\tpn(m-2)-3 (1-\tbgparam) \ge q$.
\end{proof}

\section{Classical results on the Schr\"odinger equation}
In this section, we present three classical lemmas that will allow us to control Sobolev norms of solutions to a Schr\"odinger equation. The proofs are included for completeness.
\begin{lemma}\label{lemma:Sobolev_0_appendix}
    Let $\phi_0 \in \cS (\mathbb{R}^d; \mathbb{C}^n)$ and $N \ge 0$. 
    Let $\bH$ be a symmetric operator
    on $L^2 (\mathbb{R}^d; \mathbb{C}^n)$ and $\Lambda$ a bounded operator from $H^N (\mathbb{R}^d; \mathbb{C}^n)$ to $L^2 (\mathbb{R}^d; \mathbb{C}^n)$ satisfying $[\bH, \Lambda] = 0$ and 
    \begin{align}\label{eq:equivalent_norms_appendix}
        C_1 \norm{u}_{H^N} \le \norm{\Lambda u}_{L^2} \le C_2 \norm{u}_{H^N}, \qquad u \in H^N, \quad 0 < \delta \le \delta_0
    \end{align}
    for some $0 < C_1 < C_2$. Then the solution $\phi = \phi (T,X)$ to 
    \begin{align*}
        (D_T + \bH) \phi (T,X) = r(T,X), \qquad \phi (0,X) = \phi_0 (X)
    \end{align*}
    satisfies
    \begin{align}\label{eq:phi_S_bd}
        \norm{\phi (T, \cdot)}_{H^N} \le C_1^{-1} C_2 \left( \norm{\phi_0}_{H^N} + \int_0^T \norm{r (S, \cdot)}_{H^N} dS\right), \qquad T \ge 0.
    \end{align}
\end{lemma}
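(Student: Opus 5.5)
The plan is a Duhamel argument combined with the unitarity of the propagator generated by $\bH$, applied after conjugating by $\Lambda$. Because $\bH$ is symmetric and the statement presupposes that the evolution problem is well posed, I will take $\bH$ self-adjoint on the relevant domain. Then $e^{-iT\bH}$ is a unitary group on $L^2$, and the solution is given by Duhamel's formula
\begin{align*}
    \phi(T) = e^{-iT\bH}\phi_0 + \frac{1}{i}\int_0^T e^{-i(T-S)\bH} r(S)\,dS,
\end{align*}
which follows from $D_T + \bH = \frac1i(\partial_T + i\bH)$.

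First, since $[\bH,\Lambda]=0$, the operator $\Lambda$ commutes with the unitary group $e^{-iT\bH}$. I would argue this via the functional calculus, or directly: set $w(T) := \Lambda e^{-iT\bH}u - e^{-iT\bH}\Lambda u$ for $u$ in a core, check that $\partial_T w = -i\bH w$ with $w(0)=0$, and conclude $w \equiv 0$ by $L^2$ conservation. Applying $\Lambda$ to the Duhamel formula then gives
\begin{align*}
    \Lambda\phi(T) = e^{-iT\bH}\Lambda\phi_0 + \frac1i\int_0^T e^{-i(T-S)\bH}\Lambda r(S)\,dS.
\end{align*}

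Second, taking $L^2$ norms and using unitarity together with Minkowski's inequality yields
\begin{align*}
    \norm{\Lambda\phi(T)}_{L^2} \le \norm{\Lambda\phi_0}_{L^2} + \int_0^T \norm{\Lambda r(S)}_{L^2}\,dS.
\end{align*}
Finally, \eqref{eq:equivalent_norms_appendix} bounds the left-hand side from below by $C_1\norm{\phi(T)}_{H^N}$ and each term on the right-hand side from above by $C_2$ times the corresponding $H^N$ norm. Dividing by $C_1$ gives \eqref{eq:phi_S_bd}.

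The only delicate step is justifying the commutation of $\Lambda$ with the propagator, along with the domain issues (for instance, that $\phi_0\in\cS$ lies in the domain of $\bH$ and that $\Lambda$ maps it appropriately). In the applications $\Lambda$ is a power or function of $i+\bH$ or $i+\bH_p$, so it is a function of $\bH$ itself and commutation with $e^{-iT\bH}$ is immediate from the spectral theorem. I would present that case as the main justification and handle the general case by the energy argument on $w$ above, using the density of smooth data.
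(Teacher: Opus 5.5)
Your proof is correct. Its skeleton matches the paper's: pass to $\Lambda\phi$, bound it in $L^2$ with source $\Lambda r$, then apply \eqref{eq:equivalent_norms_appendix}. The difference is how you get the $L^2$ bound.

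The paper never introduces the propagator. It uses the energy identity for the equation: the contribution of $\bH$ drops out because $(\phi,\bH\phi)$ is real by symmetry. This gives $|\partial_T\norm{\phi}_{L^2}^2|\le 2\norm{\phi}_{L^2}\norm{r}_{L^2}$, hence $\partial_T\norm{\phi}_{L^2}\le\norm{r}_{L^2}$. It then applies $\Lambda$ directly to the equation. Since $\Lambda$ is time-independent and $[\bH,\Lambda]=0$, $\Lambda\phi$ solves the same problem with data $\Lambda\phi_0$ and source $\Lambda r$, and the same estimate applies.

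Your route needs more than the stated hypotheses:
\begin{itemize}
\item self-adjointness of $\bH$, to have a unitary group and Duhamel's formula;
\item commutation of $\Lambda$ with $e^{-iT\bH}$, which is stronger than $[\bH,\Lambda]=0$.
\end{itemize}
Your general-case justification of that commutation, via $w$ and $L^2$ conservation, is exactly the homogeneous case of the paper's energy identity. So the detour can be dropped: apply $\Lambda$ to the equation rather than to the Duhamel formula, and mere symmetry of $\bH$ suffices.

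What your version buys is an explicit representation of the solution. In the applications, where $\Lambda$ is $(i+\bH)^N$ or $(i+\bH_p)^{N/p}$, it also gives an immediate spectral-theorem justification of the commutation. Both arguments take for granted the same regularity, namely that $\Lambda\phi$ is a strong solution.

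A harmless slip: $(D_T+\bH)\phi=r$ is equivalent to $(\partial_T+i\bH)\phi=ir$, so the Duhamel term carries the prefactor $i$, not $1/i$. Only its modulus enters your estimate, so nothing changes.
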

\begin{proof}
    Multiplying both sides of $(D_T + \bH) \phi = r$ by $\bar\phi$ and integrating with respect to $X$, we obtain
    \begin{align*}
        -i \partial_T \left( \norm{\phi}^2 \right) = -i (\phi, D_T \phi) - (D_T \phi, \phi) = 2i \Im (\phi, D_T \phi) =2i \Im (\phi, r) - 2i \Im (\phi, \bH \phi) = 2i \Im (\phi, r),
    \end{align*}
    where all norms and inner products are understood to be in $L^2$, and the $T$-dependence was suppressed for brevity. Note that the last equality follows from the fact that $\bH$ is symmetric. We conclude that
    $
        \partial_T \norm{\phi (T, \cdot)} \le \norm{r (T, \cdot)},
    $
    which after integrating in $T$ yields
    \begin{align}\label{eq:L2_bd}
        \norm{\phi (T, \cdot)}_{L^2} \le \norm{\phi_0}_{L^2} + \int_0^T \norm{r (S, \cdot)}_{L^2} dS, \qquad T \ge 0.
    \end{align}
    To extend this estimate to the $H^N$ norm, we observe that since $[\bH, \Lambda] = 0$, the function $\phi^\Lambda := \Lambda \phi$ satisfies
    \begin{align*}
        (D_T + \bH) \phi^\Lambda (T,X) = r^\Lambda (T,X), \qquad \phi^\Lambda (0,x) = \phi^\Lambda_0 (x),
    \end{align*}
    where $r^\Lambda := \Lambda r$ and $\phi^\Lambda_0 := \Lambda \phi_0$. Applying \eqref{eq:L2_bd}, we thus obtain
    \begin{align*}
        \norm{\phi^\Lambda (T, \cdot)}_{L^2} \le \norm{\phi^\Lambda_0}_{L^2} + \int_0^T \norm{r^\Lambda (S, \cdot)}_{L^2} dS, \qquad T \ge 0.
    \end{align*}
    The result then follows from \eqref{eq:equivalent_norms_appendix}.
\end{proof}
\begin{lemma}\label{lemma:u_delta}
    Fix $\rate > 0$ and $m \in \mathbb{N}_0$. Suppose $\{ H_\delta : 0 < \delta \le \delta_0\}$ is a family of 
    symmetric operators on $L^2 (\mathbb{R}^d; \mathbb{C}^n)$. When $m\geq1$, assume that for any $0 < \delta \le \delta_0$ and $N \in \{0, 1, \dots, m-1\}$, the operator
    $$[\nabla_x, H_\delta] : H^{N+1}(\mathbb{R}^d; \mathbb{C}^n) \to H^N(\mathbb{R}^d; \mathbb{C}^{d \times n})$$
    is bounded, with
    \begin{align}\label{eq:commutator_bd}
        \norm{[\nabla_x, H_\delta]}_{H^{N+1} \to H^N} \le C \delta, \qquad 0 < \delta \le \delta_0.
    \end{align}
    Let $u_\delta = u_\delta (t,x)$ with $t \ge 0$ and $x \in \mathbb{R}^d$ be a family of functions satisfying $u_\delta (0, \cdot) \equiv 0$ and
    \begin{align*}
        \norm{(D_t + H_\delta) u_\delta (t, \cdot)}_{H^{m}} \le C \delta^{1 + \rate} (1 + (\delta t)^M), \qquad t \ge 0, \quad 0 < \delta \le \delta_0,
    \end{align*}
    for some $M \ge 0$.
    Then for some $\gamma, C > 0$,
    \begin{align}\label{eq:u_delta_bd}
        \norm{u_\delta (t, \cdot)}_{H^m} \le C \delta^{\rate} (e^{\gamma \delta t} - 1), \qquad t \ge 0, \quad 0 < \delta \le \delta_0.
    \end{align}
\end{lemma}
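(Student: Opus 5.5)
We argue by induction on the Sobolev order $k\in\{0,1,\dots,m\}$, proving that $\norm{u_\delta(t,\cdot)}_{H^k}\le C_k\delta^{\rate}(e^{\gamma_k\delta t}-1)$. Write $f_\delta:=(D_t+H_\delta)u_\delta$, so that $\norm{f_\delta(t)}_{H^m}\le C\delta^{1+\rate}(1+(\delta t)^M)$.

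For the base case $k=0$, we use that $H_\delta$ is symmetric. The energy argument in the proof of Lemma \ref{lemma:Sobolev_0_appendix} (with $\Lambda=I$) then gives
\[
\norm{u_\delta(t)}_{L^2}\le\int_0^t\norm{f_\delta(s)}_{L^2}\,ds\le C\delta^{1+\rate}\bigl(t+\delta^M t^{M+1}\bigr).
\]
Set $T=\delta t$. The right-hand side equals $C\delta^{\rate}(T+T^{M+1})$. The function $T+T^{M+1}$ is bounded by $C'(e^{\gamma T}-1)$ for every $\gamma>0$: near $0$ both sides are linear in $T$, and for large $T$ the exponential dominates the polynomial. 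This proves the case $k=0$.

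For the inductive step, suppose the bound holds at order $k-1$, where $1\le k\le m$. We differentiate the equation. For $v:=\nabla_x u_\delta$ (and, more generally, for $\partial^\alpha u_\delta$ with $|\alpha|=k$, written as iterated gradients), we have
\[
(D_t+H_\delta)\,\nabla u_\delta=\nabla f_\delta+[H_\delta,\nabla]\,u_\delta,
\]
with each component equation posed in $L^2$. It is cleaner to work with the vector $w:=\nabla u_\delta$ and estimate its $H^{k-1}$ norm. Applying the level-$(k-1)$ estimate to $w$ requires a bound on the source in $H^{k-1}$. The term $\nabla f_\delta$ satisfies $\norm{\nabla f_\delta}_{H^{k-1}}\le\norm{f_\delta}_{H^m}$. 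By the commutator assumption \eqref{eq:commutator_bd} with $N=k-1$,
\[
\norm{[\nabla,H_\delta]u_\delta}_{H^{k-1}}\le C\delta\norm{u_\delta}_{H^{k}}.
\]
Rather than trying to use the induction hypothesis directly, we propagate the full $H^k$ norm through a Gronwall argument. Let $y(t):=\norm{u_\delta(t)}_{L^2}+\sum_{1\le|\alpha|\le k}\norm{\partial^\alpha u_\delta(t)}_{L^2}$. Each $\partial^\alpha u_\delta$ solves a Schrödinger equation with symmetric $H_\delta$, and its source is $\partial^\alpha f_\delta$ plus the commutator $[\partial^\alpha,H_\delta]u_\delta$. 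Expanding $[\partial^\alpha,H_\delta]=\sum\partial^{\alpha'}[\partial_j,H_\delta]\partial^{\alpha''}$ and using \eqref{eq:commutator_bd} with $N=|\alpha'|\le k-1$, this commutator is bounded in $L^2$ by $C\delta\norm{u_\delta}_{H^{k}}\le C\delta\,y$. The $L^2$ energy estimate then yields
\[
y(t)\le\int_0^t\bigl(C\delta^{1+\rate}(1+(\delta s)^M)+C\delta\,y(s)\bigr)\,ds .
\]

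Gronwall's inequality gives
\[
y(t)\le C\delta^{1+\rate}\int_0^t(1+(\delta s)^M)\,e^{C\delta(t-s)}\,ds .
\]
Substituting $\sigma=\delta s$, this becomes
\[
C\delta^{\rate}\int_0^{\delta t}(1+\sigma^M)\,e^{C(\delta t-\sigma)}\,d\sigma\le C\delta^{\rate}\bigl(1+(\delta t)^M\bigr)\,\frac{e^{C\delta t}-1}{C}.
\]
Since $(1+T^M)(e^{CT}-1)\le C''(e^{\gamma T}-1)$ for any $\gamma>C$, we obtain \eqref{eq:u_delta_bd}. In fact the induction is unnecessary once the Gronwall argument is run directly at order $m$.

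The main obstacle is rigor rather than the estimate itself. The commutator is only bounded $H^{N+1}\to H^N$, so the derivative equations must be justified, for instance for smooth data or via a regularization or density argument. One must also check that $[\partial^\alpha,H_\delta]$ decomposes so that only commutators with first derivatives appear, each placed between derivatives of total order at most $k-1$. That decomposition comes from the Leibniz-type identity $[\partial_j\partial^{\beta},H]=\partial_j[\partial^\beta,H]+[\partial_j,H]\partial^\beta$, applied inductively; it produces at most $H^{k}$ norms of $u_\delta$, which are absorbed by $y$.
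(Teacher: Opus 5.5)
Your proposal is correct and is essentially the paper's argument. It runs an energy estimate in which the symmetric part of $H_\delta$ drops out, reduces the commutators $[\partial^\alpha,H_\delta]$ to first-order commutators bounded by $C\delta\norm{u_\delta}_{H^m}$, and finishes with Gronwall, absorbing the factor $(1+(\delta t)^M)$ into a larger $\gamma$. The only cosmetic difference is that the paper differentiates $\norm{u_\delta(t,\cdot)}_{H^m}^2$ directly through the $H^m$ inner product instead of summing $L^2$ estimates for each $\partial^\alpha u_\delta$; as you note yourself, the induction on $k$ is unnecessary.
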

The above result establishes that for any $T \ge 0$, we have the bound
\begin{align}\label{eq:finite_T}
    \norm{u_\delta (t, \cdot)}_{H^m} \le C_T \delta^{1+\rate} t, \qquad 0 \le t \le T/\delta, \quad 0 < \delta \le \delta_0.
\end{align}
As will be evident from the proof below, the estimate \eqref{eq:finite_T} can be extended to all $t \ge 0$ when $m=M=0$. 
\begin{proof}
    Below, all inner products and norms are understood to be in $H^m$.
    We have $(D_t + H_\delta) u_\delta = \delta^{1+\rate} r_\delta$,
    where $\norm{r_\delta (t, \cdot)} \le C(1 + (\delta t)^M)$ uniformly in $t$ and $\delta$. It follows that $ (u_\delta, D_t u_\delta) = \delta^{1+\rate} (u_\delta, r_\delta) - (u_\delta, H_\delta u_\delta)$,
    hence
    \begin{align}\label{eq:dt}
        -i \partial_t \left( \norm{u_\delta (t, \cdot)}^2 \right) = 2 i\Im (u_\delta, D_t u_\delta) = 2 i\Im \left( \delta^{1+\rate} (u_\delta, r_\delta) - (u_\delta, H_\delta u_\delta)\right). 
    \end{align}
    Since $H_\delta$ is symmetric, we know that
    \begin{align}\label{eq:uHu}
        (u_\delta, H_\delta u_\delta)
        &= \sum_{|\alpha| \le m} \int_{\mathbb{R}^d} \overline{\partial^\alpha_x u_\delta (t,x)} H_\delta \partial^\alpha_x u_\delta (t,x) {\rm d} x + 
        \sum_{|\alpha| \le m} \int_{\mathbb{R}^d} \overline{\partial^\alpha_x u_\delta (t,x)} [\partial^\alpha_x, H_\delta] u_\delta (t,x) {\rm d} x,
    \end{align}
    with the first term on the above right-hand side real. By expanding the commutator $[\partial^\alpha_x, H_\delta]$ as a sum of terms involving commutators of $H_\delta$ with only first-order derivatives, our assumption \eqref{eq:commutator_bd} 
    implies that the second term on the right-hand side of \eqref{eq:uHu} is bounded by $C \delta \norm{u_\delta(t, \cdot)}^2$, and thus
    \begin{align*}
        \partial_t \left( \norm{u_\delta (t, \cdot)}^2 \right) \le 2 \delta^{1+\rate} \norm{u_\delta (t, \cdot)} \norm{r_\delta(t, \cdot)} + C \delta \norm{u_\delta(t, \cdot)}^2
    \end{align*}
    by \eqref{eq:dt}.
    Evaluating the derivative on the left-hand side and setting $\gamma := C/2$, we obtain
    \begin{align*}
    e^{\gamma \delta t} \partial_t (e^{-\gamma \delta t}\norm{u_\delta (t, \cdot)}) = 
        (\partial_t - \gamma \delta) \norm{u_\delta (t, \cdot)} \le \delta^{1+\rate}\norm{r_\delta(t, \cdot)}.
    \end{align*}
    Multiplying both sides by $e^{-\gamma \delta t}$ and integrating from $0$ to $t$, the 
    bound on $\norm{r_\delta(t, \cdot)}$ and our assumption that $u_\delta (0, \cdot) \equiv 0$ then imply that
    \begin{align*}
        e^{-\gamma \delta t} \norm{u_\delta (t, \cdot)} \le C \delta^{1+\rate} (1+(\delta t)^M)\int_0^t  e^{-\gamma \delta s}ds = C \gamma^{-1} \delta^{\rate} (1 - e^{-\gamma \delta t})(1+(\delta t)^M).
    \end{align*}
    The result follows from multiplying both sides by $e^{\gamma \delta t}$, then taking any larger choice of $\gamma$.
\end{proof}
By strengthening the assumption \eqref{eq:commutator_bd} on the commutator $[\nabla_x, H_\delta]$, the exponential growth in $\delta t$ of the right-hand side of \eqref{eq:u_delta_bd} can be replaced by polynomial growth.
\begin{lemma}\label{lemma:u_delta_reg}
    Fix $\rate > 0$ and $m \in \mathbb{N}_0$. Suppose $\{ H_\delta : 0 < \delta \le \delta_0\}$ is a family of 
    symmetric operators on $L^2 (\mathbb{R}^d; \mathbb{C}^n)$ such that for any $0 < \delta \le \delta_0$ and $N \in \{0, 1, \dots, m-1\}$, the operator
    $$[\nabla_x, H_\delta] : H^{N}(\mathbb{R}^d; \mathbb{C}^n) \to H^N(\mathbb{R}^d; \mathbb{C}^{d \times n})$$
    is bounded, with
    \begin{align}\label{eq:commutator_bd_reg}
        \norm{[\nabla_x, H_\delta]}_{H^{N} \to H^N} \le C \delta, \qquad 0 < \delta \le \delta_0.
    \end{align}
    Let $u_\delta = u_\delta (t,x)$ with $t \ge 0$ and $x \in \mathbb{R}^d$ be a family of functions satisfying $u_\delta (0, \cdot) \equiv 0$ and
    \begin{align*}
        \norm{(D_t + H_\delta) u_\delta (t, \cdot)}_{H^{m}} \le C \delta^{1 + \rate} (1 + (\delta t)^M), \qquad t \ge 0, \quad 0 < \delta \le \delta_0,
    \end{align*}
    for some $M \ge 0$. Then for some $C > 0$,
    \begin{align*}
        \norm{u_\delta (t, \cdot)}_{H^m} \le C \delta^{1+\rate} t (1 + (\delta t)^{m+M}), \qquad t \ge 0, \quad 0 < \delta \le \delta_0.
    \end{align*}
\end{lemma}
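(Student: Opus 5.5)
We argue by induction on the Sobolev order $k\in\{0,1,\dots,m\}$, proving
\[
\norm{u_\delta (t,\cdot)}_{H^k}\le C\delta^{1+\rate}t\,(1+(\delta t)^{k+M}), \qquad t \ge 0.
\]
The key difference from Lemma \ref{lemma:u_delta} is that the commutator now maps $H^N\to H^N$. We therefore never need to absorb it into the same norm we are estimating, which is what produced the Gronwall exponential there. Instead, at each order the commutator term is treated as a source controlled by the previous, lower-order bound.

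For the base case $k=0$, we write $(D_t+H_\delta)u_\delta=f_\delta$ with $\norm{f_\delta(t)}_{L^2}\le C\delta^{1+\rate}(1+(\delta t)^M)$. Since $H_\delta$ is symmetric, the energy computation from Lemma \ref{lemma:Sobolev_0_appendix} (with $\Lambda=I$) gives $\partial_t\norm{u_\delta}_{L^2}\le\norm{f_\delta}_{L^2}$. Integrating from $0$ with $u_\delta(0)=0$ yields
\[
\norm{u_\delta(t)}_{L^2}\le C\delta^{1+\rate}\int_0^t(1+(\delta s)^M)\,ds\le C\delta^{1+\rate}t\,(1+(\delta t)^M).
\]

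For the inductive step, assume the bound holds at order $k-1$, with $1\le k\le m$. For any multi-index $|\alpha|=k$, the function $v_\alpha:=\partial^\alpha_x u_\delta$ satisfies
\[
(D_t+H_\delta)v_\alpha=\partial^\alpha f_\delta+[H_\delta,\partial^\alpha_x]u_\delta .
\]
We expand $[\partial^\alpha,H_\delta]=\sum \partial^{\alpha'}[\partial_j,H_\delta]\partial^{\alpha''}$ with $|\alpha'|+|\alpha''|=k-1$. Applying \eqref{eq:commutator_bd_reg} with $N=|\alpha'|\le k-1\le m-1$ gives
\[
\norm{[\partial^\alpha,H_\delta]u_\delta}_{L^2}\le C\delta\norm{u_\delta}_{H^{k-1}}\le C\delta\cdot\delta^{1+\rate}t\,(1+(\delta t)^{k-1+M}),
\]
using the inductive hypothesis. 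Combined with $\norm{\partial^\alpha f_\delta}_{L^2}\le C\delta^{1+\rate}(1+(\delta t)^M)$, the same $L^2$ energy estimate as in the base case gives
\[
\norm{v_\alpha(t)}_{L^2}\le C\delta^{1+\rate}\Big(t(1+(\delta t)^M)+\int_0^t \delta s\,(1+(\delta s)^{k-1+M})\,ds\Big).
\]
The integral is bounded by $t\,(\delta t)(1+(\delta t)^{k-1+M})\le C t\,(1+(\delta t)^{k+M})$, using $\delta t\le 1+(\delta t)^{k+M}$. Summing over all $|\alpha|\le k$ gives the bound at order $k$, and taking $k=m$ is the claim.

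The main bookkeeping point is verifying that the commutator expansion only ever requires \eqref{eq:commutator_bd_reg} for $N\le m-1$, and that each commutator gains a factor $\delta$ while losing no derivatives. We also need $v_\alpha$ to be regular enough for the energy identity; this can be justified by density, using that $H_\delta$ is bounded on $H^N$ (the commutator bounds plus boundedness on $L^2$ give this inductively) or working with difference quotients. The growth in powers of $\delta t$ is purely an artifact of integrating $\delta s$ against the previous bound, one extra power per derivative, which accounts for the exponent $m+M$.
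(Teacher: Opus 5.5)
Your proposal is correct and is essentially the paper's proof. Both treat the commutator $[\partial^\alpha_x, H_\delta]$ as a lower-order source bounded by $C\delta\norm{u_\delta}_{H^{N-1}}$, integrate the resulting inequality $\partial_t \norm{u_\delta}_{H^N} \le \delta^{1+\rate}\norm{r_\delta}_{H^N} + C\delta\norm{u_\delta}_{H^{N-1}}$, and induct on $N$. The only difference is cosmetic: you differentiate the equation and run the $L^2$ energy estimate on each $\partial^\alpha_x u_\delta$, whereas the paper computes the energy identity directly in the $H^N$ inner product.
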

\begin{proof}
    Following the proof of Lemma \ref{lemma:u_delta} but invoking the additional regularity \eqref{eq:commutator_bd_reg} of $[\nabla_x, H_\delta]$ in \eqref{eq:uHu}, we obtain
    \begin{align*}
        \partial_t \left( \norm{u_\delta (t, \cdot)}_{H^N}^2 \right) \le 2 \delta^{1+\rate} \norm{u_\delta (t, \cdot)}_{H^N} \norm{r_\delta(t, \cdot)}_{H^N} + C \delta \norm{u_\delta(t, \cdot)}_{H^N}\norm{u_\delta(t, \cdot)}_{H^{N-1}}
    \end{align*}
    for all $N \in \{0, 1, \dots, m\}$, where $\delta^{1 + \rate} r_\delta = (D_t + H_\delta) u_\delta$. 
    Therefore,
    \begin{align*}
        \partial_t \norm{u_\delta (t, \cdot)}_{H^N} \le \delta^{1+\rate} \norm{r_\delta(t, \cdot)}_{H^N} + C \delta \norm{u_\delta(t, \cdot)}_{H^{N-1}},
    \end{align*}
    with the second term on the right-hand side understood to vanish if $N=0$. After integrating both sides with respect to $t$, the result follows from induction in $N$.
\end{proof}

\begin{remark}\label{remark:twisted_grad}
    Let $\Theta := (\theta_1, \dots, \theta_n) \in \mathbb{T}^n$. By unitarity, it is clear that the conclusions of Lemmas \ref{lemma:u_delta} and \ref{lemma:u_delta_reg} would still hold if 
    $\nabla_x$ in \eqref{eq:commutator_bd} and \eqref{eq:commutator_bd_reg} were replaced by the operator $\nabla^\Theta_x$ defined by
    \begin{align*}
        \nabla_x^\Theta [\psi]^\sigma (x) := \rot_{\theta_\sigma}^\top \nabla_x \psi^\sigma (x), \qquad x \in \mathbb{R}^d, \quad \sigma \in \{1, \dots, n\}, 
    \end{align*}
    with $\rot^\top_\theta$ clockwise rotation by the angle $\theta$.
\end{remark}
\section{Results on approximations to graphene TB models}
\subsection{Proof of Lemma \ref{lem:ellipb0p}.}\label{sec:Ab0p}
We will prove that for any $p \ge 1$ and with $a$ as in \eqref{eq:Hdelta}, the symbol
\begin{align*}
    \delta b_{0p} (X,\zeta;\delta) = \sum_{j=1}^p \frac{\delta^j}{j!} \sum_{i_1, \cdots, i_j = 1}^{3} \bar{\zeta}_{i_1} \dots \bar{\zeta}_{i_j} \partial_{i_1, \dots, i_j} a(X,\bar{K})
\end{align*}
satisfies \eqref{eq:ellipticty_higher_order}.
Using the notation $z := \xi_1 + i\xi_2$ and $\bar z := \xi_1 - i\xi_2$, we find that
\begin{align*}
    \partial^m_z \partial^n_{\bar z} (e^{i \xi \cdot \la_1}) &= \left(\frac{i v}{2 \sqrt{3}} \right)^{m+n} e^{i \xi \cdot \la_1}, \qquad 
    \partial^m_z \partial^n_{\bar z} (e^{i \xi \cdot \la_2}) = \left(\frac{i v}{2 \sqrt{3}} \right)^{m+n} e^{i2\pi (n-m)/3} e^{i \xi \cdot \la_2}, \\
    \partial^m_z \partial^n_{\bar z} (e^{i \xi \cdot \la_3}) &= \left(\frac{i v}{2 \sqrt{3}} \right)^{m+n} e^{-i2\pi (n-m)/3} e^{i \xi \cdot \la_3}.
\end{align*}
Therefore, defining $g (\xi) := \sum_{j=1}^3 e^{i\xi \cdot \la_j}$, it follows that
\begin{align*}
    \partial^m_z \partial^n_{\bar z} g (K) = \left(\frac{i v}{2 \sqrt{3}} \right)^{m+n} \left( 1 + e^{i2\pi (n-m-1)/3} + e^{-i2\pi (n-m-1)/3}\right) = \begin{cases}
        3\left(\frac{i v}{2 \sqrt{3}} \right)^{m+n}, & n-m-1 \in 3 \mathbb{Z},\\
        0, & \text{else}.
    \end{cases},
\end{align*}
For the Taylor expansion of $a^1$ in \eqref{eq:a0}, we see that
\begin{align*}
    \partial^m_z \partial^n_{\bar z} (e^{i \xi \cdot \lb_1}) &= \left(\frac{i v}{2} \right)^{m+n} e^{i \pi (n-m)/6} e^{i \xi \cdot \lb_1}, \qquad 
    \partial^m_z \partial^n_{\bar z} (e^{i \xi \cdot \lb_2}) = \left(\frac{-v}{2} \right)^{m+n} (-1)^n e^{i \xi \cdot \lb_2}, \\
    \partial^m_z \partial^n_{\bar z} (e^{i \xi \cdot \lb_3}) &= \left(\frac{i v}{2} \right)^{m+n} e^{i5\pi (n-m)/6} e^{i \xi \cdot \lb_3},
\end{align*}
from which one can verify that with $h (\xi) := \sum_{j=1}^3 \sin (\lb_j \cdot \xi)$,
\begin{align*}
    \partial^m_z \partial^n_{\bar z} h(K) = \left(\frac{i v}{2} \right)^{m+n} \begin{cases}
        \frac{3 \sqrt{3}}{2} (-1)^{j+1}, & n-m = 6j, \quad j \in \mathbb{Z},\\
        \frac{3}{2} (-1)^j, & n-m = 6j-3, \quad j \in \mathbb{Z},\\
        0, & \text{else}.
    \end{cases}
\end{align*}
The entries of the $2\times2$ symbol $b_{0p}$
are thus given by
\begin{align*}
    \delta b_{0p}^{jj} (X, \zeta; \delta) &= (-1)^{j-1} \delta \left(M (X) + t_2 (X) \sum_{N=0}^{p-1} \frac{\delta^N}{N!} \sum_{m=0}^N \binom{N}{m}(\zeta_1 + i \zeta_2)^m (\zeta_1 - i \zeta_2)^{N-m} \partial^m_z \partial^{N-m}_{\bar z} h(K)\right)\\
    &=(-1)^{j-1} \delta \Bigg\{M (X) 
    \\
    + \frac{3\sqrt{3}}{2}t_2 (X) &\sum_{N'=0}^{\lfloor (p-1)/2 \rfloor} \frac{\delta^{2N'}}{(2N')!} \left( \frac{iv}{2}\right)^{2N'} \sum_{j=-\lfloor N'/3 \rfloor}^{\lfloor N'/3 \rfloor} \binom{2N'}{N' - 3j}(\zeta_1 + i \zeta_2)^{N'-3j} (\zeta_1 - i \zeta_2)^{N'+3j} (-1)^{j+1} \\
    + \frac{3}{2} t_2 (X) &\sum_{N''=0}^{\lfloor p/2 \rfloor - 1}
    \frac{\delta^{2N''+1}}{(2N''+1)!} \left( \frac{iv}{2}\right)^{2N''+1} \\
    &\qquad \sum_{j=-\lfloor (N''+2)/3 \rfloor+1}^{\lfloor (N''+2)/3 \rfloor} \binom{2N''+1}{N''+2 - 3j}(\zeta_1 + i \zeta_2)^{N''+2-3j} (\zeta_1 - i \zeta_2)^{N''-1+3j} (-1)^{j} \Bigg\}
\end{align*}
and with $I_N := \{m \in \{0,1, \dots, N\} : N+m-1 \in 3 \mathbb{Z}\}$,
\begin{align*}
    \delta b_{0p}^{12} (X, \zeta; \delta) &= t_1 (X) \sum_{N=1}^p \frac{\delta^N}{N!} \sum_{m = 0}^N \binom{N}{m} (\zeta_1 + i \zeta_2)^m (\zeta_1 - i \zeta_2)^{N-m} \partial^m_z \partial^{N-m}_{\bar z} g(K)\\
    &= 3 t_1 (X) \sum_{N=1}^p \frac{\delta^N}{N!} \left( \frac{iv}{2\sqrt{3}}\right)^N \sum_{m \in I_N} \binom{N}{m} (\zeta_1 + i \zeta_2)^m (\zeta_1 - i \zeta_2)^{N-m}.
\end{align*}
To prove that the symbol $b_{0p}$ is elliptic in the sense of \eqref{eq:ellipticty_higher_order}, it suffices to show that the leading-order term
\begin{align*}
    f (z) := \frac{1}{2^p}\sum_{m \in I_p} \binom{p}{m} z^m \bar z^{p-m}
\end{align*}
has no zeros $z$ on the unit circle (recall that $t_1$ is bounded away from zero). 
Using the identity
\begin{align*}
    f (z) = \frac{1}{3} \frac{1}{2^p} \left( (z + \bar z)^p + e^{i2\pi (p-1)/3} (e^{i2\pi/3} z + \bar z)^p + e^{-i 2\pi (p-1)/3} (e^{-i2\pi/3} z + \bar z)^p \right),
\end{align*}
it follows that
\begin{align*}
    f (e^{i \theta}) &= \frac{1}{3} \Big( \cos^p (\theta) + \frac{1}{2}(-1)^{p-1} \left(\cos^p \left(\theta + \frac{\pi}{3}\right) + \cos^p \left(\theta - \frac{\pi}{3}\right) \right)\\
    &\hspace{4cm} + i \frac{\sqrt{3}}{2} (-1)^{p-1} \left(\cos^p \left(\theta + \frac{\pi}{3}\right) - \cos^p \left(\theta - \frac{\pi}{3}\right) \right)\Big).
\end{align*}
If $p$ is odd, then the imaginary part vanishes if and only if $\theta \in \pi \mathbb{Z}$, in which case
\begin{align*}
    3|\Re f (e^{i\theta})| \ge |\cos^p (\theta)| - \frac{1}{2} \left| \cos^p \left(\theta + \frac{\pi}{3}\right) + \cos^p \left(\theta - \frac{\pi}{3}\right)\right| = 1 - 2^{-p} \ge \frac{1}{2},
\end{align*}
and so $f$ has no zeros on the unit circle. If $p$ is even, then $\Im f$ vanishes if and only if $\theta \in \pi \mathbb{Z}/2$. For $\theta \in \pi \mathbb{Z}$, we have $|\Re f| \ge 1/6$ as before. If $\theta \in \pi \mathbb{Z} + \frac{\pi}{2}$ and $p$ is even, then
\begin{align*}
    \Re f (e^{i\theta}) = -\frac{1}{6} \left(\cos^p \left(\theta + \frac{\pi}{3}\right) + \cos^p \left(\theta - \frac{\pi}{3}\right) \right) = -\frac{1}{3} \left( \frac{\sqrt{3}}{2}\right)^p.
\end{align*}
Thus we have verified that $f$ has no zeros on the unit circle, meaning that the ellipticity condition \eqref{eq:ellipticty_higher_order} is satisfied at any order $p \ge 1$.

\subsection{Proof of Lemma \ref{lem:varK}.}\label{sec:varK}
We will verify the self-adjointness requirement and commutator estimate from Assumption \ref{assumption:commutator2}. Since $a_\delta$ is Hermitian-valued, $H_\delta = \ow a_\delta$ is self-adjoint if $H_\delta$ is bounded. Thus, recalling that each entry of $a_\delta (x, \xi)$ is a linear combination of $t_1 (\delta x) e^{i \xi \cdot \la_j (\delta x)}$ and $\frac{1}{3}M(\delta x) + t_2 (\delta x) \sin (\lb_j (\delta x) \cdot \xi),$ it suffices to show that for each $\sym \in \{M, t_1, t_2\}$ and $\la \in \{(0,0)\} \cup \{\la_j\}_{j=1}^3 \cup \{\lb_j\}_{j=1}^3$, the operator 
\begin{align}\label{eq:P_delta_def}
    P_\delta := \ow \tilde a_\delta, \qquad \tilde a_\delta (x, \xi) := e^{i \xi \cdot \la (\delta x)} \sym (\delta x)
\end{align}
is bounded on $L^2 (\mathbb{R}^2; \mathbb{C}^2)$, with
\begin{align}\label{eq:norm_commutator_P}
    \norm{[\nabla_x, P_\delta]}_{H^{N+1} \to H^N} \le C_N \delta, \qquad 0 < \delta \le \delta_0, \quad N \in \mathbb{N}_0,
\end{align}
for some $\delta_0 > 0$.
The definition of $P_\delta$ states that for any $f \in L^2 (\mathbb{R}^2; \mathbb{C}^2)$,
\begin{align*}
    P_\delta f (x) = \frac{1}{(2\pi)^2}\int_{\mathbb{R}^4} e^{i \xi \cdot (x-y + \la (\frac{\delta}{2} (x+y)))} \sym (\frac{\delta}{2} (x+y)) f(y) dy d\xi.
\end{align*}
Since $\la \in C^\infty_b (\mathbb{R}^2)$, the contraction mapping theorem implies that (provided $\delta > 0$ is sufficiently small) for every $x \in \mathbb{R}^2$, there exists a unique $y \in \mathbb{R}^2$ such that $x-y + \la (\frac{\delta}{2} (x+y)) = 0$. Denoting this unique point by $y(x)$ in a slight abuse of notation, it follows that
\begin{align}\label{eq:Pf}
    P_\delta f(x) = \sym (\frac{\delta}{2}(x+y(x))) f(y(x)).
\end{align}
By the regularity of $\la$, we know that $y \in C^\infty (\mathbb{R}^2; \mathbb{R}^2)$ is a bijection. Indeed, if $y(x_1) = y(x_2)$ for some $x_1, x_2 \in \mathbb{R}^2$, then $x_1 + \la (\frac{\delta}{2} (x_1 + y(x_1))) = x_2 + \la (\frac{\delta}{2} (x_2 + y(x_1)))$, which implies
\begin{align*}
    |x_1- x_2| \le \left|\la (\frac{\delta}{2} (x_2 + y(x_1))) - \la (\frac{\delta}{2} (x_1 + y(x_1)))\right| \le \frac{1}{2} \norm{\nabla \la}_{L^\infty} \delta |x_1 - x_2|, \qquad \delta > 0,
\end{align*}
and thus $x_1 = x_2$ (provided $\delta < 2 \norm{\nabla \la}_{L^\infty}^{-1}$). The surjectivity of $y$ follows from the same argument that established that $y$ is well-defined. Finally, we observe that the Jacobian $\nabla y \in \mathbb{R}^{2\times 2}$ satisfies
\begin{align*}
    \nabla y = I + \frac{\delta}{2} (I + \nabla y) \nabla \la (\frac{\delta}{2} (x + y)).
\end{align*}
It follows that for any $0 < \delta < 2 \norm{\nabla \la}_{L^\infty}^{-1}$,
\begin{align}\label{eq:J}
    \nabla y = \left( I + \frac{\delta}{2} \nabla \la (\frac{\delta}{2} (x + y)) \right) \left(I - \frac{\delta}{2} \nabla \la (\frac{\delta}{2} (x + y))\right)^{-1},
\end{align}
and thus $|\det \nabla y (x)| \ge c >0$ is bounded away from zero. We conclude by \eqref{eq:Pf} that
\begin{align}\label{eq:P_bdd}
    \norm{P_\delta f}_{L^2}^2 = \int_{\mathbb{R}^2} \left | \sym (\frac{\delta}{2} (y^{-1} (x) + x)) f(x) \right| ^2\frac{dx}{|\det \nabla y(x)|} \le C \norm{f}_{L^2}^2,\qquad 0 < \delta \le \norm{\nabla \la}_{L^\infty}^{-1},
\end{align}
where we recall the uniform boundedness of $\sym$ to establish the inequality.

Next, we use \eqref{eq:Pf} to write
\begin{align}\label{eq:commutator_P}
    [\partial_{x_j}, P_\delta] f(x) = \partial_{x_j} \left ( \sym  (\frac{\delta}{2} (x + y(x)))\right) f(y(x)) + \sym (\frac{\delta}{2} (x + y(x))) \left( \sum_{i=1}^2 \frac{\partial y_i}{\partial_{x_j}} (x)\frac{\partial f}{\partial_{x_i}}(y(x)) - \frac{\partial f}{\partial x_j} (y(x))\right).
\end{align}
Recall \eqref{eq:J}, which implies that $\norm{\nabla y}_{L^\infty} \le C$ uniformly in $0 < \delta \le \norm{\nabla \la}_{L^\infty}^{-1}$.
Expanding the first term as $$\partial_{x_j} \left ( \sym  (\frac{\delta}{2} (x + y(x)))\right) = \frac{\delta}{2} \left(\partial_{x_j} \sym (\frac{\delta}{2} (x + y(x))) + \sum_{i=1}^2 \partial_{x_j} y_i (x) \partial_{x_i} \sym (\frac{\delta}{2} (x + y(x)))\right),$$
the regularity of $y$ and $\sym$, together with the argument that established the boundedness of 
$P_\delta$ on $L^2 (\mathbb{R}^2; \mathbb{C}^2)$, implies that
\begin{align}\label{eq:first_term_bd}
    \int_{\mathbb{R}^2}\left| \partial_{x_j} \left ( \sym  (\frac{\delta}{2} (x + y(x)))\right) f(y(x)) \right|^2 dx \le C \delta^2 \norm{f}_{L^2}^2, \qquad 0 < \delta \le \norm{\nabla \la}_{L^\infty}^{-1}.
\end{align}
We write the second term on the right-hand side of \eqref{eq:commutator_P} as
\begin{align*}
    \Delta(x):= \sym (\frac{\delta}{2} (x + y(x))) \sum_{i=1}^2 \left(\partial_{x_j} y_i (x) - I_{ij} \right) \partial_{x_i} f (y(x)).
\end{align*}
By \eqref{eq:J} and the boundedness of $\nabla \la$, we know that $\norm{\nabla y - I}_{L^\infty} \le C \delta$ uniformly in $0 < \delta \le \norm{\nabla \la}_{L^\infty}^{-1}$, hence
$|\Delta (x)| \le C \delta |(\nabla f)(y(x))|$ for $0 < \delta \le \norm{\nabla \la}_{L^\infty}^{-1}$
and so 
\begin{align}\label{eq:Delta_bd}
    \norm{\Delta}_{L^2}^2 \le C \delta^2 \int_{\mathbb{R}^2} |\nabla f (x)|^2 \frac{dx}{|\det \nabla y(x)|} \le C \delta^2 \norm{\nabla f}_{L^2}^2, \qquad 0 < \delta \le \norm{\nabla \la}_{L^\infty}^{-1}.
\end{align}
Combining \eqref{eq:commutator_P}-\eqref{eq:first_term_bd}-\eqref{eq:Delta_bd}, we have shown that
\begin{align}\label{eq:K_X_bd}
    \norm{[\nabla_x, P_\delta] f}_{L^2} \le C \delta \norm{f}_{H^1}, \qquad 0 < \delta \le \norm{\nabla \la}_{L^\infty}^{-1}.
\end{align}
By the regularity of $\sym, \la, y$, this estimate easily extends to all $N$ in \eqref{eq:norm_commutator_P}.
Recalling \eqref{eq:P_bdd}, we have thus verified that $H_\delta = \ow a_\delta$ satisfies Assumption \ref{assumption:commutator2} for any $m \in \mathbb{N}_0$, as desired.

\end{document}